\newcommand{\bb}{\bm{b}}
\newcommand{\xb}{\mathbf{x}}
\newcommand{\yb}{\mathbf{y}}
\newcommand{\Ab}{\mathbf{A}}
\newcommand{\Bb}{\mathbf{B}}
\newcommand{\Cb}{\mathbf{C}}
\newcommand{\Db}{\mathbf{D}}
\newcommand{\Eb}{\mathbf{E}}
\newcommand{\Hb}{\mathbf{H}}
\newcommand{\Mb}{\mathbf{M}}
\newcommand{\Ob}{\mathbf{O}}
\newcommand{\Ub}{\mathbf{U}}
\newcommand{\Vb}{\mathbf{V}}
\newcommand{\Wb}{\mathbf{W}}
\newcommand{\Xb}{\mathbf{X}}
\newcommand{\Yb}{\mathbf{Y}}
\newcommand{\Zb}{\mathbf{Z}}
\newcommand{\RR}{\mathbb{R}}
\newcommand{\bX}{\bm{X}}
\newcommand{\bgamma}{\mathbf{\gamma}}
\newcommand{\bGamma}{\mathbf{\Gamma}}
\newcommand{\Gb}{\mathbf{G}}
\newcommand{\bSigma}{\mathbf{\Sigma}}
\newcommand{\Rb}{\mathbf{R}}
\newcommand{\Deltab}{\mathbf{\Delta}}
\newcommand{\bPhi}{\mathbf{\Phi}}
\newcommand{\bPsi}{\mathbf{\Psi}}
\newcommand{\bXi}{\mathbf{\Xi}}
\newcommand{\cL}{\mathcal{L}}
\newcommand{\argmin}{\mathop{\mathrm{argmin}}}
\newcommand{\diag}{{\rm diag}}
\newcommand{\mat}{\mathrm{mat}}
\newcommand{\bDelta}{\boldsymbol{\Delta}}
\newcommand{\bTheta}{\boldsymbol{\Theta}}
\newcommand{\bPi}{\boldsymbol{\Pi}}
\newcommand{\tr}{\mathop{\mathrm{tr}}}
\newcommand{\Tr}{\mathop{\mathrm{tr}}}
\newcommand{\transpose}{\scriptscriptstyle \sf T}
\def\trans{^{\transpose}}
\renewcommand{\vec}{\mathrm{vec}}
\newtheorem{theorem}{Theorem}
\newtheorem{lemma}[theorem]{Lemma}
\newtheorem{proposition}[theorem]{Proposition}
\newtheorem{remark}[theorem]{Remark}
\newtheorem{corollary}[theorem]{Corollary}
\newtheorem{assump}[theorem]{Assumption}
\begin{document}

\title{DANIEL: A Distributed and Scalable Approach for Global Representation Learning with EHR Applications}

\author{\name Zebin Wang$^{\bm{1}}$ \email zebinwang@g.harvard.edu \\
        \name Ziming Gan$^{\bm{2}}$ \email zimingg@uchicago.edu \\
        \name Weijing Tang$^{\bm{3}}$ \email weijingt@andrew.cmu.edu \\
        \name Zongqi Xia$^{\bm{4}}$ \email zxia1@pitt.edu \\
        \name Tianrun Cai$^{\bm{5}}$ \email tcai1@bwh.harvard.edu \\
        \name Tianxi Cai$^{\bm{1}*}$ \email tcai@hsph.harvard.edu \\
        \name Junwei Lu$^{\bm{1}*}$ \email junweilu@hsph.harvard.edu \\
        \addr $^{{1}}$Department of Biostatistics, Harvard University, Boston, MA 02115, USA \\
        \addr $^{{2}}$Department of Statistics, University of Chicago, Chicago, IL 60637, USA \\
        \addr $^{{3}}$Department of Statistics \& Data Science, Carnegie Mellon University, Pittsburgh, PA 15213, USA\\
        \addr $^{{4}}$Department Neurology, University of Pittsburgh, Pittsburgh, PA 15213, USA  \\
        \addr $^{{5}}$Department of Medicine, Brigham and Women's Hospital, Boston, MA 02115, USA \\
        \addr $^{*}$Equal Contribution in Corresponding Authorship
       }

\editor{DANIEL Editors}

\maketitle

\begin{abstract}

Classical probabilistic graphical models face fundamental challenges in modern data environments, which are characterized by high dimensionality, source heterogeneity, and stringent data-sharing constraints. In this work, we revisit the Ising model, a well-established member of the Markov Random Field (MRF) family, and develop a distributed framework that enables scalable and privacy-preserving representation learning from large-scale binary data with inherent low-rank structure. Our approach optimizes a non-convex surrogate loss function via bi-factored gradient descent, offering substantial computational and communication advantages over conventional convex approaches. We evaluate our algorithm on multi-institutional electronic health record (EHR) datasets from 58,248 patients across the University of Pittsburgh Medical Center (UPMC) and Mass General Brigham (MGB), demonstrating superior performance in global representation learning and downstream clinical tasks, including relationship detection, patient phenotyping, and patient clustering. These results highlight a broader potential for statistical inference in federated, high-dimensional settings while addressing the practical challenges of data complexity and multi-institutional integration.
\end{abstract}

\begin{keywords}
  representation learning, non-convex optimization, federated learning, Ising model, electronic health records
\end{keywords}

 \definecolor{purple}{RGB}{250,000,180}
 \def\purple{\color{purple}}

 \newcommand{\crossout}[1]{{\brown \textst{#1}}}
 \newcommand{\tcomm}[1]{{\tiny\bf\purple \fbox{#1}}}

\section{Introduction}

Markov Random Fields (MRFs) form a fundamental class of probabilistic graphical models used to study the conditional dependencies between high-dimensional variables \citep{Wainwright2008MRF, koller2009probabilistic, Sucar2015PGM}. In particular, the Ising model, as a computationally efficient and theoretically supported variant of MRFs, has been widely adopted in modeling the pairwise interactions between \textit{binary} variables in domains where features naturally exhibit dichotomous patterns \citep{torquato2011toward,cheng2014sparse,marinazzo2014information, zhang2018unsupervised}. Compared to alternative graphical models such as Bayesian networks \citep{zhao2011combining,bandyopadhyay2015data,chen2024likelihood}, the Ising model provides a more interpretable pairwise interaction framework with more scalable computation when inferring network structures \citep{pourret2008bayesian}.

Numerous estimation and inference methods have been developed for the Ising model, including the variational methods \citep{chow1968approximating,nguyen2012mean}, graph recovery methods based on the mutual information \citep{montanari2009graphical,bresler2015efficiently,kandiros2023learning}, and $M$-estimators based on the convex regularized conditional likelihood loss \citep{shah2021learning,zhu2023simple}. However, these studies often overlook the practical challenges associated with consolidating cross-institutional data under strict confidentiality regulations. For example, deposition transcripts typically contain sensitive information stored separately across multiple law firms, which may be temporary allies in specific legal contexts but otherwise operate independently or even competitively \citep{Perlin2024Law}. Despite the inter-institutional boundaries, collaborative analysis of deposition data is often necessary to support coordinated legal decision-making and case evaluation \citep{zhang2023fedlegal}. On the other hand, recent federated statistical methods introduce surrogate loss optimization to enable distributed learning while alleviating privacy concerns \citep{jordan2019dist,duan2022heterogeneity, lin2022fednlp, fan2023communication, Nagy2023FedNLP}. However, these methods either lack theoretical performance guarantees or rely on conventional optimization frameworks involving large-scale matrix computations, which pose substantial scalability challenges in high-dimensional settings. Consequently, the study of MRFs and the Ising model under modern settings, characterized by large-scale multi-source data and stringent privacy constraints, remains largely underexplored.

One of the most prominent application domains for Ising model-based methodologies designed to capture dependencies among large-scale binary variables is the derivation of low-dimensional embeddings that preserve meaningful clinical semantics from Electronic Health Records (EHRs) {\citep{wang2017knowledge,lu2019learning,murali2023towards,carvalho2023knowledge}.} The EHR systems capture rich longitudinal clinical data, including diagnoses, treatments, medical procedures, and laboratory results \citep{Forrey1996LOINC,choi2017using, li2019distributed, Theodorou2023EHR}. Particularly, EHR datasets are commonly represented as binary indicators of clinical features, such as the dichotomous indicator of specific diagnoses and treatments within designated time windows \citep{Li2022Binary, Piao2024BernGraph}. Leveraging their large sample size and comprehensive feature coverage, EHRs have become a cornerstone for clinical knowledge discovery and translational research \citep{asaria2016using, bean2017knowledge,munkhdalai2018clinical, ahuja2020phenotyping, christopoulou2020adverse, liu2021phenotyping}. In particular, the knowledge graph (KG) embedding learning algorithms offer efficient querying and automated reasoning capabilities \citep{zhou2022multiview}, shedding light on the conditional dependencies among medical entities {\citep{liu2015automatic,salem2016fixing, kaewprag2017predictive,shen2018cbn, yu2022bios, zhao2023comorbidity, Gan2025ARCH}.} These embeddings can further support a broad range of downstream tasks, including phenotype classification, disease sub-typing, and predictive modeling \citep{kirby2016phekb, glicksberg2018automated,de2021phe2vec, Zhou2025Federated}, which reflect the core objectives of clinical research aimed at enhancing disease understanding and advancing precision medicine \citep{ carvalho2023knowledge, Liu2024Ontology}.

However, while the Ising model is well-suited for learning embeddings of the EHR data with inherent sparsity and low-rank structure, its conventional implementations, such as those studied by \citet{nussbaum2019ising}, face two fundamental challenges:

(1) \textit{Computation scalability for large-scale EHR data.} First, the volume of EHR database grows rapidly in complexity and computational demands in the recent years \citep{Dash2019BigEHR, Gan2025ARCH}. This makes conventional optimization approaches for statistical graphical models with sparsity and low-rankness, based on convex regularized likelihood loss and require singular value decomposition (SVD) at each iteration \citep{chandrasekaran2010latent, ravikumar2010high, liu2012implementable, meng2014learning, shah2021learning}, increasingly impractical for large-scale applications. Second, the heterogeneity across multiple EHR sources complicates cross-institutional data unification, as format reconciliation of EHR data is both time-consuming and prone to errors \citep{Glynn2019Hetero, Fu2020Hetero, Zhou2025Federated}.

(2) \textit{Stringent healthcare data privacy regulations.} First, the EHR datasets, which contain sensitive personal health information, are subject to strict federal and state privacy laws, where violations may result in serious legal and reputational consequences \citep{Annas2003HIPAA, Mello2018Legal, Apathy2020Policies}. Second, building centralized, professionally managed EHR databases across healthcare institutions that fully comply with legal and administrative requirements is costly and often infeasible \citep{Williams2008HIPAA, Mia2022HIPAA}. The alternative embedding learning approaches based on graph learning and natural language processing, such as Graph Neural Networks (GNNs) \citep{Scarselli2009GNN} and Bidirectional Encoder Representations from Transformers (BERT) \citep{devlin2018bert}, will typically rely on supervised domain-specific fine-tuning which requires centralized access to large volumes of labeled EHR data that are subject to HIPAA compliance \citep{lee2020biobert, pubmedbert}, or show their vulnarable to adversarial model extraction \citep{Wu2022Attack, Zhao2025survey, Wang2025CEGA}.

To overcome the challenges mentioned above, we propose a distributed approach for learning embeddings for large-scale, multi-institutional binary data subject to privacy regulations, termed \underline{D}istributed \underline{A}rchitecture for \underline{N}on-Convex \underline{I}sing-based \underline{E}mbedding \underline{L}earning (DANIEL). Our framework optimizes a distributed Ising model with a low-rank parameter matrix that captures the semantic embeddings for the features. Without loss of generality, we focus on global EHR embedding by generating visualized KGs and evaluating numerous downstream clinical tasks. To improve computation scalability, DANIEL performs optimization on a non-convex bi-factored surrogate loss that fully leverages the low-rank structure of embeddings intrinsic to EHRs, substantially reducing the dimensionality of matrices updated at each iteration. To address data-sharing limitations, DANIEL adopts a distributed structure where information exchange across institutions is limited to a \textit{single round of communication} involving locally computed \textit{gradients of loss}, without requiring direct sharing of sensitive data. Extensive simulation studies and empirical evaluations on real-world EHR datasets demonstrated the superiority of DANIEL across core downstream tasks of global EHR embedding, including relationship pair detection, patient phenotyping, and patient clustering, under realistic data-sharing constraints, offering substantial practical advantages over existing alternatives. 

\subsection{Our Contributions}

In summary, the contributions of this paper are four-fold:

\begin{itemize}
    \item [$\bullet$] \textbf{Modern Research Question Formulation. } We introduce the modern setting for learning dependency structures via MRFs, specifically the Ising model, in scenarios where high-dimensional binary data originates from multiple sources governed by strict privacy constraints. This formulation addresses the pressing needs in real-world applications such as collaborative legal research and healthcare analytics, where traditional assumptions of centralized data access and moderate dimensionality are no longer applicable.

    \item [$\bullet$] \textbf{Scalable and Privacy-Aware Methodology Design.} We propose DANIEL, an unsupervised framework designed to learn global embeddings from large-scale binary datasets with inherent low-rank structure. DANIEL achieves computational scalability by optimizing a non-convex bi-factored surrogate loss based on the Ising model subject to regularization for embedding balance. It ensures privacy and communication efficiency by requiring only one-shot communication of the gradient statistics, without sharing sensitive raw data across institutions.

    \item [$\bullet$] \textbf{Simulation-Supported Theoretical Guarantees.} We provide theoretical results for the convergence behavior and statistical validity of DANIEL in distributed settings. Specifically, we show that DANIEL \textit{retains the classical rate} held by convex low-rank Ising estimators under centralized settings, and the rate between the DANIEL estimator and the oracle centralized estimator is \textit{of lower order} than their overall estimation error. To the best of our knowledge, this is the first theoretical performance guarantee for a non-convex distributed Ising estimator, with potential generalizability to other loss functions beyond the current model.

    \item [$\bullet$] \textbf{Extensive Evaluation in EHR-Based Patient Representation Learning.} We evaluate DANIEL on patient representation learning using real-world EHR datasets from two leading healthcare systems, the University of Pittsburgh Medical Center (UPMC) and Mass General Brigham (MGB). DANIEL shows superiority over established MRF-based and BERT-style alternatives by consistently outperforming them across key downstream tasks in healthcare research, including relationship detection, patient phenotyping, and patient clustering.
    
\end{itemize}

\subsection{Related Literature} \label{sec:RL}

\textbf{Low-Rank Parameter Matrix Estimation Methods.} \citep{zheng2015highdim} leads the convex relaxation methods with a nuclear norm penalty, with subsequent developments by \citep{jianqing2017nuclear, park2022lowrank}. Despite their theoretical appeal, these approaches are computationally inefficient as they require iterative implementation of the singular value decomposition (SVD) during optimization. Additionally, they are designed for centralized computing environments and are therefore not suitable for application in distributed settings.
Non-convex optimization methods have been considered for other statistical problems \citep{zhang2016provable,sun2016guaranteed, chen2018harnessing, li2019non,li2019rapid, chi2019nonconvex, danilova2022recent}. However, these approaches also rely on centralized data access, which raises significant concerns regarding data security, for example, the risk of extrajudicial surveillance during centralized data aggregation \citep{shokri2015privacy}.

\noindent \textbf{Distributed Learning.} \citep{gao2022review} provides a systematic review of the distributed statistical inference methods, which highlights that a vast amount of research considered the surrogate likelihood approach, which only aggregates gradient or Hessian information from local sites \citep{wang2017efficient,jordan2019dist,duan2022heterogeneity,fan2023communication}. However, these approaches typically depend on the convexity of the surrogate loss. Thus, these approaches cannot be directly extended to the non-convex settings we implement in our work for the purpose of prioritizing efficiency and scalability. Several distributed algorithms have been proposed for non-convex problems under the general optimization framework \citep{tatarenko2017non,sun2020improving,xin2021fast,gorbunov2021marina,zhao2022beer}. However, these methods generally require multiple rounds of communication and do not offer theoretical guarantees for the estimation error. In contrast, our proposed DANIEL algorithm overcomes both challenges by conducting estimation under a non-convex surrogate loss with one-shot communication, offering a rigorous statistical error rate for the obtained estimator.

\section{Preliminaries} \label{sec:intro_model}

\noindent\textbf{Notations.} Suppose that there are $m$ institutions and a total of $n$ independent samples. For each sample, we observe a binary feature vector $\xb=(x_1, \cdots, x_p) \in \mathbb{R}^p$, representing the presence or absence of $p$ features in real-world practice. We model the probability density of the observed samples as $p(\xb) \propto \exp\big\{ \sum_{j \neq k} \theta_{jk} x_{j} x_{k} + \sum_{j=1}^p \theta_{jj}x_j \big\}$, where the parameter matrix $\bTheta = \{\theta_{jk}\} \in \mathbb{R} ^ {p \times p}$ encodes the dependency structure among features, with $\theta_{jk} = 0$ implying conditional independence between feature $j$ and feature $k$ given the rest. We use $\|\cdot\|_{\mathrm{op}}$, $\|\cdot\|_{\mathrm{F}}$, and $\|\cdot\|_{*}$ to denote the operator norm, the Frobenius norm, and the nuclear norm of a matrix, respectively. We use $\langle \cdot, \cdot \rangle$ to denote the Frobenius inner product of matrices, where $\langle \Ab,\Ab \rangle = \|\Ab\|_{\mathrm{F}} ^ 2$. We use $\|\cdot\|_{\max}$ to denote the entrywise max norm. We use $\vec(\cdot)$ to denote the vectorization of a matrix and $\tr (\cdot)$ to denote the trace of a matrix. We use $\|\cdot\|_{p,q}$ to denote the $L_{p,q}$ norm of a matrix $\Ab \in \mathbb{R} ^ {d_1 \times d_2}$.

\noindent\textbf{Multi-Institutional Data Setup.} We consider a multi-institutional setup where data is collected across $m$ distinct institutions, denoted as $\mathcal{S}_{1}, \ldots, \mathcal{S}_{m}$. For simplicity, we assume that $n$ samples are evenly stored in each distinct institution, i.e., each institution $\mathcal{S}_i$ for $i \in 1,...,m$ holds $n/m$ independent samples. For sample $\ell$ from the $i$th institution, the sample information is represented by a $p$-dimensional vector $\xb^{(\ell)} = \big(x_1^{(\ell)}, x_2^{(\ell)}, ..., x_p^{(\ell)} \big)\trans$, where $x_j^{(\ell)} \in \{ \pm 1\}$ indicates the presence or absence of the $j$th feature in sample $\ell$. We assume that the dependency structure among features is shared across all institutions and follows the Ising model with a unified \textit{true} parameter matrix $\bTheta^* = \big\{ \theta^*_{jk} \big\} \in \mathbb{R} ^ {p \times p}$. The probability density is given by
\begin{equation} \label{equ:def_X}
    p_{\bTheta^*}(\xb^{(\ell)}) = \mathbb{P}_{\bTheta^*} \big(X_j = x_j^{(\ell)} \text{ for all } j \in [p] \big) \propto \exp\bigg\{ \sum_{j \neq k} \theta_{jk}^* x_{j}^{(\ell)} x_{k}^{(\ell)} + \sum_{j=1}^p \theta_{jj}^* x_j^{(\ell)} \bigg\},
\end{equation}
where $\bTheta^*$ is \textit{symmetric}, i.e., $\theta_{jk}^*=\theta_{kj}^*$. Heuristically, the off-diagonal entries of $\bTheta^*$, i.e.,  $\theta_{jk}^*$ with $j\neq k$, characterize the \textit{pairwise dependencies} between the $j$th and $k$th features in condition on all other features. In contrast, the diagonal entries of $\bTheta^*$, i.e., $\theta_{jj}^*$ for $j \in [p]$, represent the \textit{individual tendencies} of the $j$th feature to occur, capturing the feature-specific heterogeneity in their occurrence patterns. We model the Ising model with inherent low-rank patterns by assuming that $\mathrm{rank}(\bTheta ^ *) = d \ll p$. Accordingly, we assume that $\bTheta ^ *$ satisfies a low-rank decomposition in the form of $\bTheta^* = \Ub^* {\Vb^*}\trans$, where $\Ub^*, \Vb^* \in \mathbb{R} ^ {p \times d}$ and $\Vb^*$ equals $\Ub^*$ subject to a sign per column, i.e., $\Vb^* = \Ub^* \Db^*$ for some diagonal matrix $\Db^*\in \RR^{d\times d}$ with $\Db^*_{jj} \in \{\pm 1\}$. Here, each row of $\Ub^*$ refers to a $d$-dimensional embedding for the corresponding feature, which can be obtained via rank-$d$ SVD of $\bTheta ^ *$. Inspired by existing work, such as  \citep{ravikumar2010high, vanborkulo2014elasso}, we estimate the parameters $\theta ^*_{jk}$ by considering the conditional distribution of $x_j^{(\ell)}$ given all the other feature indicators $\xb_{-j}^{(\ell)} = \big(x_1^{(\ell)}, ..., x_{j-1}^{(\ell)}, x_{j+1}^{(\ell)}, ..., x_p^{(\ell)} \big)\trans$, with a conditional likelihood given as
\begin{equation} \label{equ:conditional_dist}
    \mathbb{P}_{\bTheta^*}(X_j = x_j^{(\ell)} | \bX_{-j} = \xb_{-j}^{(\ell)}) = \frac{\exp{\big(2\theta_{jj}^* x_j^{(\ell)}+ 2\sum_{k \neq j} \theta_{kj}^* x_j^{(\ell)}x_k^{(\ell)} \big)}}{\exp{\big(2\theta_{jj}^* x_j^{(\ell)}+ 2\sum_{k \neq j} \theta_{kj}^* x_j^{(\ell)}x_k^{(\ell)}\big)} + 1}.
\end{equation}
Based on \eqref{equ:conditional_dist}, we have the conditional log-likelihood  loss for all the $n$ samples from $m$ institutions as
\begin{equation} \label{equ:objective}
\begin{split}
    \mathcal{L}(\boldsymbol{\Theta}) = - \frac{1}{n} \sum_{\tau=1}^m \sum_{\ell \in \mathcal{S}_\tau}  \sum_{j = 1} ^ p \bigg\{2\theta_{jj} x_j^{(\ell)} &+ 2\sum_{k \neq j} \theta_{kj} x_j^{(\ell)}x_k^{(\ell)} \\ & - \log \Big(\exp \big(2\theta_{jj} x_j^{(\ell)}+ 2\sum_{k \neq j} \theta_{kj} x_j^{(\ell)}x_k^{(\ell)} \big) + 1 \Big)\bigg\}.
\end{split}
\end{equation}
The low-rank structure of the parameter matrix can also be motivated by a latent mixed-value graphical model, which will be discussed in Section~\ref{sec:real_data}.

\section{Methodology}  \label{sec:NCDnC}

In this section, we propose DANIEL, a distributed algorithm designed to estimate $\bTheta ^ *$ via a non-convex approach. Compared to the Communication-efficient Surrogate Likelihood (CSL) framework \citep{jordan2019dist}, which operates under a convex loss, DANIEL leverages a non-convex bi-factored surrogate loss to improve efficiency by exploiting the inherent low-rank structure of the dependency structure of interest. In specific, suppose that the $n$ samples are partitioned across $m$ institutions, denoted by $\mathcal{S}_{1}, \ldots, \mathcal{S}_{m}$, with each institution storing $n/m$ samples. We set $\mathcal{S}_1$ as the \textit{hub institution}, which is responsible for managing communication and aggregating intermediate results. We denote the log-likelihood on the $i$th institution as
\begin{equation*} \label{equ:objective_distributed}
\begin{split}
    \mathcal{L}_i(\boldsymbol{\Theta})  = - \frac{1}{n/m} \sum_{\ell \in \mathcal{S}_i} \sum_{j = 1} ^ p \bigg\{ 2\theta_{jj} x_j^{(\ell)}&+ 2\sum_{k \neq j} \theta_{kj} x_j^{(\ell)}x_k^{(\ell)} \\
    & - \log \Big(\exp \big(2\theta_{jj} x_j^{(\ell)}+ 2\sum_{k \neq j} \theta_{kj} x_j^{(\ell)}x_k^{(\ell)} \big) + 1 \Big) \bigg\}.
\end{split}
\end{equation*}
We propose the bi-factored surrogate loss 
\begin{equation} \label{equ:surrogate_obj}
    \widetilde{\mathcal{L}} (\Ub, \Vb; \widehat\bTheta_0) = \underbrace{\mathcal{L}_1(\Ub\Vb \trans)}_{\mathrm{Loss\ on\ Hub}}  + \underbrace{\big\langle \nabla \mathcal{L} (\widehat\bTheta_0) - \nabla \mathcal{L}_1 (\widehat\bTheta_0), {\Ub} {\Vb} \trans  \big \rangle}_{\mathrm{Gradient\ Correction}}  + \underbrace{\frac{1}{4}\big\|\Ub\trans \Ub - \Vb \trans \Vb\big\|_{\mathrm{F}} ^ 2}_{\mathrm{Embedding\ Balancer}},
\end{equation}
where $\nabla \mathcal{L}_i(\cdot)$ represents the gradient of $\mathcal{L}_i(\cdot)$, and $\nabla \mathcal{L}(\cdot) = \frac{1}{m} \sum_{i = 1} ^ m \nabla \mathcal{L}_i(\cdot)$. In \eqref{equ:surrogate_obj}, the initialization $\widehat\bTheta_0$ is obtained on the hub institution $\mathcal{S}_1$ using only its local data, by solving the following convex optimization problem regularized by the nuclear norm 
\begin{equation} \label{equ:question}
    \widehat{\boldsymbol\Theta}_{\rm{cvx}} = \argmin_{\boldsymbol\Theta \in \mathcal{T} } \big\{ \mathcal{L}_1(\boldsymbol\Theta) + \lambda \| \bTheta \|_* \big\}, \text{ with } \mathcal{T} = \big\{\bTheta = \{\theta_{jk}\} \in \mathbb{R}^ {p \times p} \big| \| \bTheta \|_{\mathrm{F}} \leq B\big\},
\end{equation}
where $B$ is a fixed positive constant. Theorem \ref{thm:main} shows the theoretical rate of $\widehat{\boldsymbol\Theta}_{\rm{cvx}}$ and ensures that a full convergence in optimizing \eqref{equ:question} is \textit{sufficient but not necessary} to obtain a valid $\widehat\bTheta_0$, as the rank-$d$ SVD of $\widehat{\boldsymbol\Theta}_{\rm{cvx}}$, for subsequent optimization. The bi-factored surrogate loss, as indicated by \eqref{equ:surrogate_obj}, consists of three terms: (i) the loss on the hub institution that approximates the overall loss; (ii) a correction term that incorporates global gradient information via a one-shot communication; and (iii) a regularization term that balances $\Ub$ and $\Vb$ to ensure the identifiability of the bi-factor structure, as inspired by \citep{Park2016lowrank}. After collecting $\nabla \mathcal{L}_i(\widehat{\bTheta}_0)$ from all institutions, we minimize \eqref{equ:surrogate_obj} in the hub institution via gradient descent. The hyperparameter $1/4$ assigned to the regularization term is tuning-insensitive and potentially replaceable by any positive constant larger than 1/4. As indicated by the summary of DANIEL in Algorithm \ref{al:Divide_conquer}, our proposed algorithm requires only one round of communication between each institution and the hub institution. In DANIEL, the number of iteration steps $\Gamma$ is discussed in Theorem \ref{thm:Divide_conquer}, and the choice of step size $\eta$ is discussed in Appendix \ref{app:eta}. Finally, we highlight an important structural property of DANIEL in Remark \ref{rem:invariance}.

\begin{algorithm}[h]
\caption{The Proposed Framework of DANIEL}\label{al:Divide_conquer}
\begin{algorithmic}
\STATE \textbf{Input}: $\forall i \in \{1,\ldots,m\}$, samples $\{ \xb^{\ell} \}_{\ell \in S_i}$ stored in the $i$th institution; the dimension $d$, the step size $\eta$, and the number of iteration steps $\Gamma$.
\STATE \textbf{Initialization}: $\widehat{\bTheta}_0 = {\Ub}_0{\Vb}_0 \trans$, where $\widehat{\bTheta}_0$ is the rank-$d$ SVD of a solution to \eqref{equ:question}.
    \STATE $\mathcal{S}_1$ distributes $\widehat{\bTheta}_0 = {\Ub}_0{\Vb}_0 \trans$ to the institutions $\mathcal{S}_{2:m}$;
    \FOR{$i$ \textbf{from} 2 \textbf{to} $m$} 
        \STATE The $i$th institution $\mathcal{S}_{i}$ calculates the gradient $\nabla \mathcal{L}_i(\widehat{\bTheta}_{0})$;
        \STATE $\mathcal{S}_{i}$ returns the gradient $\nabla \mathcal{L}_i(\widehat{\bTheta}_{0})$ to $\mathcal{S}_1$;
    \ENDFOR
      \FOR{$\gamma$ \textbf{from} 1 \textbf{to} $\Gamma$ \textbf{in} $\mathcal{S}_1$,} 
        \STATE $\Ub_\gamma  = \Ub_{\gamma - 1} - \eta \nabla_{\Ub} \widetilde{\mathcal{L}} (\Ub, \Vb_{\gamma-1}  ; \widehat{\bTheta}_{0}) {\big|_{\Ub = \Ub _{\gamma -1}}}$;
        \STATE $\Vb_{\gamma}  = \Vb_{\gamma-1} - \eta \nabla_{\Vb} \widetilde{\mathcal{L}} (\Ub_{\gamma - 1} , \Vb; \widehat{\bTheta}_{0}) {\big|_{\Vb = \Vb_{\gamma -1} }}$;
\ENDFOR
\STATE \textbf{Output}: $\widehat{\Ub}  = \Ub_{\Gamma} $, $\widehat{\Vb} = \Vb_\Gamma$, and $\widehat{\bTheta} = \widehat{\Ub}\widehat{\Vb}\trans$.
\end{algorithmic}
\end{algorithm}
\vspace{-10pt}
\begin{remark} [Invariance of Symmetry]\label{rem:invariance}
   If the initialization $\widehat\bTheta_0$ is a symmetric matrix, we can initialize $\Ub_0=\Vb_0\Db_0$, where $\Db_0\in\mathbb{R}^{d\times d}$ is a diagonal matrix with its diagonal entries being $1$ or $-1$, i.e., $\Vb_0$ is the same as $\Ub_0$ up to a sign flip for each column. Then the output estimator of Algorithm~\ref{al:Divide_conquer},  $\widehat\bTheta$, will be a symmetric matrix. 
\end{remark}

\section{Theoretical Analysis}

In this section, we summarize the theoretical results on DANIEL's performance and validity. Specifically, we address three core aspects: Section \ref{sec:statistical_rate} establishes the statistical rate of the DANIEL estimator; Section \ref{sec:rational_initial} highlights the feasibility of the initialization procedure of DANIEL in order to achieve the desired statistical rate; Section \ref{sec:cd} validates the distributed design of DANIEL by proving that the distance between the distributed DANIEL estimator and its centralized counterpart is \textit{of lower order} than their statistical error to the truth.

\subsection{Statistical Rate of DANIEL} \label{sec:statistical_rate}

First, we show the statistical rate of the proposed DANIEL estimator $\widehat{\bTheta} = \widehat{\Ub}\widehat{\Vb}\trans$ and illustrate its efficiency.
We begin by stating the set of assumptions necessary for our theoretical analysis. We denote the \textit{per-sample} gradient as $\Wb_\ell (\bTheta)$, such that
 $\nabla \mathcal{L}(\bTheta)  = \frac{1}{n} \sum_{\ell = 1} ^ n \mathbf{W}_\ell(\bTheta)$. Additionally, we denote the empirical Hessian matrix of $\mathcal{L}(\boldsymbol\Theta)$ as $\widehat{\Hb}(\boldsymbol\Theta)$, with expectation $\Hb(\bTheta) = \mathbb{E}_{\bTheta}[\widehat{\Hb}(\cdot)]$.

Assumption \ref{ass:Positive_minimal_eigenvalue} represents the convexity and smoothness conditions of $\mathcal{L} (\bTheta)$ at the population level, which is widely used in the theoretical analysis of high-dimensional logistic regression \citep{ravikumar2010high}. In particular, Assumption \ref{ass:Positive_minimal_eigenvalue} ensures that the loss follows the \textit{restricted strongly convexity and smoothness} conditions, as proposed in the existing literature \citep{negahban2012RSC, jianqing2017nuclear, park2022lowrank}.

\begin{assump} [Restricted Strong Convexity \& Smoothness]\label{ass:Positive_minimal_eigenvalue} For any matrix $\bDelta \in \mathbb{R} ^ {p \times p}$, there exist constants $0 < \kappa_{\min} \leq \kappa_{\max} < \infty$ such that
\begin{equation} \label{equ:constraint_main}
    0 < \kappa_{\min} \|\bDelta\|^2_{\mathrm{F}} \leq \mathrm{vec}(\bDelta) \trans \mathbf{H} (\bTheta) \mathrm{vec}(\bDelta) \leq \kappa_{\max} \|\bDelta\|^2_{\mathrm{F}} \text{\ \  for all }\bTheta \in \mathcal{T},
\end{equation}
where the region $\mathcal{T}$ is defined in \eqref{equ:question}.
\end{assump}

Assumption \ref{ass:reg_grad} provides an upper bound on the scale of the {per-sample} gradient matrix $\Wb_\ell (\cdot)$, which is commonly employed in the study of he statistical rate of general $M$-estimators, such as \citep{negahban2012RSC}.

\begin{assump}[Regularization on the Gradient] \label{ass:reg_grad} There exists some positive constant $C$ such that $\Big\|\mathbb{E}_{\bTheta ^ *} \big[\mathbf{W}_\ell (\bTheta) \mathbf{W}_\ell\trans(\bTheta)\big] \Big\|_{\mathrm{op}} \leq C p$.
\end{assump}

We then present Proposition \ref{prop:rationality_assumption}, which provides a sufficient condition for Assumption \ref{ass:reg_grad}. Unlike in general multinomial low-rank logistic regression \citep{lei2019using}, directly verifying Assumption \ref{ass:reg_grad} by checking the eigenvalues of the covariance matrix for the binary covariates $\xb$ in our setting is less straightforward. In contrast, Proposition~\ref{prop:rationality_assumption} provides a sufficient condition that depends solely on the \textit{underlying graph structure} and the \textit{true parameter values}, thereby making the evaluation of the eigenvalue properties of the covariance matrix more transparent. The proof of Proposition \ref{prop:rationality_assumption} is provided in Appendix \ref{app:E}.  

\begin{proposition} \label{prop:rationality_assumption}
    Consider an Ising graphical model proposed in Section \ref{sec:intro_model} with the true parameter matrix $\bTheta ^ * = \big\{ \theta_{jk} ^ * \big\}$. If we have
    \begin{align*}
        (1) & \text{ For any arbitrary } (j,k) \in \mathcal{E}, 0 \leq \theta_{jk} ^ * \leq \theta_{\max{}} < \infty; \\
        (2) & \text{ The maximum degree for all the vertices } \tau = o(p); \\ 
        (3) &    ~\tau   \tanh{(\theta_{\max{}})}  \le 1/2;
    \end{align*}
    and there exists some $C > 0$ such that $\|\bTheta ^ *\|_{1, \infty} < C/p$, then $\Big\|\mathbb{E}_{\bTheta ^ *} \big[\mathbf{W}_\ell (\bTheta) \mathbf{W}_\ell\trans(\bTheta)\big] \Big\|_{\rm op}= O(p)$.
\end{proposition}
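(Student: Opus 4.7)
The plan is to reduce the bound on $\bigl\|\mathbb{E}_{\bTheta^*}[\Wb_\ell(\bTheta)\Wb_\ell(\bTheta)\trans]\bigr\|_{\mathrm{op}}$ to a bound on $\|\bSigma\|_{\mathrm{op}}$ for $\bSigma := \mathbb{E}_{\bTheta^*}[\xb\xb\trans]$, and then to exploit the sparsity condition $\|\bTheta^*\|_{1,\infty} < C/p$ to show $\|\bSigma\|_{\mathrm{op}} = O(1)$. As setup, I would differentiate the per-sample loss in \eqref{equ:objective} to obtain
$$
(\Wb_\ell(\bTheta))_{kj} = -2\,x_k^{(\ell)}x_j^{(\ell)}\bigl(1-\sigma(f_j^{(\ell)})\bigr) \quad (k\neq j), \qquad (\Wb_\ell(\bTheta))_{jj} = -2\,x_j^{(\ell)}\bigl(1-\sigma(f_j^{(\ell)})\bigr),
$$
where $f_j^{(\ell)} = 2\theta_{jj}x_j^{(\ell)} + 2\sum_{k\neq j}\theta_{kj}x_j^{(\ell)}x_k^{(\ell)}$ and $\sigma$ is the sigmoid. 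Each entry is uniformly bounded by $2$, and the $j$-th column of $\Wb_\ell$ factors as the scalar $-2x_j^{(\ell)}(1-\sigma(f_j^{(\ell)}))$ times a $\pm 1$ vector.

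\textbf{Reduction to $\|\bSigma\|_{\mathrm{op}}$.} Using the variational identity $\|\mathbb{E}[\Wb_\ell\Wb_\ell\trans]\|_{\mathrm{op}} = \sup_{\|v\|=1}\mathbb{E}\|\Wb_\ell\trans v\|^2$, I would compute $(\Wb_\ell\trans v)_j = -2x_j^{(\ell)}(1-\sigma(f_j^{(\ell)}))\bigl(\sum_{k\neq j}v_k x_k^{(\ell)} + v_j\bigr)$. Applying $(1-\sigma)^2\leq 1$, $(x_j^{(\ell)})^2 = 1$, and $(a+b)^2\leq 2a^2+2b^2$ twice yields
$$
\mathbb{E}\|\Wb_\ell\trans v\|^2 \;\leq\; C_1\,p\,v\trans\bSigma\,v + C_2\,\|v\|^2 \;\leq\; \bigl(C_1 p\,\|\bSigma\|_{\mathrm{op}} + C_2\bigr)\|v\|^2
$$
for absolute constants $C_1,C_2>0$, so the proposition reduces to showing $\|\bSigma\|_{\mathrm{op}} = O(1)$.

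\textbf{Bounding $\|\bSigma\|_{\mathrm{op}}$, and the main obstacle.} Since $\bSigma$ is symmetric with $\bSigma_{jj}=1$, I would use $\|\bSigma\|_{\mathrm{op}}\leq 1 + \max_j\sum_{k\neq j}|\mathbb{E}[X_jX_k]|$. For each off-diagonal entry, I would condition on $\xb_{-j}$ and use \eqref{equ:conditional_dist} to obtain the Ising conditional-mean identity $\mathbb{E}[X_j\mid\xb_{-j}] = \tanh(\theta_{jj}^* + \sum_{k'\neq j}\theta_{jk'}^*X_{k'})$, so that
$$
|\mathbb{E}[X_jX_k]| \;=\; \bigl|\mathbb{E}[X_k\tanh(\theta_{jj}^* + \textstyle\sum_{k'\neq j}\theta_{jk'}^*X_{k'})]\bigr| \;\leq\; |\theta_{jj}^*| + \sum_{k'\neq j}|\theta_{jk'}^*| \;\leq\; \|\bTheta^*\|_{1,\infty} \;<\; C/p,
$$
using $|\tanh a|\leq|a|$, $|X_{k'}|=1$, and the triangle inequality. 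Summing the $p-1$ off-diagonal bounds gives $\|\bSigma\|_{\mathrm{op}}=O(1)$ and the claim. The main obstacle is precisely the sharpness of this last step: a naive bound $|\mathbb{E}[X_jX_k]|\leq 1$ would only give $\|\bSigma\|_{\mathrm{op}}=O(p)$ and inflate the final rate to $O(p^2)$, so the per-entry rate $O(1/p)$ forced by $\|\bTheta^*\|_{1,\infty} < C/p$ is exactly what is needed to absorb the $p$-fold summation. Conditions (1)--(3) of the proposition---nonnegative interactions, bounded maximum degree $\tau=o(p)$, and the Dobrushin-type contraction $\tau\tanh(\theta_{\max})\leq 1/2$---play the background role of ensuring a well-defined unique Gibbs measure on which the above conditional-mean identity and expectations rest.
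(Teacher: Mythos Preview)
Your route is genuinely different from the paper's. The paper computes the entries of $\bXi_\ell=\Wb_\ell\Wb_\ell\trans$ directly, lower–bounds the diagonal, upper–bounds the off–diagonal row sums via correlation–decay estimates for $\mathbb{E}[X_iX_j]$ taken from the Ising literature (where conditions (1)--(3) enter), and closes with Gershgorin. Your variational reduction to $\|\bSigma\|_{\mathrm{op}}$ together with the elementary bound $|\mathbb{E}[X_jX_k]|=|\mathbb{E}[X_k\tanh(\theta_{jj}^*+\sum_{k'\neq j}\theta_{jk'}^*X_{k'})]|\le\|\bTheta^*\|_{1,\infty}<C/p$ is slicker: it uses only the $\|\bTheta^*\|_{1,\infty}$ hypothesis and avoids the external references entirely.

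There is, however, a concrete gap in your column factorization. The paper's per–sample gradient (see their Appendix~D, equation~(D.1)) is the \emph{symmetric} one,
\[
(\Wb_\ell)_{ij}=2x_i^{(\ell)}x_j^{(\ell)}\bigl(B_{i,\ell}^*+B_{j,\ell}^*\bigr)\quad(i\neq j),\qquad B_{q,\ell}^*=-\bigl(1-\sigma(f_q^{(\ell)})\bigr),
\]
because $\theta_{ij}$ is tied to $\theta_{ji}$ and thus appears in both the $i$th and $j$th node terms of \eqref{equ:objective}. Your formula $(\Wb_\ell)_{kj}=-2x_kx_j(1-\sigma(f_j))$ keeps only the $B_{j,\ell}^*$ half; with the missing $B_{k,\ell}^*$ the $j$th column no longer factors as a scalar times a $\pm1$ vector, and the clean step to $v\trans\bSigma v$ does not go through as written. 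A fix that stays within your framework: write $\Wb_\ell=\Wb_\ell'+(\Wb_\ell')\trans-\Db$ with your asymmetric $\Wb_\ell'$ and a bounded diagonal $\Db$. Your argument already controls $\mathbb{E}\|(\Wb_\ell')\trans v\|^2$; for the remaining piece $\mathbb{E}\|\Wb_\ell' v\|^2$ one gets $(\Wb_\ell' v)_i=2x_i\!\sum_{j}x_jB_{j,\ell}^*v_j+O(|v_i|)$, and since $|B_{j,\ell}^*+\tfrac12|=\tfrac12|\tanh(Q_{j,\ell}/2)|\le\tfrac12\|\bTheta^*\|_{1,\infty}<C/(2p)$ one has $\sum_jx_jB_{j,\ell}^*v_j=-\tfrac12 v\trans\xb+O(\|v\|/\sqrt{p})$, which again reduces to $p\,v\trans\bSigma v+O(\|v\|^2)$. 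After this patch, your argument is complete and more self–contained than the paper's.
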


We now present the theoretical performance guarantee for DANIEL as implemented in Algorithm \ref{al:Divide_conquer}, with the proof provided in Appendix \ref{sec:proof-main}.

\begin{theorem} [Statistical Rate of DANIEL] \label{thm:Divide_conquer}
Under Assumptions \ref{ass:Positive_minimal_eigenvalue} and \ref{ass:reg_grad}, suppose that the sample size $n \geq Cd^2 p^4 \log p$, and the institution number $m \lesssim  \sqrt{n/(dp^7\log p) }$. Let the step size be $  \eta \leq 1/ \big(12\big\|\widehat\bTheta_0 \big\|^{1/2}_{\rm op} \max\{1,\kappa_{\min} + \kappa_{\max}\} \big)
$ and the number of iterations $\Gamma = \Omega \Big( \log \big(\frac{n}{d p \log p} \big) \Big)$, where the initialization of Algorithm \ref{al:Divide_conquer}, denoted by $\widehat\bTheta_0$, is obtained from $\widehat{\bTheta}_{\mathrm{cvx}}$, the solution to \eqref{equ:question}. We have

\begin{equation*}
    \inf_{\Ob \in \mathcal{O}(d)} \Big\{ \big\|\widehat{\Ub}  - \Ub^* \Ob \big\|_{\mathrm{F}} ^ 2 + \big\|\widehat{\Vb}  - \Vb^* \Ob \big\|_{\mathrm{F}} ^ 2 \Big\} \lesssim \frac{d p \log p}{n} \text{ \ \ with probability \ \ } 1 - O(p ^ {-10}),
\end{equation*}
where $\mathcal{O}(d)$ denotes the collection of $d \times d$ orthogonal matrices. Furthermore, we have
\begin{equation*}
    \big\|\widehat{\bTheta} - \bTheta ^ * \big\|_{\mathrm{F}} \lesssim \sqrt{\frac{d p \log p}{n}} \text{ \ \ with probability \ \ } 1 - O(p ^ {-10}).
\end{equation*}
\end{theorem}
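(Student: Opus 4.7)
The plan is three-fold: (i) certify the warm-start $\widehat{\bTheta}_0$ lies in the basin of attraction of the bi-factored surrogate; (ii) prove geometric contraction of $\mathrm{dist}_t^2 := \inf_{\Ob \in \mathcal{O}(d)}\{\|\Ub_t - \Ub^*\Ob\|_{\mathrm{F}}^2 + \|\Vb_t - \Vb^*\Ob\|_{\mathrm{F}}^2\}$ to a statistical floor of order $dp\log p/n$; and (iii) upgrade the factor-level bound to the Frobenius bound on $\widehat{\bTheta}$. For (i), I would apply Theorem~\ref{thm:main} to the hub convex estimator $\widehat{\bTheta}_{\mathrm{cvx}}$ (which uses $n/m$ samples) to obtain $\|\widehat{\bTheta}_{\mathrm{cvx}} - \bTheta^*\|_{\mathrm{F}} \lesssim \sqrt{dpm\log p/n}$, and then apply a Weyl--Davis--Kahan perturbation argument on its rank-$d$ truncation so that the same rate transfers to $\widehat{\bTheta}_0$ and, after choosing $\Ub_0 = \Vb_0 \Db_0$ as in Remark~\ref{rem:invariance}, to $\mathrm{dist}_0$. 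The condition $m \lesssim \sqrt{n/(dp^7\log p)}$ is precisely what makes this radius small enough for the non-convex factored gradient descent to contract under Assumption~\ref{ass:Positive_minimal_eigenvalue}, and the Embedding Balancer $\tfrac{1}{4}\|\Ub^\top\Ub - \Vb^\top\Vb\|_{\mathrm{F}}^2$ supplies curvature along the otherwise-flat imbalanced-factor direction of $\Ub\Vb^\top$.

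For (ii), the key algebraic observation is that the gradient correction in \eqref{equ:surrogate_obj} cancels the hub-vs-full bias to first order: the gradient of $\mathcal{L}_1(\bTheta) + \langle \nabla\mathcal{L}(\widehat{\bTheta}_0) - \nabla\mathcal{L}_1(\widehat{\bTheta}_0), \bTheta\rangle$ (the $\bTheta$-component of the surrogate) evaluated at $\bTheta^*$ equals
$$\nabla \mathcal{L}(\bTheta^*) + \big\{[\nabla \mathcal{L}_1(\bTheta^*) - \nabla \mathcal{L}_1(\widehat{\bTheta}_0)] - [\nabla \mathcal{L}(\bTheta^*) - \nabla \mathcal{L}(\widehat{\bTheta}_0)]\big\}.$$
By matrix Bernstein under Assumption~\ref{ass:reg_grad}, $\|\nabla \mathcal{L}(\bTheta^*)\|_{\mathrm{op}} \lesssim \sqrt{p\log p/n}$; paired with a rank-$2d$ trace--duality bound, this contributes $\sqrt{dp\log p/n}$ on the factor side. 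The bracketed residual is controlled, via the mean-value theorem, by $\|\widehat{\bTheta}_0 - \bTheta^*\|_{\mathrm{F}}$ times a uniform Hessian-difference modulus $\sup_{\bTheta \in \mathcal{T}}\|\widehat{\Hb}_1(\bTheta) - \widehat{\Hb}(\bTheta)\|_{\mathrm{op}}$; the hypotheses $n \ge Cd^2p^4\log p$ and $m \lesssim \sqrt{n/(dp^7\log p)}$ are tuned so that this product is of strictly lower order than $\sqrt{dp\log p/n}$. Combining this noise control with Assumption~\ref{ass:Positive_minimal_eigenvalue} lifted to $(\Ub,\Vb)$ through the Balancer yields $\mathrm{dist}_{t+1}^2 \le (1-c\eta)\,\mathrm{dist}_t^2 + c'\eta \cdot dp\log p/n$, and iterating $\Gamma = \Omega(\log(n/(dp\log p)))$ times collapses the geometric term to deliver the first inequality. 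For (iii), I would take the minimizing rotation $\Ob^\star$ for $\mathrm{dist}_\Gamma$, expand $\widehat\Ub\widehat\Vb^\top - \bTheta^* = (\widehat\Ub - \Ub^*\Ob^\star)\widehat\Vb^\top + \Ub^*\Ob^\star(\widehat\Vb - \Vb^*\Ob^\star)^\top$, and use the $O(1)$ bounds on $\|\widehat\Vb\|_{\mathrm{op}}$ and $\|\Ub^*\|_{\mathrm{op}}$ (inherited from $\bTheta^* \in \mathcal{T}$ and the Balancer) to conclude $\|\widehat\bTheta - \bTheta^*\|_{\mathrm{F}} \lesssim \mathrm{dist}_\Gamma \lesssim \sqrt{dp\log p/n}$.

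The main obstacle I anticipate is step (ii). The non-convex factored landscape converts convexity of $\mathcal{L}$ in $\bTheta$ into a quadratic lower bound on $(\Ub,\Vb)$ only after carefully coupling the displacement energy $\|\bDelta_\Ub \Vb^{*\top} + \Ub^* \bDelta_\Vb^\top\|_{\mathrm{F}}^2$ with the Balancer energy $\|\bDelta_\Ub^\top \Ub^* - \bDelta_\Vb^\top \Vb^*\|_{\mathrm{F}}^2$ so that together they dominate $\mathrm{dist}_t^2$; and the bias residual must be controlled uniformly along the trajectory, not merely at $\bTheta^*$, which is where the exponents in the sample-size and institution-count conditions get pinned down. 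Once those are in hand, matrix Bernstein for $\nabla\mathcal{L}(\bTheta^*)$, rank-$2d$ duality, and the factor-to-matrix conversion in (iii) are routine.
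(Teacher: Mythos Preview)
Your proposal is correct and mirrors the paper's proof: initialization via Theorem~\ref{thm:main} plus a Procrustes-type perturbation bound (the paper's Lemma~\ref{lem:initial_value}), a one-step contraction inequality $\rho^2(\Zb_{\gamma+1},\Zb^*)\le(1-c\eta)\rho^2(\Zb_\gamma,\Zb^*)+c'\eta\,e_{\mathrm{stat}}^2$ (the paper's Lemma~\ref{lem:contraction_mapping}, whose $T_1$--$T_4$ decomposition is exactly the displacement--Balancer coupling you anticipate as the main obstacle), and a bound on $e_{\mathrm{stat}}=\sup_{\mathrm{rank}(\bDelta)\le d,\|\bDelta\|_{\mathrm{F}}\le 1}\langle\nabla_{\bTheta}\mathcal{L}_S(\bTheta^*),\bDelta\rangle$ via the same gradient decomposition you write down (the paper's Lemma~\ref{lem:epsilon_non-convex}).

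The one place the paper is sharper than your sketch is the handling of the bracketed residual. You propose controlling it by $\|\widehat\bTheta_0-\bTheta^*\|_{\mathrm{F}}\cdot\sup_{\bTheta\in\mathcal{T}}\|\widehat\Hb_1(\bTheta)-\widehat\Hb(\bTheta)\|_{\mathrm{op}}$ and worry about trajectory-uniform noise control. The paper instead Taylor-expands $\nabla\mathcal{L}_S(\bTheta^*)$ to \emph{second} order in $\widehat\bTheta_0-\bTheta^*$, yielding three pieces $\Eb_1,\Eb_2,\Eb_3$: $\Eb_1=\nabla\mathcal{L}(\bTheta^*)$ (matrix Bernstein, as you say); $\Eb_2$ involves $\widehat\Hb(\bTheta^*)-\widehat\Hb_1(\bTheta^*)$ at the \emph{fixed} point $\bTheta^*$ only (pointwise concentration, no uniformity needed); and $\Eb_3$ is a deterministic Lipschitz second-order remainder contributing $p^3\|\widehat\bTheta_0-\bTheta^*\|_{\mathrm{F}}^2$. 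Since $e_{\mathrm{stat}}$ is defined once at $\bTheta^*$ and the trajectory-dependence is entirely absorbed into the RSC/RSS bounds of Corollary~\ref{col:surrogate_RSCRSS} (which hold uniformly on $\mathcal{T}$ by the same argument as Lemma~\ref{lem:RSC_RSS}), no along-trajectory noise analysis is required.
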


\begin{remark} [Theoretical Contribution of Theorem \ref{thm:Divide_conquer}] Theorem \ref{thm:Divide_conquer} shows that DANIEL preserves statistical performance despite its distributed and non-convex structure, attaining the same rate as the classical result from a non-distributed convex estimator for the low-rank Ising model \citep{nussbaum2019ising}. To the best of our knowledge, this is the first theoretical result on the statistical rate for non-convex estimators in the low-rank Ising model. Our rate for the non-convex estimators $\widehat{\Ub}$ and $\widehat{\Vb}$ matches those of state-of-the-art centralized non-convex estimators in matrix completion problems \citep{chen2020noisy}. 
\end{remark}

\subsection{Rationality for Initialization of DANIEL} \label{sec:rational_initial}

The proof of Theorem~\ref{thm:Divide_conquer} requires the initialization $\widehat{\boldsymbol\Theta}_{0}$ to achieve a sufficiently sharp rate for the validity of DANIEL, as we specify in Appendix~\ref{sec:app:ini} for details. In Theorem \ref{thm:main}, we show that the statistical rate of $\widehat{\boldsymbol\Theta}_{\rm{cvx}}$, the solution to the optimization problem \eqref{equ:question}, satisfies the initialization requirements, thereby allowing DANIEL to attain its theoretically guaranteed performance. The proof of Theorem~\ref{thm:main} is provided in Appendix \ref{sec: convex thm}.

\begin{theorem}[Rate Guarantee for DANIEL Initialization] \label{thm:main}
Under Assumptions \ref{ass:Positive_minimal_eigenvalue}-\ref{ass:reg_grad}, if $md^2 p^4 \log p/n = O(1)$ and we choose $\lambda = C_0 \sqrt{\frac{p \log p}{n/m}}$ for a sufficiently large constant $C_0$, then the solution to \eqref{equ:question}, denoted as $\widehat{\boldsymbol\Theta}_{\rm{cvx}}$, satisfies that
\begin{equation*}
    \big \| \widehat{\boldsymbol\Theta}_{\rm{cvx}} - \bTheta ^ * \big \|_{\mathrm{F}} \lesssim \sqrt{\frac{dp \log p}{n/m}}, \text{\ \ \ with probability \ \ } 1 - O(p^{-10}).
\end{equation*}
\end{theorem}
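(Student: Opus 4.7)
The plan is to apply the classical nuclear-norm regularized $M$-estimator framework to the hub-only conditional log-likelihood $\mathcal{L}_1$, which is built from the $n/m$ samples on $\mathcal{S}_1$. The three ingredients I will assemble are (i) a restricted strong convexity (RSC) lower bound for $\mathcal{L}_1$ around $\bTheta^*$ on the low-rank error cone, (ii) a dual-norm bound on the score $\|\nabla\mathcal{L}_1(\bTheta^*)\|_{\mathrm{op}}$, and (iii) decomposability of $\|\cdot\|_*$ at the rank-$d$ matrix $\bTheta^*$.

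Setting $\widehat\bDelta = \widehat\bTheta_{\mathrm{cvx}} - \bTheta^*$, I first derive a basic inequality. Since $\bTheta^*\in\mathcal{T}$ is feasible and $\widehat\bTheta_{\mathrm{cvx}}$ is the minimizer, a second-order Taylor expansion of $\mathcal{L}_1$ around $\bTheta^*$ yields
\[
\tfrac12\vec(\widehat\bDelta)^\top\widehat\Hb(\bar\bTheta)\vec(\widehat\bDelta)+\lambda\|\widehat\bTheta_{\mathrm{cvx}}\|_*\;\leq\;-\langle\nabla\mathcal{L}_1(\bTheta^*),\widehat\bDelta\rangle+\lambda\|\bTheta^*\|_*
\]
for some $\bar\bTheta$ on the segment from $\bTheta^*$ to $\widehat\bTheta_{\mathrm{cvx}}$. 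Invoking Assumption~\ref{ass:Positive_minimal_eigenvalue}, together with a restricted empirical-Hessian concentration on the rank-$2d$ cone (discussed as the main obstacle below), replaces the quadratic form by $\tfrac{\kappa_{\min}}{2}\|\widehat\bDelta\|_{\mathrm F}^2$.

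Next I control the score by matrix Bernstein. Because $\nabla\mathcal{L}_1(\bTheta^*)=(m/n)\sum_{\ell\in\mathcal{S}_1}\Wb_\ell(\bTheta^*)$ is an average of i.i.d.\ mean-zero $p\times p$ matrices whose matrix second moment has operator norm at most $Cp$ by Assumption~\ref{ass:reg_grad}, and whose entries are almost surely $O(1)$ (being affine in the $\pm 1$ products $x_j^{(\ell)}x_k^{(\ell)}$), matrix Bernstein yields $\|\nabla\mathcal{L}_1(\bTheta^*)\|_{\mathrm{op}}\lesssim\sqrt{p\log p/(n/m)}$ with probability $1-O(p^{-10})$. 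The choice $\lambda=C_0\sqrt{p\log p/(n/m)}$ for $C_0$ large thus ensures $\lambda\geq 2\|\nabla\mathcal{L}_1(\bTheta^*)\|_{\mathrm{op}}$. The standard Negahban decomposability argument at the rank-$d$ matrix $\bTheta^*$ then produces the cone condition $\|\widehat\bDelta_{\mathcal{M}^\perp}\|_*\leq 3\|\widehat\bDelta_{\mathcal{M}}\|_*$, where $\mathcal{M}$ is spanned by matrices whose column (or row) space lies in that of $\bTheta^*$, so $\|\widehat\bDelta\|_*\leq 4\sqrt{2d}\|\widehat\bDelta\|_{\mathrm F}$. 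Combining with the RSC bound and the trace duality $|\langle\nabla\mathcal{L}_1(\bTheta^*),\widehat\bDelta\rangle|\leq\|\nabla\mathcal{L}_1(\bTheta^*)\|_{\mathrm{op}}\|\widehat\bDelta\|_*$ gives $\kappa_{\min}\|\widehat\bDelta\|_{\mathrm F}^2\lesssim\lambda\sqrt{d}\|\widehat\bDelta\|_{\mathrm F}$, from which the target rate $\|\widehat\bDelta\|_{\mathrm F}\lesssim\sqrt{dp\log p/(n/m)}$ follows.

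The hard part will be ingredient (i): transferring the population-level Assumption~\ref{ass:Positive_minimal_eigenvalue} to the empirical Hessian $\widehat\Hb(\bar\bTheta)$, uniformly over $\bar\bTheta$ on the segment and over the rank-$2d$ cone, with only $n/m$ samples. A naive uniform operator-norm concentration of $\widehat\Hb$ on $\mathcal{T}$ would demand $n/m\gg p^2$ and would ruin the sharp $d$ dependence needed downstream. Instead I would localize to $\{\bDelta:\|\bDelta\|_*\leq 4\sqrt{2d}\|\bDelta\|_{\mathrm F}\}\cap\mathcal{T}$, build an $\varepsilon$-net of metric entropy $O(dp\log p)$ on this cone, apply a Bernstein-type tail bound to $\vec(\bDelta)^\top[\widehat\Hb(\bar\bTheta)-\Hb(\bar\bTheta)]\vec(\bDelta)$ at each net point, and union bound; a short chaining step will handle the dependence on $\bar\bTheta$. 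The slack $d^2p^4$ in the sample-size condition $md^2p^4\log p/n=O(1)$ is precisely what is consumed in absorbing the $\ell_\infty$ scale of the per-sample Hessian summands against the covering-number cost.
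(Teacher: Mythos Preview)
Your overall strategy---Negahban's decomposable-regularizer framework with (i) RSC, (ii) a dual-norm score bound via matrix Bernstein, and (iii) nuclear-norm decomposability at the rank-$d$ truth---is exactly the paper's, and your treatment of (ii) and (iii) matches its Step~1 essentially verbatim. The difference is entirely in how RSC is established.

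The paper does \emph{not} use covering. Instead it proves an algebraic identity (Lemma~\ref{lem:H_toSigma}): writing the SVD $\bDelta=\Ub\Db\Vb^\top$, one has
\[
\vec(\bDelta)^\top\widehat\Hb(\bTheta)\vec(\bDelta)=\vec(\Db)^\top\widehat\bSigma(\bTheta)\vec(\Db),
\]
where $\widehat\bSigma(\bTheta)$ is the empirical second moment of the rotated blocks $\sqrt{A_{r,\ell}(\bTheta)}\,\Ub^\top\Yb_{(-r)}^{(\ell)}\Vb$. Because $\Db$ is diagonal with $\|\vec(\Db)\|_1=\|\bDelta\|_*$, this gives
\[
\big|\vec(\bDelta)^\top(\widehat\Hb-\Hb)\vec(\bDelta)\big|\le\|\bDelta\|_*^2\,\|\widehat\bSigma-\bSigma\|_{\max},
\]
and a single entrywise Hoeffding bound (the $\widehat\bSigma$-summand entries are shown to be $O(p^2)$) yields $\|\widehat\bSigma-\bSigma\|_{\max}\lesssim p^2\sqrt{\log p/(n/m)}$. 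On the cone $\|\bDelta\|_*\le 4\sqrt{2d}\|\bDelta\|_{\rm F}$ the tolerance term becomes $32d\cdot p^2\sqrt{\log p/(n/m)}\,\|\bDelta\|_{\rm F}^2$, and absorbing it into $\tfrac{\kappa_{\min}}{2}\|\bDelta\|_{\rm F}^2$ is precisely what produces the $d^2p^4\log p$ sample requirement.

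Your covering route is viable, but the accounting you sketch does not obviously land on $d^2p^4$. If you use the sharp per-sample bound $\vec(\bDelta)^\top\Rb_\ell(\bTheta)\vec(\bDelta)=\sum_r A_{r,\ell}\langle\Yb_{(-r)}^{(\ell)},\bDelta\rangle^2\le Cp\|\bDelta\|_{\rm F}^2$ (obtained from $\sum_r(\bDelta x_{(-r)})_r^2\le p\|\bDelta\|_{\rm F}^2$) together with a net of log-cardinality $O(dp)$ on the unit-Frobenius low-rank cone, Hoeffding plus union bound points toward $n/m\gtrsim dp^3\log p$ rather than $d^2p^4\log p$; with cruder per-sample bounds one gets $dp^5$. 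Either way, what the paper's SVD rewrite buys is that the whole RSC step reduces to one entrywise concentration with no covering and no separate chaining over $\bar\bTheta$; what your route potentially buys is a different (and in the regime $d\ll p$ possibly tighter) sample condition, at the cost of the full covering/Lipschitz-extension machinery and the uniformity-in-$\bar\bTheta$ argument you flag.
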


\begin{remark} [Contribution of Theorem \ref{thm:main} on Valid Initialization] \label{rem:main} Theorem \ref{thm:main} shows that the statistical rate of $\widehat{\boldsymbol\Theta}_{\rm{cvx}}$ is of lower order as required by Lemma \ref{lem:initial_value} in Appendix \ref{sec:app:ini}, indicating the sufficiency of solving \eqref{equ:question} locally on the hub institution as an initialization strategy for DANIEL. The empirical validity of our proposed strategy is further examined through simulation studies presented in Section \ref{sec:simulation}.

\end{remark}

\subsection{Comparison to the Centralized Estimator} \label{sec:cd}

Finally, we present the validity of the distributed layout of DANIEL as compared with a centralized estimator $\widetilde{\bTheta} = \widetilde{\Ub}\widetilde{\Vb}\trans$, where we replace the distributed loss in \eqref{equ:surrogate_obj} to a centralized bi-factored loss as \begin{equation}
    \label{eq:centralized_estimator}
    (\widetilde{\Ub}, \widetilde{\Vb}) = \argmin_{\Ub, \Vb} \mathcal{L}(\Ub\Vb\trans) + \frac{1}{4}\big\|\Ub\trans \Ub - \Vb \trans \Vb\big\|_{\mathrm{F}} ^ 2,
\end{equation}
where $\mathcal{L}(\cdot)$ is defined in \eqref{equ:objective}. Theorem \ref{thm:rate_ctr_DANIEL} shows that the rate between the DANIEL estimator $\widehat{\bTheta}$ and the centralized estimator $\widetilde{\bTheta}$ is of lower order than the rate between $\widehat{\bTheta}$ and the truth $\bTheta ^*$. The proof of Theorem \ref{thm:rate_ctr_DANIEL} is provided in Appendix~\ref{app:rate_ctr_DANIEL}.

\begin{theorem}[DANIEL's Proximity to the Centralized Estimator]\label{thm:rate_ctr_DANIEL}
    Under the conditions of Theorem~\ref{thm:Divide_conquer} and Theorem~\ref{thm:main}, with the initialization $\widehat\bTheta_0$ obtained from $\widehat\bTheta_{\rm cvx}$ that solves \eqref{equ:question} and number of institutions $m= o(\sqrt{n/(dp^7\log p)})$, we have 
\begin{equation*}
    \inf_{\Ob \in \mathcal{O}(d)} \Big\{ \big\|\widehat{\Ub}  - \widetilde\Ub \Ob \big\|_{\mathrm{F}} ^ 2 + \big\|\widehat{\Vb}  - \widetilde\Vb \Ob \big\|_{\mathrm{F}} ^ 2 \Big\} = o\Bigl( \frac{d p \log p}{n} \Bigr) \text{ \ \ with probability \ \ } 1 - O(p ^ {-10}),
\end{equation*}
where $\mathcal{O}(d)$ denotes the collection of $d \times d$ orthogonal matrices. Furthermore, we have 
\begin{equation*}
    \big\|\widehat{\bTheta} - \widetilde\bTheta \big\|_{\mathrm{F}} = o\biggl(\sqrt{\frac{d p \log p}{n}} \biggr) \text{ \ \ with probability \ \ } 1 - O(p ^ {-10}).
\end{equation*}
\end{theorem}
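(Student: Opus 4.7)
The plan is to compare the first-order optimality conditions of DANIEL and the centralized bi-factored problem, isolate the ``gradient-correction mismatch'' that distinguishes them, and show this mismatch is of strictly lower order than the statistical rate under the strengthened condition $m = o(\sqrt{n/(dp^{7}\log p)})$. Let $\mathcal{L}^{\mathrm{ctr}}(\Ub,\Vb) := \mathcal{L}(\Ub\Vb\trans) + \tfrac{1}{4}\|\Ub\trans\Ub - \Vb\trans\Vb\|_{\mathrm{F}}^{2}$, so that $\widetilde{\mathcal{L}}(\cdot;\widehat\bTheta_{0})$ and $\mathcal{L}^{\mathrm{ctr}}$ differ only by the perturbation $\mathcal{L}_{1}(\Ub\Vb\trans) - \mathcal{L}(\Ub\Vb\trans) + \langle \nabla\mathcal{L}(\widehat\bTheta_{0}) - \nabla\mathcal{L}_{1}(\widehat\bTheta_{0}),\,\Ub\Vb\trans\rangle$, whose matrix gradient at $\bTheta$ equals
\[
\Delta(\bTheta) := \bigl[\nabla\mathcal{L}_{1}(\bTheta) - \nabla\mathcal{L}_{1}(\widehat\bTheta_{0})\bigr] - \bigl[\nabla\mathcal{L}(\bTheta) - \nabla\mathcal{L}(\widehat\bTheta_{0})\bigr].
\]
Subtracting the bi-factor stationarity condition of $(\widehat\Ub,\widehat\Vb)$ from the centralized bi-factor gradient of $\mathcal{L}^{\mathrm{ctr}}$ evaluated at $(\widehat\Ub,\widehat\Vb)$ gives $\nabla_{\Ub}\mathcal{L}^{\mathrm{ctr}}(\widehat\Ub,\widehat\Vb) = -\Delta(\widehat\bTheta)\widehat\Vb$, and symmetrically for $\Vb$; the whole task therefore reduces to bounding $\|\Delta(\widehat\bTheta)\|_{\mathrm{F}}$ and then inverting strong convexity.

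The next step is a mean-value identity $\Delta(\widehat\bTheta) = [\widehat\Hb_{1}(\bar\bTheta) - \widehat\Hb(\bar\bTheta)](\widehat\bTheta - \widehat\bTheta_{0})$ at some interpolant $\bar\bTheta$ on the segment from $\widehat\bTheta_{0}$ to $\widehat\bTheta$, both of which lie in a neighborhood of $\bTheta^{*}$ by Theorem~\ref{thm:main} and Theorem~\ref{thm:Divide_conquer}. The triangle inequality combined with those two rates yields $\|\widehat\bTheta - \widehat\bTheta_{0}\|_{\mathrm{F}} \lesssim \sqrt{mdp\log p/n}$, dominated by the initialization side. For the Hessian difference I would establish a restricted concentration bound: $\widehat\Hb_{1} - \widehat\Hb$ is a centered sum of $\mathrm{sech}^{2}$-weighted outer products in $\xb^{(\ell)}$, and applying matrix Bernstein to its action on rank-$d$ probe matrices yields a rate of the form $\tilde{O}(\sqrt{dp^{7}\log p \cdot m/n})$, where the $p^{7}$ factor tracks the $\xb\xb\trans \otimes \xb\xb\trans$ structure together with the rank restriction. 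Multiplying the two bounds yields $\|\Delta(\widehat\bTheta)\|_{\mathrm{F}} \lesssim (m/\sqrt{n})\sqrt{dp^{7}\log p}\cdot\sqrt{dp\log p/n}$, which becomes $o(\sqrt{dp\log p/n})$ precisely under $m = o(\sqrt{n/(dp^{7}\log p)})$.

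The final step is the bi-factor inversion. Pairing the centralized first-order condition at $(\widehat\Ub,\widehat\Vb)$ against the direction $(\widehat\Ub - \widetilde\Ub\Ob,\,\widehat\Vb - \widetilde\Vb\Ob)$ and combining with restricted strong convexity of $\mathcal{L}^{\mathrm{ctr}}$ plus the balancer---the same pair of ingredients already used in proving Theorem~\ref{thm:Divide_conquer}, only now measuring distance to $\widetilde\bTheta$ rather than $\bTheta^{*}$---gives
\[
\inf_{\Ob \in \mathcal{O}(d)}\bigl\{\|\widehat\Ub - \widetilde\Ub\Ob\|_{\mathrm{F}}^{2} + \|\widehat\Vb - \widetilde\Vb\Ob\|_{\mathrm{F}}^{2}\bigr\} \lesssim \sigma_{d}(\bTheta^{*})^{-2}\,\|\Delta(\widehat\bTheta)\|_{\mathrm{F}}^{2} = o\bigl(dp\log p/n\bigr).
\]
The matrix-level conclusion $\|\widehat\bTheta - \widetilde\bTheta\|_{\mathrm{F}} = o(\sqrt{dp\log p/n})$ then follows by expanding $\widehat\Ub\widehat\Vb\trans - \widetilde\Ub\widetilde\Vb\trans$ and invoking the spectral norm bounds on $\widetilde\Ub,\widetilde\Vb$ inherited from Theorem~\ref{thm:Divide_conquer}. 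The hardest part will be the restricted Hessian concentration: unlike in standard convex CSL analyses where $\widehat\Hb_{1} - \widehat\Hb$ can be bounded in an unrestricted operator norm at a straightforward rate, here one must exploit the rank-$d$ constraint on probe directions to convert a naive $\|\cdot\|_{\mathrm{op}}$ bound into the tight rate that pins down exactly the $p^{7}$ dependence in the admissible scaling of $m$.
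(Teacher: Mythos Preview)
Your overall strategy is sound and close to the paper's: both arguments hinge on controlling the gradient mismatch $\Delta(\bTheta) = \nabla\mathcal{L}_{S}(\bTheta) - \nabla\mathcal{L}(\bTheta)$ (the paper writes this as $\delta_{\mathrm{stat}} = \sup_{\bDelta}\langle\Delta(\widetilde\bTheta),\bDelta\rangle$), and both land on the same condition $m = o(\sqrt{n/(dp^{7}\log p)})$. The difference is framing. The paper does not compare stationary points directly; it re-runs the one-step contraction of Lemma~\ref{lem:contraction_mapping} on the \emph{actual iterates} $\Zb_{\gamma}$ with $\widetilde\Zb$ replacing $\Zb^{*}$, producing an $I_{1}$--$I_{4}$ decomposition parallel to $T_{1}$--$T_{4}$. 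Your route---treating $\widehat\Zb$ as (approximately) stationary for $\widetilde{\mathcal{L}}$ and inverting---also works, and even sidesteps the issue that $\widehat\Zb = \Zb_{\Gamma}$ is only approximately stationary (the residual gradient is exponentially small in $\Gamma$).

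Where you underestimate the difficulty is the ``bi-factor inversion.'' It is not a vanilla RSC-plus-balancer argument transplanted from Theorem~\ref{thm:Divide_conquer}. In that proof the anchor is $\bTheta^{*}$, where $\nabla\mathcal{L}(\bTheta^{*})$ is genuinely small (matrix Bernstein). Here the anchor is $\widetilde\bTheta$, and $\nabla_{\bTheta}\mathcal{L}(\widetilde\bTheta)$ is \emph{not} small in any direct sense---only the bi-factor gradients $\nabla_{\Ub}\mathcal{L}^{\mathrm{ctr}}(\widetilde\Zb)$, $\nabla_{\Vb}\mathcal{L}^{\mathrm{ctr}}(\widetilde\Zb)$ vanish. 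The paper's key new step is to exploit this first-order optimality to rewrite the offending term (their $I_{1.4}$) as a balancer contribution $\nabla_{\bPi}\mathcal{Q}(\widetilde\bPi)$ that merges with $I_{2}$ and can be absorbed. Your inversion will hit exactly this obstacle once you unpack $\langle\nabla\mathcal{L}^{\mathrm{ctr}}(\widehat\Zb),\,\widehat\Zb - \widetilde\Zb\Ob\rangle \gtrsim \rho^{2}$; it goes through, but only via this same trick. The paper explicitly flags this as the novel technical contribution, so you should not treat it as a routine replay of Theorem~\ref{thm:Divide_conquer}.

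Two smaller points. First, your attribution of the $p^{7}$ to a \emph{restricted} rank-$d$ Hessian concentration is not how it arises. The paper uses an unrestricted operator-norm bound $\|\widehat\Hb_{1}(\bTheta^{*}) - \widehat\Hb(\bTheta^{*})\|_{\mathrm{op}} \lesssim p^{3}\sqrt{m\log p/n}$ (via the $\widehat\bSigma$-reduction of Lemma~\ref{lem:H_toSigma} and an entrywise Hoeffding bound) together with a separate Hessian-Lipschitz term $p^{3}\|\widehat\bTheta_{0} - \bTheta^{*}\|_{\mathrm{F}}^{2}$; it is this second-order piece that dominates and yields the $p^{7}$ threshold. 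Your mean-value formulation at a random $\bar\bTheta$ conflates these two contributions---you would need either uniform concentration over a neighborhood or the same split as the paper. Second, the paper evaluates the mismatch at $\widetilde\bTheta$ rather than $\widehat\bTheta$; either choice works since both are $O_{P}(\sqrt{dp\log p/n})$-close to $\bTheta^{*}$, but note that this requires first establishing the rate of $\widetilde\bTheta$ (the paper's Theorem~\ref{thm:rate_ctr}), which your sketch implicitly assumes.
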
  

\begin{remark} [Contribution of Strategy in Proof to Theorem \ref{thm:rate_ctr_DANIEL}] The proof of Theorem \ref{thm:rate_ctr_DANIEL} imposes a slightly stronger requirement on the number of institutions $m$ than that in Theorem~\ref{thm:Divide_conquer}, in order to guarantee that the distance between $\widehat{\bTheta}$ and $\widetilde{
\bTheta}$ is asymptotically negligible compared to their rates to the true parameter $\bTheta ^*$. Unlike the proof to Theorem~\ref{thm:Divide_conquer}, here we must show that the scale of the gradient  $\nabla_{\bTheta}\cL(\widetilde\bTheta)$ is $o_P\Bigl(\sqrt{{d p \log p}/{n}} \Bigr)$, yet $\nabla_{\bTheta}\cL(\widetilde\bTheta)$ is generally non-zero due to the inherent non-convexity of the loss function in \eqref{eq:centralized_estimator}. To overcome this challenge, we develop a novel proof technique that enables a more refined analysis of the distributed-centralized gap in the presence of non-convexity, as detailed in Appendix~\ref{app:rate_ctr_DANIEL}. To the best of our knowledge, this is the first theoretical result establishing the equivalence of rates between distributed and centralized estimators for non-convex losses. Moreover, our proof strategy extends naturally to a broader class of non-convex losses with strong convexity and smoothness regularity, such as Assumption~\ref{ass:Positive_minimal_eigenvalue}, thereby providing a general methodological framework for future theoretical analyses of distributed optimization in non-convex settings.

\end{remark}

\section{Simulation Study} \label{sec:simulation}



 
In this section, we present simulation studies to evaluate the performance of the proposed DANIEL algorithm, focusing on three key research questions: \textbf{RQ1:} How does DANIEL perform in a centralized setting compared to convex optimization-based alternatives in terms of estimation accuracy? \textbf{RQ2:} How does DANIEL perform in the proposed distributed setting, as compared with the centralized setting, in terms of estimation accuracy? \textbf{RQ3:} How does the computational efficiency of DANIEL compare to existing centralized and distributed baseline algorithms? 


\subsection{Simulation Study Settings} \label{sec:sim_protocol}

\noindent \textbf{Tasks and Preparation of Simulated Datasets.} To provide a comprehensive analysis, we compare DANIEL with baseline methods in terms of estimation accuracy and computation efficiency under both centralized and distributed settings. In our simulation study, the dimension of features $p$ is selected from the set $\{50, 100, 150, 200\}$, with the rank $d$ set at $d = p/10$. The total sample size is set as $n \in \{1000, 5000, 10000, 20000\}$. For each setup of $(p,d,n)$, we generate a random matrix $\Ub ^ * \in \mathbb{R}^{p\times d}$ with entries $\Ub^*_{ij} \overset{i.i.d.}{\sim} \mathcal{N} \big(0,(dp)^{-1}\big)$. Subsequently, we draw $n$ independent samples from the Ising model \eqref{equ:def_X} with the true parameter matrix $\bTheta ^ * = \Ub^* \Ub^{* \transpose} \in \mathbb{R} ^ {p \times p}$. The samples are evenly partitioned across $m = \lfloor n^{x} \rfloor$ institutions, where $x \in [0,1]$ controls the \textit{distributedness level} of the simulated data. Specifically, $x = 0$ implies that $m = n^0 = 1$, which represents the case where the centralized algorithm is applied without a distributed layout. The initialization of DANIEL, denoted as $\widehat{\bTheta}_{\rm 0}$, is obtained by solving the convex optimization problem \eqref{equ:question} on the samples stored at the hub institution $\mathcal{S}_1$, and the estimator $\widehat{\bTheta}$ is derived by applying DANIEL until convergence. Details regarding the selection of hyperparameters for DANIEL, including the number of iterations, learning rates, and initialization strategies, are provided in Appendix~\ref{app:hyperparam_sim}.



\noindent \textbf{Baselines.} In our simulation study, the performance of DANIEL is compared against four baseline models that showcase different convex optimization techniques: (I) \textit{SV-Soft}, which applies the proximal gradient descent (PGD) algorithm with soft-thresholding on singular values \citep{cai2010singular}; (II) \textit{SV-Hard}, which applies singular value hard-thresholding after each gradient update of the likelihood loss \citep{gavish2014optimal}; (III) \textit{SV-Topd}, which retains only the top-$d$ singular values after each gradient update \citep{eckart1936approximation}; and (IV) \textit{PSD-Cvx}, which minimizes the likelihood loss \eqref{equ:objective} under positive semi-definite (PSD) constraints \citep{boyd2004convex}. To ensure consistency, all baseline models are evaluated in centralized and distributed settings, identical to the evaluation of DANIEL, and are trained with the same stopping criteria. Details on the explanation and hyperparameter setup for all baseline models are included in Appendix~\ref{app:hyperparam_base}.




\noindent \textbf{Training Protocol.} In our simulation study, we compare the performance of DANIEL with the baseline methods under both centralized and distributed settings. To adapt the baseline models to distributed settings, we follow the procedures proposed by \cite{jordan2019dist}, with additional implementation details provided in Appendix~\ref{app:hyperparam_base}. In the visualizations, including Figure~\ref{fig:simu_result_errtime}, Figure~\ref{fig:tmp1}, and Figure~\ref{fig:app_simu} in Appendix~\ref{app:simu_result}, the methods are distinguished by color, as the data points and trajectories of DANIEL are shown in red, whereas the baseline trajectories focusing on convex optimization are shown in other colors. The simulation procedure for each method under each $(p,d,n,x)$ configuration is repeated 200 times to ensure reliability. 


\noindent \textbf{Evaluation Metrics.} We evaluate the performance of DANIEL and baselines with two key metrics. First, estimation accuracy is measured by the Frobenius norm error of the estimated parameter from the true parameter $\bTheta ^ *$, defined as $\mathrm{Err}_{\mathrm{F}}(\cdot) = \|\cdot - \bTheta^*\|_{\mathrm{F}}$. Second, computational efficiency is measured by the average computation time of each method under the respective configuration. All empirical evaluations are based on consistent settings following widely adopted standards in prior work, such as \citep{wang2017efficient,alqahtani2019performance}.




\subsection{Evaluation on Estimation Performance of DANIEL}

\noindent To answer \textbf{RQ1}, we evaluate the estimation accuracy of DANIEL under centralized settings, where the bi-factored surrogate loss \eqref{equ:surrogate_obj} degenerates to the loss on a single institution with balancing regularization. Specifically, we compare the Frobenius norm error of the DANIEL estimator with those obtained from baseline models detailed in Section~\ref{sec:sim_protocol} on the scenario where the distributedness level is fixed at $x = 0$.

\noindent To answer \textbf{RQ2}, we examine the effectiveness of DANIEL in distributed settings, for which it is designed. We compare the trajectories of the Frobenius norm error for DANIEL and the baseline methods adapted in a distributed setup as the distributedness level $x$ rises incrementally from $0$ to $0.6$. Our evaluation primarily focuses on the slope of these trajectories, where the Frobenius norm error increases as the same amount of samples is partitioned across more institutions subject to data privacy constraints.

\begin{figure}[h!]
     \begin{center}
         \includegraphics[scale = 0.32]{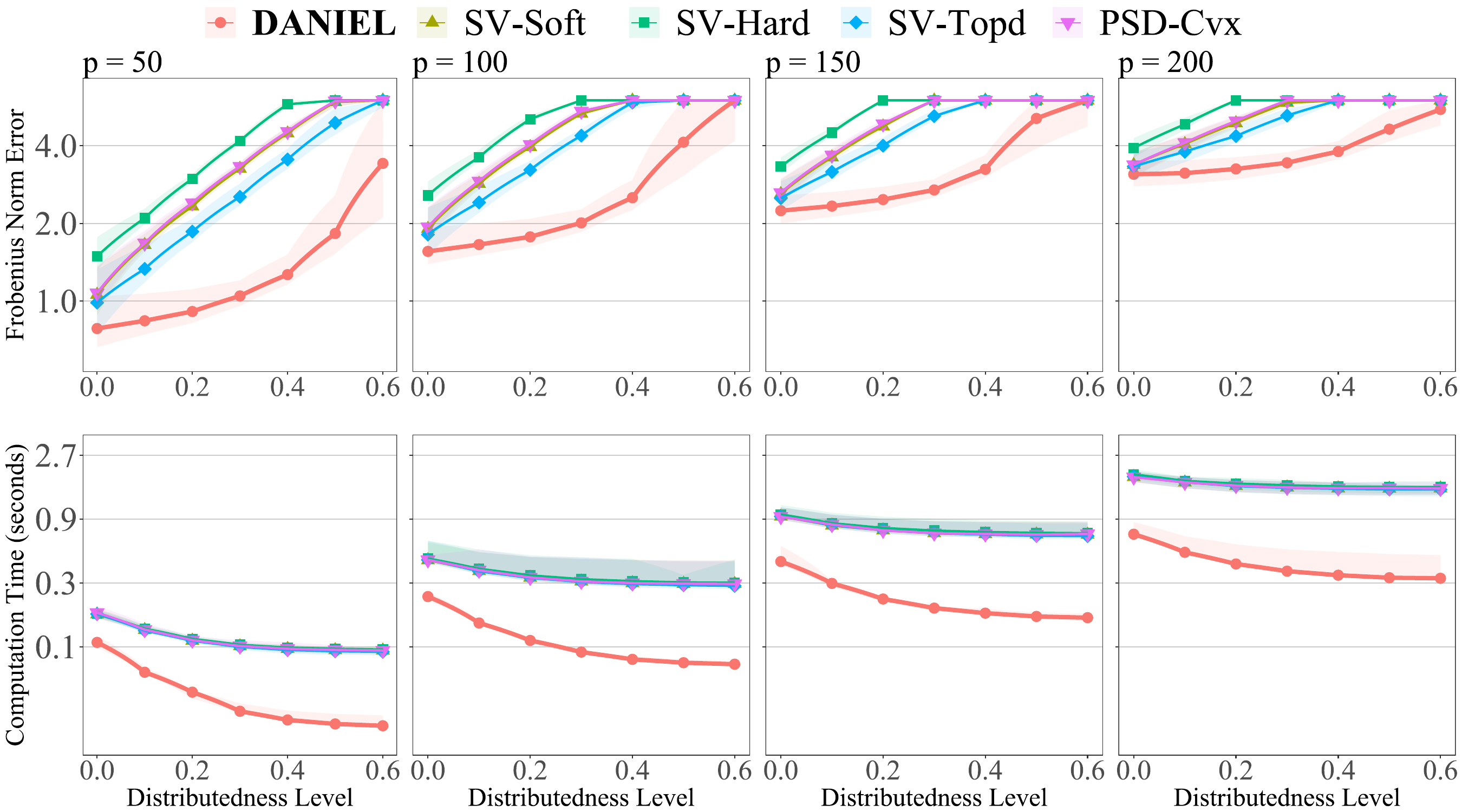}
     \end{center}
     \vspace{-20pt}
     \caption{The trajectories of the $\|\cdot\|_{\text{F}}$-error and computation time across different feature dimensions $p$, with the distributedness level $x$ (where $m = \lfloor n^{x} \rfloor$) varying from 0 to 0.6. The total sample size is set at $n = 1,000$. The trajectories of DANIEL are shown in red, demonstrating superior performance and efficiency over baseline methods for distributedness levels $x < 0.5$ (i.e., $m = o(\sqrt{n})$). Vertically, lower $\|\cdot\|_{\text{F}}$-error and shorter computation time indicate better performance for any fixed $x$; horizontally, a flatter error trajectory as $x$ increases is preferred, as this indicates that the distributed estimators remain valid as compared with their centralized counterparts. }
    \label{fig:simu_result_errtime}
\end{figure}

We visualize the estimation accuracy of DANIEL with fixed total sample size $n = 1,000$ across different setups of the feature dimension $p$ and distributedness level $x$ in Figure~\ref{fig:simu_result_errtime}, with additional results on the performance of DANIEL on larger $n$ detailed in Appendix~\ref{app:simu_result} and visualized in Figure~\ref{fig:app_simu}. Furthermore, we distribute $n=10,000$ samples across $m = 15$ institutions (approximately corresponding to $x = 0.3$) and evaluate the estimation accuracy of DANIEL and the baselines as $p$ increases from $20$ to $200$, with the trajectories presented in Figure~\ref{fig:tmp1}. We summarize the key observations below: (i) From the perspective of non-convex optimization, DANIEL consistently achieves lower Frobenius norm error across all tested settings compared to baselines focusing on convex optimization, both in centralized settings \textit{and} in distributed settings where the number of institutions satisfies the classic theoretical threshold $m = o(n^{0.5})$ as indicated by \cite{jordan2019dist}. The empirical evidence, especially for the centralized case where $x=0$, provides strong support for Theorem~\ref{thm:Divide_conquer}, showing that DANIEL's non-convex approach remains superior to baselines even when the conditions on the gradient of the loss $\mathcal{L}(\cdot)$ and the magnitude of $n$ specified in the theorem are not satisfied. This result confirms that these regularity conditions for the proof of Theorem~\ref{thm:Divide_conquer} are sufficient but not necessary in practice. (ii) From the perspective of effectiveness on initialization, DANIEL converges reliably across all tested settings, aligning with our theoretical result in Theorem \ref{thm:main}. Our simulation study enhances the robustness and practical utility of DANIEL by showing that solving the convex optimization problem \eqref{equ:question} on the hub institution $\mathcal{S}_1$ serves as a valid initialization strategy, while obtaining $\widehat{\bTheta}_0$ as the rank-$d$ SVD of the exact solution $\widehat{\bTheta}_{\mathrm{cvx}}$ to \eqref{equ:question} is not practically necessary, as noted in Remark~\ref{rem:main} and reflected in our study setup detailed in Appendix~\ref{app:hyperparam_sim}. (iii) From the perspective of distributed layout, DANIEL shows steady Frobenius norm error across varying levels of distributed level $x$, even when the conditions on initialization $\widehat{\bTheta}_0$ and the number of institutions $m$ in Theorem~\ref{thm:rate_ctr_DANIEL} are substantially relaxed. The results indicate that the inherent distributed framework of DANIEL maintains performance comparable to its centralized version up to the classical threshold $m = o( n^{0.5})$, consistent with Theorem~\ref{thm:rate_ctr_DANIEL} under the ideal asymptotic scenario where $p = O(1)$ and $n \xrightarrow[]{} \infty$. Furthermore, DANIEL presents a flatter slope in Figure~\ref{fig:tmp1} as compared to the baselines, suggesting its robustness and broad applicability for high-dimensional, large-scale data where the feature dimension $p$ is relatively large.


\subsection{Evaluation on Computation Efficiency of DANIEL}

To answer \textbf{RQ3}, we compare the average computation time of DANIEL against baseline methods as we raise the distributedness level $x$ from 0 to 0.6 for each fixed $(p,d,n)$ configuration. We focus on both the static efficiency at individual values of $x$ and the trajectory of computation time as $x$ increases.

We visualize the computation time of DANIEL and baselines with fixed total sample size $n = 1,000$ in Figure~\ref{fig:simu_result_errtime}, with additional results on larger $n$ presented in Figure~\ref{fig:app_simu} and Table~\ref{fig:app_simu_tab_time}. We summarize the key observations below: (i) From the perspective of static efficiency, DANIEL consistently outperforms convex baselines in average computation time. This advantage arises because a full convergence on \eqref{equ:question} is not required in the initialization of DANIEL, and DANIEL's subsequent iterations only involve the gradient updates of the $p \times d$ bi-factors. In contrast, the baseline methods perform costly SVD on a full-size $p \times p$ matrix at each iteration until the convergence criteria are met. (ii) From the perspective of efficiency trend with distributedness, DANIEL further reduces computation time relative to its centralized counterpart as the same amount of sample $n$ is distributed in more institutions under data privacy constraints. This result highlights the scalability of DANIEL's distributed framework, showing its strong suitability for real-world applications, where the data are naturally fragmented across multiple institutions yet a unified analysis is required.

\begin{figure}[h!]
    \begin{center}
        \includegraphics[width=4in]{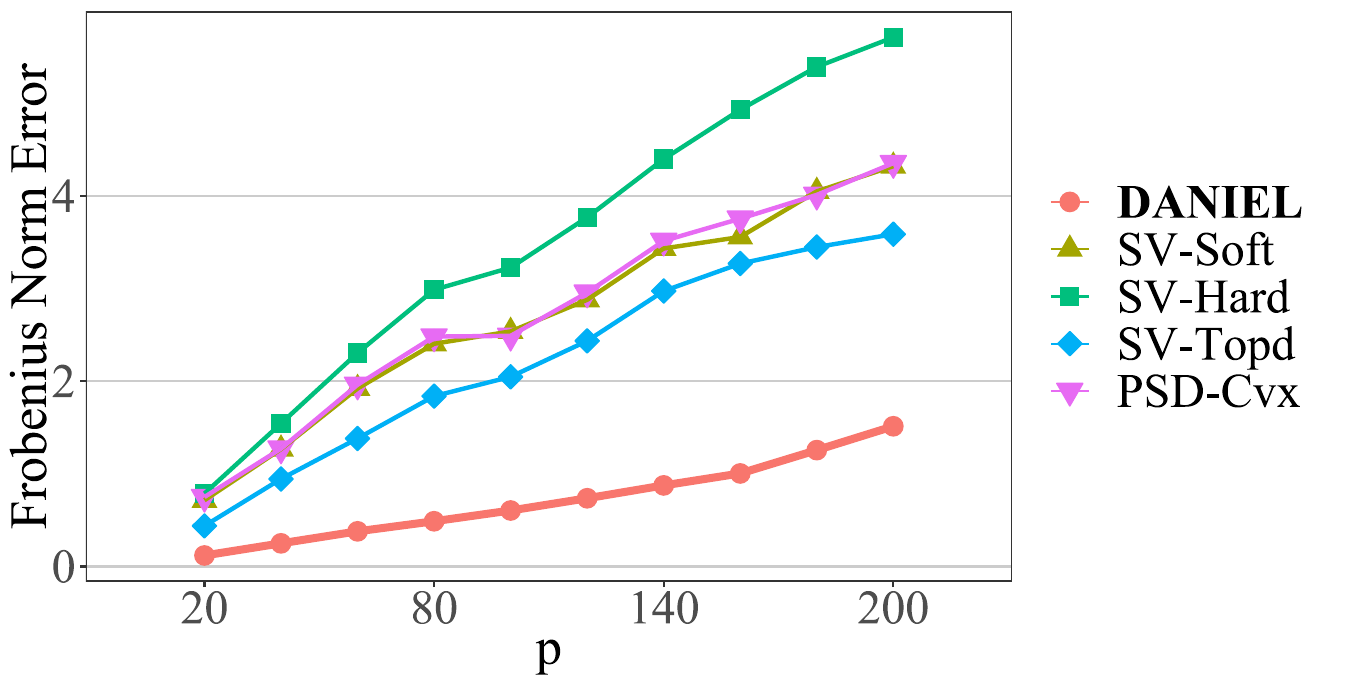}
    \end{center}
    \vspace{-30pt}
    \caption{The trajectories of the $\|\cdot\|_{\text{F}}$-errors across different methods in the simulation setup with a total of $n=10,000$ samples partitioned into $m = 15$ institutions and the feature dimension $p$ increasing from 20 to 200. The slope of each trajectory indicates the sensitivity of estimation accuracy to increasing $p$. Flatter trajectories are preferred, as they indicate greater robustness to high-dimensional features. }
    \label{fig:tmp1}
\end{figure}


\section{Application of DANIEL to EHR for Neurodegenerative Diseases}
\label{sec:real_data}


In this section, we evaluate the practical applicability of the proposed DANIEL algorithm using real-world EHR data, focusing on neurodegenerative diseases. Our primary objective is to address the research question: How does DANIEL perform in terms of estimation accuracy across various downstream tasks when compared with centralized non-convex methods, convex optimization-based methods, and machine learning (ML)-based methods?

\subsection{Model Description} \label{sec:model_EHR}

Constructing patient representations based on EHR data facilitates a range of clinical tasks, such as identifying similar patients \citep{sharafoddini2017patient}, predicting mortality \citep{allyn2017comparison}, and patient phenotyping \citep{zhu2016measuring, blinov2021patient}. For each patient, we assume that a latent variable $\yb \in \mathbb{R}^d$ encapsulates essential information influencing feature occurrence and joint distribution in clinical data. This unobservable latent Gaussian variable $\yb$, referred to as the \textit{patient embedding}, informs downstream tasks such as patient clustering. 
We model the joint distribution $p(\xb, \yb)$ between the binary feature occurrence vector $\xb$  and the patient embedding $\yb$ as a mixed-value graphical model previously studied by \cite{nussbaum2019ising}:
\begin{equation} \label{equ:EHR_model}
    p(\xb,\yb) \propto \exp\Big( 2\yb\trans \Ub\trans\xb - \yb\trans \yb + \bb\trans \xb \Big).
\end{equation}
 In model \eqref{equ:EHR_model}, the feature embedding matrix $\Ub\in\mathbb{R}^{p\times d}$ aligns the distribution of binary variable $\xb$ with the patient embedding $\yb$, describing how latent patient embeddings determine the probabilities of clinical feature occurrences. The marginal distribution of $\xb$ takes the form
$p(\xb) \propto \exp \big(\xb\trans \Ub\Ub\trans \xb + \bb\trans\xb \big) $, which corresponds precisely to the low-rank Ising model considered in our work. To ensure model consistency, we set the linear term coefficients $\bb = (b_1,\ldots,b_p)\trans$ as $b_j=(\Ub\Ub\trans)_{jj}$, which ensures that the structure satisfy the Ising model constraints with a low-dimensional positive semi-definite parameter matrix $\bTheta^*=\Ub^{*}(\Vb^{*})\trans=\Ub\Ub\trans$, where $\Ub^{*}=\Vb^{*}=\Ub$. We use $\bTheta ^ *$ and $\Ub ^ *$ to generate patient profiles and support downstream tasks such as patient clustering, as detailed in Section~\ref{sec:real_data_method}.


\subsection{Settings of Real-World EHR Data Analysis}
\label{sec:real_data_method}

\noindent\textbf{Explanation of EHR Datasets and Preprocessing.} To evaluate DANIEL's capability for embedding-based phenotyping, we utilize the EHR data from $28,955$ patients diagnosed with Alzheimer's Disease (AD) at the University of Pittsburgh Medical Center (UPMC) between 2011-2021, and $29,293$ patients diagnosed with AD from Mass General Brigham (MGB) between 2011-2021. We collect three domains of codified data: (i) {\tt PheCode} codes for diagnosis \citep{world1978international,bramer1988international}; (ii) clinical classification software ({\tt CCS}) codes for procedures \citep{elixhauser2009clinical}; and (iii) {\tt RxNorm} codes for medication usage \citep{bennett2012utilizing}. The comprehensive feature set involves the binary indicators of $p = 2,142$ features, including $1,339$ {\tt PheCode} codes, $175$ {\tt CCS} codes, and $628$ {\tt RxNorm} codes. To construct longitudinal patient profiles, we apply non-overlapping $6$-month observation windows to each patient's clinical history, yielding $n_1 = 400,904$ records from UPMC and $n_2 = 505,626$ records from MGB. Our strategy for EHR data preprocessing follows established standards in EHR studies, such as \citep{Gan2025ARCH}.

\noindent\textbf{Tasks and Evaluation Protocols.} Using DANIEL, we solve for the estimators $\widehat\bTheta\in\mathbb{R}^{p\times p}$ and $\widehat\Ub\in\mathbb{R}^{p\times d}$ for the true parameter matrix $\bTheta ^ *$ and feature embedding $\Ub ^ *$ defined in Section \ref{sec:model_EHR}, with the rank parameter set as $d = 200$. The empirical rank $d$ is selected by maximizing the \underline{A}rea \underline{U}nder the Receiver Operating Characteristic \underline{C}urve (AUC) in detecting known-relationship pairs with the estimated parameter $\widehat\bTheta$. We then evaluate the clinical utility of the estimated $\widehat{\Ub}$ across three downstream tasks: (i) detection of known relationships among clinical features; (ii) patient phenotyping for refined clinical subgroup identification; and (iii) unsupervised patient clustering to capture differences in long-term clinical outcomes. Finally, we construct and visualize a knowledge graph for clinical features relevant to AD and multiple sclerosis (MS), showing DANIEL's potential in studies of global EHR embedding.

\noindent\textbf{Baselines.} We compare the performance of DANIEL in the EHR study with six baseline methods. Three of these baselines are the same as those used in the simulation study, namely (a) \textit{SV-Hard}, (b) \textit{SV-Topd}, and (c) \textit{PSD-Cvx}, each representing variants of regularized convex optimization. The other three are variants of \underline{B}idirectional \underline{E}ncoder \underline{R}epresentations from \underline{T}ransformers (BERT), which serve as state-of-the-art ML techniques specially designed for representation learning under modern setup, including (d) \textit{BERT}, a pretrained multi-layer bidirectional Transformer encoder that overcomes the unidirectionality constraints by utilizing a masked language model pretraining objective \citep{devlin2018bert}; (e) \textit{BioBERT}, which is a variant of BERT trained on both \textit{general domain} and \textit{biomedical domain} corpora to provide representation tailored for biomedical context \citep{lee2020biobert}; and (f) \textit{SapBERT}, or \underline{S}elf-\underline{A}lignment \underline{P}retraining of BERT, which enhances biomedical entity representation by aligning synonymous terms in the Unified Medical Language System (UMLS) \citep{Bodenreider2004UMLS} during pretraining, thereby improving the model’s ability to capture fine-grained biomedical semantics \citep{liu2020self}. The clinical feature embeddings from these BERT-based models are derived solely from textual descriptions of EHR concepts. The baseline models output either estimators $\widehat{\bTheta}_{\mathrm{base}}$ for the true parameter matrix $\bTheta ^ *$ or the estimators for the clinical feature embeddings $\Ub^*$. To ensure a fair comparison, we apply the same cross-validation process to all methods in the performance evaluation of patient phenotyping. A detailed summary of the baseline configuration is provided in Appendix~\ref{sec:Exp_config}.




\subsection{Detecting Known-relationship Pairs} \label{sec:DKP}

To assess the quality of the embeddings, we evaluate DANIEL on the task of detecting known-relationship pairs among EHR features. This evaluation strategy is a widely adopted standard in prior work on feature embedding and knowledge graph construction, such as \citep{Hong2021KESER, Gan2025ARCH}. Heuristically, a high-quality embedding strategy should preserve the proximity of clinically similar or associated concepts in the embedding space. In our study, we focus on two widely studied categories of relationships in codified EHR data, namely \textit{similarity} and \textit{relatedness}, which are curated by domain experts and reflect consensus biomedical knowledge \citep{Bodenreider2004UMLS}. Similar feature pairs are defined on hierarchical structures among {\tt PheCode} and {\tt RxNorm} codes, which are often considered as close substitutes in specific clinical contexts. For example, {\tt PheCode:189.1} (Cancer of kidney and renal pelvis) is \textit{similar to} {\tt PheCode:189.11} (Malignant neoplasm of kidney, except renal pelvis), and such codes with inherent hierarchical relationships are typically treated as interchangeable in EHR data curation for studies of patients diagnosed with renal cell carcinoma (RCC), such as \citep{Hou2024RCC}. On the other hand, related feature pairs are extracted from UMLS relations, including \textit{related to}, \textit{mapped to}, and \textit{classifies}. The \textit{related to} relation reflects general clinical associations without implying hierarchy or equivalence. For example, {\tt PheCode:537} (Disorders of stomach function) is \textit{related to} {\tt PheCode:785} (Symptoms involving cardiovascular system) in the context of chest pain arising from gastrointestinal conditions such as gastroesophageal reflux disease (GERD). The \textit{mapped to} relation captures EHR code transformations. For example, {\tt PheCode:292.12} (Drug-induced mental disorders) is \textit{mapped to} {\tt PheCode:315.1} (Developmental delays and disorders), as the same clinical case involving behavioral health conditions due to excessive lead exposure in childhood may be coded differently across institutions, with one code reinterpreted as the other. The \textit{classifies} relation groups concepts into broader clinical categories without implying a strict hierarchical structure. For example, {\tt PheCode:561} (Functional digestive disorders) \textit{classifies} {\tt PheCode:579} (Other disorders of the digestive system), suggesting that both codes share a common diagnostic or phenotypic context that is useful for classification and aggregation. For each relationship type, we randomly sample the same number of pairs that \textit{have} the respective relationship and pairs that \textit{do not}. We then compute the AUC by comparing the model-induced proximity of known-relationship pairs against that of randomly paired features. To further evaluate the validity of DANIEL's distributed layout, we also report the AUC for the centralized version of DANIEL trained solely on UPMC or MGB data. A higher AUC achieved by a method indicates more accurate discrimination between pairs with and without the relationship, reflecting the method's superior ability in detecting clinically meaningful associations. Further details on model implementation and AUC computation are provided in Appendix~\ref{sec:Exp_config}.

\begin{table}[h!]
    \caption{AUC results for detecting different types of known-relationship pairs using DANIEL and baseline methods. Bio and Sap stand for BioBERT and SapBERT, respectively. UPMC and MGB stand for the cases where the centralized version of DANIEL is trained solely on UPMC or MGB data. Higher AUC values indicate better performance and are preferred. The best results are in \textbf{bold}.}
    \label{tab:AUC}
    \centering
    \resizebox{\textwidth}{!}{
        \addtolength{\tabcolsep}{-5pt}
        \begin{tabular}{r|c|cccccc|cc}
            \toprule
            \multirow{2}*{Relation types} & &  \multicolumn{6}{c}{Baselines} & \multicolumn{2}{|c}{Partial Data} \\
            & \textbf{DANIEL} & SV-Hard & SV-Topd & PSD-Cvx & BERT & Bio & Sap & UPMC & MGB \\
            \hline
            {\tt PheCode} Hierachy & \textbf{0.889} & 0.854 & 0.776 & 0.866 & 0.609 & 0.593 & 0.856 & 0.792 & 0.781 \\
            {\tt RxNorm} Hierachy & \textbf{0.735} & 0.592 & 0.520 & 0.590 & 0.569 & 0.570 & 0.573 & 0.559 & 0.467 \\
            \hline
            related to & 0.743 & 0.681 & 0.653 & 0.684 & 0.586 & 0.533 & \textbf{0.790} & 0.642 & 0.627 \\
            mapped to & \textbf{0.820} & 0.776 & 0.673 & 0.778 & 0.494 & 0.502 & 0.795 & 0.710 & 0.739 \\
            classifies & 0.759 & 0.726 & 0.674 & 0.728 & 0.621 & 0.573 & \textbf{0.816} & 0.672 & 0.664 \\
            \bottomrule
        \end{tabular}
    }
\end{table}

We present the AUC results for DANIEL and baseline methods in distinguishing known-relationship pairs in Table~\ref{tab:AUC}. We summarize the key observations below: (i) From the perspective of the superiority of non-convex optimization, DANIEL consistently achieves higher AUC than SV-Hard, SV-Topd, and PSD-Cvx where all models are trained on the combined UPMC-MGB dataset subject to privacy constraints. This result highlights the performance advantage of DANIEL's non-convex optimization framework over conventional approaches that rely on convex optimization. (ii) From the perspective of comparison with BERT-based models, DANIEL outperforms BERT and BioBERT in all the tasks and surpasses the state-of-the-art SapBERT in the majority of tasks. These results demonstrate the untapped potential of structured EHR data and emphasize that clinical feature extraction should not rely exclusively on textual descriptions. (iii) From the perspective of unifying distributed data under privacy constraints, DANIEL, which integrates the data from both UPMC and MGB, consistently outperforms its centralized version trained on a single institution. This result validates DANIEL's distributed framework in harmonizing multi-institutional information while preserving compliance with healthcare data privacy regulations.



\subsection{Patient Phenotyping}

Accurate patient phenotyping is crucial for understanding and managing chronic neurodegenerative diseases such as AD and related dementia \citep{tang2022deep,wei2025automated}, where a precise differentiation between \textit{pre-diagnosis} and \textit{post-diagnosis} stages is essential in both research and clinical care \citep{khan2024exploring,hoang2025economic}.  To evaluate DANIEL's utility in this task, we compare its performance with baseline methods in predicting whether a patient record from the UPMC or MGB AD cohort falls in the \textit{pre-dementia} or \textit{post-dementia} stage within its respective 6-month observation window, given the occurrence patterns of the other features \citep{wang2021derivation, gupta2022flexible}. In data preparation, we use the first recorded target code {\tt PheCode:290.1} (Dementias) for each patient to establish the diagnosis time. We extract the earliest records from each patient, which are labeled as \textit{negative} samples (no dementia), and an equal number of records from each patient randomly after the first recorded {\tt PheCode:290.1} as \textit{positive} samples (with dementia), with details presented in Appendix~\ref{sec:Exp_config}.  For DANIEL and the convex baselines (SV-Hard, SV-Topd and PSD-Cvx), we train the model and compare their AUC in an \textit{unsupervised} setting, where the true labels are compared against the conditional probabilities evaluated via \eqref{equ:conditional_dist}, using the occurrence patterns of all features except {\tt PheCode:290.1} and the plug-in estimators to the true parameter matrix $\bTheta ^ *$, namely $\widehat\bTheta$ from DANIEL and $\widehat{\bTheta}_{\mathrm{base}}$ from the baselines. For all the baselines, including the BERT-based models, we obtain the feature embeddings that do not directly predict the dementia label of a record. An additional \textit{supervised} procedure is therefore implemented to assess each model's ability to classify the records by the patients' dementia status in the respective 6-month window. We implement a 10-fold cross-validation to prevent data leakage, ensuring that in each fold the designated test set includes all patients reserved exclusively for testing, while the training is conducted on the remaining patients. Following Section \ref{sec:DKP}, we report the AUC for centralized DANIEL trained solely on UPMC or MGB data. A higher AUC indicates a method's stronger ability to distinguish patient records by dementia, the phenotype of interest in our study. Further details on the supervised evaluation procedure and the cross-validation setup are provided in Appendix~\ref{sec:Exp_config}. 

We present the AUC results for DANIEL and baseline methods in classifying records by dementia status within their respective observation window in Table~\ref{tab:Risk_pred}. We summarize the key observations below: (i) From the perspective of optimization strategy, DANIEL consistently achieves the highest AUC among all baseline methods in both unsupervised and supervised setups, indicating its superiority for patient phenotyping based on the occurence patterns of codified EHR; (ii) From the perspective of EHR data integration, DANIEL outperforms its centralized version trained solely on UPMC or MGB data, indicating the advantage of its distributed framework and highlighting its potential to leverage multi-institutional EHR data for improved real-world patient phenotyping.

\begin{table}[h]
    \caption{AUC results for AD patient phenotyping using DANIEL and baseline methods. Bio and Sap stand for BioBERT and SapBERT, respectively. UPMC and MGB stand for the cases where the centralized version of DANIEL is trained solely on UPMC or MGB data. Higher AUC values indicate better performance and are preferred. The best results are in \textbf{bold}. }
    \label{tab:Risk_pred}
    \centering
    \resizebox{\textwidth}{!}{
        \addtolength{\tabcolsep}{-4pt}
        \begin{tabular}{c|c|cccccc|cc}
            \toprule
            \multirow{2}*{Method} & & \multicolumn{6}{c}{Baselines} & \multicolumn{2}{|c}{Partial Data} \\
            & \textbf{DANIEL} & SV-Hard & SV-Topd & PSD-Cvx & BERT & Bio & Sap & UPMC & MGB \\
            \hline
            Unsupervised & \textbf{0.834} & 0.790 & 0.789 & 0.794 & - & - & - & 0.782 & 0.824 \\
            Supervised & \textbf{0.891} & 0.876 & 0.883 & 0.876 & 0.877 & 0.877 & 0.877 & 0.877 & 0.890 \\
            \bottomrule
        \end{tabular}
    }
\end{table}

\subsection{Patient Clustering}

In this study, we utilize DANIEL-generated patient embeddings to cluster patients in the UPMC-MGB AD cohort into two groups based on the information gathered regarding the onset of dementia \citep{huang2021patient,li2025discovering}. After estimating the feature embedding $\widehat{\Ub}$ with DANIEL, we compute patient embeddings as $\widehat\yb = \widehat\Ub\trans \xb_{\mathrm{init}}$, where $\xb_{\mathrm{init}}$ represents each patient's initial record of dementia diagnosis, identified by the occurrence of {\tt PheCode:290.1} (Dementias). The resulting patient embeddings $\widehat{\yb}$ are then clustered using $k$-means \citep{hartigan1979algorithm}. We present the result of DANIEL-induced patient clustering using Kaplan-Meier survival curves \citep{efron1988logistic} in Figure~\ref{fig:KMCurve}. In Figure~\ref{fig:KMCurve}, we observe a bifurcation in the probability trajectories of doctor-ordered nursing home admission (NHA) between the two patient clusters, reflecting a different pace of dementia progression. This finding demonstrates DANIEL's capability to generate clinically meaningful patient embeddings that capture underlying heterogeneity in disease progression. The separation in NHA outcomes with statistical significance ($p < 0.0001$) observed over a 5-year timespan highlights DANIEL's strength in uncovering latent structures that are relevant to long-term clinical outcomes.

\begin{figure}[h!]
    \begin{center}
        \includegraphics[width=0.8\linewidth]{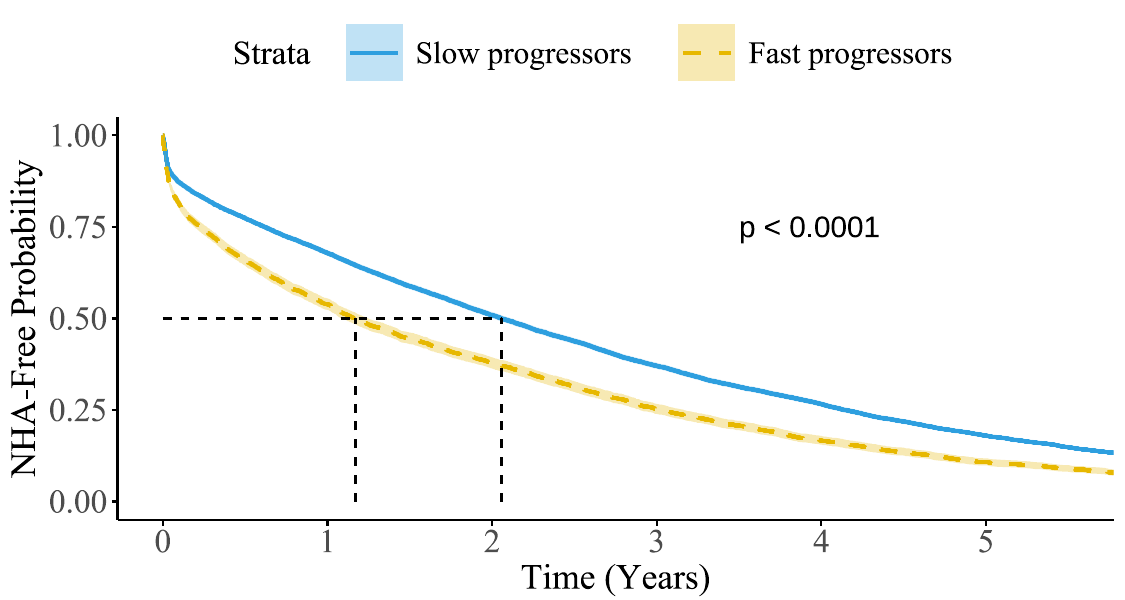}
    \end{center}
    \caption{Kaplan-Meier survival curves for two patient clusters obtained using $k$-means clustering on DANIEL-generated patient embeddings. The y-axis represents the estimated probability of nursing home admission, and the x-axis denotes time. A $p$-value is reported to assess statistical significance between the two clusters, with $p < 0.01$ indicating a difference with statistical significance. }
    \label{fig:KMCurve}
\end{figure}

\subsection{Knowledge Graph}

\begin{figure}[htb]
    \centering
    \vspace{-10pt}
    \subfigure[top features of patients with AD]{
        \includegraphics[width=2.7in]{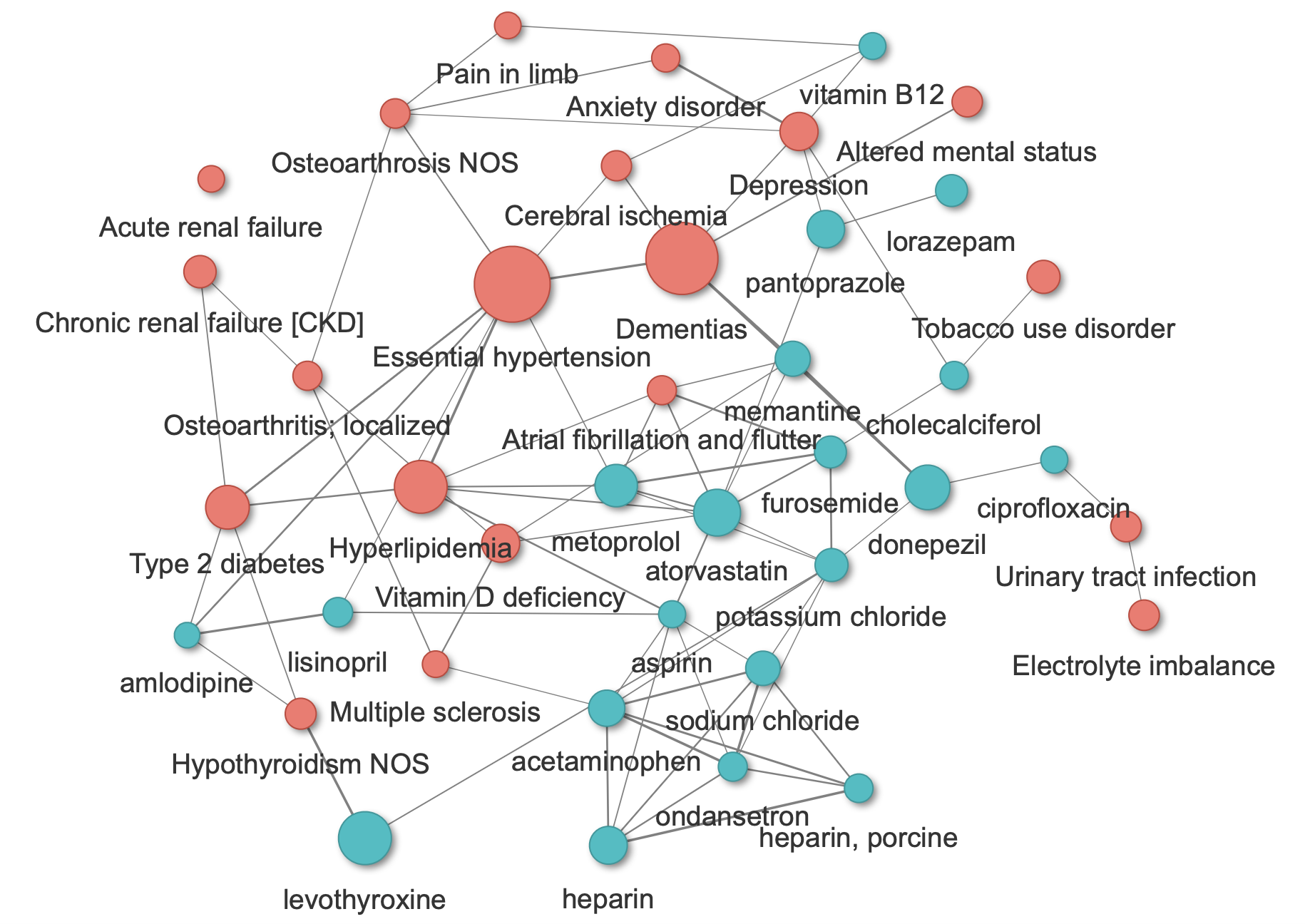}
        \label{fig:wordcloud_AD}
    }
    \subfigure[top features of patients with MS]{
	\includegraphics[width=2.7in]{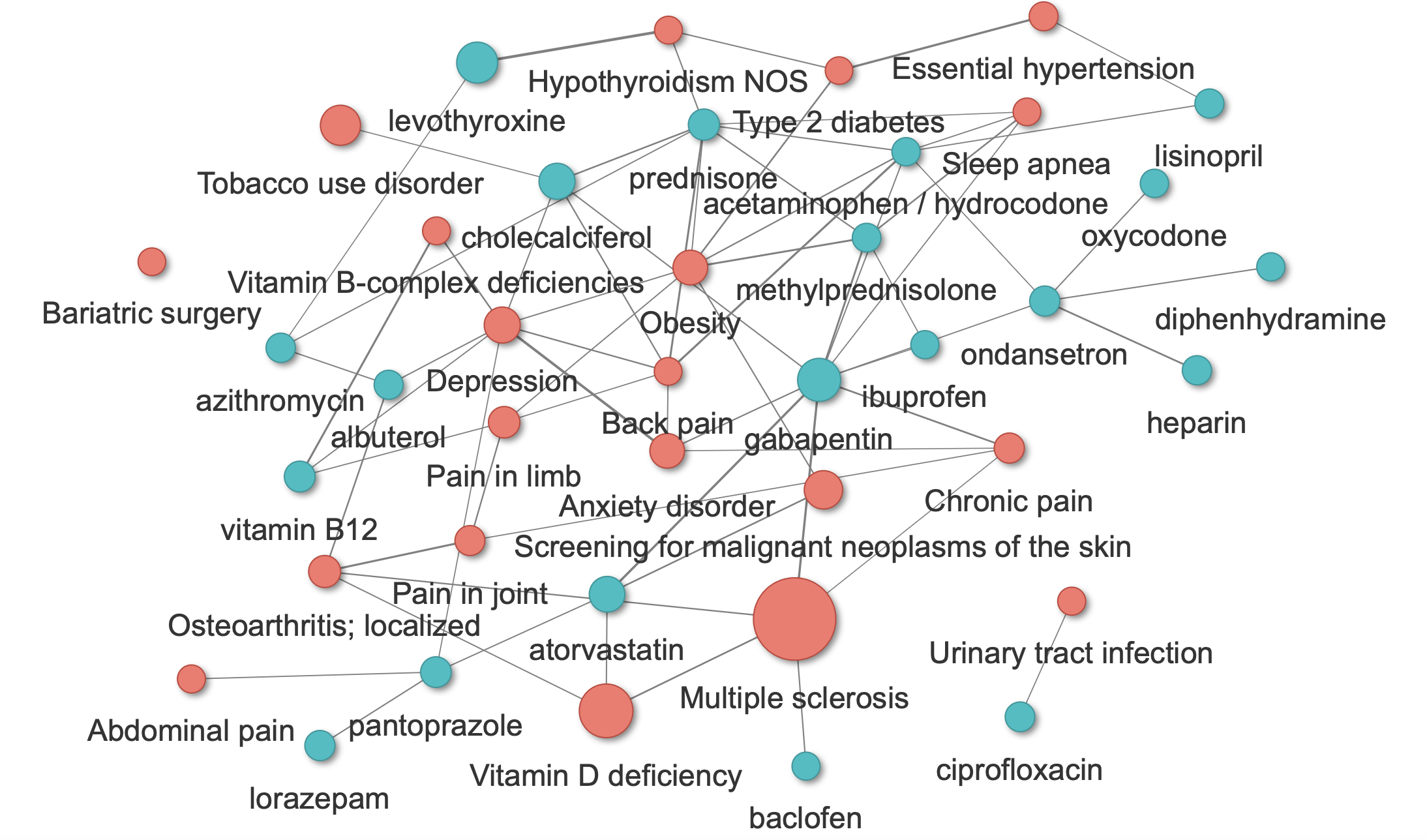}
        \label{fig:wordcloud_MS}
    }
    \caption{Knowledge graphs of (a) top features for AD patients and (b) top features for MS patients. Node size reflects the occurrence probability of each feature. Red nodes correspond to {\tt PheCode} for diagnosis, and blue nodes correspond to {\tt RxNorm} for medication usage. The presence and thickness of edges between nodes are determined by the values of DANIEL-estimated parameter matrix $\widehat{\bTheta}$.} 
    \label{fig:WordCloud}
\end{figure}

In this section, we visualize the knowledge graphs of main clinical features associated with AD and MS, as shown in Figure~\ref{fig:WordCloud}. In Figure~\ref{fig:WordCloud}, each node represents a top feature related to AD (left) or MS (right), with node size reflecting the frequency of feature occurrence within the patient group with AD or MS diagnosis. Red nodes denote {\tt PheCode} diagnosis codes, and blue nodes denote {\tt RxNorm} medication codes. The presence of the edges between the nodes is determined by DANIEL-derived parameter matrix $\widehat{\bTheta}$, where we set an appearance threshold of the $95\%$ quantile of all off-diagonal entries of $\widehat{\bTheta}$, and the edges $(j,k)$ corresponding to higher values $\widehat{\theta}_{jk}$ are marked thicker. 

In Figure~\ref{fig:WordCloud}, the patients with AD diagnosis exhibit an elevated likelihood of comorbid conditions, such as {\tt type 2 diabetes}, {\tt hyperlipidemia}, and {\tt atrial fibrillation}, in contrast to patients with MS. These comorbidities are supported by prior studies. For example, studies indicate that individuals with type 2 diabetes face an increased risk of developing dementia, including AD \citep{sridhar2015emerging}. Moreover, hyperlipidemia (especially hypercholesterolemia, i.e., elevated blood cholesterol level) and atrial fibrillation are also identified as significant risk factors for AD \citep{rosendorff2007cardiovascular, loera2019alterations, nakase2023impact}. While no medication can \textit{cure} AD, FDA-approved treatments, such as {\tt Donepezil} \citep{birks2018donepezil} and {\tt Memantine} \citep{reisberg2003memantine}, which reduce symptom progression and improve the patients' daily functioning, emerge as pivotal features in Figure~\ref{fig:WordCloud}.

In contrast, MS is recognized as a painful disease \citep{kenner2007multiple}, and the top associated features include {\tt chronic pain}, {\tt abdominal pain}, {\tt pain in joint}, {\tt back pain}, and {\tt pain in limb}. Prior studies support the observed comorbidities, such as {\tt Vitamin D deficiency} and {\tt Tobacco use disorder}. For example, lower vitamin D levels are linked to an increased risk of MS \citep{sintzel2018vitamin}, and smoking has been shown to accelerate MS severity and disability progression  \citep{manouchehrinia2013tobacco}. Similar to AD, MS is \textit{incurable} by medication. Instead, symptomatic treatments such as {\tt methylprednisolone} \citep{sloka2005mechanism}, {\tt gabapentin} \citep{houtchens1997open}, and {\tt baclofen} \citep{sammaraiee2019intrathecal}, which mitigate dysesthesias caused by MS, relief spasticity, reduce relapse severity, and slow long-term disability progression, are present in Figure~\ref{fig:WordCloud}. In summary, the alignment between the knowledge graphs presented in Figure~\ref{fig:WordCloud} and the literature in AD and MS studies highlights DANIEL's ability to learn global representations of healthcare concepts and convert them into insights relevant to clinical practice.


\section{Discussion}




In this paper, we introduce DANIEL, a scalable and privacy-aware framework for representation learning on high-dimensional, multi-source data under strict data-sharing restrictions. We employ Markov Random Fields, specifically the Ising model, for data represented as binary vectors, in a real-world, multi-institution setting where modern data privacy constraints invalidate the key assumptions of the classical theory for centralized low-rank Ising model optimization. Our objective is to learn high-quality global feature representations and patient embeddings for downstream analysis to the codified EHR data, including feature relationship detection, patient phenotyping, and patient clustering that can inform a long-term, precise treatment plan for the patient cohort of interest. To address the technical challenges in scalability and privacy, we develop DANIEL, which implements a distributed Ising framework that optimizes a non-convex bi-factored surrogate of the conditional likelihood. Our design not only exploits the low-rank structure of the data for efficiency as compared with computationally expensive SVD-based approaches, but also limits cross-institutional communication to a single round of non-sensitive gradient exchange, thereby eliminating the transfer of raw data. 

Our theory shows that, under standard regularity conditions, the DANIEL estimator attains finite-sample error rates comparable to a centralized oracle while requiring only limited cross-institution aggregation. We further establish the validity of our initialization procedure and show that the gap between the DANIEL estimator in its inherent distributed setting and its centralized counterpart is of lower order than the overall estimation error. Simulation studies across a wide range of configurations demonstrate that the DANIEL estimator achieves lower Frobenius norm estimation error and substantially shorter runtime than baselines focusing on convex optimization that conduct SVD at every iteration until convergence. Our empirical results indicate the robustness of DANIEL and suggest that the theoretical performance guarantees on the DANIEL estimator, including its initialization procedure and its distributed setting, continue to hold even when the conditions on feature dimensionality and the number of participating institutions proposed in our theory are substantially relaxed.

Extensive experiments on the codified EHR from the UPMC-MGB Alzheimer's Disease dataset demonstrate the superiority of DANIEL in detecting EHR feature pairs with known clinical relationships and distinguishing patients before versus after dementia diagnosis, compared to baseline methods based on convex optimization and modern variants of BERT focusing on biomedical data processing. Furthermore, DANIEL's patient embedding induces $k$-means clusters that stratify patients by their pace of Alzheimer's disease progression, suggesting that the learned representations capture disease heterogeneity relevant to long-term care pathways. The feature embeddings, as visualized via knowledge graphs, recover numerous literature-supported medical insights and demonstrate the potential for discovering additional feature co-occurrence patterns in Alzheimer's disease and multiple sclerosis (MS), a highly correlated disease, which motivates further research.

Looking forward, two future directions warrant future exploration and refinement. On the methodological and theoretical side, we aim to extend the DANIEL framework beyond the low-rank Ising model to broader probabilistic graphical model families. We would consider scenarios such as mixed discrete-continuous variables, time-varying interactions, and ontology-aware structures, which potentially broaden DANIEL's applicability while preserving a general federated design suitable for privacy-aware studies. On the practical and ML implementation side, we plan to integrate our federated statistical backbone with state-of-the-art LLM-driven sequence encoders and graph-augmented retrieval frameworks, such as GraphRAG \citep{Edge2025GraphRAG, Han2025GraphRAG}, to yield hybrid representations that maintain privacy, align with the stated research questions, and further improve downstream performance and utility for end users.

\section{Acknowledgments}

Zongqi Xia acknowledges funding in part by NINDS R01NS098023 and NINDS R01NS124882. Tianxi Cai acknowledges funding by NIH R01LM013614.

\appendix

\section{Proof for the Main Theorem}\label{sec:proof-main}

In this section, we provide the proof for Theorem \ref{thm:Divide_conquer}, the main theorem of our work, by assembling the auxiliary lemmas proved in Section \ref{sec:B}.
Recalling that $\bTheta ^ * = \Ub^* \Vb^{*\top}$, for simplicity and without loss of generality, we assume that  $\sigma_d(\bTheta ^ *) = \sigma_d ^2 (\Ub ^ *) =  \sigma_d ^2 (\Vb ^ *) $. We define $\kappa ^ * = \min \{\frac{1}{8}, \frac{5}{128} \frac{\kappa_{\min} \kappa_{\max}}{\kappa_{\min} + \kappa_{\max}} \}$ as the constant depending on the singular values.
Define
\begin{equation} \label{equ:K_U}
    K(\Ub^*; \kappa_{\min}, \kappa_{\max}) = \frac{4}{5} \kappa ^ * \sigma_d^2(\Ub^*) \min \Big\{ \frac{1}{(\kappa_{\min} + \kappa_{\max})}, 2 \Big\}.
\end{equation}

We first introduce several notations. 
Suppose that we have the stacked matrix $\Zb_0 = \begin{bmatrix}
\Ub_0 \\
\Vb_0
\end{bmatrix}$.  For the true value $\bTheta ^ * = \Ub^* \Vb^{*\top}$, we have $\Zb ^ * = \begin{bmatrix}
\Ub^* \\
\Vb^*
\end{bmatrix}$. We denote $\Zb_\gamma = \begin{bmatrix}
\Ub_\gamma \\
\Vb_\gamma
\end{bmatrix}$ for the $\gamma$th step of iteration in DANIEL, where $\gamma \in [\Gamma]$, in the hub institution $\mathcal{S}_1$ as specified in Assumption \ref{al:Divide_conquer}. Furthermore, we define the subspace distance $\rho ^ 2 (\cdot, \cdot)$ for two stacked matrices $\Zb_1 = \begin{bmatrix}
\Ub_1 \\
\Vb_1
\end{bmatrix}$ and $\Zb_2 = \begin{bmatrix}
\Ub_2 \\
\Vb_2
\end{bmatrix}$ such that
\begin{equation*} 
    \rho ^ 2 (\Zb_1, \Zb_2) = \inf_ {\Ob \in \mathcal{O}(d)}  \big\{\|\Ub_1 - \Ub_2 \Ob \| ^ 2 _ {\mathrm{F}} + \|\Vb_1 - \Vb_2 \Ob \| ^ 2 _{\mathrm{F}} \big \},
\end{equation*}
where $\mathcal{O}(d)$ implies the collection of $d \times d$ orthogonal matrices.

\begin{proof}[Proof of Theorem \ref{thm:Divide_conquer}]
    As the institution number $m \lesssim  \sqrt{ndp^7\log p }$ and thus by the rate bound of the initialization established in Theorem \ref{thm:main}, the initial value $\widehat{\bTheta}_0 = \widehat{\bTheta}_{\rm cvx}$ satisfies $p^3 \| \widehat \bTheta_0 - \bTheta ^ * \|_{\rm F} = o(1)$ and \eqref{equ:diff_initial}. Furthermore, the upper bound for the statistical error $e^2_{\mathrm{stat}}$ as a function of the initialization error $\| \widehat \bTheta_0 - \bTheta ^ * \|_{\rm F}$, as depicted by Lemma \ref{lem:epsilon_non-convex}, guarantees that $e_{\mathrm{stat}} = o(1)$ with probability $1 - O(p^{-10})$ under the initialization condition indicated for Theorem \ref{thm:Divide_conquer}.

    The proof of Theorem \ref{thm:Divide_conquer} is conducted by induction. As indicated by \eqref{equ:estat_iter}, the initialization condition \eqref{equ:diff_initial} implies that
    \begin{equation*}
        \rho^2(\Zb_{1}, \Zb^*) \leq \Big(1 - 2 \eta \kappa ^ * \sigma_d(\bTheta^*) \Big) \rho^2(\Zb_0, \Zb^*) + \frac{6 \eta (\kappa_{\min} + \kappa_{\max}) e^2_{\mathrm{stat}}}{\kappa_{\min} \kappa_{\max}}.
    \end{equation*}
    For any positive integer $k \geq 1$ such that $\rho ^ 2 (\Zb_k, \Zb^*) \leq K(\Ub^*; \kappa_{\min}, \kappa_{\max})$, we conclude from \eqref{equ:estat_iter} that the condition
    \begin{equation} \label{equ:estat_cond}
        e^2_{\mathrm{stat}} \leq \frac{\kappa ^ * \kappa_{\min} \kappa_{\max} \sigma_d(\bTheta ^ *) K(\Ub^*; \kappa_{\min}, \kappa_{\max})}{3 (\kappa_{\min} + \kappa_{\max}) }
    \end{equation}
    suffices to guarantee that $\rho ^ 2 (\Zb_{k+1}, \Zb^*) \leq K(\Ub^*; \kappa_{\min}, \kappa_{\max})$. Here our initialization conditions indicate that \eqref{equ:estat_cond} holds with probability $1 - O(p^{-10})$, given that the right-hand side of \eqref{equ:estat_cond} is a constant term.  Therefore, by applying Lemma \ref{lem:contraction_mapping} recursively for $k \geq 1$, we have
  \begin{equation}
    \label{eq:Z_gamma}
    \rho^2(\Zb_k, \Zb^*) \leq \Big(1 - 2 \eta \kappa ^ * \sigma_d(\bTheta^*) \Big)^k \rho^2(\Zb_0, \Zb^*) + \frac{12 \eta (\kappa_{\min} + \kappa_{\max}) e^2_{\mathrm{stat}}}{\kappa_{\min} \kappa_{\max}}.
  \end{equation}
  Consequently, if we choose the number of iterations $\Gamma = \Omega \Big( \log \big(\frac{n}{d p \log p} \big) \Big)$, we have 
    \begin{equation*}
        \inf_{\Ob \in \mathcal{O}(d)} \Big\{ \big\|\widehat{\Ub}  - \Ub^* \Ob \big\|_{\mathrm{F}} ^ 2 + \big\|\widehat{\Vb}  - \Vb^* \Ob \big\|_{\mathrm{F}} ^ 2 \Big\} \lesssim \frac{d p \log p}{n},
    \end{equation*}
and    \begin{equation*}
   \big\|\widehat{\bTheta} - \bTheta ^ * \big\|_{\mathrm{F}} \lesssim \sqrt{\frac{d p \log p}{n}} \text{ \ \ with probability \ \ } 1 - O(p ^ {-10}).
\end{equation*}
\end{proof}

\subsection{Contraction in One Step of Iteration}\label{app:eta}

In this section, we focus on establishing the contraction property of the distance $\rho ^ 2(\cdot, \cdot)$ after one single iteration of non-convex gradient descent as proposed in Algorithm \ref{al:Divide_conquer} in the hub institution $\mathcal{S}_1$ for the bi-factored surrogate loss
\begin{equation*}
    \widetilde{\mathcal{L}} (\Ub, \Vb; \widehat \bTheta_0) = \mathcal{L}_S (\Ub \Vb \trans; \widehat \bTheta_0) + \mathcal{Q}(\Ub, \Vb),
\end{equation*}
where $\cL_S(\bTheta; \widehat \bTheta_0) = \mathcal{L}_1(\bTheta)  + \big\langle \nabla \mathcal{L} (\widehat\bTheta_0) - \nabla \mathcal{L}_1 (\widehat\bTheta_0), \bTheta \big \rangle$ and $\mathcal{Q}(\Ub, \Vb) = \frac{1}{4}\big\|\Ub\trans \Ub - \Vb \trans \Vb\big\|_{\mathrm{F}} ^ 2$. For the simplicity of notation, we use $\widetilde{\mathcal{L}} (\Ub, \Vb) = \widetilde{\mathcal{L}} (\Ub, \Vb; \widehat \bTheta_0)$ and $\mathcal{L}_S (\Ub \Vb \trans) = \mathcal{L}_S (\Ub \Vb \trans; \widehat \bTheta_0)$ in further deduction. Our results are demonstrated in Lemma \ref{lem:contraction_mapping}. In the proof of Lemma \ref{lem:contraction_mapping}, we consider the EHR-based phenotyping task with the true parameter matrix $\bTheta ^ * = \Ub^* \Vb^{*\top}$ and initial value $\widehat\bTheta_0= \Ub_0 \Vb_0 \trans$. A statistical error $e_{\mathrm{stat}}$ is defined as
\begin{equation} \label{equ:e_stat}
    e_{\mathrm{stat}} = \sup_{{\rm rank}(\bDelta) \leq d, \|\bDelta \|_{\mathrm{F}} \leq 1} \big\langle \nabla _{\bTheta} \mathcal{L}_S (\Ub^* \Vb^{*\top}), \bDelta \big\rangle.
\end{equation}

\begin{lemma}[Contraction in One Step of Contraction] \label{lem:contraction_mapping}
Suppose that Assumptions \ref{ass:Positive_minimal_eigenvalue}-\ref{ass:reg_grad} hold and $\rho^2(\Zb_\gamma, \Zb ^ *) \leq K(\Ub^*; \kappa_{\min}, \kappa_{\max})$ in the $\gamma$th iteration step of the non-convex gradient descent approach on a bi-factored surrogate loss $\widetilde{\mathcal{L}}$ denoted in \eqref{equ:surrogate_obj} with an initial value $\widehat\bTheta_0 = \Ub_0 \Vb_0\trans$ satisfying \eqref{equ:diff_initial}. Then for a stepsize $\eta$ satisfying
\begin{equation} \label{equ:eta_ineq}
    \eta \leq \frac{1}{12 \big\|\Zb_0 \big\|_{\rm op}^2} \min\Big\{\frac{1}{(\kappa_{\min} + \kappa_{\max})}, 1 \Big\},
\end{equation}
the outcome for the $(\gamma + 1)$th iteration step, $\Zb_{\gamma + 1} = \begin{bmatrix}
\Ub_{\gamma + 1} \\
\Vb_{\gamma + 1}
\end{bmatrix}$, holds that
\begin{equation} \label{equ:estat_iter}
    \rho^2(\Zb_{\gamma + 1}, \Zb^*) \leq \Big(1 - 2 \eta \kappa ^ * \sigma_d(\bTheta^*) \Big) \rho^2(\Zb_\gamma, \Zb^*) + \frac{6 \eta (\kappa_{\min} + \kappa_{\max}) e^2_{\mathrm{stat}}}{\kappa_{\min} \kappa_{\max}}.
\end{equation}
\end{lemma}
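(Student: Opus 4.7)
The plan is a one-step contraction analysis in the style of standard non-convex bi-factored gradient descent proofs, combining restricted strong convexity and smoothness of $\mathcal{L}_S$ with the additional curvature supplied by the balancer $\mathcal{Q}$. Let $\Ob_\gamma \in \mathcal{O}(d)$ achieve the infimum defining $\rho(\Zb_\gamma, \Zb^*)$, set $\bDelta_\gamma = \Zb_\gamma - \Zb^* \Ob_\gamma$, and note that $\Ob_\gamma$ is a feasible (though not necessarily optimal) rotation at step $\gamma+1$. Using the Jacobi update $\Zb_{\gamma+1} = \Zb_\gamma - \eta \nabla \widetilde{\mathcal{L}}(\Zb_\gamma)$, the bound
\begin{equation*}
\rho^2(\Zb_{\gamma+1}, \Zb^*) \leq \| \bDelta_\gamma \|_{\mathrm{F}}^2 - 2\eta \langle \nabla \widetilde{\mathcal{L}}(\Zb_\gamma), \bDelta_\gamma \rangle + \eta^2 \| \nabla \widetilde{\mathcal{L}}(\Zb_\gamma) \|_{\mathrm{F}}^2
\end{equation*}
reduces the task to a lower bound on the cross term of order $2 \eta \kappa^* \sigma_d(\bTheta^*) \|\bDelta_\gamma\|_{\mathrm{F}}^2$ (minus a slack of order $\eta e_{\mathrm{stat}}^2/(\kappa_{\min}+\kappa_{\max})$) and an upper bound on the squared norm that the stepsize choice \eqref{equ:eta_ineq} can absorb.

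For the cross term, I would split $\nabla \widetilde{\mathcal{L}}(\Zb_\gamma)$ into the $\mathcal{L}_S$-piece, whose $\Ub$-block equals $\nabla_{\bTheta} \mathcal{L}_S(\Ub_\gamma \Vb_\gamma\trans)\, \Vb_\gamma$ (and symmetrically for $\Vb$), and the balancer piece $\nabla \mathcal{Q}(\Zb_\gamma)$. A Taylor expansion of $\nabla_{\bTheta} \mathcal{L}_S$ around $\bTheta^*$ together with Assumption~\ref{ass:Positive_minimal_eigenvalue} yields
\begin{equation*}
\langle \nabla_{\bTheta} \mathcal{L}_S(\Ub_\gamma\Vb_\gamma\trans) - \nabla_{\bTheta}\mathcal{L}_S(\bTheta^*),\ \Ub_\gamma\Vb_\gamma\trans - \bTheta^* \rangle \geq \kappa_{\min} \| \Ub_\gamma\Vb_\gamma\trans - \bTheta^* \|_{\mathrm{F}}^2,
\end{equation*}
while the residual $\langle \nabla_{\bTheta} \mathcal{L}_S(\bTheta^*),\ \Ub_\gamma\Vb_\gamma\trans - \bTheta^* \rangle$ is bounded by $e_{\mathrm{stat}}$ after the factorization $\Ub_\gamma \Vb_\gamma\trans - \Ub^*\Vb^{*\top} = (\Ub_\gamma - \Ub^*\Ob_\gamma)\Vb_\gamma\trans + \Ub^*\Ob_\gamma (\Vb_\gamma - \Vb^*\Ob_\gamma)\trans$ into two rank-$d$ pieces compatible with the sup in \eqref{equ:e_stat}. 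The balancer piece contributes $\tfrac{1}{2}\|\Ub_\gamma\trans \Ub_\gamma - \Vb_\gamma\trans \Vb_\gamma\|_{\mathrm{F}}^2$ plus a controllable cross term. A subspace-to-factor inequality (in the spirit of Lemma~5.4 of Tu et al., 2016 and Park et al., 2016) then converts the matrix-space contraction on $\Ub_\gamma\Vb_\gamma\trans - \bTheta^*$ plus the imbalance penalty into a lower bound of the form $\kappa^* \sigma_d(\bTheta^*) \|\bDelta_\gamma\|_{\mathrm{F}}^2$; this conversion is precisely where the hypothesis $\rho^2(\Zb_\gamma, \Zb^*) \leq K(\Ub^*;\kappa_{\min},\kappa_{\max})$ is needed to stay in the locally strongly convex basin.

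For the squared gradient norm, I would use restricted smoothness from Assumption~\ref{ass:Positive_minimal_eigenvalue} together with $\|\Zb_\gamma\|_{\mathrm{op}} \lesssim \|\Zb_0\|_{\mathrm{op}}$ (which follows from \eqref{equ:diff_initial} and the neighborhood bound $\rho^2 \leq K$), giving
\begin{equation*}
\| \nabla \widetilde{\mathcal{L}}(\Zb_\gamma) \|_{\mathrm{F}}^2 \lesssim \|\Zb_0\|_{\mathrm{op}}^2 \Big[ (\kappa_{\min}+\kappa_{\max}) \sigma_d(\bTheta^*) \|\bDelta_\gamma\|_{\mathrm{F}}^2 + e_{\mathrm{stat}}^2 \Big].
\end{equation*}
The stepsize prescription \eqref{equ:eta_ineq} is engineered so that $\eta$ times the right-hand side is at most half of the contraction contributed by the cross term, producing the clean recursion \eqref{equ:estat_iter}.

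The main obstacle will be the subspace-to-factor conversion, namely translating a bound on $\|\Ub_\gamma\Vb_\gamma\trans - \bTheta^*\|_{\mathrm{F}}^2$ together with the imbalance $\|\Ub_\gamma\trans\Ub_\gamma - \Vb_\gamma\trans\Vb_\gamma\|_{\mathrm{F}}^2$ into a lower bound $\sigma_d(\bTheta^*)\,\rho^2(\Zb_\gamma, \Zb^*)$ with the sharp constant $\kappa^*$ defined just above \eqref{equ:K_U}. This requires a careful decomposition of $\bDelta_\gamma$ into components aligned with the column space of $\Zb^*$ and its orthogonal complement, and an SVD-based inequality relating $\|\Ub_\gamma\Vb_\gamma\trans - \Ub^*\Vb^{*\top}\|_{\mathrm{F}}$ back to $\|\bDelta_\gamma\|_{\mathrm{F}}$ and $\sigma_d(\Ub^*)$. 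A secondary subtlety is ensuring that the surrogate bias (the gap $\nabla \mathcal{L}_S(\bTheta^*) - \nabla \mathcal{L}(\bTheta^*)$) is fully absorbed into $e_{\mathrm{stat}}$ through \eqref{equ:e_stat}; this relies on \eqref{equ:diff_initial} keeping $\widehat\bTheta_0$ sufficiently close to $\bTheta^*$ so that the distributed correction in $\mathcal{L}_S$ does not inflate the gradient at the truth beyond the scale already captured by $e_{\mathrm{stat}}$.
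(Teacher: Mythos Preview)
Your outline matches the paper's architecture almost exactly: the same one-step expansion, the same split of the cross term into an $\mathcal{L}_S$-piece and a $\mathcal{Q}$-piece, the same reliance on Lemma~B.1 of \cite{Park2016lowrank} and Lemma~5.4 of \cite{tu2016lowrank} for the subspace-to-factor conversion, and the same $\|\Zb_\gamma\|_{\mathrm{op}} \lesssim \|\Zb_0\|_{\mathrm{op}}$ argument at the end.

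There is, however, one genuine gap in the way you plan to handle the squared-gradient term. Your claimed bound
\[
\|\nabla\widetilde{\mathcal{L}}(\Zb_\gamma)\|_{\mathrm{F}}^2 \;\lesssim\; \|\Zb_0\|_{\mathrm{op}}^2\Big[(\kappa_{\min}+\kappa_{\max})\,\sigma_d(\bTheta^*)\,\|\bDelta_\gamma\|_{\mathrm{F}}^2 + e_{\mathrm{stat}}^2\Big]
\]
is not obtainable from restricted smoothness alone: smoothness gives $\|\nabla_{\bTheta}\mathcal{L}_S(\Ub_\gamma\Vb_\gamma\trans)-\nabla_{\bTheta}\mathcal{L}_S(\bTheta^*)\|_{\mathrm{F}}\lesssim \kappa_{\max}\|\Ub_\gamma\Vb_\gamma\trans-\bTheta^*\|_{\mathrm{F}}$, and the upper conversion $\|\Ub_\gamma\Vb_\gamma\trans-\bTheta^*\|_{\mathrm{F}}^2\lesssim \|\Zb_\gamma\|_{\mathrm{op}}^2\,\rho^2$ involves $\sigma_1$, not $\sigma_d$ (the Tu et~al.\ inequality goes the \emph{other} way). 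Plugging this through would force $\eta$ to scale with $\sigma_d/\sigma_1$ rather than the condition-number-free bound \eqref{equ:eta_ineq}. The paper circumvents this by using Nesterov's co-coercivity inequality (Theorem~2.1.12 of \cite{nestrov2013ineq}) in the cross term $T_{1.1}$: this produces, in addition to the $\tfrac{\kappa_{\min}}{4}\|\Ub_\gamma\Vb_\gamma\trans-\bTheta^*\|_{\mathrm{F}}^2$ you quote, an extra term $\tfrac{1}{\kappa_{\min}+\kappa_{\max}}\|\nabla_{\bTheta}\mathcal{L}_S(\Ub_\gamma\Vb_\gamma\trans)-\nabla_{\bTheta}\mathcal{L}_S(\bTheta^*)\|_{\mathrm{F}}^2$. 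This term is then left \emph{un-converted} and used to directly cancel the same quantity appearing in $T_3$, under exactly the stepsize \eqref{equ:eta_ineq}. Your plain strong-monotonicity bound for the cross term misses this piece, and that is why your squared-gradient estimate cannot close with the stated $\eta$.

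A minor point: your ``secondary subtlety'' about the surrogate bias $\nabla\mathcal{L}_S(\bTheta^*)-\nabla\mathcal{L}(\bTheta^*)$ is not actually an issue in this lemma, since $e_{\mathrm{stat}}$ in \eqref{equ:e_stat} is defined directly in terms of $\nabla_{\bTheta}\mathcal{L}_S(\bTheta^*)$; no separate absorption step is required here.
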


\begin{proof}[Proof of Lemma \ref{lem:contraction_mapping}]
First, we observe that Corollary \ref{col:surrogate_RSCRSS} guarantees the RSC/RSS condition for $\mathcal{L}_S (\cdot)$ for some sufficiently large sample size $n$. We then re-write $\rho^2(\Zb_{\gamma+1}, \Zb^*)$ as
    \begin{equation} \label{equ:rho_zzstar}
        \rho^2(\Zb_{\gamma+1}, \Zb^*) = \inf_{O \in \mathcal{O}(d)} \Bigg\|\begin{bmatrix} 
\Ub_{\gamma+1}  \\
\Vb_{\gamma+1} 
\end{bmatrix} - \begin{bmatrix}
\Ub^* \Ob \\
\Vb^* \Ob 
\end{bmatrix} \Bigg\| ^ 2_{\textrm{F}} \leq \Bigg\|\begin{bmatrix} 
\Ub_{\gamma+1} \\
\Vb_{\gamma +1}
\end{bmatrix} - \begin{bmatrix}
\Ub^* \widehat \Ob \\
\Vb^* \widehat \Ob 
\end{bmatrix} \Bigg\| ^ 2_{\textrm{F}},
    \end{equation}
for $\widehat \Ob \in \mathcal{O}(d)$ being the orthogonal matrix that attains the minimum of $\rho ^ 2 (\Zb_{\gamma}, \Zb ^ *)$. Furthermore, we apply the updating law
\begin{align*}
    \Ub_{\gamma+1}  &= \Ub_\gamma - \eta \nabla_{\Ub} \mathcal{L}_S \big(\Ub \Vb _{\gamma} \trans \big) \Big|_{\Ub = \Ub_\gamma } - \eta \nabla_{\Ub} \mathcal{Q} \big(\Ub, \Vb _\gamma \big) \Big|_{\Ub = \Ub_\gamma};\\
    \Vb_{\gamma +1}  &= \Vb_\gamma  - \eta \nabla_{\Vb} \mathcal{L}_S \big(\Ub_\gamma \Vb \trans \big) \Big|_{\Vb = \Vb _\gamma} - \eta \nabla_{\Vb} \mathcal{Q} \big(\Ub_\gamma, \Vb \big) \Big|_{\Vb = \Vb_\gamma}.
\end{align*}
For the simplicity of notation, we denote the updating law alternatively as
\begin{align*}
    \Ub_{\gamma + 1} &= \Ub_{\gamma}  - \eta \nabla_{\Ub} \mathcal{L}_S \big( \Ub _{\gamma} \Vb_{\gamma}\trans \big)  - \eta \nabla_{\Ub} \mathcal{Q} \big( \Ub_\gamma, \Vb_\gamma \big);\\
    \Vb_{\gamma +1} &= \Vb _ {\gamma}  - \eta \nabla_{\Vb} \mathcal{L}_S \big(\Ub_\gamma  \Vb _{\gamma} \trans \big)  - \eta \nabla_{\Vb} \mathcal{Q} \big(\Ub _\gamma,  \Vb _\gamma \big).
\end{align*}
Taking the expansion for the right-hand side of \eqref{equ:rho_zzstar} yields that
\begin{align}
\nonumber &   \rho^2(\Zb_{\gamma+1} , \Zb^*)  \leq  \rho^2(\Zb_\gamma, \Zb^*) \\ \nonumber
    & \quad \quad \quad \quad \quad \ \   - 2 \eta \Bigg\langle \begin{bmatrix} 
\Ub_\gamma  - \Ub^*\widehat{\Ob} \\ \nonumber
\Vb_\gamma  - \Vb^*\widehat{\Ob}
\end{bmatrix},
\begin{bmatrix} 
\nabla_{\Ub} \mathcal{L}_S \big(\Ub_\gamma  \Vb _{\gamma} \trans\big)  + \nabla_{\Ub} \mathcal{Q} \big( \Ub _\gamma,  \Vb _\gamma \big) \\ \nonumber
\nabla_{\Vb} \mathcal{L}_S \big( \Ub_\gamma  \Vb _{\gamma} \trans \big)  + \nabla_{\Vb} \mathcal{Q} \big( \Ub _\gamma,  \Vb _\gamma \big)
\end{bmatrix}
\Bigg \rangle \\  \label{eq:gd}
& \quad \quad \quad \quad \quad \ \  + \eta ^ 2 \Bigg\|\begin{bmatrix} 
\nabla_{\Ub} \mathcal{L}_S \big(\Ub_\gamma  \Vb _{\gamma} \trans \big)  + \nabla_{\Ub} \mathcal{Q} \big( \Ub _\gamma,  \Vb _\gamma\big) \\ 
\nabla_{\Vb} \mathcal{L}_S \big( \Ub_\gamma  \Vb _{\gamma} \trans \big)  + \nabla_{\Vb} \mathcal{Q} \big( \Ub _\gamma,  \Vb _\gamma \big)
\end{bmatrix} \Bigg\|_{\mathrm{F}} ^ 2 \\  \nonumber
    &\quad \quad \quad \quad \quad \ \  \leq  \rho^2(\Zb_\gamma , \Zb^*) - 2 \eta \underbrace{ \Bigg\langle \begin{bmatrix} 
\Ub_\gamma  - \Ub^*\widehat{\Ob} \\
\Vb_\gamma  - \Vb^*\widehat{\Ob}
\end{bmatrix},
\begin{bmatrix} 
\nabla_{\Ub} \mathcal{L}_S \big( \Ub_\gamma  \Vb _{\gamma} \trans \big) \\
\nabla_{\Vb} \mathcal{L}_S \big( \Ub_\gamma  \Vb _{\gamma} \trans \big) 
\end{bmatrix}
\Bigg \rangle }_{T_1} \\ \nonumber
& - 2 \eta \underbrace{\Bigg\langle \begin{bmatrix} 
\Ub_\gamma - \Ub^*\widehat{\Ob} \\ \nonumber
\Vb_\gamma - \Vb^*\widehat{\Ob}
\end{bmatrix},
\begin{bmatrix} 
\nabla_{\Ub} \mathcal{Q} \big( \Ub _\gamma,  \Vb _\gamma \big) \\ \nonumber
\nabla_{\Vb} \mathcal{Q} \big( \Ub _\gamma,  \Vb _\gamma \big)
\end{bmatrix}
\Bigg \rangle }_{T_2} + 2 \eta ^ 2 \underbrace{\Bigg\|\begin{bmatrix} 
\nabla_{\Ub} \mathcal{L}_S \big(\Ub_\gamma  \Vb _{\gamma} \trans \big) \\ \nonumber
\nabla_{\Vb} \mathcal{L}_S \big( \Ub_\gamma  \Vb _{\gamma} \trans \big) 
\end{bmatrix} \Bigg\|_{\mathrm{F}} ^ 2}_{T_3} + 2 \eta ^ 2 \underbrace{ \Bigg\|\begin{bmatrix} 
\nabla_{\Ub} \mathcal{Q} \big( \Ub _\gamma,  \Vb _\gamma \big) \\ \nonumber
\nabla_{\Vb} \mathcal{Q} \big( \Ub _\gamma,  \Vb _\gamma \big)
\end{bmatrix} \Bigg\|_{\mathrm{F}} ^ 2}_{T_4}.
\end{align}
We aim to bound the terms $T_1$ through $T_4$ and then aggregate the results. Heuristically, we need to show that $T_1$ and $T_2$ are large, while $T_3$ and $T_4$ being small. Under this scenario, we obtain an upper bound of $\rho^2(\Zb_{\gamma + 1}, \Zb^*)$. Technically, the equality
\begin{align*}
    \Bigg \langle\begin{bmatrix} 
\Ab \\
\Bb
\end{bmatrix}, \begin{bmatrix} 
\Cb \\
\Db
\end{bmatrix} \Bigg \rangle = \tr \Bigg(\begin{bmatrix} 
\Ab\Cb\trans & \Ab\Db\trans \\
\Bb\Cb\trans & \Bb\Db\trans
\end{bmatrix} \Bigg) = \tr (\Ab\Cb\trans + \Bb\Db\trans) = \tr (\Ab\Cb\trans) + \tr (\Bb\Db\trans)
\end{align*}
is helpful and frequently applied in further procedures.

For $T_1$, we have that
\begin{align*}
    T_1 & = \tr \Big(\nabla_{\bTheta} \mathcal{L}_S \big(\Ub_\gamma  \Vb _{\gamma} \trans  \big) \Vb_\gamma (\Ub_\gamma  - \Ub^* \widehat \Ob) \trans \Big) \\
    & + \tr \Big(\nabla_{\bTheta} \mathcal{L}_S \big(\Ub_\gamma  \Vb _{\gamma} \trans  \big) \trans \Ub_\gamma(\Vb_\gamma - \Vb^* \widehat \Ob) \trans \Big)\\
    & = \tr \Big(\nabla_{\bTheta} \mathcal{L}_S \big(\Ub_\gamma  \Vb _{\gamma} \trans  \big) \big( \Vb_\gamma (\Ub_\gamma - \Ub^* \widehat \Ob) \trans + (\Vb_\gamma - \Vb^* \widehat \Ob) \Ub_{\gamma} \trans \big) \Big)\\
    & = \tr \Big(\nabla_{\bTheta} \mathcal{L}_S \big(\Ub_\gamma  \Vb _{\gamma} \trans  \big) \big( (\Vb_\gamma - \Vb^* \widehat \Ob)(\Ub_\gamma  - \Ub^* \widehat \Ob) \trans - \Vb^* \Ub^{*\transpose} + \Vb_\gamma  \Ub_{\gamma}\trans \big) \Big)\\
    & = \underbrace{\tr \Big(\big(\nabla_{\bTheta} \mathcal{L}_S \big(\Ub_\gamma  \Vb _{\gamma} \trans  \big) - \nabla_{\bTheta} \mathcal{L}_S \big(\Ub^* \Vb^{*\transpose} \big) \big) \big(\Vb_\gamma   \Ub_{\gamma} \trans  - \Vb^* \Ub^{*\transpose} \big) \Big)}_{T_{1.1}} \\
    & + \underbrace{\tr \Big( \nabla_{\bTheta} \mathcal{L}_S \big(\Ub^* \Vb^{* \transpose} \big) \big(\Vb_\gamma  \Ub_{\gamma}\trans - \Vb^* \Ub^{*\transpose} \big) \Big)}_{T_{1.2}} \\
    & + \underbrace{\tr \Big( \nabla_{\bTheta} \mathcal{L}_S \big(\Ub_\gamma  \Vb _{\gamma} \trans  \big)  \big( (\Vb_\gamma  - \Vb^* \widehat \Ob)(\Ub _\gamma - \Ub^* \widehat \Ob) \trans \big) \Big)}_{T_{1.3}}.
\end{align*}
Combining the definition of $e_{\rm stat}$, we have the following bounds for $T_{1.1}$, $T_{1.2}$, and $T_{1.3}$, respectively. For $T_{1.1}$, we apply Theorem 2.1.12 of \cite{nestrov2013ineq} with $\mu = \frac{\kappa_{\min}}{2}$ and $L = \kappa_{\max{}} + \frac{\kappa_{\min}}{2}$, given the RSC/RSS condition shown in Lemma \ref{lem:RSC_RSS}. Consequently, we have that
\begin{align*}
     T_{1.1} & \geq \frac{\frac{1}{2} \kappa_{\min}\Big( \frac{1}{2} \kappa_{\min} + \kappa_{\max} \Big)}{\kappa_{\min} + \kappa_{\max}} \big\|\Ub_\gamma  \Vb _{\gamma} \trans  - \Ub^* \Vb^{*\transpose} \big\|_{\textrm{F}} ^ 2  \\
     & + \frac{1}{\kappa_{\min} + \kappa_{\max}} \big\|\nabla_{\bTheta} \mathcal{L}_S \big(\Ub_\gamma  \Vb _{\gamma} \trans \big) - \nabla_{\bTheta} \mathcal{L}_S \big(\Ub^* \Vb^{*\transpose} \big) \big\|_{\textrm{F}}^2 \\
    & \geq \frac{1}{4} \kappa_{\min} \big\|\Ub_\gamma  \Vb _{\gamma} \trans  - \Ub^* \Vb^{*\transpose} \big\|_{\textrm{F}} ^ 2 + \frac{1}{\kappa_{\min} + \kappa_{\max}} \big\|\nabla_{\bTheta} \mathcal{L}_S \big(\Ub_\gamma  \Vb _{\gamma} \trans  \big) - \nabla_{\bTheta} \mathcal{L}_S \big(\Ub^* \Vb^{*\transpose} \big) \big\|_{\textrm{F}}^2.
\end{align*}
For $T_{1.2}$, we apply the inequality $2ab \leq a^2 + b^2$ and have that
\begin{align*}
     T_{1.2} & \geq -e_{\rm stat} \big\|\Ub_\gamma  \Vb _{\gamma} \trans  - \Ub^* \Vb^{*\transpose} \big\|_{\textrm{F}}  \geq - \frac{3}{32} \kappa_{\min} \big\|\Ub_\gamma  \Vb _{\gamma} \trans  - \Ub^* \Vb^{*\transpose} \big\|_{\textrm{F}} ^ 2  - \frac{8}{3 \kappa_{\min}} e_{\rm stat}^2;
\end{align*}
For $T_{1.3}$, we consider applying the Cauchy-Schwarz inequality on Frobenius inner product, which yields that
\begin{align*}
     T_{1.3} & \geq - \Big| \big\langle  \nabla_{\bTheta} \mathcal{L}_S \big( \Ub ^ * \Vb ^ {* \transpose} \big),   (\Ub_\gamma - \Ub^* \widehat \Ob) (\Vb_\gamma - \Vb^* \widehat \Ob) \trans \big\rangle \Big| \\
     & - \Big| \big\langle \nabla_{\bTheta} \mathcal{L}_S \big(\Ub_\gamma  \Vb _{\gamma} \trans  \big) - \nabla_{\bTheta} \mathcal{L}_S \big(\Ub^* \Vb^{*\transpose} \big), (\Ub_\gamma - \Ub^* \widehat \Ob) (\Vb_\gamma - \Vb^* \widehat \Ob) \trans  \big\rangle  \Big| \\
    & \geq - \Big (e_{\rm stat} + \big\|\nabla_{\bTheta} \mathcal{L}_S \big(\Ub_\gamma  \Vb _{\gamma} \trans  \big) - \nabla_{\bTheta} \mathcal{L}_S \big(\Ub^* \Vb^{*\transpose} \big) \big\|_{\textrm{F}} \Big) \big \| (\Ub_\gamma - \Ub^* \widehat \Ob) (\Vb_\gamma - \Vb^* \widehat \Ob) \trans \big \|_{\mathrm{F}} \\
    & \overset{(i)}{\geq} - \frac{1}{2} \Big (e_{\rm stat} + \big\|\nabla_{\bTheta} \mathcal{L}_S \big(\Ub_\gamma  \Vb _{\gamma} \trans  \big) - \nabla_{\bTheta} \mathcal{L}_S \big(\Ub^* \Vb^{*\transpose} \big) \big\|_{\textrm{F}} \Big) \rho^2(\Zb_\gamma , \Zb^*) \\
    & \overset{(ii)}{\geq} - \Big (e_{\rm stat} + \big\|\nabla_{\bTheta} \mathcal{L}_S \big(\Ub_\gamma \Vb_\gamma^{\transpose}  \big) - \nabla_{\bTheta} \mathcal{L}_S \big(\Ub^* \Vb^{*\transpose} \big) \big\|_{\textrm{F}} \Big) \rho(\Zb_\gamma, \Zb^*)\sqrt{\frac{\kappa ^ * \sigma_d (\bTheta ^ *)}{5(\kappa_{\min} + \kappa_{\max})}} \\
  &\overset{(iii)}{\geq} - \frac{1}{2(\kappa_{\min} + \kappa_{\max})}\Big(e_{\rm stat}^2 + \big\|\nabla_{\bTheta} \mathcal{L}_S \big(\Ub_\gamma  \Vb _{\gamma} \trans  \big) - \nabla_{\bTheta} \mathcal{L}_S \big(\Ub^* \Vb^{*\transpose} \big) \big\|_{\textrm{F}}^2 \Big) - \frac{1}{5} \kappa^* \sigma_{d}(\bTheta^*) \rho^2(\Zb_\gamma, \Zb^*).
\end{align*}
Here $(i)$ holds since
\begin{equation*}
     \big \| (\Ub_\gamma - \Ub^* \widehat \Ob) (\Vb_\gamma - \Vb^* \widehat \Ob) \trans \big \|_{\mathrm{F}} \leq  \big \| \Ub_\gamma - \Ub^* \widehat \Ob \big\|_{\mathrm{F}} \big \| \Vb_\gamma - \Vb^* \widehat \Ob \big \|_{\mathrm{F}} \leq \frac{1}{2} \rho ^ 2(\Zb_\gamma, \Zb^*),
\end{equation*}
$(ii)$ holds by the condition such that $\rho^2(\Zb_\gamma, \Zb ^ *) \leq K(\Ub^*; \kappa_{\min}, \kappa_{\max})$ in the $\gamma$th iteration step, and $(iii)$ holds as a consequence of the inequality $ab \leq \frac{a^2}{2 \epsilon} + \frac{\epsilon b^2}{2} \text{ for any } a, b, \epsilon \geq 0$. We summarize the terms $T_{1.1}$, $T_{1.2}$, and $T_{1.3}$ as
\begin{align*}
    & T_1 = T_{1.1} + T_{1.2} + T_{1.3} \\
    &\geq \frac{5}{32} \kappa_{\min} \big\|\Ub_\gamma  \Vb _{\gamma} \trans  - \Ub^* \Vb^{*\transpose} \big\|_{\textrm{F}} ^ 2 + \frac{1}{2(\kappa_{\min} + \kappa_{\max})} \big\|\nabla_{\bTheta} \mathcal{L}_S \big(\Ub_\gamma  \Vb _{\gamma} \trans  \big) - \nabla_{\bTheta} \mathcal{L}_S \big(\Ub^* \Vb^{*\transpose} \big) \big\|_{\textrm{F}}^2  \\
    & - \frac{1}{5} \kappa^* \sigma_{d}(\bTheta ^ *) \rho ^ 2(\Zb_\gamma, \Zb^*) - \Big(\frac{8}{3 \kappa_{\min}} + \frac{1}{2(\kappa_{\min} + \kappa_{\max})}\Big) e_{\rm stat}^2.
\end{align*}
For $T_2$, we apply the chain rule for derivatives such that for $\bPi_\gamma = \Ub_{\gamma} \trans \Ub_\gamma - \Vb_{\gamma} \trans \Vb _\gamma$ and have that
\begin{align*}
    \nabla_{\Ub} \mathcal{Q} \big(\Ub_\gamma, \Vb_\gamma \big) = \Ub_{\gamma}  \nabla_{\bPi} \mathcal{Q} \big(\bPi _\gamma \big); \ \  \nabla_{\Vb} \mathcal{Q} \big(\Ub_\gamma, \Vb_\gamma  \big) = - \Vb_\gamma \nabla_{\bPi} \mathcal{Q} \big(\bPi_\gamma \big).
\end{align*}
We then apply Lemma B.1 of \cite{Park2016lowrank} such that 
\begin{align*}
    T_2 &\geq \underbrace{\frac{1}{8} \Big[ \big\|\Ub_\gamma \Ub_{\gamma} \trans - \Ub^* \Vb^{*\transpose} \big\|_{\textrm{F}} ^ 2  + \big\|\Vb_\gamma \Vb_{\gamma} \trans - \Ub^* \Vb^{* \transpose} \big\|_{\textrm{F}} ^ 2  - 2\big\|\Ub_\gamma \Vb_{\gamma} \trans  - \Ub^* \Vb^{*\transpose} \big\|_{\textrm{F}} ^ 2  \Big]}_{T_{2.1}} \\
    &+ \underbrace{\frac{1}{2} \big\| \nabla_{\bPi} \mathcal{Q} \big(\bPi_\gamma \big) \big \|_{\mathrm{F}}^2}_{T_{2.2}} - \underbrace{\frac{1}{2} \big\| \nabla_{\bPi} \mathcal{Q} \big(\bPi_\gamma  \big) \big \|_{\rm op} \Bigg\| \begin{bmatrix} 
\Ub_\gamma  - \Ub^*\widehat{\Ob} \\
\Vb_\gamma  - \Vb^*\widehat{\Ob}
\end{bmatrix} \Bigg\|_{\mathrm{F}}^2}_{T_{2.3}}.
\end{align*}
Here we notice that $T_{2.2}$ is no smaller than 0 thanks to its construction. We then apply Lemma 5.4 of \cite{tu2016lowrank} on both $T_{2.1}$ and $T_{2.3}$ and have that
\begin{align*}
    T_{2.1} &+ \frac{5 \kappa_{\min}}{32} \big\|\Ub_\gamma \Vb_{\gamma} \trans  - \Ub^* \Vb^{*\transpose} \big\|_{\textrm{F}} ^ 2  \geq \min \Big\{\frac{1}{8}, \frac{5 \kappa_{\min}}{128} \Big\} \Big[\big\|\Ub_\gamma \Ub_{\gamma} \trans  - \Ub^* \Vb^{*\transpose} \big\|_{\textrm{F}} ^ 2  \\
    &+ \big\|\Vb_\gamma \Vb_{\gamma} \trans - \Ub^* \Vb^{*\transpose} \big\|_{\textrm{F}} ^ 2 + 2 \big\|\Ub_{\gamma} \Vb_{\gamma} \trans  - \Ub^* \Vb^{*\transpose} \big\|_{\textrm{F}} ^ 2  \Big] \\
    &= \kappa^* \Big[ \big\|\Ub_\gamma \Ub_{\gamma} \trans  - \Ub^* \Vb^{*\transpose}  \big\|_{\textrm{F}} ^ 2 + \big\|\Vb_\gamma \Vb_{\gamma} \trans - \Ub^* \Vb^{*\transpose}  \big\|_{\textrm{F}} ^ 2  + 2 \big\|\Ub_{\gamma} \Vb_{\gamma} \trans  - \Ub^* \Vb^{*\transpose} \big\|_{\textrm{F}} ^ 2  \Big] \\
    &= \kappa^* \Bigg\|\begin{bmatrix} 
\Ub_\gamma \Ub_{\gamma} \trans & \Ub_\gamma \Vb_{\gamma} \trans \\
\Vb_\gamma \Ub_{\gamma} \trans & \Vb_\gamma \Vb_{\gamma} \trans
\end{bmatrix} - \begin{bmatrix} 
\Ub^* \Vb^{*\transpose} & \Ub^* \Vb^{*\transpose} \\
\Ub^* \Vb^{*\transpose} & \Ub^* \Vb^{*\transpose}
\end{bmatrix} \Bigg\|_{\mathrm{F}} ^ 2 \geq \frac{8}{5} \kappa ^ * \sigma_d(\bTheta^*)  \rho^2(\Zb_\gamma , \Zb^*); \\
    T_{2.3} &\leq \frac{1}{2} \big\| \nabla_{\bPi} \mathcal{Q} \big(\bPi_\gamma  \big) \big \|_{\rm op} \rho^2(\Zb_\gamma, \Zb^*) \overset{(i)}{\leq} \frac{1}{2}  \big\| \nabla_{\bPi} \mathcal{Q} \big(\bPi_\gamma  \big) \big \|_{\rm op} \rho(\Zb_\gamma, \Zb^*) \sqrt{\frac{4 \kappa ^ * \sigma_d(\bTheta ^*)}{5}} \\
    &\overset{(ii)}{\leq} \frac{1}{8} \big\| \nabla_{\bPi} \mathcal{Q} \big(\bPi_\gamma \big) \big \|_{\rm op}^2 + \frac{2}{5} \kappa^* \sigma_d(\bTheta^*) \rho^2(\Zb_\gamma, \Zb^*) \\
    &\leq \frac{1}{8} \big\| \nabla_{\bPi} \mathcal{Q} \big(\bPi_\gamma \big) \big \|_{\rm F}^2+ \frac{2}{5} \kappa^* \sigma_d(\bTheta^*) \rho^2(\Zb_\gamma, \Zb^*).
\end{align*}
Here $(i)$ holds thanks to $\rho^2(\Zb_\gamma, \Zb ^ *) \leq K(\Ub^*; \kappa_{\min}, \kappa_{\max})$ in the $\gamma$th iteration step, and $(ii)$ holds given the inequality $ab \leq \frac{a^2}{2 \epsilon} + \frac{\epsilon b^2}{2} \text{ for any } a, b, \epsilon \geq 0$, which is analogous to the approach of bounding $T_{1.3}$. Combining the terms of $T_1$ and $T_2$, we have that
\begin{align*}
    T_1 + T_2 &\geq  \frac{1}{2(\kappa_{\min} + \kappa_{\max})} \big\|\nabla_{\bTheta} \mathcal{L}_S \big(\Ub_\gamma \Vb_{\gamma} \trans \big) - \nabla_{\bTheta} \mathcal{L}_S \big(\Ub^* \Vb^{*\transpose} \big) \big\|_{\textrm{F}}^2  \\
    & +\kappa^* \sigma_d(\bTheta^*) \rho^2(\Zb_\gamma, \Zb^*) - \Big(\frac{8}{3 \kappa_{\min}} + \frac{1}{2(\kappa_{\min} + \kappa_{\max})}\Big) e_{\rm stat}^2 \\  
    &+ \frac{3}{8}  \big\| \nabla_{\bPi} \mathcal{Q} \big(\bPi_\gamma \big) \big \|_{\mathrm{F}}^2.
\end{align*}
For $T_3$, we consider the chain rule for derivatives such that
\begin{equation*}
    T_3 = \underbrace{ \big\|\nabla_{\bTheta}\mathcal{L}_S(\Ub_\gamma \Vb_{\gamma} \trans) \Vb_\gamma  \big\|_{\mathrm{F}} ^ 2}_{T_{3.1}} + \underbrace{ \big\|\Ub_\gamma  ^ { \transpose} 
 \nabla_{\bTheta}\mathcal{L}_S(\Ub_\gamma \Vb_{\gamma} \trans) \big\|_{\mathrm{F}} ^ 2 }_{T_{3.2}}.
\end{equation*}
For $T_{3.1}$ and $T_{3.2}$, we have
\begin{align*}
    T_{3.1} & \leq 2 \big\|\nabla_{\bTheta}\mathcal{L}_S(\Ub^* \Vb^{*\transpose}) \Vb_\gamma \big\|_{\mathrm{F}} ^ 2 + 2 \Big\| \big( \nabla_{\bTheta}\mathcal{L}_S(\Ub_\gamma \Vb_{\gamma} \trans) - \nabla_{\bTheta}\mathcal{L}_S(\Ub^* \Vb^{*\transpose}) \big) \Vb_\gamma \Big\|_{\mathrm{F}} ^ 2; \\
    T_{3.2} & \leq 2 \big\|\Ub _\gamma ^ { \transpose} \nabla_{\bTheta}\mathcal{L}_S(\Ub^* \Vb^{*\transpose}) \big\|_{\mathrm{F}} ^ 2 + 2 \Big\| \Ub _\gamma ^ {\transpose} \big( \nabla_{\bTheta}\mathcal{L}_S(\Ub_\gamma \Vb_{\gamma} \trans) - \nabla_{\bTheta}\mathcal{L}_S(\Ub^* \Vb^{*\transpose}) \big) \Big\|_{\mathrm{F}} ^ 2.
\end{align*}
Combining the upper bound for $T_{3.1}$ and $T_{3.2}$ implies that
\begin{align*}
    T_3 = T_{3.1} + T_{3.2} & \overset{(i)}{\leq} 2\Big(e_{\rm stat}^2 + \big \|\nabla_{\bTheta}\mathcal{L}_S(\Ub_\gamma \Vb_{\gamma} \trans) - \nabla_{\bTheta}\mathcal{L}_S(\Ub^* \Vb^{*\transpose}) \big \|_{\mathrm{F}}^2 \Big) \big\|\Vb_\gamma \big\|_{\rm op}^2 \\
    &+ 2\Big(e_{\rm stat}^2 + \big \|\nabla_{\bTheta}\mathcal{L}_S(\Ub_\gamma \Vb_{\gamma} \trans) - \nabla_{\bTheta}\mathcal{L}_S(\Ub^* \Vb^{*\transpose}) \big \|_{\mathrm{F}}^2\Big) \big\|\Ub_\gamma \big\|_{\rm op}^2 \\
    & \overset{(ii)}{\leq}  4 \Big (e_{\rm stat}^2 + \big \|\nabla_{\bTheta}\mathcal{L}_S(\Ub_\gamma \Vb_{\gamma} \trans) - \nabla_{\bTheta}\mathcal{L}_S(\Ub^* \Vb^{*\transpose}) \big \|_{\mathrm{F}}^2 \Big) \big\|\Zb_\gamma \big\|_{\rm op}^2.
\end{align*}
Here $(i)$ holds as $\|\Ab\Bb\|_{\mathrm{F}} \leq \|\Ab\|_{\rm op}\|\Bb\|_{\mathrm{F}}$ and
\begin{align*}
    \big\|\nabla_{\bTheta}\mathcal{L}_S(\Ub^* \Vb^{*\transpose}) \Vb_\gamma \big\|_{\mathrm{F}} & = \sup_{\| \Mb \|_{\mathrm{F}} = 1} \Big\langle \nabla_{\bTheta}\mathcal{L}_S(\Ub^* \Vb^{*\transpose}) \Vb_\gamma , \Mb \Big \rangle \\
    & = \sup_{\| \Mb \|_{\mathrm{F}} = 1} \Tr \Big(\nabla_{\bTheta}\mathcal{L}_S(\Ub^* \Vb^{*\transpose}) \Vb_\gamma  \Mb \trans \Big)  \leq e_{\rm stat} \big\|\Vb_\gamma \big\|_{\rm op}.
\end{align*}
Similarly, $ \Big\|\Ub_{\gamma} \trans  \nabla_{\bTheta}\mathcal{L}_S(\Ub^* \Vb^{*\transpose}) \Big\|_{\mathrm{F}} \leq e_{\rm stat} \big\|\Ub_\gamma  \big\|_{\rm op}$. Furthermore, $(ii)$ holds since $\big\|\Ub_\gamma \big\|_{\rm op}^2 \leq \big\|\Zb_\gamma  \big\|_{\rm op}^2$ and $\big\|\Vb_\gamma  \big\|_{\rm op}^2 \leq \big\|\Zb_\gamma \big\|_{\rm op}^2$.

For $T_4$, we have that
\begin{equation*}
    T_4 \leq \big\| \Ub_\gamma  \nabla_{\bPi} \mathcal{Q} \big(\bPi_\gamma  \big) \big \|_{\rm F}^2 + \big\| \Vb_\gamma \nabla_{\bPi} \mathcal{Q} \big(\bPi_\gamma \big) \big \|_{\rm F}^2 \leq 2 \big\|\Zb_\gamma  \big\|_{\rm op}^2 \big\| \nabla_{\bPi} \mathcal{Q} \big(\bPi_\gamma  \big) \big \|_{\rm F}^2.
\end{equation*}
We combine the bounds for $T_3$ and $T_4$, which indicates that
\begin{align*}
    T_3 + T_4 \leq  4 \Big (e_{\rm stat}^2 + \big \|\nabla_{\bTheta}\mathcal{L}_S(\Ub_\gamma \Vb_{\gamma} \trans ) & - \nabla_{\bTheta}\mathcal{L}_S(\Ub^* \Vb^{*\transpose}) \big \|_{\mathrm{F}}^2 \Big) \big\|\Zb_\gamma \big\|_{\rm op}^2 \\
    &\ \ \ + 2 \big\|\Zb_\gamma  \big\|_{\rm op}^2 \big\| \nabla_{\bPi} \mathcal{Q} \big(\bPi_\gamma \big) \big \|_{\rm F}^2.
\end{align*}
We assemble the bound from $T_1$ to $T_4$, which implies that for $\eta \leq \frac{1}{8 \|\Zb_{\gamma} \|_{\rm op}^2} \min{\Big\{1, \frac{1}{\kappa_{\min} + \kappa_{\max}} \Big\}}$,
\begin{align}
 \nonumber   \rho^2 \big(\Zb_{\gamma+1} , \Zb ^ * \big) &\leq \rho^2 \big(\Zb_\gamma, \Zb^* \big) - 2 \eta \big[T_1 + T_2 \big] + 2 \eta ^ 2 \big[T_3 + T_4 \big] \\  \nonumber  
    & \overset{(i)}{\leq} \big(1 - 2 \eta \kappa^* \sigma_d(\bTheta ^*) \big) \rho^2 \big(\Zb_\gamma  , \Zb^* \big) +  \eta \Big[\frac{2}{\kappa_{\min} + \kappa_{\max}} + \frac{16}{3 \kappa_{\min}} \Big] e_{\rm stat}^2 \\ \label{eq:one-step}
    & \leq \big(1 - 2 \eta \kappa^* \sigma_d(\bTheta^*) \big) \rho^2 \big(\Zb_\gamma, \Zb^* \big) +  6 \eta \frac{\kappa_{\min} + \kappa_{\max}}{\kappa_{\min} \kappa_{\max}} e_{\rm stat}^2.
\end{align}
In $(i)$, the selection of $\eta$ guarantees that the $\big\|\nabla_{\bTheta} \mathcal{L}_S \big(\Ub_\gamma \Vb_{\gamma} \trans \big) - \nabla_{\bTheta} \mathcal{L}_S \big(\Ub^* \Vb^{*\transpose} \big) \big\|_{\textrm{F}}^2$ term and the $\big\| \nabla_{\bPi} \mathcal{Q} \big(\bPi_\gamma \big) \big \|_{\rm F}^2$ term in $T_3 + T_4$ with positive coefficients are offset by such terms in $T_1 + T_2$ with negative coefficients. Finally, we show $\big\|\Zb_\gamma \big\|_{\rm op}^2 \leq 1.5 \big\|\Zb_0 \big\|_{\rm op}^2$ for any $\Zb_\gamma$ such that $\rho ^ 2 (\Zb_\gamma, \Zb ^ *) \leq K(\Ub^*; \kappa_{\min}, \kappa_{\max})$. Specifically,
\begin{align*}
    \big\|\Zb_\gamma \big\|_{\rm op} & \leq \big\|\Zb^ * \widehat{\Ob}\big\|_{\rm op}  + \big\|\Zb_\gamma - \Zb^ * \widehat{\Ob} \big\|_{\rm op} = \big\|\Zb^ *\big\|_{\rm op} + \rho (\Zb_\gamma, \Zb ^ *) \leq \big\|\Zb^ * \big\|_{\rm op} + \sqrt{\frac{4 \kappa ^ * \sigma_d(\bTheta ^*)}{5(\kappa_{\min} + \kappa_{\max})}} \\
    &\overset{(i)}{\leq} \big\|\Zb^ * \big\|_{\rm op} + \sqrt{\frac{\kappa_{\min} \kappa_{\max} \sigma^2_d (\Zb^*)}{64 (\kappa_{\min} + \kappa_{\max}) ^ 2}} \overset{(ii)}{\leq} \Big(1 + \frac{1}{16} \Big) \big\|\Zb^ * \big\|_{\rm op} = \frac{17}{16} \big\|\Zb^ * \big\|_{\rm op}.
\end{align*}

Here $(i)$ follows from the property that $\kappa ^ * \leq \frac{5\kappa_{\min} \kappa_{\max}}{128(\kappa_{\min} + \kappa_{\max})}$ and $\sigma^2_d (\Zb^*) = 2 \sigma_d(\bTheta ^*)$. $(ii)$ follows from the inequalities $\frac{xy}{(x+y) ^ 2} \leq 1/4$ and $\sigma_d (\Zb ^ *) \leq \big\|\Zb^ * \big\|_{\rm op}$. Applying a similar approach on the initial value $\Zb_0$ satisfying \eqref{equ:diff_initial} indicates that
\begin{equation*}
    \big\|\Zb_0 \big\|_{\rm op} \geq \big\|\Zb^ * \widehat{\Ob}\big\|_{\rm op}  - \big\|\Zb_0 - \Zb^ * \widehat{\Ob} \big\|_{\rm op} \geq \Big(1 - \frac{1}{16} \Big) \big\|\Zb^ * \big\|_{\rm op} = \frac{15}{16} \big\|\Zb^ * \big\|_{\rm op} .
\end{equation*}
We then conclude that $\big\|\Zb_0 \big\|_{\rm op} \geq \frac{15}{17} \big\|\Zb_\gamma \big\|_{\rm op}$, which completes our proof.
\end{proof}

\subsection{Bounding the Statistical Error in Bi-Factored Gradient Descent}

In this section, we propose an upper bound on the statistical error $e_{\rm stat}$ for bi-factored gradient descent through Lemma \ref{lem:epsilon_non-convex}, as inspired by \cite{jordan2019dist}. We use ${\rm mat} (\cdot)$ to denote the inverse operation of vectorization $\vec (\cdot)$.

\begin{lemma}[Statistical Error]\label{lem:epsilon_non-convex}
Under Assumptions \ref{ass:Positive_minimal_eigenvalue}-\ref{ass:reg_grad}, the statistical error $e_{\rm stat}$ in \eqref{equ:e_stat} has
\begin{equation*}
    e_{\mathrm{stat}} \lesssim \sqrt{\frac{dp \log p}{n}} + p ^ 3 \sqrt{\frac{\log p}{n/m}} \big\|\widehat \bTheta_0 - \bTheta ^ * \big\|_{\mathrm{F}} + p^3 \|\widehat \bTheta_0 - \bTheta ^ * \big\|_{\mathrm{F}} ^ 2,
\end{equation*}
 \text{with probability at least $1 - O(p^{-10})$.}
\end{lemma}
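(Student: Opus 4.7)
The natural starting point is to rewrite the surrogate gradient at the truth using $\nabla_{\bTheta}\mathcal{L}_S(\bTheta^*)=\nabla\mathcal{L}_1(\bTheta^*)+\nabla\mathcal{L}(\widehat{\bTheta}_0)-\nabla\mathcal{L}_1(\widehat{\bTheta}_0)$ and then adding and subtracting $\nabla\mathcal{L}(\bTheta^*)$ to obtain the classical surrogate-likelihood decomposition
\begin{equation*}
  \nabla_{\bTheta}\mathcal{L}_S(\bTheta^*)=\underbrace{\nabla\mathcal{L}(\bTheta^*)}_{(\mathrm{I})}+\underbrace{\bigl\{[\nabla\mathcal{L}_1(\bTheta^*)-\nabla\mathcal{L}_1(\widehat{\bTheta}_0)]-[\nabla\mathcal{L}(\bTheta^*)-\nabla\mathcal{L}(\widehat{\bTheta}_0)]\bigr\}}_{(\mathrm{II})}.
\end{equation*}
For any rank-$d$ matrix $\bDelta$ with $\|\bDelta\|_{\mathrm{F}}\le 1$, the duality $\langle\Ab,\bDelta\rangle\le\sqrt{d}\,\|\Ab\|_{\mathrm{op}}$ reduces each of the resulting sup-of-inner-product subproblems to an operator-norm bound on the corresponding gradient piece.

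I would handle term $(\mathrm{I})$ first via matrix concentration. Since $\nabla\mathcal{L}(\bTheta^*)=n^{-1}\sum_{\ell=1}^n\Wb_\ell(\bTheta^*)$ is a sum of $n$ i.i.d., entrywise bounded, mean-zero random matrices satisfying $\bigl\|\mathbb{E}_{\bTheta^*}[\Wb_\ell(\bTheta^*)\Wb_\ell\trans(\bTheta^*)]\bigr\|_{\mathrm{op}}\lesssim p$ under Assumption \ref{ass:reg_grad}, a matrix Bernstein inequality delivers $\|\nabla\mathcal{L}(\bTheta^*)\|_{\mathrm{op}}\lesssim\sqrt{p\log p/n}$ with probability at least $1-O(p^{-10})$. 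Combined with the $\sqrt{d}$ reduction, this yields the leading $\sqrt{dp\log p/n}$ contribution to $e_{\mathrm{stat}}$.

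For term $(\mathrm{II})$, the plan is to linearize via an integral mean-value identity applied separately to $\mathcal{L}_1$ and $\mathcal{L}$: writing $\bTheta_t=\widehat{\bTheta}_0+t(\bTheta^*-\widehat{\bTheta}_0)$, we obtain
\begin{equation*}
  (\mathrm{II})=\int_0^1\bigl[\widehat{\Hb}_1(\bTheta_t)-\widehat{\Hb}(\bTheta_t)\bigr]\,\vec(\bTheta^*-\widehat{\bTheta}_0)\,dt.
\end{equation*}
A further Taylor expansion of $\widehat{\Hb}_1(\bTheta_t)-\widehat{\Hb}(\bTheta_t)$ around $\bTheta^*$ splits the integrand into a leading bilinear piece driven by $\widehat{\Hb}_1(\bTheta^*)-\widehat{\Hb}(\bTheta^*)$ and a higher-order remainder governed by the third derivative of the sigmoid. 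For the leading piece, each of the $O(p^4)$ Hessian entries is a bounded sample average, and the difference $\widehat{\Hb}_1(\bTheta^*)-\widehat{\Hb}(\bTheta^*)$ can be rewritten as a centered i.i.d.\ sum over the samples in $\mathcal{S}_{2:m}$; entrywise Hoeffding together with a union bound yields $\|\widehat{\Hb}_1(\bTheta^*)-\widehat{\Hb}(\bTheta^*)\|_{\max}\lesssim\sqrt{\log p/(n/m)}$ with probability $1-O(p^{-10})$. Bounding the resulting bilinear form against $\vec(\bTheta^*-\widehat{\bTheta}_0)$ and $\vec(\bDelta)$ by $\|\cdot\|_{\max}\cdot\|\vec(\cdot)\|_1\cdot\|\vec(\cdot)\|_1$, and then applying the crude conversion $\|\vec(\Mb)\|_1\le p\|\Mb\|_{\mathrm{F}}$ to each factor while absorbing additional polynomial slack to make the bound uniform in $t\in[0,1]$, produces the $p^3\sqrt{\log p/(n/m)}\,\|\widehat{\bTheta}_0-\bTheta^*\|_{\mathrm{F}}$ term. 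The Taylor remainder, controlled by the uniformly bounded third derivative of the logistic-type loss, contributes the $p^3\|\widehat{\bTheta}_0-\bTheta^*\|_{\mathrm{F}}^2$ term via the same norm conversions applied twice.

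The main obstacle is the middle step: one must carefully coordinate (i) the entrywise concentration of the Hessian-difference tensor with (ii) a bilinear-form bound that is uniform over the rank-$d$ unit-Frobenius ball, without paying an unnecessary covering-number cost over $\bDelta$. The $p^3$ exponent reflects the coarseness of the $\|\vec(\cdot)\|_1$-to-$\|\cdot\|_{\mathrm{F}}$ conversions; sharper rank-aware bookkeeping could tighten this, but would be unnecessary since the sample-size assumption of Theorem \ref{thm:Divide_conquer} forces the pooled-gradient term $\sqrt{dp\log p/n}$ to dominate, so the correction terms only need to be shown to be of smaller order.
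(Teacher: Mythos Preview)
Your decomposition into $(\mathrm{I})=\nabla\mathcal{L}(\bTheta^*)$ and $(\mathrm{II})=$ Hessian-difference times increment is exactly the paper's $\Eb_1$ and $\Eb_2+\Eb_3$; the treatment of $(\mathrm{I})$ via matrix Bernstein with the $\|\bDelta\|_*\le\sqrt{d}$ duality and of the Taylor remainder via the Lipschitz third derivative of the sigmoid also match the paper line for line. The one substantive mechanical difference is in the leading Hessian-difference piece: the paper bounds $\|\widehat{\Hb}(\bTheta^*)-\widehat{\Hb}_1(\bTheta^*)\|_{\rm op}$ by first invoking the SVD-rotation trick of Lemma~\ref{lem:H_toSigma} to pass to $\widehat{\bSigma}$-space (entries $O(p^2)$), applying entrywise Hoeffding to obtain $\|\widehat{\bSigma}-\widehat{\bSigma}_1\|_{\max}\lesssim p^2\sqrt{m\log p/n}$, and then paying $\|\bGamma\|_*^2\le p$ to reach the $p^3$ factor. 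Your direct route---bounding $\|\widehat{\Hb}_1(\bTheta^*)-\widehat{\Hb}(\bTheta^*)\|_{\max}$ (whose entries are $O(1)$, cf.\ \eqref{equ:def_Rl}) and then controlling the bilinear form by $\|\cdot\|_{\max}\cdot\|\vec(\bDelta)\|_1\cdot\|\vec(\widehat{\bTheta}_0-\bTheta^*)\|_1$---is more elementary and, if executed as you describe, actually delivers $p^2$ rather than $p^3$ in that term; it also sidesteps the need to argue that the $\widehat{\bSigma}$-concentration holds uniformly over the SVD basis $(\Ub,\Vb)$ of $\bGamma$, which the paper's route implicitly requires.
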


\begin{proof}
    First, we implement a second-order Taylor expansion for $\nabla _{\bTheta} \mathcal{L}_S (\bTheta ^ *; \widehat \bTheta_0)$, which implies that
    \begin{align}
     \nonumber   & \nabla_{\bTheta} \mathcal{L}_S (\bTheta ^ *; \widehat \bTheta_0) = \nabla \mathcal{L}_1(\bTheta ^ *) + \frac{1}{m} \sum_{i = 1} ^ m \nabla \mathcal{L}_i (\widehat \bTheta_0) - \nabla \mathcal{L}_1(\widehat \bTheta_0) \\
     \nonumber     & = \underbrace{\frac{1}{m} \sum_{i = 1} ^ m \nabla \mathcal{L}_i \big( \bTheta^* \big)}_{\Eb_1} + \underbrace{ {\rm mat} \Bigg\{ \bigg\{\frac{1}{m} \sum_{i = 1} ^ m \nabla^2 \mathcal{L}_i \big(\bTheta ^ * \big) \bigg\} \vec \big( \widehat{\bTheta}_0 - \bTheta^* \big) - \Big\{ \nabla^2 \mathcal{L}_1 \big(\bTheta ^* \big) \Big\} \vec \big(\widehat \bTheta_0 - \bTheta^* \big) \Bigg\} }_{\Eb_2} \\
     \nonumber   & + \mat \Bigg\{ \bigg\{ \int_0 ^ 1 \bigg[ \frac{1}{m} \sum_{i = 1} ^ m \nabla ^ 2 \mathcal{L}_i \big(\bTheta ^ * + s \big(\widehat \bTheta_0 - \bTheta ^ * \big) \big) - \frac{1}{m} \sum_{i = 1} ^ m  \nabla ^ 2 \mathcal{L}_i \big(\bTheta ^ *\big)  \bigg] ds \bigg\} \vec \big(\widehat \bTheta_0 - \bTheta ^ * \big) \\
        & \ \ \ \ \ \ \ \ \ \ \ \underbrace{ - \bigg\{ \int_0 ^ 1 \bigg[ \nabla ^ 2 \mathcal{L}_1 \big(\bTheta ^ * + s \big(\widehat \bTheta_0 - \bTheta ^ * \big) \big) - \nabla ^ 2 \mathcal{L}_1 \big(\bTheta ^ *\big)  \bigg] ds \bigg\} \vec \big(\widehat \bTheta_0 - \bTheta ^ * \big) \Bigg\}}_{\Eb_3}. \label{eq:E3}
    \end{align}
    The linearity of the trace yields that
    \begin{equation*}
        \tr \big( \bDelta \trans \nabla _{\bTheta} \mathcal{L}_S (\bTheta ^ *; \widehat \bTheta_0) \big) = \tr \Big( \bDelta \trans \big(\Eb_1 + \Eb_2 + \Eb_3 \big) \Big) = \tr \big( \bDelta \trans \Eb_1 \big) + \tr \big( \bDelta \trans \Eb_2 \big) + \tr \big( \bDelta \trans \Eb_3 \big).
    \end{equation*}
    We then apply the matrix Hölder's inequality \citep{baumgartner2011matholder} with dual index $p = 1$, $q = \infty$ for $\tr \big( \bDelta \trans \Eb_1 \big)$ and the Cauchy-Schwarz inequality for $\tr \big( \bDelta \trans \Eb_2 \big)$ and $\tr \big( \bDelta \trans \Eb_3 \big)$. For $\tr \big( \bDelta \trans \Eb_1 \big) \leq \| \bDelta \|_* \|\Eb_1 \|_{\rm op}$, we have $\|\bDelta\|_* \leq \sqrt{d} \|\bDelta\|_{\mathrm{F}} \leq \sqrt{d}$ thanks to the $d$-low-rankness of $\bDelta$ and $\|\bDelta\|_{\mathrm{F}} = 1$ as indicated by the definition of $e_{\rm stat}$. For $\Eb_1 = \nabla \mathcal{L} (\bTheta ^ *)$, we apply the matrix Bernstein inequality as given in \eqref{equ:bernstein_res}, which guarantees the existence of some $t \asymp \sqrt{\frac{p \log p}{n}}$ such that $ \|\Eb_1\|_{\rm op} \lesssim \sqrt{\frac{p \log p}{n}}$ with probability at least $1 - O(p ^ {-10})$.

    We then consider $\tr \big( \bDelta \trans \Eb_2 \big)$, where the definition of trace indicates that
    \begin{equation} \label{equ:E2_trace}
    \begin{split}
        \tr \big( \bDelta \trans \Eb_2 \big) & = \tr \bigg( \bDelta \trans \mat \Big\{ \Big( \widehat{\Hb} (\bTheta ^ *) - \widehat{\Hb}_1 (\bTheta ^ *)  \Big)\vec \big(\widehat \bTheta_0 - \bTheta ^ * \big) \Big\} \bigg) \\
        & = \vec (\bDelta) \trans \Big( \widehat{\Hb} (\bTheta ^ *) - \widehat{\Hb}_1 (\bTheta ^ *)  \Big)\vec \big(\widehat \bTheta_0 - \bTheta ^ * \big).
    \end{split}
    \end{equation}
    Applying Cauchy-Schwarz inequality on the right-hand side of \eqref{equ:E2_trace} indicates that
    \begin{align*}
         \tr \big( \bDelta \trans \Eb_2 \big) &\leq \big \| \vec(\bDelta) \big \|_2 \Big\|\widehat{\Hb} (\bTheta ^ *) - \widehat{\Hb}_1 (\bTheta ^ *) \Big\|_{\rm op}  \big \| \vec \big(\widehat \bTheta_0 - \bTheta ^ * \big) \big \|_2 \\
         &= \|\bDelta\|_{\rm F} \Big\|\widehat{\Hb} (\bTheta ^ *) - \widehat{\Hb}_1 (\bTheta ^ *) \Big\|_{\rm op} \big\|\widehat \bTheta_0 - \bTheta ^ *  \big\|_{\rm F} = \Big\|\widehat{\Hb} (\bTheta ^ *) - \widehat{\Hb}_1 (\bTheta ^ *) \Big\|_{\rm op} \big\|\widehat \bTheta_0 - \bTheta ^ *  \big\|_{\rm F}.
    \end{align*}
    The remaining task is to bound the operator norm $\Big\|\widehat{\Hb} (\bTheta ^ *) - \widehat{\Hb}_1 (\bTheta ^ *) \Big\|_{\rm op}$. Thanks to the symmetricity of $\widehat{\Hb} (\bTheta ^ *) - \widehat{\Hb}_1 (\bTheta ^ *)$, we have
    \begin{equation*}
        \Big\|\widehat{\Hb} (\bTheta ^ *) - \widehat{\Hb}_1 (\bTheta ^ *) \Big\|_{\rm op} = \sup_{\bgamma \in \mathbb{S} ^ {p ^ 2 - 1}} \bigg| \bgamma \trans \Big(\widehat{\Hb} (\bTheta ^ *) - \widehat{\Hb}_1 (\bTheta ^ *) \Big) \bgamma \bigg|.
    \end{equation*}
    Applying Lemma \ref{lem:H_toSigma} and \eqref{equ:sigma_maxnorm} for a reformed matrix $\bGamma = \mat \big( \bgamma \big) \in \mathbb{R} ^ {p \times p}$ yields that
    \begin{equation} \label{equ:ineq_sigma1}
        \begin{split}
            & \vec{(\bGamma)} \trans \big( \widehat{\Hb} (\bTheta ^ *) - \widehat{\Hb}_1 (\bTheta ^ *)  \big) \vec{(\bGamma)} \\
         = \ & \vec{(\Gb)} \trans \big( \widehat{\bSigma} (\bTheta ^ *) - \widehat{\bSigma}_1 (\bTheta ^ *)  \big) \vec{(\Gb)} \leq \|\bGamma\|_*^2 \big\| \widehat{\bSigma} (\bTheta ^ *) - \widehat{\bSigma}_1 (\bTheta ^ *)  \big\|_{\max},
        \end{split}
    \end{equation}
    with $\Gb$ denoting the diagonal matrix in the SVD of $\bGamma$ and
    \begin{equation*}
    \widehat{\boldsymbol\Sigma}_i(\bTheta) = \frac{1}{n/m} \sum_{\ell \in \mathcal{S}_i} \bigg\{ \sum_{r = 1} ^ p \mathrm{vec} \big(\widetilde{\mathbf{Y}}_{(-r)}^{(\ell)} (\bTheta) \big) \mathrm{vec} \big(\widetilde{\mathbf{Y}}_{(-r)}^{(\ell)} (\bTheta) \big) \trans \bigg\}.
    \end{equation*}
    Similar to \eqref{equ:concentration_Sigma}, we have
    \begin{equation*}
         \mathbb{P} \bigg(\Big|\big(\widehat\bSigma_{1}\big)_{jk}(\bTheta ^ *) - \widehat{\bSigma}_{jk}(\bTheta ^ *) \Big| > t, \text{ for any } j,k \in p ^ 2 \bigg) \lesssim p^4 \exp \bigg\{-\frac{t^ 2 n}{8 m p ^ 4} \bigg\}.
    \end{equation*}
    Setting $t = p ^ 2 \sqrt{\frac{\alpha \log p}{n/m}}$ for some constant $\alpha$ large enough indicates that
    \begin{equation*}
        \big\| \widehat{\bSigma} (\bTheta ^ *) - \widehat{\bSigma}_1 (\bTheta ^ *)  \big\|_{\max} \lesssim p^2 \sqrt{\frac{\log p}{n/m}} \ \ \text{with probability at least} \ \ 1 - O(p^{-10}).
    \end{equation*}
    Given that $\|\bGamma\|_*^2 \leq p \|\bGamma\|_{\rm F}^2 = p \|\vec{(\bGamma)}\|_2^2 = p \|\bgamma\|_2 ^ 2 = p$, we conclude from \eqref{equ:ineq_sigma1} that
    \begin{equation} \label{equ:H_H1}
    \begin{split}
        \tr \big( \bDelta \trans \Eb_2 \big) & \leq \Big\| \widehat{\Hb} (\bTheta ^ *) - \widehat{\Hb}_1 (\bTheta ^ *) \Big\|_{\rm op} \big\|\widehat \bTheta_0 - \bTheta ^ *  \big\|_{\rm F} \\
        & \leq p \Big\| \widehat{\bSigma} (\bTheta ^ *) - \widehat{\bSigma}_1 (\bTheta ^ *)  \Big\|_{\max} \big\|\widehat \bTheta_0 - \bTheta ^ *  \big\|_{\rm F} \lesssim p^3 \sqrt{\frac{\log p}{n/m}} \big\|\widehat \bTheta_0 - \bTheta ^ *  \big\|_{\rm F},
    \end{split}
    \end{equation}
    with probability at least $1 - O(p ^ {-10})$.

    For $\Eb_3$, we consider that for some $\mathring{\bTheta} = \bTheta ^ * + s (\widehat \bTheta_0 - \bTheta ^ *)$ with $s \in [0,1]$,
    \begin{align*}
        \big\| \widehat{\Hb} (\mathring{\bTheta}) - \widehat{\Hb} (\bTheta ^ *) \big\|_{\rm op} &\leq p^2 \big\| \widehat{\Hb} (\mathring{\bTheta}) - \widehat{\Hb} (\bTheta ^ *) \big\|_{\rm max}; \\
        \big\| \widehat{\Hb}_1 (\mathring{\bTheta}) - \widehat{\Hb}_1 (\bTheta ^ *) \big\|_{\rm op} &\leq p^2 \big\| \widehat{\Hb}_1 (\mathring{\bTheta}) - \widehat{\Hb}_1 (\bTheta ^ *) \big\|_{\rm max},
    \end{align*}
    and alternatively, choose to bound the entry-wise maximum norm of $\widehat{\Hb}(\mathring{\bTheta}) - \widehat{\Hb} (\bTheta ^ *)$ and $\widehat{\Hb}_1 (\mathring{\bTheta}) - \widehat{\Hb}_1 (\bTheta ^ *)$. Here we consider $\widehat{\Hb}(\cdot)$ as presented in \eqref{equ:hess_mat}. For some fixed $\ell$, we denote $\Rb_\ell(\bTheta) \in \mathbb{R} ^ {p^2 \times p^2}$ as
    \begin{equation} \label{equ:def_Rl}
        \Rb_\ell(\bTheta) = \sum_{r = 1} ^ p  A_{r,\ell}(\bTheta)  \vec \big(\mathbf{Y}_{(-r)}^{(\ell)} \big)   \vec \big(\mathbf{Y}_{(-r)}^{(\ell)} \big) \trans; \text{ where } \widehat{\Hb} (\bTheta) = \frac{1}{n} \sum_{\ell = 1} ^ n \Rb_\ell(\bTheta).
    \end{equation}
    The design of $\mathbf{Y}_{(-r)}^{(\ell)}$ ensures that
    \begin{equation*}
        \big\| \Rb_\ell(\bTheta) \big\|_{\max} = \max_{k,m} \Big\{ \big(\Rb_\ell(\bTheta)\big)_{km} \Big\} \leq \max_{i \neq j} \Big\{ 4\big| A_{i,\ell}(\bTheta) + A_{j, \ell} (\bTheta) \big| \Big\}.
    \end{equation*}
    Note that the function $g(t) = 4 e^t/(e^t + 1) ^ 2$ has Lipschitz property, we conclude that
    \begin{equation} \label{equ:Hhat_maxnorm}
    \begin{split}
        \big\| \widehat{\Hb} (\mathring{\bTheta}) - \widehat{\Hb} (\bTheta ^ *) \big\|_{\rm max} &= \bigg \|\frac{1}{n} \sum_{\ell = 1} ^ n \big(\Rb_\ell(\mathring{\bTheta}) - \Rb_\ell(\bTheta ^ *) \big)\bigg\|_{\max} \leq \frac{1}{n} \sum_{\ell = 1} ^ n \big\| \Rb_\ell(\mathring{\bTheta}) - \Rb_\ell(\bTheta ^ *) \big\|_{\max} \\
        & \leq \frac{4}{n} \sum_{\ell = 1} ^ n \max_{ i \neq j}\bigg\{\Big|\Big(A_{i,\ell}(\widetilde\bTheta) - A_{i,\ell}(\bTheta ^ *) \Big) + \Big( A_{j,\ell}(\widetilde\bTheta) - A_{j,\ell}(\bTheta ^ *) \Big) \Big|\bigg\} \\
        & \lesssim \frac{4}{n} \sum_{\ell = 1} ^ n  \bigg|\sum_{s \neq r} \Big(\widetilde{\theta}_{rs} - \theta ^ *_{rs} \Big) x_{r}^{(\ell)} x_{s}^{(\ell)} \bigg| \\
        & \leq \frac{4}{n} \sum_{\ell = 1} ^ n \big\|\widetilde\bTheta - \bTheta ^ * \big\|_{\max} \sum_{s \neq r} |x_{r}^{(\ell)} x_{s}^{(\ell)} | \leq 4p \big\|\widetilde\bTheta - \bTheta ^ * \big\|_{\max}.
    \end{split}
    \end{equation}
    Furthermore, we have that $\big\| \widehat{\Hb}_1 (\mathring{\bTheta}) - \widehat{\Hb}_1 (\bTheta ^ *) \big\|_{\rm max} \lesssim p \big\|\widetilde\bTheta - \bTheta ^ * \big\|_{\max}$ similar to \eqref{equ:Hhat_maxnorm}. Finally, we apply the mean value theorem of $\Eb_3$, which yields that 
    \begin{align}
     \nonumber   \tr \big( \bDelta \trans \Eb_3 \big) &\leq \|\bDelta\|_{\rm F} \|\Eb_3\|_{\rm F} \leq \|\Eb_3\|_{\rm F} = \big\|\vec{(\Eb_3)}\big\|_{2} 
        \\\nonumber &\leq \bigg\| \bigg\{ \int_0 ^ 1 \bigg[ \widehat\Hb \big(\bTheta ^ * + s \big(\widehat \bTheta_0 - \bTheta ^ * \big) \big) - \widehat\Hb \big(\bTheta ^ *\big)  \bigg] ds \bigg\} \vec \big(\widehat \bTheta_0 - \bTheta ^ * \big) \bigg \|_2 \\
        \nonumber    &\qquad + \bigg\| \bigg\{ \int_0 ^ 1 \bigg[ \widehat\Hb_1 \big(\bTheta ^ * + s \big(\widehat \bTheta_0 - \bTheta ^ * \big) \big) - \widehat\Hb_1 \big(\bTheta ^ *\big)  \bigg] ds \bigg\} \vec \big(\widehat \bTheta_0 - \bTheta ^ * \big) \bigg\|_2 \\
        \nonumber & \leq \Big( \big\| \widehat{\Hb} (\mathring{\bTheta}) - \widehat{\Hb} (\bTheta ^ *) \big\|_{\rm op} +  \big\| \widehat{\Hb}_1 (\mathring{\bTheta}) - \widehat{\Hb}_1 (\bTheta ^ *) \big\|_{\rm op} \Big) \big\|\widehat \bTheta_0 - \bTheta ^ * \big\|_{\rm F} \\
        \nonumber  & \leq p^2 \Big( \big\| \widehat{\Hb} (\mathring{\bTheta}) - \widehat{\Hb} (\bTheta ^ *) \big\|_{\rm max} +  \big\| \widehat{\Hb}_1 (\mathring{\bTheta}) - \widehat{\Hb}_1 (\bTheta ^ *) \big\|_{\rm max} \Big) \big\|\widehat \bTheta_0 - \bTheta ^ * \big\|_{\rm F} \\
        \label{eq:traceE3} & \lesssim p ^ 3 \big\|\widehat\bTheta_0 - \bTheta ^ * \big\|_{\max} \big\|\widehat \bTheta_0 - \bTheta ^ * \big\|_{\rm F} \leq p ^ 3 \big\|\widehat \bTheta_0 - \bTheta ^ * \big\|_{\rm F}^2.
    \end{align}
    Summarizing the upper bounds for $\tr \big( \bDelta \trans \Eb_1 \big)$, $\tr \big( \bDelta \trans \Eb_2 \big)$, and $\tr \big( \bDelta \trans \Eb_3 \big)$ completes the proof for Lemma \ref{lem:epsilon_non-convex}.
\end{proof}

\section{Proof for the Rationality for Initialization}
\label{sec: convex thm}
In this section, we prove the theoretical performance guarantee for the estimator $\widehat{\bTheta}_{\rm cvx}$ for the convex problem proposed in \eqref{equ:question}, which shows our rationality of initialization.

\begin{proof}[Proof of Theorem \ref{thm:main}] For the convenience of the notation, we prove the rate by replacing the loss $\mathcal{L}_1$ by $\mathcal{L}$.
Our proof has two major steps: in  Step 1, we show the scale of the gradient $\big\|\nabla \mathcal{L}(\boldsymbol\Theta ^ *)\big\|_{\text{op}}$, and in {Step 2}, let $\Deltab = \widehat{\bTheta}_{\rm cvx} - {\bTheta}^*$ and we  aim to  show the  restricted strongly convexity of the loss by providing the lower bound of $\vec (\boldsymbol\Delta) \trans   \widehat{\mathbf{H}}(\boldsymbol\Theta)   \vec(\boldsymbol\Delta)$ for $\bTheta \in \mathcal{T}$ given Assumptions \ref{ass:Positive_minimal_eigenvalue}-\ref{ass:reg_grad}. 

For Step 1, by Assumption \ref{ass:reg_grad}, we can apply the matrix Bernstein inequality and get
\begin{equation} \label{equ:bernstein_res}
    \mathbb{P} \bigg\{ \bigg\|\frac{1}{n} \sum_{\ell = 1} ^ n \mathbf{W}_{\ell} (\bTheta ^ *)\bigg\|_{\rm op} \geq t \bigg\} \leq 2 p \exp \bigg\{-\frac{nt^2/2}{p + pt/3} \bigg\}.
\end{equation}
Here we set that $t \asymp \sqrt{\frac{\gamma p \log p}{n}}$ with some $\gamma > 1$, and we have that
\begin{align*}
    \mathbb{P} \bigg\{ \bigg\|& \frac{1}{n} \sum_{\ell = 1} ^ n \mathbf{W}_{\ell} (\bTheta ^ *) \bigg\|_{\rm op} \geq C_2 \sqrt{\frac{\gamma p \log p}{n}} \bigg\} \\
    &\leq 2 p \exp \bigg\{- \frac{C_2 ^ 2 \gamma p \log p}{2p + 2 p C_2 \sqrt{\frac{\gamma p \log p}{n}} / 3} \bigg\} = 2 p ^ {1 - \frac{C ^ 2 _ 2 \gamma}{2 + 2 C_2 \sqrt{\frac{\gamma p \log p}{n}} / 3}}  \leq 2 p ^ {-10},
\end{align*}
for some constant $C_2 = \gamma = 10$, where the last inequality is due to $n \geq p \log p$. Therefore, if we choose $\lambda = C_0 \sqrt{\frac{\gamma p \log p}{n}}$ for sufficiently large $C_0$, we have $\lambda \ge 2  \big\|\nabla \mathcal{L}(\boldsymbol\Theta ^ *)\big\|_{\text{op}}$ with probability at least $1-2 p ^ {-10}$. By \cite{negahban2012RSC}, we have
\begin{equation} \label{equ:cone_property}
    \Deltab \in \mathcal{C}(\mathcal{M}, \overline{\mathcal{M}} ^ \perp, \boldsymbol\Theta ^ *) := \bigg\{\boldsymbol \Delta \in \mathbb{R} ^ {p \times p}; \big\|\boldsymbol\Delta_{\overline{\mathcal{M}} ^ \perp} \big\|_* \leq 3 \big \|\boldsymbol\Delta_{\overline{\mathcal{M}}} \big\|_* + 4 \sum_{j \geq d + 1} \sigma_j(\boldsymbol\Theta ^ *) \bigg\},
\end{equation}
for $\mathcal{M}$ being the span of the main direction of the SVD of $\boldsymbol\Theta^*$ with the largest $d$ singular values, and $\overline{\mathcal{M}} ^ \perp$ denotes the orthogonal complement of $\mathcal{M}$. As  $\text{rank} (\bTheta ^ *) = d$, we have that
\begin{align*}
    \|\bDelta\|_* \leq \big\|\bDelta_{\mathcal{M}_d} \big\|_* + \big\|\bDelta_{\overline{\mathcal{M}}_d^\perp} \big\|_* \leq 4 \big \|\bDelta_{\mathcal{M}_d} \big \|_* + 4 \sum_{j \geq d + 1} \lambda_j (\bTheta ^ *).
\end{align*}
Given that $\lambda_j(\bTheta ^ *) = 0$ where $j \geq d + 1$, we have $\|\bDelta\|_* \leq 4 \|\bDelta_{\mathcal{M}_d}\|_* \leq 4 \sqrt{2d} \|\bDelta\|_{\mathrm{F}}$, and thus
\begin{align}\label{eq:nuclear-frob}
    C_1 p^2 \sqrt{\frac{\log p}{n}} \|\bDelta \|_*^2 \leq 32C_1 d p ^2 \sqrt{\frac{\log p}{n}} \|\bDelta \|_{\mathrm{F}} ^ 2 \leq \frac{1}{2} \kappa_{\min} \|\bDelta \|^2_{\mathrm{F}} \end{align}
for sufficiently large $n$.

In Step 2, we verify the restricted strong convexity. We  first evaluate $\vec(\boldsymbol\Delta) \trans   \widehat{\mathbf{H}}(\boldsymbol\Theta) \vec (\boldsymbol\Delta)$ for $\bDelta \in \mathbb{R} ^ {p \times p}$ on $\bTheta \in \mathcal{T}$. We denote the SVD of $\bDelta$ as $\Ub \Db \Vb\trans$. Applying Lemma \ref{lem:H_toSigma}, we conclude from \eqref{equ:sigma_sigmahat} that for any $\bTheta \in \mathcal{T}$,
\begin{equation*} 
\vec (\boldsymbol\Delta) \trans   \widehat{\mathbf{H}}(\boldsymbol\Theta)   \vec (\boldsymbol\Delta) = \vec (\mathbf{D}) \trans   \bigg\{ \frac{1}{n} \sum_{\ell = 1} ^ n \sum_{r = 1} ^ p \vec  \big(\widetilde{\mathbf{Y}}_{(-r)}^{(\ell)} (\bTheta) \big) \vec  \big(\widetilde{\mathbf{Y}}_{(-r)}^{(\ell)} (\bTheta) \big) \trans \bigg\}   \vec (\mathbf{D}).
\end{equation*} 
Here $\widetilde{\Yb}_{(-r)} ^ {(\ell)}(\bTheta) = \sqrt{A_{r, \ell}(\bTheta)} \Ub \trans \Yb_{(-r)} ^ {(\ell)} \Vb$.  For any arbitrary $\ell \in [n]$, we define $\widehat{\bSigma}(\bTheta)$ as
\begin{equation*}
    \widehat{\bSigma}(\bTheta) := \frac{1}{n} \sum_{\ell = 1} ^ n \Zb_\ell, \text{ where } \Zb_\ell := \sum_{r = 1} ^ p \vec{}\big(\widetilde{\Yb}_{(-r)} ^ {(\ell)}(\bTheta) \big) \vec{}\big(\widetilde{\Yb}_{(-r)} ^ {(\ell)}(\bTheta) \big) \trans.
\end{equation*}

We then replace $\vec (\bDelta) \trans   \widehat{\mathbf{H}}(\bTheta)   \vec (\bDelta)$ with $\vec (\mathbf{D}) \trans   \widehat{\boldsymbol\Sigma } (\bTheta) \vec (\mathbf{D})$ by applying Lemma \ref{lem:H_toSigma}, and split $\widehat{\bSigma}(\bTheta)$ by $\widehat{\bSigma}(\bTheta) - \bSigma(\bTheta)$ and $\bSigma(\bTheta)$, where $\bSigma(\bTheta) = \mathbb{E}[\widehat{\bSigma}(\bTheta)]$. Consequently, we have
\begin{equation} \label{equ:two_terms}
     \vec (\mathbf{D}) \trans   \widehat{\boldsymbol\Sigma} (\bTheta) \vec (\mathbf{D}) = \underbrace{\vec (\mathbf{D}) \trans   {\boldsymbol\Sigma}(\bTheta)   \vec (\mathbf{D})}_{\mathcal{P}_1} + \underbrace{\vec (\mathbf{D}) \trans   (\widehat{\boldsymbol\Sigma} (\bTheta) - {\boldsymbol\Sigma}(\bTheta))  \vec (\mathbf{D})}_{\mathcal{P}_2}.
\end{equation}
Our remaining objective is to show the RSC property at $\boldsymbol\Theta \in \mathcal{T}$ by bounding the terms $\mathcal{P}_1$ and $\mathcal{P}_2$ in \eqref{equ:two_terms}. $\mathcal{P}_1$ is bounded from below with $\vec (\mathbf{D}) \trans \ {\boldsymbol\Sigma}(\bTheta) \ \vec (\mathbf{D}) \geq \kappa_{\text{min}} \|\bDelta\|_{\text{F}}^2$ thanks to Assumption \ref{ass:Positive_minimal_eigenvalue} which guarantees a minimal eigenvalue. To bound $\big| \mathcal{P}_2 \big|$ from above, we consider that
\begin{equation}  \label{equ:sigma_maxnorm}
     \vec (\mathbf{D}) \trans  (\widehat{\boldsymbol\Sigma}(\bTheta) - {\boldsymbol\Sigma}(\bTheta))  \vec (\mathbf{D}) \leq \|\boldsymbol\Delta\|_*^2 \big\|\widehat{\bSigma}(\bTheta) - \bSigma (\bTheta) \big\|_{\rm max},
\end{equation}
where we need to bound the entrywise maximum norm $\big\|\widehat{\bSigma}(\bTheta) - \bSigma (\bTheta) \big\|_{\max}$ instead. Furthermore, $A_{r,\ell}(\bTheta) \in [0,1]$ for any arbitrary $\bTheta$ given its closed form shown in \eqref{equ:hess_mat}. Therefore, we only need to calculate each entry of $\Ub \trans \Yb_{(-r)} ^ {(\ell)} \Vb$ for some $\ell \in [n]$, take summation with respect to $r \in [p]$, then obtain a uniform upper bound on the entrywise max norm of $\Zb_{\ell}$. We first calculate $\bPhi ^ {(\ell)}_{(-r)} := \Ub \trans \Yb_{(-r)} ^ {(\ell)}$ for some arbitrary $\ell \in [n]$ and all $r \in [p]$. For $\bPhi ^ {(\ell)}_{(-r)}$, matrix multiplication implies that
\begin{equation}\label{equ:phi}
    \phi ^ {(\ell); -r}_{tk} := \big(\bPhi ^ {(\ell)}_{(-r)} \big)_{tk} = \begin{cases}
			2\sum_{j = 1; j \neq k} ^ p x_j ^ {(\ell)}  \Ub_{jt} + 2 \Ub_{kt} , & \text{if $k = r$;}\\
          2  x_k^{(\ell)}  \Ub_{rt}, & \text{otherwise.}
		 \end{cases}
\end{equation}
Then we have $\bPsi_{(-r)} ^ {(\ell)} := \Ub \trans \Yb_{(-r)} ^ {(\ell)} \Vb = \bPhi ^ {(\ell)}_{(-r)} \Vb$ as
\begin{align} \label{equ:psi}
  \nonumber  \big(\bPsi ^ {(\ell)}_{(-r)} \big)_{ts} &= \sum_{w = 1} ^ p \phi^{(\ell); -r}_{tw} \Vb_{ws} = \phi^{(\ell); -r}_{tr} \Vb_{rs} + \sum_{w = 1; w \neq r} ^ p \phi^{(\ell); -r}_{tw} \Vb_{ws} \\
  & = 2\Vb_{rs} \sum_{j = 1; j \neq r} ^ p x_j ^ {(\ell)}  \Ub_{jt}  + 2  \Vb_{rs}\Ub_{rt}+  2\Ub_{rt} \sum_{w = 1; w \neq r} ^ p x_w^{(\ell)}  \Vb_{ws}.
\end{align}
As $x_j^{(\ell)} \in \{-1,1\}$, we have for all $t,s, \ell, r$,
\begin{align*}
    \Big| \big(\bPsi ^ {(\ell)}_{(-r)} \big)_{ts} \Big| & \leq 2 \sqrt{p}(\|\Vb_{s}\|_\infty  \|\Ub_t\|_2 + \|\Ub_t\|_\infty \|\Vb_s\|_2 )\leq 4 \sqrt{p}.
\end{align*}
As $\sqrt{A_{r, \ell}(\bTheta)} \in [0,1]$, we have  $\Big\|\widetilde{\Yb}_{(-r)} ^ {(\ell)}(\bTheta) \Big\|_{\max}= \Big\|\sqrt{A_{r, \ell}(\bTheta)}\bPsi ^ {(\ell)}_{(-r)}\Big\|_{\max} \le 4 \sqrt{p}$ and $\Zb_\ell \le 16p ^ 2$. Therefore, we can apply Hoeffding's inequality and the union bound and  have that
\begin{equation*}
    \mathbb{P} \bigg(\Big|\bSigma_{ij}(\bTheta) - \widehat{\bSigma}_{ij}(\bTheta) \Big| > t, \text{ for any } i,j \in [p ^ 2] \bigg) \leq 2 p^4 \exp \bigg\{-\frac{t^ 2 n}{8 p ^ 4} \bigg\}.
\end{equation*}
Choosing $t = p ^ 2 \sqrt{\frac{\alpha \log p}{n}}$ for  $\alpha = 112$, we have
\begin{equation} \label{equ:concentration_Sigma}
    \mathbb{P}\Big(\big\|\bSigma(\bTheta) - \widehat{\bSigma}(\bTheta)\big\|_{\max{}} > t \Big) \leq 2p^{-10}.
\end{equation}
Consequently, there exists a constant $C_1 = \sqrt{112}$ such that $\big|\mathcal{P}_2\big| \leq C_1 p ^ 2 \sqrt{\frac{ \log p}{n}}  \| \bDelta \|_* ^ 2$ with probability at least $1 - 2p^{-10}$. Combining the bounds for $\mathcal{P}_1$ and $\mathcal{P}_2$ yields that

\begin{equation*}
    \vec{(\bDelta)} \trans \widehat{\Hb}(\bTheta) \vec{(\bDelta)} \geq \kappa_{\min{}} \|\bDelta \|_{\mathrm{F}}^2 - C_1 p ^ 2 \sqrt{\frac{ \log p}{n}}  \| \bDelta \|_* ^ 2
\end{equation*}
with probability at least $1 - 2p^{-10}$. 
Combining with \eqref{eq:nuclear-frob}, under the condition $n \geq C d^2 p^4 \log p$ for some constant $C$, $\vec (\boldsymbol\Delta) \trans  \widehat{\mathbf{H}}(\boldsymbol\Theta)  \vec (\boldsymbol\Delta) \geq \kappa_{\min}/2  \cdot \|\bDelta\|_{\mathrm{F}}^2$ for any $\bTheta \in \mathcal{T}$. And similarly, we have with probability at least $1 - 2p^{-10}$,
\begin{equation}
    \label{eq:upper_bound_H}
    \vec{(\bDelta)} \trans \widehat{\Hb}(\bTheta) \vec{(\bDelta)} \leq \kappa_{\max} \|\bDelta \|_{\mathrm{F}}^2 + C p ^ 2 \sqrt{\frac{ \log p}{n}}  \| \bDelta \|_* ^ 2     
\end{equation}

This demonstrates the RSC condition for $\widehat{\mathbf{H}}(\boldsymbol\Theta)$.
By Theorem 2 of \cite{jianqing2017nuclear}, we have 
\begin{align*}
    \big\|\widehat{\bTheta}_{\rm cvx} - \bTheta ^ * \big\|_\mathrm{F} \lesssim \sqrt{\frac{d p \log p}{n}} \text{ , and }
    \big \|\widehat{\bTheta}_{\rm cvx} - \bTheta ^ * \big\|_* \lesssim d \sqrt{\frac{p \log p}{n}},
\end{align*}
holds with probability at least $1 - O(p ^ {-10})$.
\end{proof}


\section{Rate between the DANIEL and Centralized Estimator}\label{app:rate_ctr_DANIEL}

In this section, we prove Theorem \ref{thm:rate_ctr_DANIEL}.
Similar to \eqref{equ:e_stat}, we define the statistical error for the centralized-distributed difference as
\begin{equation} \label{equ:d_stat}
    \delta_{\mathrm{stat}} = \sup_{{\rm rank}(\bDelta) \leq d, \|\bDelta \|_{\mathrm{F}} \leq 1} \big\langle \nabla _{\bTheta} \mathcal{L}_S (\widetilde\bTheta) - \nabla _{\bTheta} \mathcal{L} (\widetilde\bTheta), \bDelta \big\rangle.
\end{equation}

\begin{proof}[Proof of Theorem \ref{thm:rate_ctr_DANIEL}]
    The proof strategy is similar to the proof of Theorem~\ref{thm:Divide_conquer}. However, the key difference is that the gradient $\nabla_{\bTheta}\cL(\widetilde\bTheta)$ is not necessarily zero due to the nonconvex nature of the loss in \eqref{eq:centralized_estimator}. In order to show the higher order rate in the theorem, we need a more careful analysis of the gradient $\nabla_{\bTheta}\cL(\widetilde\bTheta)$. Define $\widetilde{\Zb} = (\widetilde{\Ub}\trans,\widetilde{\Vb}\trans)\trans$. 
 In the proof, we use similar induction procedure as  the one-step contraction in Lemma~\ref{lem:contraction_mapping} by assuming the induction assumption that $\rho^2(\Zb_\gamma, \widetilde\Zb) \leq K(\Ub^*; \kappa_{\min}, \kappa_{\max})$ for a given $\gamma$ and show the contraction behavior for $\Zb_{\gamma+1}$. For simplicity, we omit the detailed formality of induction procedure.

    We first have the following   one step analysis similar to  Section~\ref{app:eta}. We have
\begin{equation} \label{equ:rho_zzstar2}
    \rho^2(\Zb_{\gamma+1}, \widetilde{\Zb}) = \inf_{O \in \mathcal{O}(d)} \Bigg\|\begin{bmatrix} 
\Ub_{\gamma+1}  \\
\Vb_{\gamma+1} 
\end{bmatrix} - \begin{bmatrix}
\widetilde{\Ub} \Ob \\
\widetilde{\Vb} \Ob 
\end{bmatrix} \Bigg\| ^ 2_{\textrm{F}} \leq \Bigg\|\begin{bmatrix} 
\Ub_{\gamma+1} \\
\Vb_{\gamma +1}
\end{bmatrix} - \begin{bmatrix}
\widetilde{\Ub} \widetilde \Ob \\
\widetilde{\Vb} \widetilde \Ob 
\end{bmatrix} \Bigg\| ^ 2_{\textrm{F}},
\end{equation}
where $\widetilde \Ob$ is the optimal rotation matrix.
Plugging in the update rule in gradient descent, we have
\begin{align*}
\rho^2(\Zb_{\gamma+1} , \widetilde{\Zb}) & \leq  \rho^2(\Zb_\gamma, \widetilde{\Zb}) \\
& - 2 \eta \Bigg\langle \begin{bmatrix} 
\Ub_\gamma  - \widetilde{\Ub}\widetilde \Ob \\
\Vb_\gamma  - \widetilde{\Vb}\widetilde \Ob
\end{bmatrix},
\begin{bmatrix} 
\nabla_{\Ub} \mathcal{L}_S \big(\Ub_\gamma  \Vb _{\gamma} \trans\big)  + \nabla_{\Ub} \mathcal{Q} \big( \Ub _\gamma,  \Vb _\gamma \big) \\
\nabla_{\Vb} \mathcal{L}_S \big( \Ub_\gamma  \Vb _{\gamma} \trans \big)  + \nabla_{\Vb} \mathcal{Q} \big( \Ub _\gamma,  \Vb _\gamma \big)
\end{bmatrix}
\Bigg \rangle \\
&+ \eta ^ 2 \Bigg\|\begin{bmatrix} 
\nabla_{\Ub} \mathcal{L}_S \big(\Ub_\gamma  \Vb _{\gamma} \trans \big)  + \nabla_{\Ub} \mathcal{Q} \big( \Ub _\gamma,  \Vb _\gamma\big) \\
\nabla_{\Vb} \mathcal{L}_S \big( \Ub_\gamma  \Vb _{\gamma} \trans \big)  + \nabla_{\Vb} \mathcal{Q} \big( \Ub _\gamma,  \Vb _\gamma \big)
\end{bmatrix} \Bigg\|_{\mathrm{F}} ^ 2 \\
&\leq  \rho^2(\Zb_\gamma , \widetilde{\Zb}) - 2 \eta \underbrace{ \Bigg\langle \begin{bmatrix} 
\Ub_\gamma  - \widetilde{\Ub}\widetilde \Ob \\
\Vb_\gamma  - \widetilde{\Vb}\widetilde \Ob
\end{bmatrix},
\begin{bmatrix} 
\nabla_{\Ub} \mathcal{L}_S \big( \Ub_\gamma  \Vb _{\gamma} \trans \big) \\
\nabla_{\Vb} \mathcal{L}_S \big( \Ub_\gamma  \Vb _{\gamma} \trans \big) 
\end{bmatrix}
\Bigg \rangle }_{I_1} \\
& - 2 \eta \underbrace{\Bigg\langle \begin{bmatrix} 
\Ub_\gamma - \widetilde{\Ub}\widetilde \Ob \\
\Vb_\gamma - \widetilde{\Vb}\widetilde \Ob
\end{bmatrix},
\begin{bmatrix} 
\nabla_{\Ub} \mathcal{Q} \big( \Ub _\gamma,  \Vb _\gamma \big) \\
\nabla_{\Vb} \mathcal{Q} \big( \Ub _\gamma,  \Vb _\gamma \big)
\end{bmatrix}
\Bigg \rangle }_{I_2}\\
& + 2 \eta ^ 2 \underbrace{\Bigg\|\begin{bmatrix} 
\nabla_{\Ub} \mathcal{L}_S \big(\Ub_\gamma  \Vb _{\gamma} \trans \big) \\
\nabla_{\Vb} \mathcal{L}_S \big( \Ub_\gamma  \Vb _{\gamma} \trans \big) 
\end{bmatrix} \Bigg\|_{\mathrm{F}} ^ 2}_{I_3} + 2 \eta ^ 2 \underbrace{ \Bigg\|\begin{bmatrix} 
\nabla_{\Ub} \mathcal{Q} \big( \Ub _\gamma,  \Vb _\gamma \big) \\
\nabla_{\Vb} \mathcal{Q} \big( \Ub _\gamma,  \Vb _\gamma \big)
\end{bmatrix} \Bigg\|_{\mathrm{F}} ^ 2}_{I_4}.
\end{align*}

For $I_1$, we have that
\begin{align*}
I_1 & = \tr \Big(\nabla_{\bTheta} \mathcal{L}_S \big(\Ub_\gamma  \Vb _{\gamma} \trans  \big) \Vb_\gamma (\Ub_\gamma  - \widetilde{\Ub} \widetilde \Ob) \trans \Big)  + \tr \Big(\nabla_{\bTheta} \mathcal{L}_S \big(\Ub_\gamma  \Vb _{\gamma} \trans  \big) \trans \Ub_\gamma(\Vb_\gamma - \widetilde{\Vb} \widetilde \Ob) \trans \Big)\\
& = \underbrace{\tr \Big(\big(\nabla_{\bTheta} \mathcal{L}_S \big(\Ub_\gamma  \Vb _{\gamma} \trans  \big) - \nabla_{\bTheta} \mathcal{L}_S \big(\widetilde{\Ub} \widetilde{\Vb}^{\transpose} \big) \big) \big(\Vb_\gamma   \Ub_{\gamma} \trans  - \widetilde{\Vb} \widetilde{\Ub}^{\transpose} \big) \Big)}_{I_{1.1}} \\
&\quad + \underbrace{\tr \Big( \big(\nabla_{\bTheta} \mathcal{L}_S \big(\widetilde{\bTheta} \big) - \nabla_{\bTheta} \mathcal{L} \big(\widetilde{\bTheta} \big) \big)\big(\Vb_\gamma  \Ub_{\gamma}\trans - \widetilde{\Vb} \widetilde{\Ub}^{\transpose} \big) \Big)}_{I_{1.2}} \\
& \quad + \underbrace{\tr \Big( \big(\nabla_{\bTheta} \mathcal{L}_S \big(\Ub_\gamma  \Vb _{\gamma} \trans  \big) +  \nabla_{\bTheta} \mathcal{L} \big(\widetilde \bTheta  \big)  \big)\big( (\Vb_\gamma  - \widetilde{\Vb} \widetilde \Ob)(\Ub _\gamma - \widetilde{\Ub} \widetilde \Ob) \trans \big) \Big)}_{I_{1.3}}\\
&\quad + \underbrace{\tr \Big(-\nabla_{\bTheta} \mathcal{L} \big(\widetilde \bTheta  \big) \widetilde\Vb (\widetilde\Ub - \Ub_\gamma \widetilde \Ob)\trans - \nabla_{\bTheta} \mathcal{L} \big(\widetilde \bTheta  \big) \trans \widetilde\Ub (\widetilde\Vb - \Vb_\gamma \widetilde \Ob)\trans \Big)}_{I_{1.4}}.
\end{align*}
For $I_{1.1}$, similar to $T_{1.1}$ in the proof of Theorem~\ref{thm:Divide_conquer}, we have that
\begin{align*}
 I_{1.1} & \geq \frac{1}{4} \kappa_{\min} \big\|\Ub_\gamma  \Vb _{\gamma} \trans  - \widetilde{\Ub} \widetilde{\Vb}^{\transpose} \big\|_{\textrm{F}} ^ 2 + \frac{1}{\kappa_{\min} + \kappa_{\max}} \big\|\nabla_{\bTheta} \mathcal{L}_S \big(\Ub_\gamma  \Vb _{\gamma} \trans  \big) - \nabla_{\bTheta} \mathcal{L}_S \big(\widetilde{\Ub} \widetilde{\Vb}^{\transpose} \big) \big\|_{\textrm{F}}^2.
\end{align*}
For $I_{1.2}$, we have
\begin{align*}
 I_{1.2} & \geq -\delta_{\rm stat} \big\|\Ub_\gamma  \Vb _{\gamma} \trans  - \widetilde{\Ub} \widetilde{\Vb}^{\transpose} \big\|_{\textrm{F}}  \geq - \frac{3}{32} \kappa_{\min} \big\|\Ub_\gamma  \Vb _{\gamma} \trans  - \widetilde{\Ub} \widetilde{\Vb}^{\transpose} \big\|_{\textrm{F}} ^ 2  - \frac{8}{3 \kappa_{\min}} \delta_{\rm stat}^2.
\end{align*}
For $I_{1.3}$, we consider applying the Cauchy-Schwarz inequality on Frobenius inner product, which yields that
\begin{align*}
 I_{1.3} & \geq - \Big| \big\langle  \nabla_{\bTheta} \mathcal{L}_S \big( \widetilde\bTheta \big) + \nabla_{\bTheta} \mathcal{L} \big( \widetilde\bTheta \big),   (\Ub_\gamma - \widetilde{\Ub} \widetilde \Ob) (\Vb_\gamma - \widetilde{\Vb} \widetilde \Ob) \trans \big\rangle \Big| \\
 & - \Big| \big\langle \nabla_{\bTheta} \mathcal{L}_S \big(\Ub_\gamma  \Vb _{\gamma} \trans  \big) - \nabla_{\bTheta} \mathcal{L}_S \big(\widetilde{\Ub} \widetilde{\Vb}^{\transpose} \big), (\Ub_\gamma - \widetilde{\Ub} \widetilde \Ob) (\Vb_\gamma - \widetilde{\Vb} \widetilde \Ob) \trans  \big\rangle  \Big| \\
& \geq - \Big (e_{\rm stat} + \tilde{e}_{\rm stat} + \big\|\nabla_{\bTheta} \mathcal{L}_S \big(\Ub_\gamma  \Vb _{\gamma} \trans  \big) - \nabla_{\bTheta} \mathcal{L}_S \big(\widetilde{\Ub} \widetilde{\Vb}^{\transpose} \big) \big\|_{\textrm{F}} \Big) \big \| (\Ub_\gamma - \widetilde{\Ub} \widetilde \Ob) (\Vb_\gamma - \widetilde{\Vb} \widetilde \Ob) \trans \big \|_{\mathrm{F}} \\
& \geq - \frac{1}{2} \Big (e_{\rm stat} + \tilde{e}_{\rm stat}  + \big\|\nabla_{\bTheta} \mathcal{L}_S \big(\Ub_\gamma  \Vb _{\gamma} \trans  \big) - \nabla_{\bTheta} \mathcal{L}_S \big(\widetilde{\Ub} \widetilde{\Vb}^{\transpose} \big) \big\|_{\textrm{F}} \Big) \rho^2(\Zb_\gamma , \widetilde{\Zb}) \\
& \geq - \frac{1}{8\kappa^* \sigma_{d}(\bTheta ^ *)}\Big((e_{\rm stat}+ \tilde{e}_{\rm stat} )^2 + \big\|\nabla_{\bTheta} \mathcal{L}_S \big(\Ub_\gamma  \Vb _{\gamma} \trans  \big) - \nabla_{\bTheta} \mathcal{L}_S \big(\widetilde{\Ub} \widetilde{\Vb}^{\transpose} \big) \big\|_{\textrm{F}}^2 \Big) \rho^2(\Zb_\gamma, \widetilde{\Zb}) \\ 
& \quad - \frac{1}{2}\kappa^* \sigma_{d}(\bTheta ^ *) \rho^2(\Zb_\gamma, \widetilde{\Zb}),
\end{align*}
where $\tilde{e}_{\rm stat}$ is defined in \eqref{equ:te_stat}. 
 We summarize the terms $I_{1.1}$, $I_{1.2}$, and $I_{1.3}$ as
\begin{align*}
& I_{1.1} + I_{1.2} + I_{1.3} 
\geq \frac{5}{32} \kappa_{\min} \big\|\Ub_\gamma  \Vb _{\gamma} \trans  - \widetilde{\Ub} \widetilde{\Vb}^{\transpose} \big\|_{\textrm{F}} ^ 2 \\
& \quad + \left(\frac{1}{\kappa_{\min} + \kappa_{\max}}- \frac{\rho^2(\Zb_\gamma, \widetilde{\Zb})}{8\kappa^* \sigma_{d}(\bTheta ^ *)}\right)\big\|\nabla_{\bTheta} \mathcal{L}_S \big(\Ub_\gamma  \Vb _{\gamma} \trans  \big) - \nabla_{\bTheta} \mathcal{L}_S \big(\widetilde{\Ub} \widetilde{\Vb}^{\transpose} \big) \big\|_{\textrm{F}}^2  \\
&\quad - \frac{1}{2} \kappa^* \sigma_{d}(\bTheta ^ *) \rho ^ 2(\Zb_\gamma, \widetilde{\Zb}) - \Big(\frac{8}{3 \kappa_{\min}} + \frac{1}{8\kappa^* \sigma_{d}(\bTheta ^ *)}\Big) (\delta_{\rm stat}^2 + (e_{\rm stat} + \tilde e_{\rm stat})^2\rho^2(\Zb_\gamma, \widetilde{\Zb})).
\end{align*}

By the first order optimality condition of $\widetilde\Zb$ in \eqref{eq:centralized_estimator}, defining $\widetilde \bPi = \widetilde\Ub\widetilde\Ub\trans- \widetilde\Vb\widetilde\Vb\trans$ we have
\begin{align*}
    I_{1.4} &=  \tr \Big(- \nabla_{\bTheta} \mathcal{L} \big(\widetilde \bTheta  \big) \widetilde\Vb (\widetilde\Ub - \Ub_\gamma \widetilde \Ob)\trans - \nabla_{\bTheta} \mathcal{L} \big(\widetilde \bTheta  \big) \trans \widetilde\Ub (\widetilde\Vb - \Vb_\gamma \widetilde \Ob)\trans \Big) \\
    & = \tr \Big(\nabla_{\Vb} \mathcal{Q} \big(\widetilde \bPi  \big) (\widetilde\Ub - \Ub_\gamma \widetilde \Ob)\trans + \nabla_{\Ub} \mathcal{Q}\big(\widetilde \bPi\big) \trans  (\widetilde\Vb - \Vb_\gamma \widetilde \Ob)\trans \Big).
\end{align*}
By Lemma B.1 of \cite{Park2016lowrank},  we have
\begin{align*}
I_2 + I_{1.4}&\geq \underbrace{\frac{1}{4} \Big[ \big\|\Ub_\gamma \Ub_{\gamma} \trans - \widetilde{\Ub} \widetilde{\Vb}^{\transpose} \big\|_{\textrm{F}} ^ 2  + \big\|\Vb_\gamma \Vb_{\gamma} \trans - \widetilde{\Ub} \widetilde{\Vb}^{\transpose} \big\|_{\textrm{F}} ^ 2  - 2\big\|\Ub_\gamma \Vb_{\gamma} \trans  - \widetilde{\Ub} \widetilde{\Vb}^{\transpose} \big\|_{\textrm{F}} ^ 2  \Big]}_{I_{2.1}} \\
&+ \underbrace{\frac{1}{2} \big\| \nabla_{\bPi} \mathcal{Q} \big(\bPi_\gamma \big) \big \|_{\mathrm{F}}^2 + \frac{1}{2} \big\| \nabla_{\bPi} \mathcal{Q} \big(\widetilde \bPi \big) \big \|_{\mathrm{F}}^2}_{I_{2.2}} - \underbrace{\frac{1}{2}\big( \big\| \nabla_{\bPi} \mathcal{Q} \big(\bPi_\gamma  \big) \big \|_{\rm op} + \big\| \nabla_{\bPi} \mathcal{Q} \big(\widetilde \bPi \big) \big \|_{\mathrm{op}}\big)\Bigg\| \begin{bmatrix} 
\Ub_\gamma  - \widetilde{\Ub}\widetilde \Ob \\
\Vb_\gamma  - \widetilde{\Vb}\widetilde \Ob
\end{bmatrix} \Bigg\|_{\mathrm{F}}^2}_{I_{2.3}}.
\end{align*}
Similar to the analysis of $T_2$ in the proof of Theorem~\ref{thm:Divide_conquer}, we have
\begin{align*}
I_{2.1} &+ \frac{5 \kappa_{\min}}{32} \big\|\Ub_\gamma \Vb_{\gamma} \trans  - \widetilde{\Ub} \widetilde{\Vb}^{\transpose} \big\|_{\textrm{F}} ^ 2  \geq \min \Big\{\frac{1}{8}, \frac{5 \kappa_{\min}}{128} \Big\} \Big[\big\|\Ub_\gamma \Ub_{\gamma} \trans  - \widetilde{\Ub} \widetilde{\Vb}^{\transpose} \big\|_{\textrm{F}} ^ 2  \\
&+ \big\|\Vb_\gamma \Vb_{\gamma} \trans - \widetilde{\Ub} \widetilde{\Vb}^{\transpose} \big\|_{\textrm{F}} ^ 2 + 2 \big\|\Ub_{\gamma} \Vb_{\gamma} \trans  - \widetilde{\Ub} \widetilde{\Vb}^{\transpose} \big\|_{\textrm{F}} ^ 2  \Big] \\
&= \kappa^* \Big[ \big\|\Ub_\gamma \Ub_{\gamma} \trans  - \widetilde{\Ub} \widetilde{\Vb}^{\transpose}  \big\|_{\textrm{F}} ^ 2 + \big\|\Vb_\gamma \Vb_{\gamma} \trans - \widetilde{\Ub} \widetilde{\Vb}^{\transpose}  \big\|_{\textrm{F}} ^ 2  + 2 \big\|\Ub_{\gamma} \Vb_{\gamma} \trans  - \widetilde{\Ub} \widetilde{\Vb}^{\transpose} \big\|_{\textrm{F}} ^ 2  \Big] \\
&= \kappa^* \Bigg\|\begin{bmatrix} 
\Ub_\gamma \Ub_{\gamma} \trans & \Ub_\gamma \Vb_{\gamma} \trans \\
\Vb_\gamma \Ub_{\gamma} \trans & \Vb_\gamma \Vb_{\gamma} \trans
\end{bmatrix} - \begin{bmatrix} 
\widetilde{\Ub} \widetilde{\Vb}^{\transpose} & \widetilde{\Ub} \widetilde{\Vb}^{\transpose} \\
\widetilde{\Ub} \widetilde{\Vb}^{\transpose} & \widetilde{\Ub} \widetilde{\Vb}^{\transpose}
\end{bmatrix} \Bigg\|_{\mathrm{F}} ^ 2 \geq 1.55 \kappa ^ * \sigma_d(\bTheta^*)  \rho^2(\Zb_\gamma , \widetilde{\Zb}); \\
I_{2.3} &\leq \frac{1}{8} \big\| \nabla_{\bPi} \mathcal{Q} \big(\bPi_\gamma \big) \big \|_{\rm F}^2 + \frac{1}{8} \big\| \nabla_{\bPi} \mathcal{Q} \big(\widetilde\bPi\big) \big \|_{\rm F}^2+ \frac{2}{5} \kappa^* \sigma_d(\bTheta^*) \rho^2(\Zb_\gamma, \widetilde{\Zb}),
\end{align*}
where we use $\sigma_d(\widetilde \bTheta) \ge \sigma_d(\bTheta^*)(1-o(1))$ with probability $1-O(p^{-10})$ by Theorem~\ref{thm:rate_ctr}.

Combining the terms of $I_1$ and $I_2$, we have that
\begin{align*}
I_1 + I_2 &\geq  \left(\frac{1}{\kappa_{\min} + \kappa_{\max}}- \frac{\rho^2(\Zb_\gamma, \widetilde{\Zb})}{8\kappa^* \sigma_{d}(\bTheta ^ *)}\right) \big\|\nabla_{\bTheta} \mathcal{L}_S \big(\Ub_\gamma \Vb_{\gamma} \trans \big) - \nabla_{\bTheta} \mathcal{L}_S \big(\widetilde{\Ub} \widetilde{\Vb}^{\transpose} \big) \big\|_{\textrm{F}}^2  \\
&\quad +\kappa^* \sigma_d(\bTheta^*) \rho^2(\Zb_\gamma, \widetilde{\Zb}) -\Big(\frac{8}{3 \kappa_{\min}} + \frac{1}{8\kappa^* \sigma_{d}(\bTheta ^ *)}\Big) (\delta_{\rm stat}^2 + (e_{\rm stat} + \tilde e_{\rm stat})^2\rho^2(\Zb_\gamma, \widetilde{\Zb})) \\  
&\quad + \frac{3}{8}  \big\| \nabla_{\bPi} \mathcal{Q} \big(\bPi_\gamma \big) \big \|_{\mathrm{F}}^2 +  \frac{3}{8}  \big\| \nabla_{\bPi} \mathcal{Q} \big(\widetilde\bPi \big) \big \|_{\mathrm{F}}^2.
\end{align*}
For $I_3$, we consider the chain rule for derivatives such that
\begin{equation*}
I_3 = \underbrace{ \big\|\nabla_{\bTheta}\mathcal{L}_S(\Ub_\gamma \Vb_{\gamma} \trans) \Vb_\gamma  \big\|_{\mathrm{F}} ^ 2}_{I_{3.1}} + \underbrace{ \big\|\Ub_\gamma  ^ { \transpose} 
\nabla_{\bTheta}\mathcal{L}_S(\Ub_\gamma \Vb_{\gamma} \trans) \big\|_{\mathrm{F}} ^ 2 }_{I_{3.2}}.
\end{equation*}
For $I_{3.1}$ and $I_{3.2}$, by the first order optimality condition of $\widetilde\Zb$ in \eqref{eq:centralized_estimator}, we have
\begin{align*}
    I_{3.1} & \leq 2 \big\|\big(\nabla_{\bTheta}\mathcal{L}_S(\widetilde{\Ub} \widetilde{\Vb}^{\transpose})-\nabla_{\bTheta}\mathcal{L}(\widetilde{\Ub} \widetilde{\Vb}^{\transpose}) \big)\Vb_\gamma \big\|_{\mathrm{F}} ^ 2 + 2 \Big\| \big( \nabla_{\bTheta}\mathcal{L}_S(\Ub_\gamma \Vb_{\gamma} \trans) - \nabla_{\bTheta}\mathcal{L}_S(\widetilde{\Ub} \widetilde{\Vb}^{\transpose}) \big) \Vb_\gamma \Big\|_{\mathrm{F}} ^ 2\\
    & \quad + 2 \big\|\nabla_{\bTheta}\mathcal{L}(\widetilde{\Ub} \widetilde{\Vb}^{\transpose}) (\Vb_\gamma - \widetilde \Vb \widetilde \Ob) \big\|_{\mathrm{F}} ^ 2 + 2\big\| \nabla_{\Vb} \mathcal{Q} \big(\widetilde\bPi \big) \big \|_{\mathrm{F}}^2\\
    I_{3.2} & \leq 2 \big\|\Ub_\gamma\trans\big(\nabla_{\bTheta}\mathcal{L}_S(\widetilde{\Ub} \widetilde{\Vb}^{\transpose})-\nabla_{\bTheta}\mathcal{L}(\widetilde{\Ub} \widetilde{\Vb}^{\transpose}) \big)\big\|_{\mathrm{F}} ^ 2 + 2 \Big\|\Ub_\gamma\trans \big( \nabla_{\bTheta}\mathcal{L}_S(\Ub_\gamma \Vb_{\gamma} \trans) - \nabla_{\bTheta}\mathcal{L}_S(\widetilde{\Ub} \widetilde{\Vb}^{\transpose}) \big)  \Big\|_{\mathrm{F}} ^ 2\\
    & \quad + 2 \big\|(\Ub_\gamma - \widetilde \Ub\widetilde \Ob) \trans\nabla_{\bTheta}\mathcal{L}(\widetilde{\Ub} \widetilde{\Vb}^{\transpose}) \big\|_{\mathrm{F}} ^ 2 + 2\big\| \nabla_{\Ub} \mathcal{Q} \big(\widetilde\bPi \big) \big \|_{\mathrm{F}}^2.
\end{align*}

Combining the upper bound for $I_{3.1}$ and $I_{3.2}$ implies that 
\begin{align*}
I_3 & {\leq}  4 \Big (\delta_{\rm stat}^2 + \big \|\nabla_{\bTheta}\mathcal{L}_S(\Ub_\gamma \Vb_{\gamma} \trans) - \nabla_{\bTheta}\mathcal{L}_S(\widetilde{\Ub} \widetilde{\Vb}^{\transpose}) \big \|_{\mathrm{F}}^2 \Big) \big\|\Zb_\gamma \big\|_{\rm op}^2 + 2\tilde e_{\rm stat}^2 \rho^2(\Zb_\gamma, \widetilde{\Zb}) + 2 \big\| \nabla_{\bPi} \mathcal{Q} \big(\widetilde\bPi \big) \big \|_{\rm F}^2,
\end{align*}
where $\tilde e_{\rm stat}$ is defined in \eqref{equ:te_stat}.
For $I_4$, we have that
\begin{equation*}
I_4 \leq \big\| \Ub_\gamma  \nabla_{\bPi} \mathcal{Q} \big(\bPi_\gamma  \big) \big \|_{\rm F}^2 + \big\| \Vb_\gamma \nabla_{\bPi} \mathcal{Q} \big(\bPi_\gamma \big) \big \|_{\rm F}^2 \leq 2 \big\|\Zb_\gamma  \big\|_{\rm op}^2 \big\| \nabla_{\bPi} \mathcal{Q} \big(\bPi_\gamma  \big) \big \|_{\rm F}^2.
\end{equation*}
We combine the bounds for $I_3$ and $I_4$, which indicates that
\begin{align*}
I_3 + I_4 &\leq  4 \Big (\delta_{\rm stat}^2 + \big \|\nabla_{\bTheta}\mathcal{L}_S(\Ub_\gamma \Vb_{\gamma} \trans )  - \nabla_{\bTheta}\mathcal{L}_S(\widetilde{\Ub} \widetilde{\Vb}^{\transpose}) \big \|_{\mathrm{F}}^2 \Big) \big\|\Zb_\gamma \big\|_{\rm op}^2 \\
&\quad + 2\tilde e_{\rm stat}^2 \rho^2(\Zb_\gamma, \widetilde{\Zb}) + 2 \big\|\Zb_\gamma  \big\|_{\rm op}^2 \big\| \nabla_{\bPi} \mathcal{Q} \big(\bPi_\gamma \big) \big \|_{\rm F}^2 + 2 \big\| \nabla_{\bPi} \mathcal{Q} \big(\widetilde\bPi \big) \big \|_{\rm F}^2.
\end{align*}

We assemble the bound from $I_1$ to $I_4$, along with $\rho^2(\Zb_\gamma, \widetilde\Zb)\le 2 \rho^2(\Zb_\gamma, \Zb^*) + 2\rho^2(\Zb^*, \widetilde\Zb) = o_P(1)$ by Theorems~\ref{thm:Divide_conquer} and~\ref{thm:rate_ctr}, which implies that for $\eta \leq \frac{1}{8 (\|\Zb_{\gamma} \|_{\rm op}^2+1)} \min{\Big\{1, \frac{1}{\kappa_{\min} + \kappa_{\max}} \Big\}}$,
\begin{align*}
\rho^2 \big(\Zb_{\gamma+1} ,\widetilde{\Zb} \big) &\leq \rho^2 \big(\Zb_\gamma, \widetilde{\Zb} \big) - 2 \eta \big[I_1 + I_2 \big] + 2 \eta ^ 2 \big[I_3 + I_4 \big] \\
& \leq \big(1 - 2 \eta \kappa^* \sigma_d(\bTheta^*) \big) \rho^2 \big(\Zb_\gamma, \widetilde{\Zb} \big) \\
& \quad +  6 \eta \left(\frac{\kappa_{\min} + \kappa_{\max}}{\kappa_{\min} \kappa_{\max}}+\frac{1}{8\kappa^* \sigma_{d}(\bTheta ^ *)} \right)(\delta_{\rm stat}^2 + (e_{\rm stat} + \tilde e_{\rm stat})^2\rho^2(\Zb_\gamma, \widetilde{\Zb})).
\end{align*}

As $e_{\rm stat} + \tilde e_{\rm stat} = o_P(1)$ by Lemma~\ref{lem:epsilon_non-convex} and \eqref{eq:tilde-estat-rate}, we have for some constant $C$ with probability at least $1 - O(p^{-10})$,  
\begin{align*}
    \rho^2 \big(\Zb_{\gamma+1} ,\widetilde{\Zb} \big)  \leq \big(1 - 1.5 \eta \kappa^* \sigma_d(\bTheta^*) \big) \rho^2 \big(\Zb_\gamma, \widetilde{\Zb} \big) +  C \delta_{\rm stat}^2.
    \end{align*}
By induction similar to the one-step contraction in Lemma~\ref{lem:contraction_mapping}, we can prove the rate of convergence of $\rho^2(\Zb_\gamma, \widetilde\Zb)$  and get the desired rates by combining  the rates in Lemma~\ref{lem:d_rate}, \eqref{eq:tilde-estat-rate}, and Lemma~\ref{lem:epsilon_non-convex} with the initialization rate in Theorem~\ref{thm:main}. 
\end{proof}

\subsection{Bounding the Centralized-Distributed Statistical Error}

We prove an upper bound on $\delta_{\rm stat}$ in \eqref{equ:d_stat}. The form and proof is similar to the proof of Lemma \ref{lem:epsilon_non-convex}.

\begin{lemma}\label{lem:d_rate}
Under Assumptions \ref{ass:Positive_minimal_eigenvalue}-\ref{ass:reg_grad}, the statistical error $\delta_{\rm stat}$ in \eqref{equ:d_stat} has
\begin{equation*}
    \delta_{\mathrm{stat}} \lesssim  p ^ 3 \sqrt{\frac{\log p}{n/m}} \big\|\widehat \bTheta_0 -  {\bTheta}^* \big\|_{\mathrm{F}} + p^3 \|\widehat \bTheta_0 -  {\bTheta}^* \big\|_{\mathrm{F}} ^ 2, \text{with probability at least $1 - O(p^{-10})$.}
\end{equation*}
If we use the initialization $\widehat\bTheta_{\rm cvx}$ in \eqref{equ:question}, by Theorem~\ref{thm:main}, we have if $m= o(\sqrt{n/(dp^7\log p)})$,
\[
\delta_{\mathrm{stat}} \lesssim  p^3 {\frac{dp \log p}{n/m}} = o\Big(\sqrt{\frac{dp \log p}{n}}\Big), \text{with probability at least $1 - O(p^{-10})$.}
\]
\end{lemma}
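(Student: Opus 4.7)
\textbf{Proof plan for Lemma~\ref{lem:d_rate}.}

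The plan is to mirror the proof of Lemma~\ref{lem:epsilon_non-convex}, but exploit one critical simplification: because the correction term in $\cL_S$ is linear in $\bTheta$, its gradient at any argument satisfies $\nabla\cL_S(\widetilde\bTheta)=\nabla\cL_1(\widetilde\bTheta)+\nabla\cL(\widehat\bTheta_0)-\nabla\cL_1(\widehat\bTheta_0)$, so that
\[
\nabla\cL_S(\widetilde\bTheta)-\nabla\cL(\widetilde\bTheta)
=\bigl[\nabla\cL_1(\widetilde\bTheta)-\nabla\cL_1(\widehat\bTheta_0)\bigr]-\bigl[\nabla\cL(\widetilde\bTheta)-\nabla\cL(\widehat\bTheta_0)\bigr].
\]
No stray $\nabla\cL(\bTheta^*)$ survives, which is precisely why the $\sqrt{dp\log p/n}$ noise floor of $e_{\rm stat}$ is absent from $\delta_{\rm stat}$.

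Next I would invoke the fundamental theorem of calculus along the segment $\bTheta_s=\widehat\bTheta_0+s(\widetilde\bTheta-\widehat\bTheta_0)$, $s\in[0,1]$, rewriting the display above as $\mat\bigl\{\int_0^1[\widehat{\Hb}_1(\bTheta_s)-\widehat{\Hb}(\bTheta_s)]\,ds\,\vec(\widetilde\bTheta-\widehat\bTheta_0)\bigr\}$. Pairing with any admissible $\bDelta$ (rank $\le d$, unit Frobenius norm) and applying Cauchy--Schwarz gives
\[
\delta_{\rm stat}\;\le\;\sup_{s\in[0,1]}\bigl\|\widehat{\Hb}_1(\bTheta_s)-\widehat{\Hb}(\bTheta_s)\bigr\|_{\rm op}\cdot\|\widetilde\bTheta-\widehat\bTheta_0\|_{\rm F}.
\]
The direction length is handled by the triangle inequality together with the centralized rate $\|\widetilde\bTheta-\bTheta^*\|_{\rm F}=O(\sqrt{dp\log p/n})$ (derivable by the same argument as Theorem~\ref{thm:main} applied to $\cL$ instead of $\cL_1$), which is dominated by $\|\widehat\bTheta_0-\bTheta^*\|_{\rm F}\asymp\sqrt{dp\log p/(n/m)}$; hence $\|\widetilde\bTheta-\widehat\bTheta_0\|_{\rm F}\lesssim\|\widehat\bTheta_0-\bTheta^*\|_{\rm F}$.

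The heart of the argument is the uniform-in-$s$ bound on the Hessian gap, which I would split as $[\widehat{\Hb}_1(\bTheta_s)-\widehat{\Hb}_1(\bTheta^*)]+[\widehat{\Hb}_1(\bTheta^*)-\widehat{\Hb}(\bTheta^*)]+[\widehat{\Hb}(\bTheta^*)-\widehat{\Hb}(\bTheta_s)]$. For the middle piece I would reuse the $\tr(\bDelta\trans\Eb_2)$ machinery from the proof of Lemma~\ref{lem:epsilon_non-convex}: pass through Lemma~\ref{lem:H_toSigma} to reduce to $\|\widehat\bSigma_1(\bTheta^*)-\widehat\bSigma(\bTheta^*)\|_{\max}$, then invoke Hoeffding and a union bound to obtain the $p^3\sqrt{\log p/(n/m)}$ bound with probability $1-O(p^{-10})$. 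For the two endpoint pieces I would apply the entrywise Lipschitz calculation of \eqref{equ:Hhat_maxnorm}, which transfers verbatim to $\widehat\Hb_1$, to get $\|\widehat{\Hb}(\bTheta_s)-\widehat{\Hb}(\bTheta^*)\|_{\rm op}\le p^2\|\widehat{\Hb}(\bTheta_s)-\widehat{\Hb}(\bTheta^*)\|_{\max}\lesssim p^3\|\bTheta_s-\bTheta^*\|_{\max}\lesssim p^3\|\widehat\bTheta_0-\bTheta^*\|_{\rm F}$.

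Multiplying the Hessian-gap bound by the direction length yields exactly $p^3\sqrt{\log p/(n/m)}\|\widehat\bTheta_0-\bTheta^*\|_{\rm F}+p^3\|\widehat\bTheta_0-\bTheta^*\|_{\rm F}^2$; the corollary is then immediate by substituting the initialization rate $\|\widehat\bTheta_0-\bTheta^*\|_{\rm F}\lesssim\sqrt{dp\log p/(n/m)}$ from Theorem~\ref{thm:main} and checking that under $m=o(\sqrt{n/(dp^7\log p)})$ the dominant $p^4 d\log p/(n/m)$ term is genuinely $o(\sqrt{dp\log p/n})$. The main obstacle is the uniform-in-$s$ operator-norm control of the Hessian gap: it requires stitching together a concentration bound on $\widehat\bSigma_1-\widehat\bSigma$ at $\bTheta^*$ with a parameter-Lipschitz bound whose $p^3$ coefficient is tight enough that multiplication by $\|\widetilde\bTheta-\widehat\bTheta_0\|_{\rm F}$ still leaves both terms $o(\sqrt{dp\log p/n})$; crucially, no probabilistic tools beyond those already developed for Lemma~\ref{lem:epsilon_non-convex} are required.
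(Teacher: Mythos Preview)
Your proposal is correct and follows essentially the same route as the paper: the paper also reduces $\nabla\cL_S(\widetilde\bTheta)-\nabla\cL(\widetilde\bTheta)$ to a Hessian-difference integral along the segment between $\widehat\bTheta_0$ and $\widetilde\bTheta$, decomposes it into a concentration piece $\widetilde\Eb_2$ (Hessian gap at $\bTheta^*$, handled via the $\widehat\bSigma$ max-norm bound) and a Lipschitz piece $\widetilde\Eb_3$ (handled via \eqref{equ:Hhat_maxnorm}), and then absorbs $\|\widetilde\bTheta-\widehat\bTheta_0\|_{\rm F}$ into $\|\widehat\bTheta_0-\bTheta^*\|_{\rm F}$ using the centralized rate (stated separately as Theorem~\ref{thm:rate_ctr}). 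Your three-term split of the Hessian gap and subsequent multiplication by the direction length is just a repackaging of the paper's $\widetilde\Eb_2+\widetilde\Eb_3$ decomposition.
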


\begin{proof}
    By the definitions of two losses, we have
    \begin{align*}
        &\nabla _{\bTheta} \mathcal{L}_S (\widetilde{\bTheta})  - \nabla _{\bTheta} \mathcal{L} (\widetilde{\bTheta})  =  \nabla _{\bTheta} \mathcal{L}_1 (\widetilde{\bTheta})  - \nabla _{\bTheta} \mathcal{L}_1 (\widehat\bTheta_0)+ \nabla _{\bTheta} \mathcal{L} (\widehat\bTheta_0 )  - \nabla _{\bTheta} \mathcal{L} (\widetilde{\bTheta})\\
        &= \mat \bigg\{ \int_0 ^ 1 \bigg[  \nabla ^ 2 \mathcal{L} \big(\widetilde{\bTheta} + s \big(\widehat \bTheta_0 - \widetilde{\bTheta} \big) \big) - \nabla ^ 2 \mathcal{L}_1 \big(\widetilde{\bTheta} + s \big(\widehat \bTheta_0 - \widetilde{\bTheta} \big) \big)\bigg] ds \bigg\} \vec \big(\widehat \bTheta_0 - \widetilde{\bTheta} \big)\\
        & = \widetilde\Eb_2 + \widetilde\Eb_3,
    \end{align*}
 where $\widetilde\Eb_2, \widetilde\Eb_3$ are similar to \eqref{eq:E3} and defined as follows:
 \begin{align}
    \nonumber  \widetilde\Eb_2 &= {{\rm mat} \Bigg\{ \bigg[\frac{1}{m} \sum_{j = 1} ^ m \nabla ^ 2 \mathcal{L}_j \big(\bTheta ^ * \big) - \nabla ^ 2 \mathcal{L}_1 \big(\bTheta ^* \big)  \bigg] \vec \big(\widehat \bTheta_0 - \widetilde\bTheta \big) \Bigg\} } \\
    \nonumber \widetilde\Eb_3 &= \mat \Bigg\{ \bigg\{ \int_0 ^ 1 \bigg[ \frac{1}{m} \sum_{j = 1} ^ m \nabla ^ 2 \mathcal{L}_j \big(\widetilde\bTheta + s \big(\widehat \bTheta_0 - \widetilde\bTheta \big) \big) - \frac{1}{m} \sum_{j = 1} ^ m  \nabla ^ 2 \mathcal{L}_j \big(\bTheta ^ *\big)  \bigg] ds \bigg\} \vec \big(\widehat \bTheta_0 - \widetilde\bTheta \big) \\
    \nonumber  & \qquad  \qquad { - \bigg\{ \int_0 ^ 1 \bigg[ \nabla ^ 2 \mathcal{L}_1 \big(\widetilde\bTheta + s \big(\widehat \bTheta_0 - \widetilde\bTheta \big) \big) - \nabla ^ 2 \mathcal{L}_1 \big(\bTheta ^ *\big)  \bigg] ds \bigg\} \vec \big(\widehat \bTheta_0 - \widetilde\bTheta \big) \Bigg\}}. 
   \end{align}

   Similar to \eqref{equ:H_H1}, we have for any $\bDelta$ with $\|\bDelta\|_{\mathrm{F}} \leq 1$ and ${\rm rank}(\bDelta) \leq d$,
   \begin{equation} \label{equ:d_stat-H_H1}
   \begin{split}
       \tr \big( \bDelta \trans \widetilde\Eb_2 \big) & \leq \Big\| \widehat{\Hb} (\bTheta ^ *) - \widehat{\Hb}_1 (\bTheta ^ *) \Big\|_{\rm op} \big\|\widehat \bTheta_0 - \widetilde\bTheta \big\|_{\rm F} \\
       & \leq p \Big\| \widehat{\bSigma} (\bTheta ^ *) - \widehat{\bSigma}_1 (\bTheta ^ *)  \Big\|_{\max} \big\|\widehat \bTheta_0 - \widetilde\bTheta  \big\|_{\rm F} \lesssim p^3 \sqrt{\frac{\log p}{n/m}} \big\|\widehat \bTheta_0 - \bTheta^*\big\|_{\rm F},
   \end{split}
   \end{equation}
   with probability at least $1 - O(p ^ {-10})$.
    For $\widetilde \Eb_3$, similar to \eqref{eq:traceE3}, we have
    \begin{align*}
        \tr \big( \bDelta \trans \Eb_3 \big) &\lesssim p ^ 3 \big[\big\|\widehat\bTheta_0 - \widetilde\bTheta \big\|_{\max} + \big\|\widehat\bTheta_0 -  \bTheta ^ *  \big\|_{\max}\big] \big\|\widehat \bTheta_0 - \bTheta ^ * \big\|_{\rm F} \lesssim p ^ 3 \big\|\widehat \bTheta_0 - \bTheta^* \big\|_{\rm F}^2,
    \end{align*}
    where the last inequality above and last inequality of \eqref{equ:d_stat-H_H1} are due to Theorem~\ref{thm:rate_ctr} and Theorem~\ref{thm:main} that $\|\widetilde \bTheta - \bTheta^* \|_{\rm F} \ll \|\widehat \bTheta_0 - \bTheta^* \big\|_{\rm F}$.
\end{proof}

\subsection{Statistical Rate of the Centralized Estimator} \label{sec:center-rate}

In this section, we show the statistical rate of the centralized estimator $\widetilde \bTheta$ in \eqref{eq:centralized_estimator}.

\begin{theorem}[Rate of Centralized Estimator]\label{thm:rate_ctr}
    Under the same conditions of Theorem~\ref{thm:Divide_conquer}, we have 

    \begin{equation*}
        \inf_{\Ob \in \mathcal{O}(d)} \Big\{ \big\|\widetilde{\Ub}  - \Ub^* \Ob \big\|_{\mathrm{F}} ^ 2 + \big\|\widetilde{\Vb}  - \Vb^* \Ob \big\|_{\mathrm{F}} ^ 2 \Big\} \lesssim \frac{d p \log p}{n} \text{ \ \ with probability \ \ } 1 - O(p ^ {-10}),
    \end{equation*}
    where $\mathcal{O}(d)$ denotes the collection of $d \times d$ orthogonal matrices. Furthermore, we have 
    \begin{equation*}
        \big\|\widetilde{\bTheta} - \bTheta ^ * \big\|_{\mathrm{F}} \lesssim \sqrt{\frac{d p \log p}{n}} \text{ \ \ with probability \ \ } 1 - O(p ^ {-10}).
    \end{equation*}
\end{theorem}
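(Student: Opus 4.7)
}

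The plan is to recognize that the centralized problem in \eqref{eq:centralized_estimator} is exactly the instance of the DANIEL objective when $m=1$: the hub holds all $n$ samples, the correction term $\nabla\mathcal{L}(\widehat\bTheta_0)-\nabla\mathcal{L}_1(\widehat\bTheta_0)$ vanishes identically, and the bi-factored surrogate $\widetilde{\mathcal{L}}$ in \eqref{equ:surrogate_obj} collapses to $\mathcal{L}(\Ub\Vb\trans)+\mathcal{Q}(\Ub,\Vb)$. Accordingly I would import the machinery built for Theorem~\ref{thm:Divide_conquer} and Lemma~\ref{lem:contraction_mapping}, substituting $m=1$ throughout, and read off the rate.

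First I would simplify the statistical error. With $m=1$ the definition \eqref{equ:e_stat} reduces to $e_{\rm stat}=\sup_{\mathrm{rank}(\bDelta)\le d,\,\|\bDelta\|_{\rm F}\le 1}\langle\nabla\mathcal{L}(\bTheta^*),\bDelta\rangle\le\sqrt{d}\|\nabla\mathcal{L}(\bTheta^*)\|_{\rm op}$, where the second-order and higher remainder terms in Lemma~\ref{lem:epsilon_non-convex} disappear because $\nabla\mathcal{L}=\nabla\mathcal{L}_1$ kills $\Eb_2$ and $\Eb_3$. Applying the matrix Bernstein inequality exactly as in Step~1 of the proof of Theorem~\ref{thm:main} yields $\|\nabla\mathcal{L}(\bTheta^*)\|_{\rm op}\lesssim\sqrt{p\log p/n}$ with probability at least $1-O(p^{-10})$, hence $e_{\rm stat}\lesssim\sqrt{dp\log p/n}$.

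Second I would seed a gradient-descent analysis with a valid initialization: apply Theorem~\ref{thm:main} with $m=1$ and $\lambda=C_0\sqrt{p\log p/n}$ to obtain $\widehat\bTheta_{\rm cvx}$ satisfying $\|\widehat\bTheta_{\rm cvx}-\bTheta^*\|_{\rm F}\lesssim\sqrt{dp\log p/n}$, which meets the initialization condition \eqref{equ:diff_initial}. Running gradient descent on $\mathcal{L}(\Ub\Vb\trans)+\mathcal{Q}(\Ub,\Vb)$ from the rank-$d$ SVD of $\widehat\bTheta_{\rm cvx}$ and invoking the one-step contraction \eqref{eq:one-step} with the bound on $e_{\rm stat}$ above, iteration of the recursion \eqref{eq:Z_gamma} for $\Gamma=\Omega(\log(n/(dp\log p)))$ steps gives a point $\Zb_\Gamma$ with $\rho^2(\Zb_\Gamma,\Zb^*)\lesssim dp\log p/n$; in the limit the iterates converge to a stationary point of the centralized loss inside the basin $\{\Zb:\rho^2(\Zb,\Zb^*)\le K(\Ub^*;\kappa_{\min},\kappa_{\max})\}$. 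Because the global minimizer $\widetilde\Zb$ must attain objective value no larger than this limit, the same RSC/RSS argument shows that $\widetilde\Zb$ lies in the same neighborhood. Then, treating $\widetilde\Zb$ as its own gradient-descent fixed point in \eqref{eq:one-step} forces
\begin{equation*}
\rho^2(\widetilde\Zb,\Zb^*)\;\le\;\frac{3(\kappa_{\min}+\kappa_{\max})}{\kappa^*\sigma_d(\bTheta^*)\kappa_{\min}\kappa_{\max}}\,e_{\rm stat}^2\;\lesssim\;\frac{dp\log p}{n}.
\end{equation*}
The Frobenius bound on $\widetilde\bTheta-\bTheta^*$ then follows from the inequality $\|\widetilde\Ub\widetilde\Vb\trans-\Ub^*\Vb^{*\transpose}\|_{\rm F}\lesssim(\|\widetilde\Zb\|_{\rm op}+\|\Zb^*\|_{\rm op})\,\rho(\widetilde\Zb,\Zb^*)$.

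The main obstacle will be the \emph{global-versus-local} issue: Lemma~\ref{lem:contraction_mapping} is a local contraction statement, whereas $\widetilde\bTheta$ is defined as a global minimizer of a non-convex objective that may in principle admit spurious stationary points far from $\bTheta^*$. My intended workaround is the loss-comparison argument sketched above: running gradient descent from $\widehat\bTheta_{\rm cvx}$ produces a stationary point inside the good basin with objective value upper bounded by the corresponding rate, and any competing minimizer $\widetilde\Zb$ must beat that value, so the RSC condition of Assumption~\ref{ass:Positive_minimal_eigenvalue} combined with the bound on $\nabla\mathcal{L}(\bTheta^*)$ pins $\widetilde\bTheta$ into the same $\|\cdot\|_{\rm F}$-neighborhood. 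Once confinement to the basin is secured, the rest of the argument is a one-step application of the identity already used in \eqref{eq:one-step}.
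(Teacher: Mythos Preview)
Your proposal is correct and arrives at the same endgame as the paper: apply the one-step contraction inequality \eqref{eq:one-step} with $\widetilde\Zb$ in place of $\Zb_\gamma$, use first-order optimality so that $\Zb_{\gamma+1}=\Zb_\gamma=\widetilde\Zb$, and solve the resulting self-referential inequality $\rho^2(\widetilde\Zb,\Zb^*)\le(1-2\eta\kappa^*\sigma_d(\bTheta^*))\rho^2(\widetilde\Zb,\Zb^*)+C(\tilde e^*_{\rm stat})^2$ for $\rho^2(\widetilde\Zb,\Zb^*)$; the paper does precisely this, with the simplified statistical error $\tilde e^*_{\rm stat}=\sup_{\bDelta}\langle\nabla\mathcal{L}(\bTheta^*),\bDelta\rangle\lesssim\sqrt{dp\log p/n}$ you also identified.

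The one substantive difference is that the paper goes \emph{directly} to the fixed-point identity via first-order optimality of $\widetilde\Zb$ and does not pass through gradient-descent iterates or a loss-comparison argument; in particular it does not explicitly verify the basin precondition $\rho^2(\widetilde\Zb,\Zb^*)\le K(\Ub^*;\kappa_{\min},\kappa_{\max})$ you worried about. Your detour through running gradient descent from $\widehat\bTheta_{\rm cvx}$ and comparing objective values is therefore extra structure the paper omits. Your instinct that this precondition needs justification is sound, and your loss-comparison sketch is a reasonable way to fill the gap, though making it airtight would still require controlling the regularizer $\mathcal{Q}$ and restricting to $\mathcal{T}$ so that RSC applies.
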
  
\begin{proof}
    By the first order optimality condition, we have
    \begin{align}
        \nonumber    \rho^2(\widetilde\Zb , \Zb^*) & =  \rho^2(\widetilde\Zb, \Zb^*) \\ \nonumber
            &  \quad - 2 \eta \Bigg\langle \begin{bmatrix} 
        \widetilde{\Ub}  - \Ub^*\widehat{\Ob} \\ \nonumber
        \widetilde{\Vb}  - \Vb^*\widehat{\Ob}
        \end{bmatrix},
        \begin{bmatrix} 
        \nabla_{\Ub} \mathcal{L} \big(\widetilde{\Ub}  \widetilde{\Vb} \trans\big)  + \nabla_{\Ub} \mathcal{Q} \big( \widetilde{\Ub},  \widetilde{\Vb} \big) \\ \nonumber
        \nabla_{\Vb} \mathcal{L} \big( \widetilde{\Ub}  \widetilde{\Vb} \trans \big)  + \nabla_{\Vb} \mathcal{Q} \big( \widetilde{\Ub},  \widetilde{\Vb} \big)
        \end{bmatrix}
        \Bigg \rangle.
        \end{align}
        This equation is similar to \eqref{eq:gd} in the proof of Theorem~\ref{thm:Divide_conquer}. The only difference is  that the loss is changed from the surrogate loss $\cL_S$ to $\cL$. Therefore, we can follow the same steps as the proof of Theorem~\ref{thm:Divide_conquer} with the only difference from bounding the rate of  $e_{\rm stat}$ in \eqref{equ:e_stat} to 
    \begin{equation}
        \label{equ:te_stat_tilde}
        \tilde e^*_{\rm stat} =  \sup_{{\rm rank}(\bDelta) \leq d, \|\bDelta \|_{\mathrm{F}} \leq 1} \big\langle \nabla _{\bTheta} \mathcal{L} ( \Ub^* \Vb^{*\top}), \bDelta \big\rangle.
    \end{equation}
     By \eqref{equ:bernstein_res}, we have $\tilde e^*_{\rm stat} = O_P(\sqrt{{dp \log p}/{n}})$. 
     The rest of the proof follows the same steps as the proof of Theorem~\ref{thm:Divide_conquer} that we omit the details here and we get the following result similar to \eqref{eq:one-step}
     \begin{align}
        \nonumber   \rho^2 \big(\widetilde\Zb , \Zb ^ * \big) 
           & \leq \big(1 - 2 \eta \kappa^* \sigma_d(\bTheta^*) \big) \rho^2 \big(\widetilde\Zb, \Zb^* \big) +  6 \eta \frac{\kappa_{\min} + \kappa_{\max}}{\kappa_{\min} \kappa_{\max}} (\tilde e^*_{\rm stat})^2,
       \end{align}
       which gives us the desired rates in the theorem.

    We also bound the following quantity which will be useful in the proof of Theorem~\ref{thm:rate_ctr_DANIEL}:
     \begin{equation}
        \label{equ:te_stat}
        \tilde e_{\rm stat} =  \sup_{{\rm rank}(\bDelta) \leq d, \|\bDelta \|_{\mathrm{F}} \leq 1} \big\langle \nabla _{\bTheta} \mathcal{L} ( \widetilde\Ub \widetilde\Vb^{\top}), \bDelta \big\rangle.
    \end{equation}
    By Assumption~\ref{ass:Positive_minimal_eigenvalue} and Theorem~\ref{thm:rate_ctr}, we have for any $\bDelta$ with $\|\bDelta\|_{\mathrm{F}} \leq 1$ and ${\rm rank}(\bDelta) \leq d$
    \begin{align*}
        \bigl\langle \nabla _{\bTheta} \mathcal{L} (\widetilde{\bTheta})  - \nabla _{\bTheta} \mathcal{L} ({\bTheta}^*) , \bDelta \bigr\rangle 
        &=\vec(\bDelta)\trans \mat \bigg\{ \int_0 ^ 1   \nabla ^ 2 \mathcal{L} \big({\bTheta}^* + s \big(\widetilde\bTheta - {\bTheta}^* \big) \big)  ds \bigg\} \vec \big(\widetilde \bTheta - {\bTheta}^* \big)\\
        &\lesssim \kappa_{\max} \sqrt{\frac{d p \log p}{n}},
    \end{align*}
    where the last inequality is due to  \eqref{eq:upper_bound_H} and the scaling assumption $p^2\sqrt{\log p/n} = o(1)$. Combining with \eqref{equ:te_stat} and the rate of $ \tilde e_{\rm stat}^* $ above, we have 
    \begin{equation}
        \label{eq:tilde-estat-rate}
        \tilde e_{\rm stat} \lesssim \sqrt{{dp \log p}/{n}}\text{ \ \ with probability \ \ } 1 - O(p ^ {-10}).
    \end{equation}
\end{proof}

\section{Auxiliary Lemmas for Non-Convex Optimization} \label{sec:B}

In this section, we provide the proofs of several auxiliary lemmas that support our proof to the main theorem. The theoretical foundation for the lemmas presented in Appendix \ref{sec:B} draws inspiration from \cite{Park2016lowrank}.

\subsection{RSC/RSS Conditions for the Non-Convex Theory}

In this section, we provide some insight into the restricted strong convexity/smoothness  conditions for the objective function $\mathcal{L}$ and the bi-factored surrogate loss $\widetilde{\mathcal{L}}$.

\begin{lemma}  \label{lem:RSC_RSS}
Given Assumptions \ref{ass:Positive_minimal_eigenvalue}-\ref{ass:reg_grad}, for any $\bTheta_1, \bTheta_2 \in \mathcal{T}$ where $\diag (\bTheta_1) = \diag (\bTheta_2)$ and $\max \big\{{\rm rank}(\bTheta_1), {\rm rank}(\bTheta_2) \big\} \leq d$, there exists some constant $C$ such that for $n \geq  C d ^ 2 p ^ 4 \log p$,
\begin{equation*}
     \frac{1}{4} \kappa_{\min} \big\|\bTheta_2 - \bTheta_1 \big\| ^ 2 _ {\mathrm{F}} \leq \mathcal{L}(\bTheta_2) -  \mathcal{L}(\bTheta_1) - \big\langle \nabla  \mathcal{L}(\bTheta_1), \bTheta_2 - \bTheta_1 \big\rangle \leq \Big(\frac{1}{2} \kappa_{\max} + \frac{1}{4} \kappa_{\min} \Big) \big\|\bTheta_2 - \bTheta_1 \big\| _ {\mathrm{F}} ^ 2 
\end{equation*}
with a probability of at least $1 - 2 p ^ {-10}$.
\end{lemma}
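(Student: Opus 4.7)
The plan is to reduce the RSC/RSS bounds to a quadratic form in the empirical Hessian via a second-order Taylor expansion, and then invoke the concentration machinery already developed in Step~2 of the proof of Theorem~\ref{thm:main}. Set $\bDelta := \bTheta_2 - \bTheta_1$. Since $\mathcal{L}$ is twice continuously differentiable, Taylor's theorem with integral remainder yields
\begin{equation*}
\mathcal{L}(\bTheta_2) - \mathcal{L}(\bTheta_1) - \langle \nabla\mathcal{L}(\bTheta_1), \bDelta\rangle = \int_0^1 (1-t)\, \vec(\bDelta)\trans \widehat{\Hb}\big(\bTheta_1 + t\bDelta\big)\, \vec(\bDelta)\, dt.
\end{equation*}
Because $\mathcal{T}$ is convex, every interpolant $\bTheta_t := \bTheta_1 + t\bDelta$ lies in $\mathcal{T}$; because $\mathrm{rank}(\bTheta_j)\leq d$ for $j=1,2$, we have $\mathrm{rank}(\bDelta)\leq 2d$; and by hypothesis $\diag(\bDelta) = \mathbf{0}$, which in turn allows the concentration bound on $\widetilde{\Yb}_{(-r)}^{(\ell)}(\cdot)$ used in \eqref{equ:psi}--\eqref{equ:concentration_Sigma} to transfer cleanly to the off-diagonal perturbation $\bDelta$.

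The main estimate reuses verbatim the argument in Step~2 of the proof of Theorem~\ref{thm:main}: combining Assumption~\ref{ass:Positive_minimal_eigenvalue} for the population part with the bound $\vec(\bDelta)\trans(\widehat{\bSigma}(\bTheta)-\bSigma(\bTheta))\vec(\bDelta) \leq \|\bDelta\|_*^2\,\|\widehat{\bSigma}(\bTheta)-\bSigma(\bTheta)\|_{\max}$ of \eqref{equ:sigma_maxnorm} and the max-norm concentration \eqref{equ:concentration_Sigma} gives, for each fixed $\bTheta\in\mathcal{T}$,
\begin{equation*}
\kappa_{\min}\|\bDelta\|_{\mathrm{F}}^2 - C_1 p^2\sqrt{\tfrac{\log p}{n}}\|\bDelta\|_*^2 \;\leq\; \vec(\bDelta)\trans\widehat{\Hb}(\bTheta)\vec(\bDelta) \;\leq\; \kappa_{\max}\|\bDelta\|_{\mathrm{F}}^2 + C_1 p^2\sqrt{\tfrac{\log p}{n}}\|\bDelta\|_*^2
\end{equation*}
with probability at least $1-2p^{-10}$. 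The rank constraint $\mathrm{rank}(\bDelta)\leq 2d$ implies $\|\bDelta\|_*^2\leq 2d\|\bDelta\|_{\mathrm{F}}^2$, so the fluctuation term is bounded by $2C_1 dp^2\sqrt{\log p/n}\,\|\bDelta\|_{\mathrm{F}}^2$. Picking the constant in $n\geq Cd^2p^4\log p$ large enough relative to $C_1/\kappa_{\min}$ forces this to be at most $\tfrac{1}{2}\kappa_{\min}\|\bDelta\|_{\mathrm{F}}^2$. Combined with the integral remainder factor $\int_0^1 (1-t)\,dt = \tfrac{1}{2}$, this gives exactly the claimed constants $\tfrac{1}{4}\kappa_{\min}$ on the left and $\tfrac{1}{2}\kappa_{\max} + \tfrac{1}{4}\kappa_{\min}$ on the right.

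The main obstacle is that the concentration \eqref{equ:concentration_Sigma} is pointwise in $\bTheta$, whereas the Taylor remainder evaluates $\widehat{\Hb}$ at the random interpolants $\{\bTheta_t\}_{t\in[0,1]}$. I would resolve this by a standard one-dimensional net argument along the segment $[\bTheta_1,\bTheta_2]$, using the max-norm Lipschitz control on $\widehat{\Hb}(\cdot)$ already derived in \eqref{equ:Hhat_maxnorm} (which gives $\|\widehat{\Hb}(\bTheta)-\widehat{\Hb}(\bTheta')\|_{\max}\lesssim p\|\bTheta-\bTheta'\|_{\max}$) together with the boundedness of $\mathcal{T}$. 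A polynomially fine net suffices because the net size enters only through a union bound absorbed by the $p^{-10}$ tail, after which $\vec(\bDelta)\trans\widehat{\Hb}(\bTheta_t)\vec(\bDelta)$ is controlled uniformly in $t$ and the proof concludes by integrating.
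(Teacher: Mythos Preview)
Your proposal is correct and follows essentially the same route as the paper: Taylor-expand to a Hessian quadratic form, invoke Assumption~\ref{ass:Positive_minimal_eigenvalue} plus the max-norm concentration \eqref{equ:concentration_Sigma} from Step~2 of Theorem~\ref{thm:main}, and use $\mathrm{rank}(\bDelta)\le 2d$ to absorb the fluctuation into $\tfrac{1}{2}\kappa_{\min}\|\bDelta\|_{\rm F}^2$ once $n\ge Cd^2p^4\log p$. The only cosmetic difference is that the paper uses the mean-value form of Taylor (a single point $\mathring{\bTheta}=\lambda\bTheta_1+(1-\lambda)\bTheta_2$ producing the factor $\tfrac12$ directly) rather than the integral remainder; your explicit net argument along the segment is in fact more careful than the paper's presentation, since the mean-value point $\mathring{\bTheta}$ is itself sample-dependent and the paper applies \eqref{equ:concentration_Sigma} at it without further comment.
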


\begin{proof}[Proof of Lemma \ref{lem:RSC_RSS}]
First, we take second-order Taylor's expansion for $\mathcal{L}(\bTheta_2)$ at $\bTheta_1$, where
\begin{equation} \label{equ:LTheta2_LTheta1}
    \mathcal{L}(\bTheta_2) =  \mathcal{L}(\bTheta_1) + \big\langle \nabla  \mathcal{L}(\bTheta_1), \bTheta_2 - \bTheta_1 \big\rangle + \frac{1}{2} \vec{(\bDelta)} \trans \widehat{\Hb}(\mathring{\bTheta}) \vec{(\bDelta)}.
\end{equation}
Here $\bDelta = \bTheta_1 - \bTheta_2$ with $\diag(\bDelta) = 0$, $\mathring{\bTheta} =\lambda\bTheta_1 + (1-\lambda)\bTheta_2$ for some $\lambda \in [0,1]$. We then apply the triangular inequality on $\big\|\mathring{\bTheta} \big\|_\mathrm{F}$, which indicates that
\begin{equation*}
        \big\|\mathring{\bTheta} \big\|_\mathrm{F} = \|\lambda \bTheta_1 + (1 - \lambda) \bTheta_2 \|_\mathrm{F} \leq \lambda \|\bTheta_1\|_\mathrm{F} +  (1 - \lambda) \|\bTheta_2\|_\mathrm{F} \leq B,
\end{equation*}
given that $\bTheta_1, \bTheta_2 \in \mathcal{T}$. Besides, Assumption \ref{ass:Positive_minimal_eigenvalue} implies that for $\mathring{\bTheta}$,
\begin{equation} \label{equ:vec_difference}
        \vec{(\bDelta)} \trans \widehat{{\Hb}}(\mathring{\bTheta}) \vec{(\bDelta)} = \vec{(\bDelta)} \trans {\Hb}(\mathring{\bTheta}) \vec{(\bDelta)} + \vec{(\bDelta)} \trans (\widehat{{\Hb}}(\mathring{\bTheta}) - {\Hb}(\mathring{\bTheta})) \vec{(\bDelta)}.
\end{equation}
Combining \eqref{equ:LTheta2_LTheta1} and \eqref{equ:vec_difference}, we alternatively need to show that
\begin{equation} \label{equ:bound_kappamin}
        \vec{(\bDelta)} \trans (\widehat{{\Hb}}(\mathring{\bTheta}) - {\Hb}(\mathring{\bTheta})) \vec{(\bDelta)} \in \Big[-\frac{1}{2} \kappa_{\min{}} \|\bDelta\|_{\mathrm{F}}^2, \frac{1}{2} \kappa_{\min{}} \|\bDelta\|_{\mathrm{F}}^2 \Big].
\end{equation}
    Applying \eqref{equ:concentration_Sigma} under the same condition as proposed in Theorem \ref{thm:main}, we have that for some constant $C$,
    \begin{equation*}
        \vec{(\bDelta)} \trans (\widehat{{\Hb}}(\mathring{\bTheta}) - {\Hb}(\mathring{\bTheta})) \vec{(\bDelta)} \in \Big[- Cp^2 \sqrt{\frac{\log p}{n}} \|\bDelta\|_*^2, Cp^2 \sqrt{\frac{\log p}{n}} \|\bDelta\|_*^2 \Big],
    \end{equation*}
    with probability $1 - 2p ^ {-10}$. We then utilize the $2d$-low-rankness condition of $\bDelta$ along with the inequality $\|\Ab\|^2_* \leq {\rm rank} (\Ab) \|\Ab\|_{\mathrm{F}}^2$, which yields that
\begin{align*}
    C p^2 \sqrt{\frac{ \log p}{n}} \|\bDelta\|_*^2 \leq 2 C d p^2 \sqrt{\frac{ \log p}{n}} \|\bDelta \|_{\mathrm{F}} ^ 2.
\end{align*}
Setting $n \geq C d ^ 2 p ^ 4 \log p$ for some constant $C$ guarantees \eqref{equ:bound_kappamin} with probability $1 - 2p ^{-10}$. Hence, we have completed the proof for Lemma \ref{lem:RSC_RSS}.
\end{proof}

In Corollary \ref{col:surrogate_RSCRSS}, we consider the restricted strong convexity/smoothness (RSC/RSS) conditions, as inspired by \cite{jordan2019dist}, for the surrogate loss
\begin{equation} \label{def:surrogate_Ls}
    \mathcal{L}_S(\bTheta; \widehat\bTheta_0) = \mathcal{L}_1(\bTheta) + \big \langle \nabla \mathcal{L}(\widehat \bTheta_0) - \nabla \mathcal{L}_1(\widehat \bTheta_0), \bTheta \big \rangle,
\end{equation}
where $\mathcal{L}_S(\Ub \Vb \trans; \widehat\bTheta_0) + \frac{1}{4}\big\|\Ub\trans \Ub - \Vb \trans \Vb\big\|_{\mathrm{F}} ^ 2$ is equivalent with the bi-factored surrogate loss $\widetilde{\mathcal{L}}(\Ub, \Vb; \widehat\bTheta_0)$ as defined in \eqref{equ:surrogate_obj}. 

\begin{corollary} [RSC/RSS Conditions for the Surrogate Loss $\mathcal{L}_S(\bTheta; \widehat\bTheta_0)$] \label{col:surrogate_RSCRSS}
Given Assumptions \ref{ass:Positive_minimal_eigenvalue}-\ref{ass:reg_grad}, for any $\bTheta_1, \bTheta_2 \in \mathcal{T}$ with $\diag (\bTheta_1) = \diag (\bTheta_2)$ and rank bound 
\begin{equation*}
    \max \big\{{\rm rank}(\bTheta_1), {\rm rank}(\bTheta_2) \big\} \leq d,
\end{equation*}
there exists some constant $C$ such that for $n/m \geq C d ^ 2 p ^ 4 \log p$,
\begin{equation} \label{equ:RSC_RSS}
\begin{split}
     \frac{1}{4} \kappa_{\min} \big\|\bTheta_2 - \bTheta_1 \big\| ^ 2 _ {\mathrm{F}} \leq \mathcal{L}_S(\bTheta_2; \widehat\bTheta_0) - & \mathcal{L}_S(\bTheta_1; \widehat\bTheta_0) -  \big\langle \nabla  \mathcal{L}_S(\bTheta_1; \widehat \bTheta_0), \bTheta_2 - \bTheta_1 \big\rangle \\
     &\leq \Big(\frac{1}{2} \kappa_{\max} + \frac{1}{4} \kappa_{\min} \Big) \big\|\bTheta_2 - \bTheta_1 \big\| _ {\mathrm{F}} ^ 2,
\end{split}
\end{equation}
with a probability of at least $1 - 2 p ^ {-10}$.
\end{corollary}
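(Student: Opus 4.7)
The plan is to reduce Corollary~\ref{col:surrogate_RSCRSS} to Lemma~\ref{lem:RSC_RSS} by exploiting the linearity of the correction term in the surrogate loss. Observe from the definition \eqref{def:surrogate_Ls} that the term $\langle \nabla \mathcal{L}(\widehat\bTheta_0) - \nabla \mathcal{L}_1(\widehat\bTheta_0), \bTheta\rangle$ is a linear functional of $\bTheta$ alone; it contributes the same constant vector to $\nabla \mathcal{L}_S(\bTheta_1;\widehat\bTheta_0)$ and to $\mathcal{L}_S(\bTheta_2;\widehat\bTheta_0) - \mathcal{L}_S(\bTheta_1;\widehat\bTheta_0)$. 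Consequently the Bregman divergence cancels cleanly:
\begin{equation*}
\mathcal{L}_S(\bTheta_2;\widehat\bTheta_0) - \mathcal{L}_S(\bTheta_1;\widehat\bTheta_0) - \langle \nabla \mathcal{L}_S(\bTheta_1;\widehat\bTheta_0), \bTheta_2 - \bTheta_1\rangle = \mathcal{L}_1(\bTheta_2) - \mathcal{L}_1(\bTheta_1) - \langle \nabla \mathcal{L}_1(\bTheta_1), \bTheta_2 - \bTheta_1\rangle.
\end{equation*}
Thus it suffices to establish the RSC/RSS bounds of \eqref{equ:RSC_RSS} for $\mathcal{L}_1$ in place of $\mathcal{L}_S$, and in particular the initialization $\widehat\bTheta_0$ plays no role in the argument.

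The next step is to repeat verbatim the two-line Taylor expansion from the proof of Lemma~\ref{lem:RSC_RSS}, writing
\begin{equation*}
\mathcal{L}_1(\bTheta_2) - \mathcal{L}_1(\bTheta_1) - \langle \nabla \mathcal{L}_1(\bTheta_1), \bTheta_2 - \bTheta_1 \rangle = \tfrac{1}{2}\vec(\bDelta)\trans \widehat{\Hb}_1(\mathring\bTheta)\vec(\bDelta),
\end{equation*}
with $\bDelta = \bTheta_2 - \bTheta_1$ and $\mathring\bTheta$ on the segment $[\bTheta_1,\bTheta_2]\subset\mathcal{T}$ (the latter inclusion following from convexity of $\mathcal{T}$ and the triangle inequality on $\|\cdot\|_{\mathrm{F}}$). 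Splitting $\widehat\Hb_1(\mathring\bTheta) = \Hb(\mathring\bTheta) + (\widehat\Hb_1(\mathring\bTheta) - \Hb(\mathring\bTheta))$, the population part is controlled by Assumption~\ref{ass:Positive_minimal_eigenvalue}, which supplies the $\kappa_{\min}\|\bDelta\|_{\mathrm{F}}^2$ and $\kappa_{\max}\|\bDelta\|_{\mathrm{F}}^2$ sandwich.

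The crucial step is to absorb the deviation term $\vec(\bDelta)\trans(\widehat\Hb_1(\mathring\bTheta) - \Hb(\mathring\bTheta))\vec(\bDelta)$ into a $\tfrac{1}{2}\kappa_{\min}\|\bDelta\|_{\mathrm{F}}^2$ slack on each side. Here we invoke the reduction to $\widehat\bSigma_1 - \bSigma$ via Lemma~\ref{lem:H_toSigma} and the entrywise maximum-norm concentration bound of \eqref{equ:concentration_Sigma}, but applied to the hub institution's sample of size $n/m$ rather than to the pooled sample of size $n$. This is precisely the place where the sample-size hypothesis changes: the concentration now yields $\|\widehat\bSigma_1(\mathring\bTheta) - \bSigma(\mathring\bTheta)\|_{\max} \lesssim p^2\sqrt{\log p/(n/m)}$ with probability at least $1 - 2p^{-10}$, which, combined with the rank-$2d$ bound $\|\bDelta\|_*^2 \le 2d\|\bDelta\|_{\mathrm{F}}^2$, gives the required absorption under the hypothesis $n/m \ge C d^2 p^4 \log p$.

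I do not expect a substantive obstacle: the only delicate point is to verify that the cancellation of the correction term is genuine (it is, because $\widehat\bTheta_0$ enters only through a term linear in $\bTheta$, so its Hessian contribution is identically zero), after which the proof is mechanically the proof of Lemma~\ref{lem:RSC_RSS} with the single substitution $n \mapsto n/m$ in the concentration inequality driving the bound on $\widehat\bSigma - \bSigma$.
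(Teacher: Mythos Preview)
Your proposal is correct and follows essentially the same approach as the paper: both arguments observe that the linear correction term in $\mathcal{L}_S$ vanishes in the Bregman divergence, reducing the claim to the RSC/RSS bounds for $\mathcal{L}_1$ alone, and then invoke Lemma~\ref{lem:RSC_RSS} with the hub sample size $n/m$ in place of $n$. If anything, your write-up is more explicit than the paper's, which states the Bregman-divergence identity and then simply asserts that the RSC/RSS conditions are guaranteed, leaving the $n\mapsto n/m$ substitution implicit.
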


\begin{proof}
    We establish the proof to Corollary \ref{col:surrogate_RSCRSS} by comparing $\mathcal{L}_1 (\bTheta_2) - \mathcal{L}_1 (\bTheta_1) - \big\langle \nabla \mathcal{L}_1 (\bTheta_1), \bTheta_2 - \bTheta_1 \big\rangle$ and its surrogate alternate $ \mathcal{L}_S(\bTheta_2; \widehat\bTheta_0) - \mathcal{L}_S(\bTheta_1; \widehat\bTheta_0) -  \big\langle \nabla  \mathcal{L}_S(\bTheta_1; \widehat \bTheta_0), \bTheta_2 - \bTheta_1 \big\rangle$. First, we apply $\frac{\partial}{\partial \Xb} \Tr (\Xb \Ab) = \Ab \trans$, which indicates that
    \begin{align*}
        \nabla  \mathcal{L}_S(\bTheta; \widehat \bTheta_0) &= \nabla \mathcal{L}_1(\bTheta) + \frac{1}{m} \sum_{j = 1} ^ m \nabla \mathcal{L}_j(\widehat\bTheta_0) - \nabla \mathcal{L}_1( \widehat \bTheta_0); \\
        \big \langle  \nabla  \mathcal{L}_S(\bTheta_1; \widehat \bTheta_0), \bTheta_2 - \bTheta_1 \big \rangle &- \big \langle \nabla \mathcal{L}_1 (\bTheta_1), \bTheta_2 - \bTheta_1 \big \rangle = \bigg \langle \frac{1}{m} \sum_{j = 1} ^ m \nabla \mathcal{L}_j( \widehat \bTheta_0) - \nabla \mathcal{L}_1( \widehat \bTheta_0), \bTheta_2 - \bTheta_1 \bigg \rangle.
    \end{align*}
  We thus have
    \begin{equation*} 
    \begin{split}
        \mathcal{L}_1 (\bTheta_2) - \mathcal{L}_1 (\bTheta_1) &- \langle \nabla \mathcal{L}_1 (\bTheta_1), \bTheta_2 - \bTheta_1 \rangle \\
        &= \mathcal{L}_S(\bTheta_2; \widehat\bTheta_0) - \mathcal{L}_S(\bTheta_1; \widehat\bTheta_0) -  \big\langle \nabla  \mathcal{L}_S(\bTheta_1; \widehat \bTheta_0), \bTheta_2 - \bTheta_1 \big\rangle,
    \end{split}
    \end{equation*}
    where the RSC/RSS conditions on $\mathcal{L}_S$ is guaranteed.
\end{proof}


\subsection{Rationality of the Initial Value}\label{sec:app:ini}

The following lemma shows the rationality of the initial value chosen in Algorithm \ref{al:Divide_conquer}. Recall that For technical reasons, we introduce the notation $\kappa ^ * = \min \{\frac{1}{8}, \frac{5}{128} \frac{\kappa_{\min} \kappa_{\max}}{\kappa_{\min} + \kappa_{\max}} \}$ and recall
\[
    K(\Ub^*; \kappa_{\min}, \kappa_{\max}) = \frac{4}{5} \kappa ^ * \sigma_d^2(\Ub^*) \min \Big\{ \frac{1}{(\kappa_{\min} + \kappa_{\max})}, 2 \Big\}.
\]

\begin{lemma}[Rationality on the Initial Value] \label{lem:initial_value} Consider $K(\Ub^*; \kappa_{\min}, \kappa_{\max})$ defined in \eqref{equ:K_U}. For any initial value $\widehat{\bTheta}$ with rank-$d$-SVD $\widehat\bTheta_0 = \Ub_0 \Vb_0 \trans$ such that
\begin{equation} \label{equ:diff_initial}
    \big \|\widehat{\bTheta} - \bTheta ^ * \big \|_{\mathrm{F}} \leq \frac{1}{5} \min \bigg\{\sigma_{d}(\bTheta^*), \sqrt{K(\Ub^*; \kappa_{\min}, \kappa_{\max}) \sigma_{d}(\bTheta^*)} \bigg\},
\end{equation}
the subspace distance $\rho^2 \big(\Zb_0, \Zb ^ * \big) \leq K(\Ub^*; \kappa_{\min}, \kappa_{\max})$.
\end{lemma}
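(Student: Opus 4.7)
The plan is to convert the matrix-level proximity $\|\widehat\bTheta - \bTheta^*\|_\mathrm{F}$ first to a matrix-level proximity after rank-$d$ truncation, and then to a factorized subspace-distance bound $\rho^2(\Zb_0,\Zb^*)$. The first step uses Eckart--Young--Mirsky: since $\widehat\bTheta_0$ is the best rank-$d$ Frobenius approximation to $\widehat\bTheta$ and $\bTheta^* = \Ub^*(\Vb^*)\trans$ has rank $d$, we have $\|\widehat\bTheta_0-\widehat\bTheta\|_\mathrm{F} \leq \|\widehat\bTheta-\bTheta^*\|_\mathrm{F}$, so by the triangle inequality $\|\widehat\bTheta_0-\bTheta^*\|_\mathrm{F} \leq 2\|\widehat\bTheta-\bTheta^*\|_\mathrm{F}$.

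The second step is the conversion from matrix distance to factor distance. I would take $\Ub_0, \Vb_0$ to be the balanced factors of $\widehat\bTheta_0$, so that $\Ub_0\trans \Ub_0 = \Vb_0\trans \Vb_0$ (consistent with Remark~\ref{rem:invariance}, and mirroring the balanced structure of $\Ub^*,\Vb^*$). Then by the factor-matrix perturbation inequality that the paper already invokes from \citep{tu2016lowrank} (applied to the stacked matrices $\Zb_0, \Zb^*$), one has
\begin{equation*}
  \rho^2(\Zb_0,\Zb^*) \;\leq\; \frac{1}{2(\sqrt{2}-1)\,\sigma_d^2(\Zb^*)}\,\bigl\|\Zb_0\Zb_0\trans - \Zb^*(\Zb^*)\trans\bigr\|_\mathrm{F}^2 .
\end{equation*}
Under our balanced assumption $\sigma_d^2(\Zb^*) = 2\sigma_d(\bTheta^*)$, and a direct block expansion of $\Zb_0\Zb_0\trans - \Zb^*(\Zb^*)\trans$ followed by $\|\Ub_0\Ub_0\trans - \Ub^*(\Ub^*)\trans\|_\mathrm{F} \lesssim \|\widehat\bTheta_0 - \bTheta^*\|_\mathrm{F}$ (and the analogous bound for $\Vb$), yields
\begin{equation*}
  \rho^2(\Zb_0,\Zb^*) \;\leq\; \frac{c_0\,\|\widehat\bTheta_0 - \bTheta^*\|_\mathrm{F}^2}{\sigma_d(\bTheta^*)}
\end{equation*}
for an absolute constant $c_0$. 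Combining with Step 1 and the hypothesis $\|\widehat\bTheta-\bTheta^*\|_\mathrm{F} \leq \tfrac{1}{5}\sqrt{K(\Ub^*;\kappa_{\min},\kappa_{\max})\,\sigma_d(\bTheta^*)}$ then gives $\rho^2(\Zb_0,\Zb^*) \leq (4c_0/25)\,K$, which is at most $K$ for the appropriate choice of the constant in the bound (and motivates the factor $1/5$ in the hypothesis).

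The role of the first branch in the minimum, $\|\widehat\bTheta-\bTheta^*\|_\mathrm{F} \leq \tfrac{1}{5}\sigma_d(\bTheta^*)$, is to ensure via Weyl's inequality that $\sigma_d(\widehat\bTheta_0) \geq (1-\tfrac{2}{5})\sigma_d(\bTheta^*) > 0$, so that the rank-$d$ SVD truncation is non-degenerate and the balanced factorization $\Ub_0\Vb_0\trans$ is well-conditioned; this is also the regime in which the Wedin-type subspace bound underlying the Tu et al. inequality is sharp. The main technical obstacle is therefore bookkeeping: verifying the block expansion $\|\Zb_0\Zb_0\trans - \Zb^*(\Zb^*)\trans\|_\mathrm{F}^2 \lesssim \|\widehat\bTheta_0 - \bTheta^*\|_\mathrm{F}^2$ in the balanced parametrization and tracking the resulting constants so that the final inequality recovers exactly $K(\Ub^*;\kappa_{\min},\kappa_{\max})$ as defined in \eqref{equ:K_U}, rather than a larger multiple of it.
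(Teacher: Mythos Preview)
Your Step~1 (Eckart--Young--Mirsky plus triangle inequality, giving $\|\widehat\bTheta_0-\bTheta^*\|_\mathrm{F}\le 2\|\widehat\bTheta-\bTheta^*\|_\mathrm{F}$) is exactly what the paper does. The difference is in Step~2: rather than applying the symmetric inequality (Lemma~5.4 of \cite{tu2016lowrank}) to $\Zb_0,\Zb^*$ and then bounding the diagonal blocks $\|\Ub_0\Ub_0\trans-\Ub^*(\Ub^*)\trans\|_\mathrm{F}$ separately, the paper invokes Lemma~5.14 of \cite{tu2016lowrank}, which is tailored to the asymmetric bi-factored case and directly yields
\[
  \rho^2(\Zb_0,\Zb^*)\;\le\;\frac{2}{(\sqrt2-1)\,\sigma_d(\bTheta^*)}\,\|\widehat\bTheta_0-\bTheta^*\|_\mathrm{F}^2,
\]
provided $\|\widehat\bTheta_0-\bTheta^*\|_\mathrm{F}\le\tfrac12\sigma_d(\bTheta^*)$. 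This precondition is exactly what the first branch of the minimum in \eqref{equ:diff_initial} secures (via $2\cdot\tfrac15<\tfrac12$), so your reading of its role is correct, but its use is as a hypothesis of Lemma~5.14 rather than only as a nondegeneracy check. Combining with Step~1 and the second branch gives $\rho^2(\Zb_0,\Zb^*)\le \frac{8}{25(\sqrt2-1)}K\approx 0.77\,K\le K$, with no residual constant-tracking.

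Your route is not wrong, but the step you flag as ``the main technical obstacle''---controlling $\|\Ub_0\Ub_0\trans-\Ub^*(\Ub^*)\trans\|_\mathrm{F}$ by $\|\widehat\bTheta_0-\bTheta^*\|_\mathrm{F}$---is precisely what Lemma~5.14 packages; going through Lemma~5.4 plus a block expansion would force you to redo that Wedin-type argument and chase the constants yourself. The paper sidesteps this entirely by citing the sharper lemma.
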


\begin{proof}[Proof of Lemma \ref{lem:initial_value}]
    By applying the triangle inequality and Eckart-Young-Mirsky theorem to the rank-$d$ singular value decomposition of $\widehat{\bTheta}$, denoted as $\widehat\bTheta_0$, we have
    \begin{equation} \label{equ:theta_triangle}
        \big\|\widehat\bTheta_0 - \widehat \bTheta \big\|_{\mathrm{F}} \leq \big\|\widehat \bTheta - \bTheta ^ * \big\|_{\mathrm{F}}; \ \  \big\|\widehat\bTheta_0 - \bTheta ^ * \big\|_{\mathrm{F}} \leq 2 \big\|\widehat\bTheta - \bTheta ^ * \big\|_{\mathrm{F}} \leq \frac{1}{2} \sigma_d(\bTheta ^ *).
    \end{equation}
    Given \eqref{equ:theta_triangle}, Lemma 5.14 of \cite{tu2016lowrank} indicates that
    \begin{align*}
        \rho^2 \big(\Zb_0, \Zb ^ * \big) = & \rho ^ 2\bigg(\begin{bmatrix}
\Ub_0 \\
\Vb_0
\end{bmatrix}, \begin{bmatrix}
\Ub^* \\
\Vb^*
\end{bmatrix} \bigg)\leq \frac{2 \big\|\widehat\bTheta_0  - \bTheta ^ * \big\|^2_{\mathrm{F}}}{(\sqrt{2} - 1) \sigma_{d}(\bTheta ^ *)} \leq \frac{8 \big\|\widehat \bTheta - \bTheta ^ * \big\|^2_{\mathrm{F}}}{(\sqrt{2} - 1) \sigma_{d}(\bTheta ^ *)} \\
& = \frac{8 \big\|\widehat \bTheta - \bTheta ^ * \big\|^2_{\mathrm{F}}}{(\sqrt{2} - 1) \sigma^2_{d}(\Ub^*)} \leq \frac{8}{25 (\sqrt{2} - 1)} K(\Ub^*; \kappa_{\min} , \kappa_{\max}) \leq K(\Ub^*; \kappa_{\min} , \kappa_{\max}),
    \end{align*}
    which directly completes the proof.
\end{proof}

\section{Closed-Form Solutions to the Gradient and the Hessian Matrix} \label{sec:Appendix_D}

In this section, we provide additional supplementary results concerning the gradient and the Hessian matrix for the objective function $\mathcal{L}(\cdot)$ defined by \eqref{equ:conditional_dist}. 
Denote $Q_{q,\ell}(\bTheta) = 2\theta_{qq} x_q^{(\ell)}+ 2\sum_{j \neq q} \theta_{jq} x_q^{(\ell)}x_j^{(\ell)}$ and $B_{q, \ell} ^ * := \frac{-1}{1 + \exp \big(Q_{q,\ell}(\bTheta^*) \big)} $. 
Therefore, we have the gradient
\begin{equation} \label{eqn:B_star}
     \mathbf{W}_\ell(\bTheta ^ *) _{ij} = \nabla \mathcal{L}^{(\ell)}(\boldsymbol\Theta ^ *)_{ij} = \begin{cases}
  2 x_i^{(\ell)} x_j^{(\ell)} (B_{i, \ell} ^ *+B_{j, \ell} ^ *)    & \text{if $i \neq j$;}\\
     2x_i^{(\ell)} B_{i, \ell} ^ * & \text{if $i = j$.}
     \end{cases}
\end{equation}
Then, we can represent the $p^2 \times p^2$ Hessian matrix $\widehat{\Hb}(\bTheta)$ by
\begin{equation} \label{equ:hess_mat}
     \widehat{\mathbf{H}}(\bTheta)  \\
    = \frac{1}{n} \sum_{\ell = 1} ^ n \sum_{r = 1} ^ p  A_{r,\ell}(\bTheta)  \vec \big(\mathbf{Y}_{(-r)}^{(\ell)} \big)   \vec \big(\mathbf{Y}_{(-r)}^{(\ell)} \big) \trans \text{ as } A_{r, \ell} (\boldsymbol\Theta) = \frac{\exp \big(Q_{r,\ell}(\bTheta)  \big)}{\big(\exp(Q_{r,\ell}(\bTheta) ) + 1 \big) ^ 2},
\end{equation}
with matrix $\mathbf{Y}_{(-i)}^{(\ell)} \in \mathbb{R} ^ {p \times p}$ composes of $\xb ^ {(\ell)}_{(-i)} = 2\big( x_1^{(\ell)}, ... , x_{i-1}^{(\ell)}, 1, x_{i+1}^{(\ell)}, ... , x_{p}^{(\ell)} \big) \trans$ on the $i ^ {\text{th}}$ column and $\big( \xb ^ {(\ell)}_{(-i)} \big) \trans$ on the $i ^ {\text{th}}$ row, while the other entries equal to 0. 

Finally, we employ Lemma \ref{lem:H_toSigma}, as inspired by \cite{jianqing2017nuclear}, to demonstrate the rationale behind substituting the Hessian matrix $\widehat{\Hb}(\bTheta)$ with $\widehat{\bSigma}(\bTheta)$, as we proposed in the proof to Theorem \ref{thm:main}.

\begin{lemma}[Substituting Hessian matrix $\widehat{\Hb}(\bTheta)$ with $\widehat{\bSigma}(\bTheta)$] \label{lem:H_toSigma}
For any matrix $\bDelta$ and SVD $\bDelta = \Ub \Db \Vb \trans$, we have $\mathrm{vec}(\bDelta) \trans \widehat{\Hb}(\bTheta) \mathrm{vec}(\bDelta) = \mathrm{vec}(\Db) \trans \widehat{\bSigma}(\bTheta) \mathrm{vec}(\Db)$ for $\bTheta \in \mathcal{T}$ with the notation that
\begin{equation} \label{equ:sigma_sigmahat}
\begin{split}
    \widehat{\boldsymbol\Sigma}(\bTheta) &= \frac{1}{n} \sum_{\ell = 1} ^ n \bigg\{ \sum_{r = 1} ^ p \mathrm{vec} \big(\widetilde{\mathbf{Y}}_{(-r)}^{(\ell)} (\bTheta) \big) \mathrm{vec} \big(\widetilde{\mathbf{Y}}_{(-r)}^{(\ell)} (\bTheta) \big) \trans \bigg\}; \\
    {\boldsymbol\Sigma(\bTheta)} := \mathbb{E} [\widehat{\boldsymbol\Sigma}(\bTheta)] &= \mathbb{E} \bigg[ \sum_{r = 1} ^ p \mathrm{vec} \big(\widetilde{\mathbf{Y}}_{(-r)}^{(1)} (\bTheta) \big) \mathrm{vec} \big(\widetilde{\mathbf{Y}}_{(-r)}^{(1)} (\bTheta) \big) \trans \bigg],
\end{split}
\end{equation}
where $\widetilde{\mathbf{Y}}_{(-r)}^{(\ell)}(\boldsymbol\Theta) = \widetilde{\mathbf{Y}}_{(-r)}^{(\ell)}(\boldsymbol\Theta, \bDelta) = \sqrt{A_{r,\ell}(\boldsymbol\Theta)} \mathbf{U} \trans \mathbf{Y}_{(-r)}^{(\ell)} \mathbf{V}$. 
\end{lemma}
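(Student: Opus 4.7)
The proof is essentially an exercise in rewriting a rank-one quadratic form via the Frobenius inner product and then exploiting the orthogonal invariance of that inner product through the SVD of $\bDelta$. Because each term of $\widehat{\Hb}(\bTheta)$ is a rank-one outer product $\vec(\mathbf{Y}_{(-r)}^{(\ell)})\vec(\mathbf{Y}_{(-r)}^{(\ell)})^{\transpose}$, evaluating the quadratic form on $\vec(\bDelta)$ produces squared Frobenius inner products $\langle \mathbf{Y}_{(-r)}^{(\ell)}, \bDelta\rangle^{2}$, and the SVD $\bDelta = \Ub\Db\Vb^{\transpose}$ will let us move $\Ub$ and $\Vb$ into the other argument. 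The plan is thus purely algebraic; no probabilistic tools or condition on $\bTheta$ beyond the definitions of $A_{r,\ell}(\bTheta)$ are required.

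First I will expand
\[
\vec(\bDelta)^{\transpose}\widehat{\Hb}(\bTheta)\vec(\bDelta)
=\frac{1}{n}\sum_{\ell=1}^{n}\sum_{r=1}^{p} A_{r,\ell}(\bTheta)\,\bigl\langle \mathbf{Y}_{(-r)}^{(\ell)},\,\bDelta\bigr\rangle^{2},
\]
using the standard identity $\vec(\Ab)^{\transpose}\vec(\Bb)=\langle \Ab,\Bb\rangle=\tr(\Ab^{\transpose}\Bb)$. Next I will substitute the SVD $\bDelta=\Ub\Db\Vb^{\transpose}$ and use the cyclic property of the trace to obtain
\[
\bigl\langle \mathbf{Y}_{(-r)}^{(\ell)},\,\Ub\Db\Vb^{\transpose}\bigr\rangle
=\tr\bigl(\Vb^{\transpose}(\mathbf{Y}_{(-r)}^{(\ell)})^{\transpose}\Ub\,\Db\bigr)
=\bigl\langle \Ub^{\transpose}\mathbf{Y}_{(-r)}^{(\ell)}\Vb,\;\Db\bigr\rangle.
\]
Multiplying by $\sqrt{A_{r,\ell}(\bTheta)}$ inside the inner product on the right-hand side gives exactly $\langle \widetilde{\mathbf{Y}}_{(-r)}^{(\ell)}(\bTheta),\,\Db\rangle$, so that $A_{r,\ell}(\bTheta)\langle \mathbf{Y}_{(-r)}^{(\ell)},\bDelta\rangle^{2}=\langle \widetilde{\mathbf{Y}}_{(-r)}^{(\ell)}(\bTheta),\Db\rangle^{2}$.

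Finally I will reverse the vectorization identity to recognize
\[
\bigl\langle \widetilde{\mathbf{Y}}_{(-r)}^{(\ell)}(\bTheta),\,\Db\bigr\rangle^{2}
=\vec(\Db)^{\transpose}\vec\bigl(\widetilde{\mathbf{Y}}_{(-r)}^{(\ell)}(\bTheta)\bigr)\vec\bigl(\widetilde{\mathbf{Y}}_{(-r)}^{(\ell)}(\bTheta)\bigr)^{\transpose}\vec(\Db),
\]
and summing over $r,\ell$ together with the normalization $1/n$ yields precisely $\vec(\Db)^{\transpose}\widehat{\bSigma}(\bTheta)\vec(\Db)$ as defined in \eqref{equ:sigma_sigmahat}. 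There is no genuine obstacle in this proof: the whole statement is a bookkeeping identity, and the only point worth being careful about is keeping the orthogonal factors $\Ub,\Vb$ on the correct side of the trace when moving them from $\bDelta$ into $\mathbf{Y}_{(-r)}^{(\ell)}$, so that the diagonal factor $\sqrt{A_{r,\ell}(\bTheta)}$ stays outside the inner product and can be absorbed into the definition of $\widetilde{\mathbf{Y}}_{(-r)}^{(\ell)}(\bTheta)$.
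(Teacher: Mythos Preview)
Your proposal is correct and follows essentially the same route as the paper's proof: both expand the quadratic form into a sum of squared trace terms $A_{r,\ell}(\bTheta)\bigl(\tr((\mathbf{Y}_{(-r)}^{(\ell)})^{\transpose}\bDelta)\bigr)^{2}$, substitute the SVD $\bDelta=\Ub\Db\Vb^{\transpose}$, use trace cyclicity to shift $\Ub,\Vb$ onto $\mathbf{Y}_{(-r)}^{(\ell)}$, and then re-vectorize to obtain $\vec(\Db)^{\transpose}\widehat{\bSigma}(\bTheta)\vec(\Db)$. The only cosmetic difference is that you phrase the key steps via the Frobenius inner product $\langle\cdot,\cdot\rangle$ whereas the paper writes everything with $\tr(\cdot)$ directly.
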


\begin{proof}[Proof of Lemma \ref{lem:H_toSigma}]
We start with the closed form of the Hessian matrix $\widehat{\mathbf{H}}(\boldsymbol\Theta)$ as given in \eqref{equ:hess_mat}, and we have that
\begin{align*}
    &\vec (\boldsymbol\Delta) \trans   \widehat{\mathbf{H}}(\boldsymbol\Theta)   \vec (\boldsymbol\Delta) 
    = \frac{1}{n} \sum_{\ell = 1} ^ n \sum_{r = 1} ^ p \vec (\boldsymbol\Delta) \trans   \vec \big(\mathbf{Y}_{(-r)}^{(\ell)} \big)   \    A_{r,\ell} (\boldsymbol\Theta)  \  \vec \big(\mathbf{Y}_{(-r)}^{(\ell)} \big) \trans  \vec (\boldsymbol\Delta).
\end{align*}
Applying the SVD of $\bDelta = \Ub \Db \Vb \trans$ yields that
\begin{align*}
    & \frac{1}{n} \sum_{\ell = 1} ^ n \sum_{r = 1} ^ p \vec(\boldsymbol\Delta) \trans   \vec\big(\mathbf{Y}_{(-r)}^{(\ell)} \big)   \    A_{r,\ell} (\boldsymbol\Theta)  \  \vec\big(\mathbf{Y}_{(-r)}^{(\ell)} \big) \trans  \vec(\boldsymbol\Delta) \\
    &= \frac{1}{n} \sum_{\ell = 1} ^ n \sum_{r = 1} ^ p \tr \Big(\sqrt{A_{r,\ell} (\boldsymbol\Theta)} \big( \mathbf{Y}_{(-r)}^{(\ell)} \big) \trans  \bDelta \Big) ^ 2 = \frac{1}{n} \sum_{\ell = 1} ^ n \sum_{r = 1} ^ p \tr \Big(\sqrt{A_{r,\ell}(\boldsymbol\Theta)}  \big( \mathbf{Y}_{(-r)}^{(\ell)} \big) \trans \Ub \Db \Vb \trans \Big) ^ 2 \\
    &= \frac{1}{n} \sum_{\ell = 1} ^ n \sum_{r = 1} ^ p \tr \Big(\sqrt{A_{r,\ell}(\boldsymbol\Theta)} \Vb \trans  \big( \mathbf{Y}_{(-r)}^{(\ell)} \big) \trans \Ub \Db \Big) ^ 2,
\end{align*}
since $A_{r,\ell}(\bTheta)$ is a scalar, and $\text{tr}(\Ab\Bb)$ = $\text{tr}(\Bb \Ab)$. Finally, we conclude that
\begin{align*}
   \vec(\boldsymbol\Delta) \trans   \widehat{\mathbf{H}}(\boldsymbol\Theta)  \vec(\boldsymbol\Delta) & = \frac{1}{n} \sum_{\ell = 1} ^ n \sum_{r = 1} ^ p \tr \Big( \big( \widetilde{\mathbf{Y}}_{(-r)}^{(\ell)} (\bTheta) \big) \trans \Db \Big) ^ 2 \\
    &= \vec (\mathbf{D}) \trans   \bigg\{ \frac{1}{n} \sum_{\ell = 1} ^ n \sum_{r = 1} ^ p \vec\big(\widetilde{\mathbf{Y}}_{(-r)}^{(\ell)} (\bTheta) \big) \vec \big(\widetilde{\mathbf{Y}}_{(-r)}^{(\ell)} (\bTheta) \big) \trans \bigg\}   \vec(\mathbf{D}),
\end{align*}
as $\tr (\Ab \trans \Bb) = \vec(\Ab) \trans \vec(\Bb)$.
\end{proof}

\section{Proof for the Rationality of Assumption} \label{app:E}

In this section, we prove Proposition \ref{prop:rationality_assumption} as one example where Assumption \ref{ass:reg_grad} has been rational.

\begin{proof}[Proof of Proposition \ref{prop:rationality_assumption}]
First, we compute each entry of the matrix $\bXi_\ell(\bTheta ^ *) = \Wb_{\ell}(\bTheta ^ *) \Wb_{\ell}(\bTheta ^ *) \trans$, which is given by the following expression 
\begin{equation} \label{equ:correlation_ising}
    \big [\bXi_\ell(\bTheta ^ *)\big]_{ij} =  \sum_{k \neq i,j} 4x_{i} ^ {(\ell)} x_j ^ {(\ell)} \big(B_{i,\ell} ^ * + B_{k, \ell} ^ *\big) \big(B_{j,\ell} ^ * + B_{k, \ell} ^ *  \big) + 4x_i (B_{i,\ell} ^ {*2} + B_{i,\ell} ^ * B_{j, \ell} ^ *)+4x_j (B_{j,\ell} ^ {*2} + B_{i,\ell} ^ * B_{j, \ell} ^ *)  
\end{equation}
if $i\neq j$ and $ \big [\bXi_\ell(\bTheta ^ *)\big]_{ii} = 4\sum_{k \neq i} \big(B_{i,\ell} ^ * + B_{k, \ell} ^ *\big) ^2 + 4 B_{i,\ell} ^{*2} $.
In this context, the correlation between the values on vertices in a graphical model is depicted by $\mathbb{E}[X_i X_j]$, where $X_i \in \{-1, +1\}$ for any arbitrary $i \in [p]$ representing the occurrence pattern for the $i$th EHR code in the graph $\mathcal{G}(\mathcal{V}, \mathcal{E})$ with $p$ codes of interest in total.

Our proof  involves finding a {lower} bound for the {diagonal} entries of $\mathbb{E} \big[\bXi_\ell(\bTheta ^*)\big]$ and an \textit{upper} bound for the {off-diagonal} entries of $\mathbb{E}\big[\bXi_\ell(\bTheta ^*)\big]$ using the closed-form expression presented in \eqref{equ:correlation_ising}, where $\big(X_i ^{(\ell)}\big)^2 = 1$ for any $i \in [p]$. We then show that the matrix $\mathbb{E}\big[\bXi_\ell(\bTheta ^*) \big]$ is diagonally dominant under mild regularization conditions imposed on the Ising model structure. Finally, we show Proposition \ref{prop:rationality_assumption} by applying the Gershgorin circle theorem to obtain an upper bound on the operator norm of $\mathbb{E}\big[\bXi_\ell(\bTheta ^ *)\big]$. By the binary Ising observation, we have
\begin{equation*}
  \big|Q_{i,\ell}(\bTheta^*)\big| = \Big|2\theta^*_{ii} x_i^{(\ell)}+ 2\sum_{j \neq i} \theta^*_{ji} x_i^{(\ell)}x_j^{(\ell)} \Big|\leq 2 \sum_{j = 1; j \neq i}^p \big|\theta^*_{ij}\big| \text{\ \ for any \ } i \in [p],\  \ell \in [n].
\end{equation*}
Since the maximum degree of $\mathcal{G}$ is $\tau$, there exists a lower bound for $\big\{\bXi_\ell(\bTheta ^ *)\big\}_{ii}$ such that
\begin{align}
  \nonumber  \big[\bXi_\ell(\bTheta ^ *)\big]_{ii} &= 4\sum_{k \neq i} \big(B_{i,\ell} ^ * + B_{k, \ell} ^ *\big) ^2 + 4 B_{i,\ell} ^{*2} \\
    &\geq\frac{16p}{\big(1 + e ^ {2 \sum_{j = 1; j \neq i} ^ p |\theta_{ij} ^ *|} \big)^2}  = \frac{16 p}{\big(1 + e ^ {2 \|\bTheta ^ *\|_{1, \infty}} \big)^2}. \label{eq:psi-low}
\end{align}

We need to study the correlation for the off-diagonal terms. Combining Theorem 1.1  in \cite{ding2023new} with Lemma C.2 and C.3 in \cite{naykov2018isingtest}, we have
\begin{equation} \label{equ:XsXt_noconnect}
    \mathbb{E}{[X_s X_t]} \leq \frac{\tau \tanh{(\theta_{\max})} ^ 2}{1 - (\tau - 1) \tanh{(\theta_{\max})}} \ \ \text{ when } \ \ (s,t) \notin \mathcal{E},
\end{equation}
where $\tau$ denote the maximum degree of the graph $\mathcal{G}$. And by Lemma 16 of \cite{nikola2021treeising}, we have 
\begin{equation} \label{equ:XsXt_connect}
     \mathbb{E}{[X_s X_t]} \leq \frac{\tau \tanh{(\theta_{\max})} ^ 2}{1 - (\tau - 1) \tanh{(\theta_{\max})}} + \tanh{(\theta_{\max})} \text{ for } (s,t) \in \mathcal{E}.
\end{equation}
By \eqref{equ:correlation_ising}, we have for any $i$,
\begin{align*}
  \sum_{j = 1; j \neq i} ^ p \mathbb{E}\Big[\big\{\bXi_\ell(\bTheta ^ *)\big\}_{ij} \Big] &\le \sum_{j = 1; j \neq i} ^ p 16p\mathbb{P}(X_iX_j=1) = \sum_{j = 1; j \neq i} ^ p 8p(\mathbb{E}[X_iX_j] +1) \\
  &\le 8p^2 \frac{\tau \tanh{(\theta_{\max})} ^ 2}{1 - (\tau - 1) \tanh{(\theta_{\max})}} + 8p^2\tanh{(\theta_{\max})}\\
  & \le 32C^2p\tau \leq \frac{1}{2} \mathbb{E} \Big[ \big\{ \bXi_{\ell} (\bTheta ^ *) \big\}_{ii} \Big],
\end{align*}
where the third inequality is due to  $\|\bTheta ^*\|_{1, \infty} \leq C/p$, $\theta_{\max} \leq \|\bTheta ^ *\|_{1, \infty}$ and $\tanh(x)\le x$ for $x\ge 0$, and the last inequality is due to \eqref{eq:psi-low}. By applying Gerschgorin's circle theorem \citep{gerschgorin1931disc}, we show that the matrix $\mathbb{E} \big[ \bXi_{\ell} (\bTheta ^ *) \big]$ has an operator norm of $O(p)$.
\end{proof}

\section{Supplementary Discussion on Experiment Configuration}
\label{app:hyperparam}


\subsection{Hyperparameters for DANIEL in Simulation Study} \label{app:hyperparam_sim}


\noindent\textbf{DANIEL:} The stopping criterion for DANIEL is based on the Frobenius norm of the difference between estimators across successive iterations. Specifically, we compute 
\begin{equation*}
    \delta_{\gamma} = \big\| \widehat\bTheta_{\gamma} - \widehat\bTheta_{\gamma-1} \big \|_{\rm F}, \text{ where } \widehat\bTheta_{\gamma} = \Ub_{\gamma}\Vb_{\gamma}^\top \text{ for the } \gamma\text{th iteration},
\end{equation*}
and terminate the iteration when either $\delta_{\gamma} < 10^{-5}$ or the maximum number of iterations, $\Gamma = 50$, is reached. The step size $\eta$ is selected via grid search from the range $[0.1, 0.3]$, depending on the $(n, p)$ setup. The algorithm is initialized using a rank-$d$ SVD of a preliminary solution to convex optimization problem~(\ref{equ:question}) with the local data at the hub institution $\mathcal{S}_1$, obtained by applying 5 gradient updates starting from a $p \times p$ zero matrix. The time required in initialization is included in the reported computation time for DANIEL.

\subsection{Detailed Explanation for Baseline Methods in Simulation Study} \label{app:hyperparam_base}


\noindent\textbf{Baseline Methods:} For all four baseline methods, the stopping criterion and step size $\eta$ are aligned with those of DANIEL. Additional method-specific hyper-parameters are set as follows: for SV-Soft and SV-Hard, the singular value threshold $\tau$ is fixed at $10^{-3}$; for SV-Topd, the number of retained singular values is set to match the true rank $d$ of the underlying matrix $\bTheta^*$.

\noindent\textbf{SV-Soft}:  After each gradient updates $\bTheta_{\gamma}^{(0)} = \bTheta_{\gamma -1} - \eta \nabla_{\bTheta} \mathcal{L}(\bTheta) \mid_{\bTheta = \bTheta_{\gamma-1}}$, denote the SVD of $\bTheta_{\gamma}^{(0)}=\Ub_{\gamma}^{(0)} \boldsymbol{\Sigma}_{\gamma}^{(0)} \big( \Vb_{\gamma}^{(0)} \big)\trans$, where $\boldsymbol{\Sigma}_{\gamma}^{(0)} = \diag(\sigma_{\gamma,1},\ldots,\sigma_{\gamma,p})$. A \textit{soft} thresholding is then applied in each iteration as
\begin{equation*}
    \boldsymbol{\Sigma}_{\gamma}^{\text{SV-Soft}} = \diag \big( (\sigma_{\gamma,1}-\tau)_+, \ldots, (\sigma_{\gamma,p}-\tau)_+ \big), \text{ and } \bTheta_{\gamma} = \Ub_{\gamma}^{(0)} \boldsymbol{\Sigma}_{\gamma}^{\text{SV-Soft}} \big( \Vb_{\gamma}^{(0)} \big)\trans.
\end{equation*}

\noindent\textbf{SV-Hard}: For \textit{hard} thresholding, we set $\tau$ as a hard threshold to the eigenvalues. In each iteration, we have
\begin{equation*}
    \boldsymbol{\Sigma}_{\gamma}^{\text{SV-Hard}} = \diag \big( \sigma_{\gamma,1} \mathbf{1}_{\sigma_{\gamma,1} > \tau}, \ldots, \sigma_{\gamma,p} \mathbf{1}_{\sigma_{\gamma,p} > \tau} \big), \text{ and } \bTheta_{\gamma} = \Ub_{\gamma}^{(0)} \boldsymbol{\Sigma}_{\gamma}^{\text{SV-Hard}} \big( \Vb_{\gamma}^{(0)} \big)\trans.
\end{equation*}

\noindent\textbf{SV-Topd}: In this scenario, we set 
\begin{equation*}
    \boldsymbol{\Sigma}_{\gamma}^{\text{SV-Topd}} = \diag \big( \sigma_{\gamma,1} , \ldots, \sigma_{\gamma,r} , 0, \ldots, 0 \big), \text{ and } \bTheta_{\gamma} = \Ub_{\gamma}^{(0)} \boldsymbol{\Sigma}_{\gamma}^{\text{SV-Topd}} \big( \Vb_{\gamma}^{(0)} \big)\trans.
\end{equation*}

\noindent\textbf{PSD-Cvx}: In this scenario, we consider the positive semi-definite (PSD) constraints \citep{boyd2004convex}, which yields the iterative update
\begin{equation*}
    \boldsymbol{\Sigma}_{\gamma}^{\text{PSD-Cvx}} = \diag \big( (\sigma_{\gamma,1})_+ , \ldots, (\sigma_{\gamma,p})_+ \big), \text{ and } \bTheta_{\gamma} = \Ub_{\gamma}^{(0)} \boldsymbol{\Sigma}_{\gamma}^{\text{PSD-Cvx}} \big( \Vb_{\gamma}^{(0)} \big)\trans.
\end{equation*}

\noindent As for the distributed settings of these baseline methods, we propose the loss 
\begin{equation*}
    \widetilde{\mathcal{L}}^{(b)} (\bTheta; \widehat\bTheta_0) = \mathcal{L}_1(\bTheta)  + \big\langle \nabla \mathcal{L} (\widehat\bTheta_0) - \nabla \mathcal{L}_1 (\widehat\bTheta_0), \bTheta  \big \rangle.
\end{equation*}
For each gradient, we update $\bTheta_\gamma^{(0)} = \bTheta_{\gamma -1} - \eta \nabla_{\bTheta} \widetilde{\mathcal{L}}^{(b)}(\bTheta) \mid_{\bTheta = \bTheta_{\gamma-1}}$, followed by a matrix update after SVD of $\bTheta_\gamma^{(0)}$.

\subsection{Hyperparameters for DANIEL's Implementation to EHR Study} \label{sec:Exp_config}

\noindent\textbf{DANIEL:} For the real-world EHR application, we set the maximum number of iterations $\Gamma = 200$, the step size $\eta = 0.01$, and the latent rank $d = 200$. These hyperparameters are selected to maximize the performance in identifying known relationship pairs. When applying DANIEL to data from both UPMC and MGB systems, we set the number of institutions to $m = 2$ to reflect the actual distributed data layout.

\noindent \textbf{Baseline Methods:} For PSD-Cvx, SV-Hard, and SV-Topd, the number of iterations and step size are kept the same as those used in DANIEL. The singular value threshold $\tau$ for SV-Hard is set to $10^{-4}$, while SV-Topd retains the top $d = 200$ singular values to maintain consistency with DANIEL. For BERT-based algorithms, the embedding dimension is set to $768$, following the original BERT specification \citep{devlin2018bert}.

\noindent \textbf{Negative Samples for Known-relationship Pairs:} For each relationship type, we first sample the same number of \textit{positive pairs} ($\text{feature}_j$, $\text{feature}_k$) with known relationship and random \textit{negative pairs} ($\text{feature}_r$, $\text{feature}_s$). We then compute the AUC by comparing $\widehat\theta_{jk}$ and $\widehat\theta_{rs}$ for DANIEL-derived parameter matrix $\widehat{\bTheta}$ and $\widehat{\bTheta}_{\mathrm{base}}$ from SVD-based baselines. For the BERT-based baselines, the relationship between clinical features is quantified using cosine similarity between their corresponding embedding vectors.

\begin{table}[]
    \centering
    \caption{Number of pairs with each category of known relationship.}
    \label{tab:app_n_pairs}
    \begin{tabular}{c|cc|ccc}
        \toprule
        \multirow{2}*{Relation types} & \multicolumn{2}{c|}{\textbf{Similar} pairs} & \multicolumn{3}{c}{\textbf{Related} pairs} \\
         & {\tt PheCode} Hierachy & {\tt RxNorm} Hierachy & related to & mapped to & classifies \\
        \hline
        Number of Pairs & 4352 & 7988 & 1755 & 1282 & 4667 \\
        \bottomrule
    \end{tabular}
\end{table}

\noindent \textbf{Textual Descriptions of EHR Concepts:} For each clinical code, we obtain the corresponding textual description from publicly available sources. For example, {\tt PheCode:199} is described as ``Neoplasm of uncertain behavior" and {\tt CCS:227} is described as ``Consultation, evaluation, and preventative care".

\noindent \textbf{Cross-validation:} 
To prevent data leakage, patients are uniformly and randomly partitioned into 10 folds. For each fold, DANIEL is trained on the remaining 9 folds to obtain the estimator $\widehat\bTheta$, and the resulting model is then used to predict the labels in the held-out fold. We repeat the process across all 10 folds to complete cross-validation for DANIEL-derived $\widehat\bTheta$, as well as $\widehat{\bTheta}_{\mathrm{base}}$ from SVD-based baselines, in an \textit{unsupervised} configuration. For the BERT-based baselines, they only produce feature-level embeddings and do not directly yield label predictions. To enable a fair comparison, we first construct patient-level embeddings for each patient in each time period by computing $\widehat\yb = \widehat\Ub_{-i}\trans \xb_{-i}$, where $\widehat\Ub_{-i}$ is the generated embeddings for all features excluding the target code {\tt PheCode:290.1}, and $\xb_{-i}$ is the corresponding feature-occurrence vector for that patient with {\tt PheCode:290.1} excluded. We then train a decision tree classifier \citep{Hastie2009Tree} on the embeddings based on the $\widehat\yb$ from the 9 training folds and evaluate the performance on the held-out fold. We repeat this \textit{supervised} procedure across all 10 folds, in parallel with the unsupervised setting, for DANIEL and all tested baselines.

\noindent \textbf{Patient Phenotyping:} The UPMC cohort consists of $28,955$ patients and $505,626$ total records, from which $9,294$ pre-diagnosis records and $9,294$ post-diagnosis records are selected. The MGB cohort consists of $29,293$ patients and $101,093$ records, with $21,238$ pre-diagnosis records and $21,238$ post-diagnosis records selected. We train the decision tree classifier used for supervised label prediction in patient phenotyping using the \texttt{rpart} package in {\tt R} \citep{Therneau2025rpart}.

\noindent \textbf{Summary of Input and Output for Each Method}:
\begin{table}[H]
    \centering
    \caption{Summary of the input and output for each method.}
    \label{tab:app.sum.in_out}
    \begin{tabular}{p{4.78cm} p{8cm} c}
        \toprule
        Methods & Input & Output \\
        \hline
        DANIEL, PSD-Cvx, SV-Hard, SV-Topd & $p$-dimensional vectors $\xb^{(\ell)}$, where $\xb^{(\ell)}_j\in\{\pm 1\}$ indicates the presence or absence of feature & $\widehat\bTheta$, $\widehat{\Ub}$  \\
        \hline
        BERT, BioBERT, SapBERT & textual description of EHR concepts & $\widehat{\Ub}$ \\
        \bottomrule
    \end{tabular}
\end{table}

\newpage

\section{Supplemental Results}
\label{app:simu_result}

\begin{table}[H]
    \small
    \resizebox{\textwidth}{!}{
        \setlength{\tabcolsep}{2pt}
        \begin{tabular}{r|r|c|c|c|c|c|c|c}
            \toprule
            n & method & 0 & 0.1 & 0.2 & 0.3 & 0.4 & 0.5 & 0.6\\
            \hline
            \multirow{5}*{1000} & \textbf{DANIEL} & \textbf{0.83±0.18} & \textbf{0.89±0.17} & \textbf{0.96±0.15} & \textbf{1.08±0.13} & \textbf{1.32±0.19} & \textbf{2.02±0.61} & \textbf{4.09±2.09}\\
             & SV-Soft & 1.10±0.23 & 1.68±0.14 & 2.36±0.12 & 3.28±0.14 & 4.50±0.20 & 5.97±0.31 & 7.42±0.44\\
             & SV-Hard & 1.54±0.23 & 2.13±0.16 & 2.99±0.13 & 4.18±0.13 & 5.82±0.21 & 7.78±0.21 & 8.00±0.00\\
             & SV-Topd & 1.04±0.32 & 1.40±0.25 & 1.89±0.21 & 2.59±0.25 & 3.59±0.34 & 4.99±0.57 & 6.52±0.82\\
             & PSD-Cvx & 1.12±0.22 & 1.72±0.14 & 2.42±0.11 & 3.34±0.13 & 4.55±0.19 & 6.04±0.38 & 7.49±0.43\\
            \hline
            \multirow{5}*{5000} & \textbf{DANIEL} & \textbf{0.53±0.23} & \textbf{0.56±0.22} & \textbf{0.61±0.20} & \textbf{0.70±0.18} & \textbf{0.87±0.15} & \textbf{1.38±0.30} & \textbf{3.79±1.96}\\
             & SV-Soft & 0.79±0.31 & 1.13±0.22 & 1.66±0.16 & 2.60±0.12 & 3.90±0.15 & 5.85±0.27 & 7.95±0.12\\
             & SV-Hard & 1.18±0.34 & 1.45±0.27 & 2.08±0.19 & 3.24±0.15 & 4.86±0.17 & 7.46±0.29 & 8.00±0.00\\
             & SV-Topd & 0.85±0.39 & 1.02±0.33 & 1.36±0.28 & 1.99±0.22 & 2.99±0.26 & 4.71±0.50 & 7.10±0.74\\
             & PSD-Cvx & 0.80±0.30 & 1.17±0.21 & 1.72±0.15 & 2.67±0.11 & 3.95±0.15 & 5.97±0.26 & 7.98±0.07\\
            \hline
            \multirow{5}*{10000} & \textbf{DANIEL} & \textbf{0.40±0.21} & \textbf{0.45±0.19} & \textbf{0.49±0.18} & \textbf{0.58±0.16} & \textbf{0.76±0.13} & \textbf{1.25±0.27} & \textbf{3.96±1.88}\\
             & SV-Soft & 0.64±0.28 & 1.16±0.17 & 1.58±0.13 & 2.51±0.10 & 3.98±0.16 & 6.36±0.31 & 8.00±0.00\\
             & SV-Hard & 1.05±0.32 & 1.45±0.24 & 1.91±0.18 & 3.02±0.13 & 4.79±0.19 & 7.73±0.25 & 8.00±0.00\\
             & SV-Topd & 0.71±0.38 & 0.99±0.29 & 1.24±0.26 & 1.87±0.21 & 2.91±0.26 & 4.88±0.47 & 7.65±0.45\\
             & PSD-Cvx & 0.66±0.27 & 1.21±0.15 & 1.64±0.12 & 2.57±0.10 & 4.05±0.16 & 6.41±0.28 & 8.00±0.00\\
            \hline
            \multirow{5}*{20000} & \textbf{DANIEL} & \textbf{0.28±0.18} & \textbf{0.34±0.16} & \textbf{0.39±0.15} & \textbf{0.48±0.13} & \textbf{0.65±0.11} & \textbf{1.11±0.16} & \textbf{4.03±1.59}\\
             & SV-Soft & 0.50±0.25 & 1.10±0.13 & 1.58±0.10 & 2.6±0.10 & 4.33±0.17 & 7.32±0.35 & 8.00±0.00\\
             & SV-Hard & 0.92±0.31 & 1.28±0.22 & 1.79±0.16 & 2.94±0.13 & 4.92±0.19 & 8.00±0.02 & 8.00±0.00\\
             & SV-Topd & 0.58±0.37 & 0.89±0.25 & 1.15±0.21 & 1.75±0.16 & 2.92±0.26 & 5.31±0.51 & 7.99±0.07\\
             & PSD-Cvx & 0.52±0.24 & 1.16±0.12 & 1.64±0.09 & 2.67±0.11 & 4.38±0.17 & 7.40±0.33 & 8.00±0.00\\
            \bottomrule
        \end{tabular}
    }
    
    \caption{Supplementary simulation results: Estimation error (in the form of mean$\pm$sd) of $p=50$. Each column represents the number of split batches as $m=n^x$ and $x$ is shown in the first row.}
    \label{fig:app_simu_tab_err}
\end{table}

\begin{table}[H]
    \centering
    \resizebox{\textwidth}{!}{
        \setlength{\tabcolsep}{1pt}
        \begin{tabular}{l|c|c|c|c|c|c|c|c|c|c}
            \toprule
            method & 20 & 40 & 60 & 80 & 100 & 120 & 140 & 160 & 180 & 200\\
            \hline
            \textbf{DANIEL} & \textbf{0.12±0.12} & \textbf{0.25±0.13} & \textbf{0.39±0.20} & \textbf{0.50±0.27} & \textbf{0.58±0.25} & \textbf{0.81±0.56} & \textbf{1.02±0.75} & \textbf{1.28±0.98} & \textbf{1.62±1.03} & \textbf{1.87±1.09}\\
            SV-Soft & 0.52±0.36 & 0.97±0.62 & 1.41±0.90 & 1.81±1.18 & 2.19±1.48 & 2.61±1.76 & 3.00±2.04 & 3.38±2.30 & 3.77±2.53 & 4.15±2.81\\
            SV-Hard & 0.63±0.32 & 1.24±0.56 & 1.82±0.83 & 2.35±1.14 & 2.86±1.45 & 3.40±1.77 & 3.95±2.10 & 4.48±2.45 & 5.05±2.77 & 5.23±2.78\\
            SV-Topd & 0.38±0.26 & 0.79±0.42 & 1.17±0.55 & 1.47±0.71 & 1.72±0.86 & 2.07±0.99 & 2.33±1.21 & 2.63±1.39 & 2.95±1.57 & 3.25±1.77\\
            PSD-Cvx & 0.54±0.38 & 1.00±0.64 & 1.44±0.92 & 1.86±1.22 & 2.24±1.52 & 2.66±1.78 & 3.06±2.07 & 3.45±2.35 & 3.84±2.57 & 4.22±2.83\\
            \bottomrule
        \end{tabular}
    }
    \caption{Supplementary simulation results: Estimation error (in the form of mean$\pm$sd) of sample size $n=10,000$ and different dimensions $p$ (as shown in the first row).}
    \label{tab:app_simu_EvP}
\end{table}

\begin{figure}[H]
    \centering
    \includegraphics[width=0.7\linewidth]{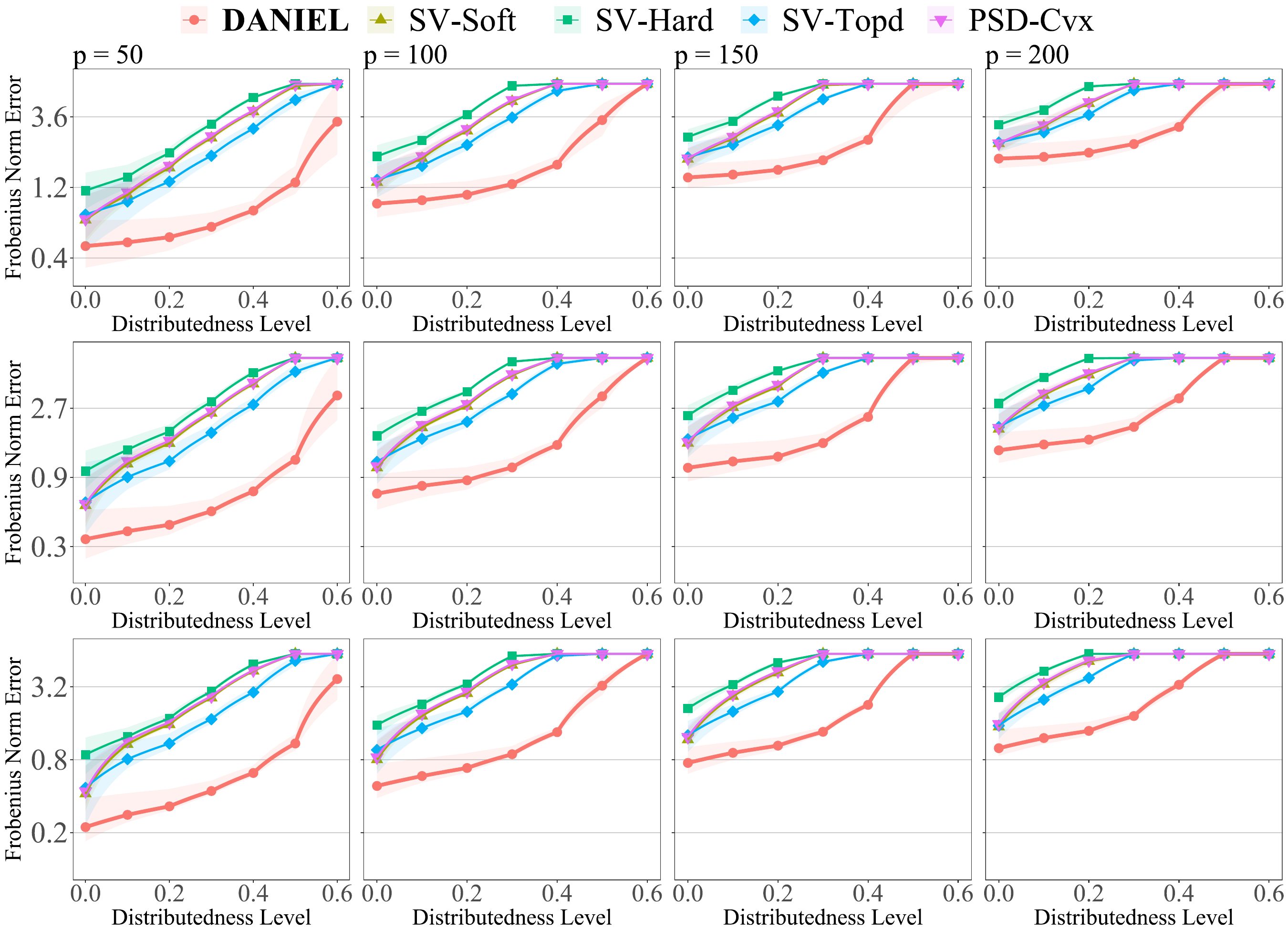}
    \includegraphics[width=0.7\linewidth]{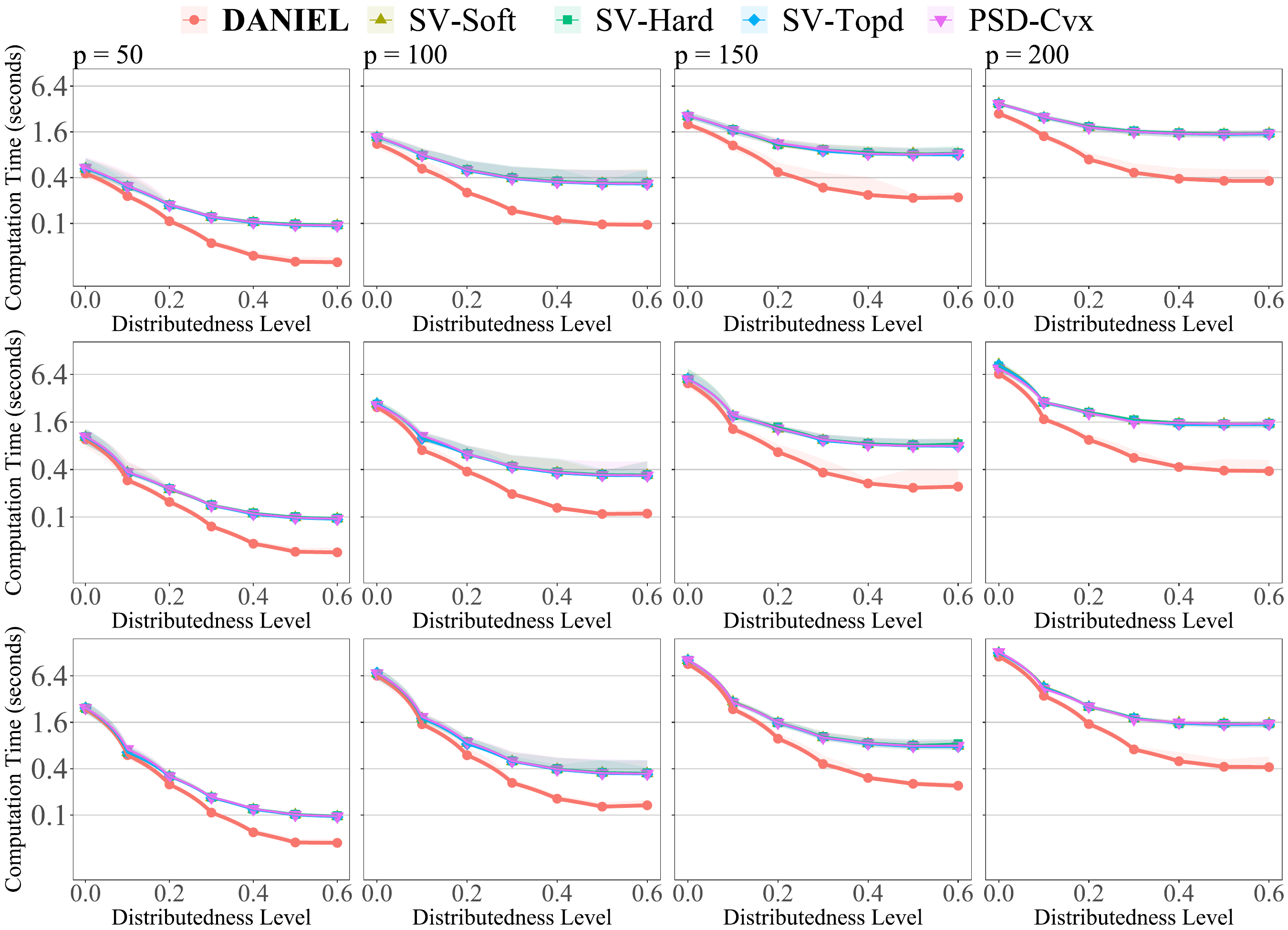}
    \caption{The trajectories of the $\|\cdot\|_{\text{F}}$-error and computation time across different feature dimensions $p$, with the distributedness level $x$ (where $m = \lfloor n^{x} \rfloor$) varying from 0 to 0.6. The total sample size is set at $n = 1,000$. The trajectories of DANIEL are shown in red, demonstrating superior performance and efficiency over baseline methods for distributedness levels $x < 0.5$ (i.e., $m = o(\sqrt{n})$). Vertically, lower $\|\cdot\|_{\text{F}}$-error and shorter computation time indicate better performance for any fixed $x$; horizontally, a flatter error trajectory as $x$ increases is preferred, as this indicates that the distributed estimators remain valid as compared with their centralized counterparts.}
    \label{fig:app_simu}
\end{figure}

\begin{table}[H]
    \small
    \resizebox{\textwidth}{!}{
        \setlength{\tabcolsep}{2pt}
        \begin{tabular}{r|r|c|c|c|c|c|c|c}
            \toprule
            n & method & 0 & 0.1 & 0.2 & 0.3 & 0.4 & 0.5 & 0.6\\
            \hline
            \multirow{5}*{1000} & \textbf{DANIEL} & \textbf{0.12±0.04} & \textbf{0.07±0.03} & \textbf{0.05±0.03} & \textbf{0.04±0.03} & \textbf{0.03±0.02} & \textbf{0.03±0.01} & \textbf{0.04±0.04}\\
            & SV-Soft & 0.20±0.07 & 0.15±0.05 & 0.13±0.05 & 0.12±0.05 & 0.12±0.06 & 0.11±0.05 & 0.11±0.06\\
            & SV-Hard & 0.19±0.06 & 0.15±0.05 & 0.13±0.06 & 0.12±0.05 & 0.11±0.05 & 0.11±0.04 & 0.11±0.05\\
            & SV-Topd & 0.19±0.06 & 0.15±0.06 & 0.13±0.06 & 0.12±0.05 & 0.11±0.04 & 0.11±0.06 & 0.10±0.05\\
            & PSD-Cvx & 0.19±0.05 & 0.16±0.06 & 0.13±0.07 & 0.13±0.07 & 0.12±0.06 & 0.12±0.06 & 0.11±0.05\\
            \hline
            \multirow{5}*{5000} & \textbf{DANIEL} & \textbf{0.49±0.10} & \textbf{0.25±0.06} & \textbf{0.11±0.04} & \textbf{0.06±0.03} & \textbf{0.04±0.02} & \textbf{0.04±0.03} & \textbf{0.04±0.02}\\
            & SV-Soft & 0.57±0.11 & 0.33±0.07 & 0.18±0.05 & 0.13±0.04 & 0.11±0.04 & 0.10±0.04 & 0.10±0.04\\
            & SV-Hard & 0.56±0.10 & 0.32±0.07 & 0.19±0.05 & 0.14±0.05 & 0.12±0.04 & 0.11±0.04 & 0.11±0.06\\
            & SV-Topd & 0.56±0.12 & 0.33±0.08 & 0.18±0.04 & 0.13±0.05 & 0.12±0.05 & 0.10±0.03 & 0.10±0.04\\
            & PSD-Cvx & 0.57±0.11 & 0.34±0.09 & 0.19±0.05 & 0.14±0.06 & 0.11±0.04 & 0.10±0.04 & 0.11±0.05\\
            \hline
            \multirow{5}*{10000} & \textbf{DANIEL} & \textbf{1.00±0.21} & \textbf{0.31±0.06} & \textbf{0.16±0.04} & \textbf{0.08±0.03} & \textbf{0.05±0.02} & \textbf{0.04±0.02} & \textbf{0.04±0.03}\\
            & SV-Soft & 1.07±0.21 & 0.39±0.07 & 0.24±0.06 & 0.15±0.05 & 0.12±0.04 & 0.11±0.04 & 0.11±0.04\\
            & SV-Hard & 1.09±0.21 & 0.39±0.07 & 0.24±0.05 & 0.15±0.04 & 0.12±0.04 & 0.11±0.05 & 0.11±0.04\\
            & SV-Topd & 1.08±0.23 & 0.38±0.07 & 0.24±0.06 & 0.15±0.05 & 0.12±0.04 & 0.10±0.02 & 0.10±0.04\\
            & PSD-Cvx & 1.09±0.22 & 0.39±0.07 & 0.24±0.05 & 0.16±0.06 & 0.12±0.04 & 0.11±0.04 & 0.10±0.04\\
            \hline
            \multirow{5}*{20000} & \textbf{DANIEL} & \textbf{2.42±0.35} & \textbf{0.63±0.09} & \textbf{0.26±0.04} & \textbf{0.11±0.02} & \textbf{0.06±0.01} & \textbf{0.05±0.00} & \textbf{0.05±0.02}\\
            & SV-Soft & 2.48±0.35 & 0.68±0.09 & 0.34±0.06 & 0.18±0.04 & 0.13±0.03 & 0.11±0.04 & 0.10±0.04\\
            & SV-Hard & 2.50±0.42 & 0.69±0.08 & 0.34±0.06 & 0.18±0.04 & 0.13±0.03 & 0.11±0.04 & 0.10±0.04\\
            & SV-Topd & 2.51±0.39 & 0.68±0.08 & 0.33±0.05 & 0.18±0.04 & 0.13±0.04 & 0.10±0.03 & 0.10±0.04\\
            & PSD-Cvx & 2.51±0.32 & 0.74±0.11 & 0.34±0.06 & 0.18±0.04 & 0.13±0.04 & 0.11±0.04 & 0.10±0.03\\
            \bottomrule
        \end{tabular}
    }

    \caption{Supplementary simulation results: Computation time (in the form of mean$\pm$sd) of $p$ = 50. Each column represents the number of split batches as $m=n^x$ and $x$ is shown in the first row.}
    \label{fig:app_simu_tab_time}
\end{table}



\bibliography{references}

\begin{thebibliography}{147}
\providecommand{\natexlab}[1]{#1}
\providecommand{\url}[1]{\texttt{#1}}
\expandafter\ifx\csname urlstyle\endcsname\relax
  \providecommand{\doi}[1]{doi: #1}\else
  \providecommand{\doi}{doi: \begingroup \urlstyle{rm}\Url}\fi

\bibitem[Ahuja et~al.(2020)Ahuja, Zhou, He, Sun, Castro, Gainer, Murphy, Hong,
  and Cai]{ahuja2020phenotyping}
Yuri Ahuja, Doudou Zhou, Zeling He, Jiehuan Sun, Victor~M. Castro, Vivian
  Gainer, Shawn~N. Murphy, Chuan Hong, and Tianxi Cai.
\newblock sure{LDA}: A multidisease automated phenotyping method for the
  electronic health record.
\newblock \emph{Journal of the American Medical Informatics Association},
  27\penalty0 (8):\penalty0 1235--1243, 2020.

\bibitem[Allyn et~al.(2017)Allyn, Allou, Augustin, Philip, Martinet, Belghiti,
  Provenchere, Montravers, and Ferdynus]{allyn2017comparison}
Jerome Allyn, Nicolas Allou, Pascal Augustin, Ivan Philip, Olivier Martinet,
  Myriem Belghiti, Sophie Provenchere, Philippe Montravers, and Cyril Ferdynus.
\newblock A comparison of a machine learning model with euroscore ii in
  predicting mortality after elective cardiac surgery: a decision curve
  analysis.
\newblock \emph{PloS one}, 12\penalty0 (1):\penalty0 e0169772, 2017.

\bibitem[Alqahtani and Demirbas(2019)]{alqahtani2019performance}
Salem Alqahtani and Murat Demirbas.
\newblock Performance analysis and comparison of distributed machine learning
  systems.
\newblock \emph{arXiv preprint arXiv:1909.02061}, 2019.

\bibitem[Annas(2003)]{Annas2003HIPAA}
George~J. Annas.
\newblock Hipaa regulations - a new era of medical-record privacy?
\newblock \emph{The New England Journal of Medicine}, 348\penalty0
  (15):\penalty0 1486--1490, 2003.

\bibitem[Apathy and Holmgren(2020)]{Apathy2020Policies}
Nate~C. Apathy and A~Jay Holmgren.
\newblock Opt-in consent policies: potential barriers to hospital health
  information exchange.
\newblock \emph{The American Journal of Managed Care}, 26\penalty0
  (1):\penalty0 e14--e20, 2020.

\bibitem[Asaria et~al.(2016)Asaria, Walker, Palmer, Gale, Shah, Abrams,
  Crowther, Manca, Timmis, Hemingway, et~al.]{asaria2016using}
Miqdad Asaria, Simon Walker, Stephen Palmer, Chris~P. Gale, Anoop~D. Shah,
  Keith~R. Abrams, Michael Crowther, Andrea Manca, Adam Timmis, Harry
  Hemingway, et~al.
\newblock Using electronic health records to predict costs and outcomes in
  stable coronary artery disease.
\newblock \emph{Heart}, 102\penalty0 (10):\penalty0 755--762, 2016.

\bibitem[Bandyopadhyay et~al.(2015)Bandyopadhyay, Wolfson, Vock,
  Vazquez-Benitez, Adomavicius, Elidrisi, Johnson, and
  O’Connor]{bandyopadhyay2015data}
Sunayan Bandyopadhyay, Julian Wolfson, David~M. Vock, Gabriela Vazquez-Benitez,
  Gediminas Adomavicius, Mohamed Elidrisi, Paul~E. Johnson, and Patrick~J.
  O’Connor.
\newblock Data mining for censored time-to-event data: a bayesian network model
  for predicting cardiovascular risk from electronic health record data.
\newblock \emph{Data Mining and Knowledge Discovery}, 29:\penalty0 1033--1069,
  2015.

\bibitem[Baumgartner(2011)]{baumgartner2011matholder}
Bernhard Baumgartner.
\newblock An inequality for the trace of matrix products, using absolute
  values.
\newblock \emph{arXiv preprint arXiv:1106.6189v2}, 2011.

\bibitem[Bean et~al.(2017)Bean, Wu, Iqbal, Dzahini, Ibrahim, Broadbent,
  Stewart, and Dobson]{bean2017knowledge}
Daniel~M. Bean, Honghan Wu, Ehtesham Iqbal, Olubanke Dzahini, Zina~M. Ibrahim,
  Matthew Broadbent, Robert Stewart, and Richard~J.B. Dobson.
\newblock Knowledge graph prediction of unknown adverse drug reactions and
  validation in electronic health records.
\newblock \emph{Scientific reports}, 7\penalty0 (1):\penalty0 16416, 2017.

\bibitem[Bennett(2012)]{bennett2012utilizing}
Casey~C. Bennett.
\newblock Utilizing rxnorm to support practical computing applications:
  capturing medication history in live electronic health records.
\newblock \emph{Journal of Biomedical Informatics}, 45\penalty0 (4):\penalty0
  634--641, 2012.

\bibitem[Birks and Harvey(2018)]{birks2018donepezil}
Jacqueline~S. Birks and Richard~J. Harvey.
\newblock Donepezil for dementia due to {A}lzheimer's disease.
\newblock \emph{Cochrane Database of systematic reviews}, 6\penalty0
  (6):\penalty0 CD001190, 2018.

\bibitem[Blinov and Kokh(2021)]{blinov2021patient}
Pavel Blinov and Vladimir Kokh.
\newblock Patient embeddings in healthcare and insurance applications.
\newblock \emph{arXiv preprint arXiv:2107.03913}, 2021.

\bibitem[Bodenreider(2004)]{Bodenreider2004UMLS}
Olivier Bodenreider.
\newblock The unified medical language system (umls): integrating biomedical
  terminology.
\newblock \emph{Nucleic Acids Research}, 32:\penalty0 D267--D270, 2004.

\bibitem[Boyd and Vandenberghe(2004)]{boyd2004convex}
Stephen Boyd and Lieven Vandenberghe.
\newblock \emph{Convex optimization}.
\newblock Cambridge university press, 2004.

\bibitem[Br{\"a}mer(1988)]{bramer1988international}
Gerlind~R. Br{\"a}mer.
\newblock International statistical classification of diseases and related
  health problems. {T}enth revision.
\newblock \emph{World health statistics quarterly. Rapport trimestriel de
  statistiques sanitaires mondiales}, 41\penalty0 (1):\penalty0 32--36, 1988.

\bibitem[Bresler(2015)]{bresler2015efficiently}
Guy Bresler.
\newblock Efficiently learning ising models on arbitrary graphs.
\newblock In \emph{Proceedings of the forty-seventh annual ACM symposium on
  Theory of computing}, pages 771--782, 2015.

\bibitem[Cai et~al.(2010)Cai, Cand{\`e}s, and Shen]{cai2010singular}
Jian-Feng Cai, Emmanuel~J Cand{\`e}s, and Zuowei Shen.
\newblock A singular value thresholding algorithm for matrix completion.
\newblock \emph{SIAM Journal on optimization}, 20\penalty0 (4):\penalty0
  1956--1982, 2010.

\bibitem[Carvalho et~al.(2023)Carvalho, Oliveira, and
  Pesquita]{carvalho2023knowledge}
Ricardo~M.S. Carvalho, Daniela Oliveira, and Catia Pesquita.
\newblock Knowledge graph embeddings for icu readmission prediction.
\newblock \emph{BMC Medical Informatics and Decision Making}, 23\penalty0
  (1):\penalty0 12, 2023.

\bibitem[Chandrasekaran et~al.(2010)Chandrasekaran, Parrilo, and
  Willsky]{chandrasekaran2010latent}
Venkat Chandrasekaran, Pablo~A. Parrilo, and Alan~S. Willsky.
\newblock Latent variable graphical model selection via convex optimization.
\newblock In \emph{2010 48th Annual Allerton Conference on Communication,
  Control, and Computing (Allerton)}, pages 1610--1613. IEEE, 2010.

\bibitem[Chen and Chi(2018)]{chen2018harnessing}
Yudong Chen and Yuejie Chi.
\newblock Harnessing structures in big data via guaranteed low-rank matrix
  estimation: Recent theory and fast algorithms via convex and nonconvex
  optimization.
\newblock \emph{IEEE Signal Processing Magazine}, 35\penalty0 (4):\penalty0
  14--31, 2018.

\bibitem[Chen et~al.(2024)Chen, Bhadra, and Chakraborty]{chen2024likelihood}
Yujie Chen, Anindya Bhadra, and Antik Chakraborty.
\newblock Likelihood based inference in fully and partially observed
  exponential family graphical models with intractable normalizing constants.
\newblock \emph{arXiv preprint arXiv:2404.17763}, 2024.

\bibitem[Chen et~al.(2020)Chen, Chi, Fan, Ma, and Yan]{chen2020noisy}
Yuxin Chen, Yuejie Chi, Jianqing Fan, Cong Ma, and Yuling Yan.
\newblock Noisy matrix completion: Understanding statistical guarantees for
  convex relaxation via nonconvex optimization.
\newblock \emph{SIAM journal on optimization}, 30\penalty0 (4):\penalty0
  3098--3121, 2020.

\bibitem[Cheng et~al.(2014)Cheng, Levina, Wang, and Zhu]{cheng2014sparse}
Jie Cheng, Elizaveta Levina, Pei Wang, and Ji~Zhu.
\newblock A sparse ising model with covariates.
\newblock \emph{Biometrics}, 70\penalty0 (4):\penalty0 943--953, 2014.

\bibitem[Chi et~al.(2019)Chi, Lu, and Chen]{chi2019nonconvex}
Yuejie Chi, Yue~M. Lu, and Yuxin Chen.
\newblock Nonconvex optimization meets low-rank matrix factorization: An
  overview.
\newblock \emph{IEEE Transactions on Signal Processing}, 67\penalty0
  (20):\penalty0 5239--5269, 2019.

\bibitem[Choi et~al.(2017)Choi, Schuetz, Stewart, and Sun]{choi2017using}
Edward Choi, Andy Schuetz, Walter~F. Stewart, and Jimeng Sun.
\newblock Using recurrent neural network models for early detection of heart
  failure onset.
\newblock \emph{Journal of the American Medical Informatics Association},
  24\penalty0 (2):\penalty0 361--370, 2017.

\bibitem[Chow and Liu(1968)]{chow1968approximating}
Chao-Kong Chow and Cong Liu.
\newblock Approximating discrete probability distributions with dependence
  trees.
\newblock \emph{IEEE transactions on Information Theory}, 14\penalty0
  (3):\penalty0 462--467, 1968.

\bibitem[Christopoulou et~al.(2020)Christopoulou, Tran, Sahu, Miwa, and
  Ananiadou]{christopoulou2020adverse}
Fenia Christopoulou, Thy~T. Tran, Sunil~K. Sahu, Makoto Miwa, and Sophia
  Ananiadou.
\newblock Adverse drug events and medication relation extraction in electronic
  health records with ensemble deep learning methods.
\newblock \emph{Journal of the American Medical Informatics Association},
  27\penalty0 (1):\penalty0 39--46, 2020.

\bibitem[Danilova et~al.(2022)Danilova, Dvurechensky, Gasnikov, Gorbunov,
  Guminov, Kamzolov, and Shibaev]{danilova2022recent}
Marina Danilova, Pavel Dvurechensky, Alexander Gasnikov, Eduard Gorbunov,
  Sergey Guminov, Dmitry Kamzolov, and Innokentiy Shibaev.
\newblock Recent theoretical advances in non-convex optimization.
\newblock In \emph{High-Dimensional Optimization and Probability: With a View
  Towards Data Science}, pages 79--163. Springer, 2022.

\bibitem[Dash et~al.(2019)Dash, Shakyawar, Sharma, and Kaushik]{Dash2019BigEHR}
Sabyasachi Dash, Sushil~K. Shakyawar, Mohit Sharma, and Sandeep Kaushik.
\newblock Big data in healthcare: management, analysis and future prospects.
\newblock \emph{Journal of Big Data}, 6\penalty0 (54):\penalty0 1--25, 2019.

\bibitem[De~Freitas et~al.(2021)De~Freitas, Johnson, Golden, Nadkarni, Dudley,
  Bottinger, Glicksberg, and Miotto]{de2021phe2vec}
Jessica~K. De~Freitas, Kipp~W. Johnson, Eddye Golden, Girish~N. Nadkarni,
  Joel~T. Dudley, Erwin~P. Bottinger, Benjamin~S. Glicksberg, and Riccardo
  Miotto.
\newblock Phe2vec: Automated disease phenotyping based on unsupervised
  embeddings from electronic health records.
\newblock \emph{Patterns}, 2\penalty0 (9), 2021.

\bibitem[Devlin et~al.(2019)Devlin, Chang, Lee, and Toutanova]{devlin2018bert}
Jacob Devlin, Ming{-}Wei Chang, Kenton Lee, and Kristina Toutanova.
\newblock {BERT:} pre-training of deep bidirectional transformers for language
  understanding.
\newblock In \emph{Proceedings of the 2019 Conference of the North American
  Chapter of the Association for Computational Linguistics: Human Language
  Technologies, {NAACL-HLT}}, pages 4171--4186, 2019.

\bibitem[Ding et~al.(2023)Ding, Song, and Sun]{ding2023new}
Jian Ding, Jian Song, and Rongfeng Sun.
\newblock A new correlation inequality for {I}sing models with external fields.
\newblock \emph{Probability Theory and Related Fields}, 186\penalty0
  (1-2):\penalty0 477--492, 2023.

\bibitem[Duan et~al.(2022)Duan, Ning, and Chen]{duan2022heterogeneity}
Rui Duan, Yang Ning, and Yong Chen.
\newblock Heterogeneity-aware and communication-efficient distributed
  statistical inference.
\newblock \emph{Biometrika}, 109\penalty0 (1):\penalty0 67--83, 2022.

\bibitem[Eckart and Young(1936)]{eckart1936approximation}
Carl Eckart and Gale Young.
\newblock The approximation of one matrix by another of lower rank.
\newblock \emph{Psychometrika}, 1\penalty0 (3):\penalty0 211--218, 1936.

\bibitem[Edge et~al.(2025)Edge, Trinh, Cheng, Bradley, Chao, Mody, Truitt,
  Metropolitansky, Ness, and Larson]{Edge2025GraphRAG}
Darren Edge, Ha~Trinh, Newman Cheng, Joshua Bradley, Alex Chao, Apurva Mody,
  Steven Truitt, Dasha Metropolitansky, Robert~Osazuwa Ness, and Jonathan
  Larson.
\newblock From local to global: A graph rag approach to query-focused
  summarization.
\newblock \emph{arXiv preprint arXiv:2404.16130}, 2025.
\newblock URL \url{https://arxiv.org/pdf/2404.16130}.

\bibitem[Efron(1988)]{efron1988logistic}
Bradley Efron.
\newblock Logistic regression, survival analysis, and the kaplan-meier curve.
\newblock \emph{Journal of the American statistical Association}, 83\penalty0
  (402):\penalty0 414--425, 1988.

\bibitem[Elixhauser(2009)]{elixhauser2009clinical}
Anne Elixhauser.
\newblock {Clinical Classifications Software (CCS)} 2009.
\newblock \emph{http://www.hcug-us.ahrq.gov/toolssoft-ware/ccs/ccs.jsp}, 2009.

\bibitem[Fan et~al.(2019)Fan, Gong, and Zhu]{jianqing2017nuclear}
Jianqing Fan, Wenyan Gong, and Ziwei Zhu.
\newblock Generalized high-dimensional trace regression via nuclear norm
  regularization.
\newblock \emph{Journal of Econometrics}, 212\penalty0 (1):\penalty0 177--202,
  2019.

\bibitem[Fan et~al.(2023)Fan, Guo, and Wang]{fan2023communication}
Jianqing Fan, Yongyi Guo, and Kaizheng Wang.
\newblock Communication-efficient accurate statistical estimation.
\newblock \emph{Journal of the American Statistical Association}, 118\penalty0
  (542):\penalty0 1000--1010, 2023.

\bibitem[Forrey et~al.(1996)Forrey, McDonald, DeMoor, Huff, Leavelle, Leland,
  Fiers, Charles, Griffin, Stalling, Tullis, Hutchins, and
  Baernziger]{Forrey1996LOINC}
Arden~W. Forrey, Clement~J. McDonald, Georges DeMoor, Stanley~M. Huff, Dennis
  Leavelle, Diane Leland, Tom Fiers, Linda Charles, Brian Griffin, Frank
  Stalling, Ann Tullis, Kathy Hutchins, and John Baernziger.
\newblock Logical observation identifier names and codes (loinc) database: a
  public use set of codes and names for electronic reporting of clinical
  laboratory test results.
\newblock \emph{Clinical Chemistry}, 42\penalty0 (1):\penalty0 81--90, 1996.

\bibitem[Fu et~al.(2020)Fu, Leung, Raulli, Kallmes, Kinsman, Nelson, Clark,
  Luetmer, Kingsbury, Kent, and Liu]{Fu2020Hetero}
Sunyang Fu, Lester~Y. Leung, Anne-Olivia Raulli, David~F. Kallmes, Kristin~A.
  Kinsman, Kristoff~B. Nelson, Michael~S. Clark, Patrick~H. Luetmer, Paul~R.
  Kingsbury, David~M. Kent, and Hongfang Liu.
\newblock Assessment of the impact of ehr heterogeneity for clinical research
  through a case study of silent brain infarction.
\newblock \emph{BMC Medical Informatics and Decision Making}, 20\penalty0
  (60):\penalty0 1--12, 2020.

\bibitem[Gan et~al.(2025)Gan, Zhou, Rush, Panickan, Ho, Ostrouchovm, Xu, Shen,
  Xiong, Greco, Hong, Bonzel, Wen, Costa, Cai, Begoli, Xia, Gaziano, Liao, Cho,
  Cai, and Lu]{Gan2025ARCH}
Ziming Gan, Doudou Zhou, Everett Rush, Vidul~A. Panickan, Yuk-Lam Ho, George
  Ostrouchovm, Zhiwei Xu, Shuting Shen, Xin Xiong, Kimberly~F. Greco, Chuan
  Hong, Clara-Lea Bonzel, Jun Wen, Lauren Costa, Tianrun Cai, Edmon Begoli,
  Zongqi Xia, J.~Michael Gaziano, P.~Liao, Katherine, Kelly Cho, Tianxi Cai,
  and Junwei Lu.
\newblock Arch: Large-scale knowledge graph via aggregated narrative codified
  health records analysis.
\newblock \emph{Journal of Biomedical Informatics}, 162:\penalty0 104761, 2025.

\bibitem[Gao et~al.(2022)Gao, Liu, Wang, Wang, Yan, and Zhang]{gao2022review}
Yuan Gao, Weidong Liu, Hansheng Wang, Xiaozhou Wang, Yibo Yan, and Riquan
  Zhang.
\newblock A review of distributed statistical inference.
\newblock \emph{Statistical Theory and Related Fields}, 6\penalty0
  (2):\penalty0 89--99, 2022.

\bibitem[Gavish and Donoho(2014)]{gavish2014optimal}
Matan Gavish and David~L Donoho.
\newblock The optimal hard threshold for singular values is 4/sqrt(3).
\newblock \emph{IEEE Transactions on Information Theory}, 60\penalty0
  (8):\penalty0 5040--5053, 2014.

\bibitem[Gerschgorin(1931)]{gerschgorin1931disc}
Semen~A. Gerschgorin.
\newblock Über die abgrenzung der eigenwerte einer matrix.
\newblock \emph{Izv. Akad. Nauk. USSR Otd. Fiz.-Mat. Nauk (in German)},
  6:\penalty0 749--754, 1931.

\bibitem[Glicksberg et~al.(2018)Glicksberg, Miotto, Johnson, Shameer, Li, Chen,
  and Dudley]{glicksberg2018automated}
Benjamin~S. Glicksberg, Riccardo Miotto, Kipp~W. Johnson, Khader Shameer,
  Li~Li, Rong Chen, and Joel~T. Dudley.
\newblock Automated disease cohort selection using word embeddings from
  electronic health records.
\newblock \emph{Pacific Symposium on Biocomputing}, 23:\penalty0 145--156,
  2018.

\bibitem[Glynn and Hoffman(2019)]{Glynn2019Hetero}
Earl~F. Glynn and Mark~A. Hoffman.
\newblock Heterogeneity introduced by ehr system implementation in a
  de-identified data resource from 100 non-affiliated organizations.
\newblock \emph{JAMIA Open}, 2\penalty0 (4):\penalty0 554--561, 2019.

\bibitem[Gorbunov et~al.(2021)Gorbunov, Burlachenko, Li, and
  Richt{\'a}rik]{gorbunov2021marina}
Eduard Gorbunov, Konstantin~P. Burlachenko, Zhize Li, and Peter Richt{\'a}rik.
\newblock {MARINA:} faster non-convex distributed learning with compression.
\newblock In \emph{International Conference on Machine Learning}, pages
  3788--3798. PMLR, 2021.

\bibitem[Gu et~al.(2021)Gu, Tinn, Cheng, Lucas, Usuyama, Liu, Naumann, Gao, and
  Poon]{pubmedbert}
Yu~Gu, Robert Tinn, Hao Cheng, Michael Lucas, Naoto Usuyama, Xiaodong Liu,
  Tristan Naumann, Jianfeng Gao, and Hoifung Poon.
\newblock Domain-specific language model pretraining for biomedical natural
  language processing.
\newblock \emph{ACM Transactions on Computing for Healthcare}, 3:\penalty0
  1--23, 2021.

\bibitem[Gupta et~al.(2022)Gupta, Poulain, Phan, Bunnell, and
  Beheshti]{gupta2022flexible}
Mehak Gupta, Raphael Poulain, Thao-Ly~T Phan, H~Timothy Bunnell, and
  Rahmatollah Beheshti.
\newblock Flexible-window predictions on electronic health records.
\newblock In \emph{Proceedings of the AAAI Conference on Artificial
  Intelligence}, volume~36, pages 12510--12516, 2022.

\bibitem[Han et~al.(2025)Han, Wang, Shomer, Guo, Ding, Lei, Halappanavar,
  Rossi, Mukherjee, Tang, He, Hua, Long, Zhao, Shah, Javari, Xia, and
  Tang]{Han2025GraphRAG}
Haoyu Han, Yu~Wang, Harry Shomer, Kai Guo, Jiayuan Ding, Yongjia Lei, Mahantesh
  Halappanavar, Ryan~A. Rossi, Subhabrata Mukherjee, Xianfeng Tang, Qi~He,
  Zhigang Hua, Bo~Long, Tong Zhao, Neil Shah, Amin Javari, Yinglong Xia, and
  Jiliang Tang.
\newblock Retrieval-augmented generation with graphs (graphrag).
\newblock \emph{arXiv preprint arXiv:2501.00309}, 2025.
\newblock URL \url{https://arxiv.org/pdf/2501.00309}.

\bibitem[Hartigan and Wong(1979)]{hartigan1979algorithm}
John~A Hartigan and Manchek~A Wong.
\newblock Algorithm as 136: A k-means clustering algorithm.
\newblock \emph{Journal of the royal statistical society. series c (applied
  statistics)}, 28\penalty0 (1):\penalty0 100--108, 1979.

\bibitem[Hastie et~al.(2009)Hastie, Tibshirani, and Friedman]{Hastie2009Tree}
Trevor Hastie, Robert Tibshirani, and Jerome Friedman.
\newblock \emph{The elements of statistical learning: data mining, inference
  and prediction}.
\newblock Springer, 2 edition, 2009.

\bibitem[Hoang et~al.(2025)Hoang, Zenker, Saha, Gerdtham, and
  Trepel]{hoang2025economic}
Men~Thi Hoang, Alina Zenker, Sanjib Saha, Ulf-G{\"o}ran Gerdtham, and Dominic
  Trepel.
\newblock Economic evaluations of strategies targeting pre-diagnosis dementia
  populations: Protocol for a systematic review.
\newblock \emph{HRB Open Research}, 8:\penalty0 11, 2025.

\bibitem[Hong et~al.(2021)Hong, Rush, Liu, Zhou, Sun, Sonabend, Castro,
  Schubert, Panickan, Cai, Costa, He, Link, Hauster, Gaziano, Murphy,
  Ostrouchov, Ho, Begoli, Lu, Cho, Liao, Cai, and Program]{Hong2021KESER}
Chuan Hong, Everett Rush, Molei Liu, Doudou Zhou, Jiehuan Sun, Aaron Sonabend,
  Victor~M. Castro, Petra Schubert, Vidul~A. Panickan, Tianrun Cai, Lauren
  Costa, Zeling He, Nicholas Link, Ronald Hauster, J.~Michael Gaziano, Shawn~N.
  Murphy, George Ostrouchov, Yuk-Lam Ho, Edmon Begoli, Junwei Lu, Kelly Cho,
  Katherine~P. Liao, Tianxi Cai, and VA~Million~Veteran Program.
\newblock Clinical knowledge extraction via sparse embedding regression (keser)
  with multi-center large scale electronic health record data.
\newblock \emph{npj Digital Medicine}, 4\penalty0 (151):\penalty0 1--11, 2021.
\newblock \doi{https://doi.org/10.1038/s41746-021-00519-z}.

\bibitem[Hou et~al.(2024)Hou, Wen, Bhattacharya, Wang, Sweet, Elfiky, Wang,
  Srivastava, Lu, Turzhitsky, McKay, Jayram, Sundaram, Cai, and
  Cai]{Hou2024RCC}
Jue Hou, Jun Wen, Rituparna Bhattacharya, Zebin Wang, Sara~M. Sweet, Aymen
  Elfiky, Linshanshan Wang, Ratnam Srivastava, Junwei Lu, Vladimir Turzhitsky,
  Rana~R. McKay, Gautam Jayram, Murali Sundaram, Tianrun Cai, and Tianxi Cai.
\newblock A deep learning approach utilizing the electronic health record (ehr)
  to identify cancer recurrence in renal cell carcinoma (rcc).
\newblock \emph{Annals of Oncology}, 35:\penalty0 S1027, 2024.

\bibitem[Houtchens et~al.(1997)Houtchens, Richert, Sami, and
  Rose]{houtchens1997open}
Maria~K. Houtchens, John~R. Richert, Arif Sami, and John~W. Rose.
\newblock Open label gabapentin treatment for pain in multiple sclerosis.
\newblock \emph{Multiple Sclerosis Journal}, 3\penalty0 (4):\penalty0 250--253,
  1997.

\bibitem[Huang et~al.(2021)Huang, Wang, Zhang, Liu, Fei, Wei, and
  Chen]{huang2021patient}
Yanqun Huang, Ni~Wang, Zhiqiang Zhang, Honglei Liu, Xiaolu Fei, Lan Wei, and
  Hui Chen.
\newblock Patient representation from structured electronic medical records
  based on embedding technique: development and validation study.
\newblock \emph{JMIR Medical Informatics}, 9\penalty0 (7):\penalty0 e19905,
  2021.

\bibitem[Jordan et~al.(2019)Jordan, Lee, and Yang]{jordan2019dist}
Michael~I. Jordan, Jason~D. Lee, and Yun Yang.
\newblock Communication-efficient distributed statistical inference.
\newblock \emph{Journal of the American Statistical Association}, 114\penalty0
  (526):\penalty0 668--681, 2019.

\bibitem[Kaewprag et~al.(2017)Kaewprag, Newton, Vermillion, Hyun, Huang, and
  Machiraju]{kaewprag2017predictive}
Pacharmon Kaewprag, Cheryl Newton, Brenda Vermillion, Sookyung Hyun, Kun Huang,
  and Raghu Machiraju.
\newblock Predictive models for pressure ulcers from intensive care unit
  electronic health records using bayesian networks.
\newblock \emph{BMC medical informatics and decision making}, 17:\penalty0
  81--91, 2017.

\bibitem[Kandiros et~al.(2023)Kandiros, Daskalakis, Dagan, and
  Choo]{kandiros2023learning}
Vardis Kandiros, Constantinos Daskalakis, Yuval Dagan, and Davin Choo.
\newblock Learning and testing latent-tree ising models efficiently.
\newblock In \emph{The Thirty Sixth Annual Conference on Learning Theory},
  pages 1666--1729. PMLR, 2023.

\bibitem[Kenner et~al.(2007)Kenner, Menon, and Elliott]{kenner2007multiple}
Meghan Kenner, Uma Menon, and Debra~G Elliott.
\newblock Multiple sclerosis as a painful disease.
\newblock \emph{International review of neurobiology}, 79:\penalty0 303--321,
  2007.

\bibitem[Khan(2024)]{khan2024exploring}
Kamran Khan.
\newblock \emph{Exploring factors associated with dementia diagnosis and
  post-diagnosis outcomes in the Electronic Health Care Record data}.
\newblock PhD thesis, UCL (University College London), 2024.

\bibitem[Kirby et~al.(2016)Kirby, Speltz, Rasmussen, Basford, Gottesman,
  Peissig, Pacheco, Tromp, Pathak, Carrell, Ellis, Lingren, Thompson, Savova,
  Haines, Roden, Harris, and Denny]{kirby2016phekb}
Jacqueline~C. Kirby, Peter Speltz, Luke~V. Rasmussen, Melissa Basford, Omri
  Gottesman, Peggy~L. Peissig, Jennifer~A. Pacheco, Gerard Tromp, Jyotishman
  Pathak, David~S. Carrell, Stephen~B. Ellis, Todd Lingren, Will~K. Thompson,
  Guergana Savova, Jonathan Haines, Dan~M. Roden, Paul~A. Harris, and Joshua~C.
  Denny.
\newblock {PheKB}: a catalog and workflow for creating electronic phenotype
  algorithms for transportability.
\newblock \emph{Journal of the American Medical Informatics Association},
  23\penalty0 (6):\penalty0 1046--1052, 2016.

\bibitem[Koller and Friedman(2009)]{koller2009probabilistic}
Daphne Koller and Nir Friedman.
\newblock \emph{Probabilistic graphical models: principles and techniques}.
\newblock MIT press, 2009.

\bibitem[Lee et~al.(2020)Lee, Yoon, Kim, Kim, Kim, So, and
  Kang]{lee2020biobert}
Jinhyuk Lee, Wonjin Yoon, Sungdong Kim, Donghyeon Kim, Sunkyu Kim, Chan~Ho So,
  and Jaewoo Kang.
\newblock {BioBERT: a pre-trained biomedical language representation model for
  biomedical text mining}.
\newblock \emph{Bioinformatics}, 36\penalty0 (4):\penalty0 1234--1240, 2020.

\bibitem[Lei et~al.(2019)Lei, Tang, Li, and Wu]{lei2019using}
Dajiang Lei, Jianyang Tang, Zhixing Li, and Yu~Wu.
\newblock Using low-rank approximations to speed up kernel logistic regression
  algorithm.
\newblock \emph{IEEE Access}, 7:\penalty0 84242--84252, 2019.

\bibitem[Li et~al.(2025)Li, Zakka, Booth, Rigny, Ray, Cortina-Borja, Barnaghi,
  and Sebire]{li2025discovering}
Jingteng Li, Kimberley~R Zakka, John Booth, Louise Rigny, Samiran Ray, Mario
  Cortina-Borja, Payam Barnaghi, and Neil Sebire.
\newblock Discovering patient groups in sequential electronic healthcare data
  using unsupervised representation learning.
\newblock \emph{BMC Medical Informatics and Decision Making}, 25\penalty0
  (1):\penalty0 45, 2025.

\bibitem[Li et~al.(2019{\natexlab{a}})Li, Zhu, and Tang]{li2019non}
Qiuwei Li, Zhihui Zhu, and Gongguo Tang.
\newblock The non-convex geometry of low-rank matrix optimization.
\newblock \emph{Information and Inference: A Journal of the IMA}, 8\penalty0
  (1):\penalty0 51--96, 2019{\natexlab{a}}.

\bibitem[Li et~al.(2019{\natexlab{b}})Li, Ling, Strohmer, and Wei]{li2019rapid}
Xiaodong Li, Shuyang Ling, Thomas Strohmer, and Ke~Wei.
\newblock Rapid, robust, and reliable blind deconvolution via nonconvex
  optimization.
\newblock \emph{Applied and computational harmonic analysis}, 47\penalty0
  (3):\penalty0 893--934, 2019{\natexlab{b}}.

\bibitem[Li et~al.(2022)Li, Feng, Sui, Li, Song, Zhan, Cicconetti, Jin, Wang,
  Chan, and Wang]{Li2022Binary}
Yihan Li, Dai Feng, Yunxia Sui, Hong Li, Yanna Song, Tianyu Zhan, Greg
  Cicconetti, Man Jin, Hongwei Wang, Ivan Chan, and Xin Wang.
\newblock Analyzing longitudinal binary data in clinical studies.
\newblock \emph{Contemporary Clinical Trials}, 115:\penalty0 106717, 2022.
\newblock ISSN 1551-7144.

\bibitem[Li et~al.(2019{\natexlab{c}})Li, Roberts, Jiang, and
  Long]{li2019distributed}
Ziyi Li, Kirk Roberts, Xiaoqian Jiang, and Qi~Long.
\newblock Distributed learning from multiple {EHR} databases: contextual
  embedding models for medical events.
\newblock \emph{Journal of Biomedical Informatics}, 92:\penalty0 103138,
  2019{\natexlab{c}}.

\bibitem[Lin et~al.(2022)Lin, He, Ze, Wang, Hua, Dupuy, Gupta, Soltanolkotabi,
  Ren, and Avestimehr]{lin2022fednlp}
Bill~Yuchen Lin, Chaoyang He, Zihang Ze, Hulin Wang, Yufen Hua, Christophe
  Dupuy, Rahul Gupta, Mahdi Soltanolkotabi, Xiang Ren, and Salman Avestimehr.
\newblock {F}ed{NLP}: Benchmarking federated learning methods for natural
  language processing tasks.
\newblock In \emph{Findings of the Association for Computational Linguistics:
  NAACL 2022}, pages 157--175. Association for Computational Linguistics, 2022.

\bibitem[Liu et~al.(2021{\natexlab{a}})Liu, Shareghi, Meng, Basaldella, and
  Collier]{liu2020self}
Fangyu Liu, Ehsan Shareghi, Zaiqiao Meng, Marco Basaldella, and Nigel Collier.
\newblock Self-alignment pretraining for biomedical entity representations.
\newblock In \emph{Proceedings of the 2021 Conference of the North American
  Chapter of the Association for Computational Linguistics: Human Language
  Technologies}, pages 4228--4238, 2021{\natexlab{a}}.

\bibitem[Liu et~al.(2024)Liu, Cai, and Li]{Liu2024Ontology}
Suqi Liu, Tianxi Cai, and Xiaoou Li.
\newblock Representation-enhanced neural knowledge integration with application
  to large-scale medical ontology learning.
\newblock \emph{arXiv preprint arXiv:2410.07454}, 2024.

\bibitem[Liu et~al.(2021{\natexlab{b}})Liu, Chubak, Hubbard, and
  Chen]{liu2021phenotyping}
Xiaokang Liu, Jessica Chubak, Rebecca~A. Hubbard, and Yong Chen.
\newblock {SAT: a surrogate-assisted two-wave case boosting sampling method,
  with application to EHR-based association studies}.
\newblock \emph{Journal of the American Medical Informatics Association},
  29:\penalty0 918--927, 2021{\natexlab{b}}.

\bibitem[Liu et~al.(2012)Liu, Sun, and Toh]{liu2012implementable}
Yong-Jin Liu, Defeng Sun, and Kim-Chuan Toh.
\newblock An implementable proximal point algorithmic framework for nuclear
  norm minimization.
\newblock \emph{Mathematical programming}, 133:\penalty0 399--436, 2012.

\bibitem[Liu et~al.(2015)Liu, Chen, Tang, Wang, Chen, Li, Wang, Deng, and
  Zhu]{liu2015automatic}
Zengjian Liu, Yangxin Chen, Buzhou Tang, Xiaolong Wang, Qingcai Chen, Haodi Li,
  Jingfeng Wang, Qiwen Deng, and Suisong Zhu.
\newblock Automatic de-identification of electronic medical records using
  token-level and character-level conditional random fields.
\newblock \emph{Journal of biomedical informatics}, 58:\penalty0 S47--S52,
  2015.

\bibitem[Loera-Valencia et~al.(2019)Loera-Valencia, Goikolea,
  Parrado-Fernandez, Merino-Serrais, and Maioli]{loera2019alterations}
Ra{\'u}l Loera-Valencia, Julen Goikolea, Cristina Parrado-Fernandez, Paula
  Merino-Serrais, and Silvia Maioli.
\newblock Alterations in cholesterol metabolism as a risk factor for developing
  {A}lzheimer’s disease: Potential novel targets for treatment.
\newblock \emph{The Journal of steroid biochemistry and molecular biology},
  190:\penalty0 104--114, 2019.

\bibitem[Lu et~al.(2019)Lu, De~Silva, Kafle, Cao, Dou, Nguyen, Sen, Hailpern,
  Reinwald, and Li]{lu2019learning}
Qiuhao Lu, Nisansa De~Silva, Sabin Kafle, Jiazhen Cao, Dejing Dou, Thien~Huu
  Nguyen, Prithviraj Sen, Brent Hailpern, Berthold Reinwald, and Yunyao Li.
\newblock Learning electronic health records through hyperbolic embedding of
  medical ontologies.
\newblock In \emph{Proceedings of the 10th acm international conference on
  bioinformatics, computational biology and health informatics}, pages
  338--346, 2019.

\bibitem[Manouchehrinia et~al.(2013)Manouchehrinia, Tench, Maxted, Bibani,
  Britton, and Constantinescu]{manouchehrinia2013tobacco}
Ali Manouchehrinia, Christopher~R. Tench, Jonathan Maxted, Rashid~H. Bibani,
  John Britton, and Cris~S. Constantinescu.
\newblock Tobacco smoking and disability progression in multiple sclerosis:
  United kingdom cohort study.
\newblock \emph{Brain}, 136\penalty0 (7):\penalty0 2298--2304, 2013.

\bibitem[Marinazzo et~al.(2014)Marinazzo, Pellicoro, Wu, Angelini, Cort{\'e}s,
  and Stramaglia]{marinazzo2014information}
Daniele Marinazzo, Mario Pellicoro, Guorong Wu, Leonardo Angelini, Jes{\'u}s~M.
  Cort{\'e}s, and Sebastiano Stramaglia.
\newblock Information transfer and criticality in the ising model on the human
  connectome.
\newblock \emph{PloS one}, 9\penalty0 (4):\penalty0 e93616, 2014.

\bibitem[Mello et~al.(2018)Mello, Adler-Milstein, Ding, and
  Savage]{Mello2018Legal}
Michelle~M. Mello, Julia Adler-Milstein, Karen~L. Ding, and Lucia Savage.
\newblock Legal barriers to the growth of health information exchange-boulders
  or pebbles?
\newblock \emph{Milbank Quarterly}, 96\penalty0 (1):\penalty0 110--143, 2018.

\bibitem[Meng et~al.(2014)Meng, Eriksson, and Hero]{meng2014learning}
Zhaoshi Meng, Brian Eriksson, and Al~Hero.
\newblock Learning latent variable gaussian graphical models.
\newblock In \emph{International Conference on Machine Learning}, pages
  1269--1277. PMLR, 2014.

\bibitem[Mia et~al.(2022)Mia, Shahriar, Valero, Sakib, Saha, Barek, Faruk,
  Goodman, Khan, and Ahamed]{Mia2022HIPAA}
Md~Raihan Mia, Hossain Shahriar, Maria Valero, Nazmus Sakib, Bilash Saha,
  Md~Abdul Barek, Md~Jobair~Hossain Faruk, Ben Goodman, Rumi~Ahmed Khan, and
  Sheikh~Iqbal Ahamed.
\newblock A comparative study on hipaa technical safeguards assessment of
  android mhealth applications.
\newblock \emph{Smart Health}, 26:\penalty0 100349, 2022.

\bibitem[Montanari and Pereira(2009)]{montanari2009graphical}
Andrea Montanari and Jose Pereira.
\newblock Which graphical models are difficult to learn?
\newblock \emph{Advances in neural information processing systems}, 22, 2009.

\bibitem[Munkhdalai et~al.(2018)Munkhdalai, Liu, and
  Yu]{munkhdalai2018clinical}
Tsendsuren Munkhdalai, Feifan Liu, and Hong Yu.
\newblock Clinical relation extraction toward drug safety surveillance using
  electronic health record narratives: classical learning versus deep learning.
\newblock \emph{JMIR Public Health and Surveillance}, 4\penalty0 (2):\penalty0
  e29, 2018.

\bibitem[Murali et~al.(2023)Murali, Gopakumar, Viswanathan, and
  Nedungadi]{murali2023towards}
Lino Murali, G.~Gopakumar, Daleesha~M. Viswanathan, and Prema Nedungadi.
\newblock Towards electronic health record-based medical knowledge graph
  construction, completion, and applications: A literature study.
\newblock \emph{Journal of biomedical informatics}, page 104403, 2023.

\bibitem[Nagy et~al.(2023)Nagy, Hegedűs, Sándor, Egedi, Mehmood, Saravanan,
  Lóki, and Ákos Kiss]{Nagy2023FedNLP}
Balázs Nagy, István Hegedűs, Noémi Sándor, Balázs Egedi, Haaris Mehmood,
  Karthikeyan Saravanan, Gábor Lóki, and Ákos Kiss.
\newblock Privacy-preserving federated learning and its application to natural
  language processing.
\newblock \emph{Knowledge-Based Systems}, 268:\penalty0 110475, 2023.

\bibitem[Nakase et~al.(2023)Nakase, Tatewaki, Thyreau, Odagiri, Tomita,
  Yamamoto, Takano, Muranaka, and Taki]{nakase2023impact}
Taizen Nakase, Yasuko Tatewaki, Benjamin Thyreau, Hayato Odagiri, Naoki Tomita,
  Shuzo Yamamoto, Yumi Takano, Michiho Muranaka, and Yasuyuki Taki.
\newblock Impact of atrial fibrillation on the cognitive decline in
  {A}lzheimer’s disease.
\newblock \emph{{A}lzheimer's Research \& Therapy}, 15\penalty0 (1):\penalty0
  15, 2023.

\bibitem[Negahban et~al.(2012)Negahban, Ravikumar, Wainwright, and
  Yu]{negahban2012RSC}
Sahand~N. Negahban, Pradeep Ravikumar, Martin~J. Wainwright, and Bin Yu.
\newblock A unified framework for high-dimensional analysis of {M}-estimators
  with decomposable regularizers.
\newblock \emph{Statistical Science}, 27\penalty0 (4):\penalty0 538--557, 2012.

\bibitem[Nesterov(2013)]{nestrov2013ineq}
Yurii~N. Nesterov.
\newblock \emph{Introductory Lectures on Convex Optimization}.
\newblock Springer US, 2013.

\bibitem[Neykov and Liu(2019)]{naykov2018isingtest}
Matey Neykov and Han Liu.
\newblock Property testing in high dimensional {I}sing models.
\newblock \emph{The Annals of Statistics}, 47:\penalty0 2472--2503, 2019.

\bibitem[Nguyen and Berg(2012)]{nguyen2012mean}
H~Chau Nguyen and Johannes Berg.
\newblock Mean-field theory for the inverse ising problem at low temperatures.
\newblock \emph{Physical review letters}, 109\penalty0 (5):\penalty0 050602,
  2012.

\bibitem[Nikolakakis et~al.(2021)Nikolakakis, Kalogerias, and
  Sarwate]{nikola2021treeising}
Konstantinos~E. Nikolakakis, Dionysios~S. Kalogerias, and Anand~D. Sarwate.
\newblock Predictive learning on hidden tree-structured {I}sing models.
\newblock \emph{Journal of Machine Learning Research}, 22:\penalty0 1--82,
  2021.

\bibitem[Nussbaum and Giesen(2019)]{nussbaum2019ising}
Frank Nussbaum and Joachim Giesen.
\newblock Ising models with latent conditional gaussian variables.
\newblock In \emph{Algorithmic Learning Theory}, pages 669--681. PMLR, 2019.

\bibitem[Organization et~al.(1978)]{world1978international}
World~Health Organization et~al.
\newblock \emph{International classification of diseases:[9th] ninth revision,
  basic tabulation list with alphabetic index}.
\newblock World Health Organization, 1978.

\bibitem[Park et~al.(2018)Park, Kyrillidis, Constantine, and
  Sanghavi]{Park2016lowrank}
Dohyung Park, Anastasios Kyrillidis, Caramanis Constantine, and Sujay Sanghavi.
\newblock Finding low-rank solutions to matrix problems, efficiently and
  provably.
\newblock \emph{HomeSIAM Journal on Imaging Sciences}, 11\penalty0
  (4):\penalty0 2165--2204, 2018.

\bibitem[Park et~al.(2024)Park, Lee, and Zhao]{park2022lowrank}
Seyoung Park, Eun~R. Lee, and Hongyu Zhao.
\newblock Low-rank regression models for multiple binary responses and their
  applications to cancer cell-line encyclopedia data.
\newblock \emph{Journal of the American Statistical Association}, 119\penalty0
  (545):\penalty0 202--216, 2024.

\bibitem[Perlin(2024)]{Perlin2024Law}
Jonah Perlin.
\newblock Client confidentiality as data security.
\newblock \emph{Washington Law Review}, 99:\penalty0 781, 2024.

\bibitem[Piao et~al.(2024)Piao, Gao, Chen, Zhu, Matsubara, Sakurai, and
  Sun]{Piao2024BernGraph}
Xihao Piao, Pei Gao, Zheng Chen, Lingwei Zhu, Yasuko Matsubara, Yasushi
  Sakurai, and Jimeng Sun.
\newblock Berngraph: Probabilistic graph neural networks for ehr-based
  medication recommendations.
\newblock \emph{arXiv preprint arXiv:2408.09410}, 2024.

\bibitem[Pourret et~al.(2008)Pourret, Na, Marcot, et~al.]{pourret2008bayesian}
Olivier Pourret, Patrick Na, Bruce Marcot, et~al.
\newblock \emph{Bayesian networks: a practical guide to applications}.
\newblock John Wiley \& Sons, 2008.

\bibitem[Ravikumar et~al.(2010)Ravikumar, Wainwright, and
  Lafferty]{ravikumar2010high}
Pradeep Ravikumar, Martin~J. Wainwright, and John~D. Lafferty.
\newblock High-dimensional ising model selection using l1-regularized logistic
  regression.
\newblock \emph{The Annals of Statistics}, 38\penalty0 (3):\penalty0 1287,
  2010.

\bibitem[Reisberg et~al.(2003)Reisberg, Doody, St{\"o}ffler, Schmitt, Ferris,
  and M{\"o}bius]{reisberg2003memantine}
Barry Reisberg, Rachelle Doody, Albrecht St{\"o}ffler, Frederick Schmitt,
  Steven Ferris, and Hans~J. M{\"o}bius.
\newblock Memantine in moderate-to-severe {A}lzheimer's disease.
\newblock \emph{New England Journal of Medicine}, 348\penalty0 (14):\penalty0
  1333--1341, 2003.

\bibitem[Rosendorff et~al.(2007)Rosendorff, Beeri, and
  Silverman]{rosendorff2007cardiovascular}
Clive Rosendorff, Michal~S. Beeri, and Jeremy~M. Silverman.
\newblock Cardiovascular risk factors for {A}lzheimer's disease.
\newblock \emph{The American journal of geriatric cardiology}, 16\penalty0
  (3):\penalty0 143--149, 2007.

\bibitem[Salem and Abdo(2016)]{salem2016fixing}
Rashed Salem and Asmaa Abdo.
\newblock Fixing rules for data cleaning based on conditional functional
  dependency.
\newblock \emph{Future Computing and Informatics Journal}, 1\penalty0
  (1-2):\penalty0 10--26, 2016.

\bibitem[Sammaraiee et~al.(2019)Sammaraiee, Yardley, Keenan, Buchanan,
  Stevenson, and Farrell]{sammaraiee2019intrathecal}
Yezen Sammaraiee, Martin Yardley, Liz Keenan, Katrina Buchanan, Val Stevenson,
  and Rachel Farrell.
\newblock Intrathecal baclofen for multiple sclerosis related spasticity: a
  twenty year experience.
\newblock \emph{Multiple sclerosis and related disorders}, 27:\penalty0
  95--100, 2019.

\bibitem[Scarselli et~al.(2009)Scarselli, Gori, Tsoi, Hagenbuchner, and
  Monfardini]{Scarselli2009GNN}
Franco Scarselli, Marco Gori, Ah~C. Tsoi, Markus Hagenbuchner, and Gabriele
  Monfardini.
\newblock The graph neural network model.
\newblock \emph{IEEE Transactions on Neural Networks}, 20\penalty0
  (1):\penalty0 61--80, 2009.

\bibitem[Shah et~al.(2021)Shah, Shah, and Wornell]{shah2021learning}
Abhin Shah, Devavrat Shah, and Gregory Wornell.
\newblock On learning continuous pairwise {M}arkov random fields.
\newblock In \emph{International conference on artificial intelligence and
  statistics}, pages 1153--1161. PMLR, 2021.

\bibitem[Sharafoddini et~al.(2017)Sharafoddini, Dubin, and
  Lee]{sharafoddini2017patient}
Anis Sharafoddini, Joel~A. Dubin, and Joon Lee.
\newblock Patient similarity in prediction models based on health data: a
  scoping review.
\newblock \emph{JMIR medical informatics}, 5\penalty0 (1):\penalty0 e6730,
  2017.

\bibitem[Shen et~al.(2018)Shen, Zhang, Zhang, Yang, Tang, Li, and
  Lei]{shen2018cbn}
Ying Shen, Lizhu Zhang, Jin Zhang, Min Yang, Buzhou Tang, Yaliang Li, and Kai
  Lei.
\newblock Cbn: Constructing a clinical bayesian network based on data from the
  electronic medical record.
\newblock \emph{Journal of biomedical informatics}, 88:\penalty0 1--10, 2018.

\bibitem[Shokri and Shmatikov(2015)]{shokri2015privacy}
Reza Shokri and Vitaly Shmatikov.
\newblock Privacy-preserving deep learning.
\newblock \emph{2015 53rd Annual Allerton Conference on Communication, Control,
  and Computing (Allerton), Monticello, IL, USA}, pages 909--910, 2015.

\bibitem[Sintzel et~al.(2018)Sintzel, Rametta, and Reder]{sintzel2018vitamin}
Martina~B. Sintzel, Mark Rametta, and Anthony~T. Reder.
\newblock Vitamin {D} and multiple sclerosis: a comprehensive review.
\newblock \emph{Neurology and therapy}, 7:\penalty0 59--85, 2018.

\bibitem[Sloka and Stefanelli(2005)]{sloka2005mechanism}
Joseph~S. Sloka and M.~Stefanelli.
\newblock The mechanism of action of methylprednisolone in the treatment of
  multiple sclerosis.
\newblock \emph{Multiple Sclerosis Journal}, 11\penalty0 (4):\penalty0
  425--432, 2005.

\bibitem[Sridhar et~al.(2015)Sridhar, Lakshmi, and
  Nagamani]{sridhar2015emerging}
Gumpeny~R. Sridhar, Gumpeny Lakshmi, and Gumpeny Nagamani.
\newblock Emerging links between type 2 diabetes and {A}lzheimer’s disease.
\newblock \emph{World journal of diabetes}, 6\penalty0 (5):\penalty0 744, 2015.

\bibitem[Sucar(2015)]{Sucar2015PGM}
Luis~E. Sucar.
\newblock \emph{Probabilistic Graphical Models: Principles and Applications}.
\newblock Springer Publishing Company, Incorporated, 2015.
\newblock ISBN 1447166981.

\bibitem[Sun et~al.(2020)Sun, Lu, and Hong]{sun2020improving}
Haoran Sun, Songtao Lu, and Mingyi Hong.
\newblock Improving the sample and communication complexity for decentralized
  non-convex optimization: Joint gradient estimation and tracking.
\newblock In \emph{International conference on machine learning}, pages
  9217--9228. PMLR, 2020.

\bibitem[Sun and Luo(2016)]{sun2016guaranteed}
Ruoyu Sun and Zhi-Quan Luo.
\newblock Guaranteed matrix completion via non-convex factorization.
\newblock \emph{IEEE Transactions on Information Theory}, 62\penalty0
  (11):\penalty0 6535--6579, 2016.

\bibitem[Tang et~al.(2022)Tang, Oskotsky, Havaldar, Mantyh, Bicak, Solsberg,
  Woldemariam, Zeng, Hu, Oskotsky, et~al.]{tang2022deep}
Alice~S Tang, Tomiko Oskotsky, Shreyas Havaldar, William~G Mantyh, Mesude
  Bicak, Caroline~Warly Solsberg, Sarah Woldemariam, Billy Zeng, Zicheng Hu,
  Boris Oskotsky, et~al.
\newblock Deep phenotyping of alzheimer’s disease leveraging electronic
  medical records identifies sex-specific clinical associations.
\newblock \emph{Nature communications}, 13\penalty0 (1):\penalty0 675, 2022.

\bibitem[Tatarenko and Touri(2017)]{tatarenko2017non}
Tatiana Tatarenko and Behrouz Touri.
\newblock Non-convex distributed optimization.
\newblock \emph{IEEE Transactions on Automatic Control}, 62\penalty0
  (8):\penalty0 3744--3757, 2017.

\bibitem[Theodorou et~al.(2023)Theodorou, Xiao, and Sun]{Theodorou2023EHR}
Brandon Theodorou, Cao Xiao, and Jimeng Sun.
\newblock Synthesize high-dimensional longitudinal electronic health records
  via hierarchical autoregressive language model.
\newblock \emph{Nature Communications}, 14\penalty0 (5305), 2023.

\bibitem[Therneau and Atkinson(2025)]{Therneau2025rpart}
Terry Therneau and Beth Atkinson.
\newblock \emph{rpart: Recursive Partitioning and Regression Trees}, 2025.
\newblock URL \url{https://CRAN.R-project.org/package=rpart}.
\newblock R package version 4.1.24.

\bibitem[Torquato(2011)]{torquato2011toward}
Salvatore Torquato.
\newblock Toward an ising model of cancer and beyond.
\newblock \emph{Physical biology}, 8\penalty0 (1):\penalty0 015017, 2011.

\bibitem[Tu et~al.(2016)Tu, Boczar, Simchowitz, Soltanolkotabi, and
  Recht]{tu2016lowrank}
Stephen Tu, Ross Boczar, Max Simchowitz, Mahdi Soltanolkotabi, and Benjamin
  Recht.
\newblock Low-rank solutions of linear matrix equations via procrustes flow.
\newblock \emph{Proceedings of Machine Learning Research}, 48:\penalty0
  964--973, 2016.

\bibitem[van Borkulo et~al.(2014)van Borkulo, Borsboom, Epskamp, Blanken,
  Boschloo, Schoevers, and Waldorp]{vanborkulo2014elasso}
Claudia~D. van Borkulo, Denny Borsboom, Sacha Epskamp, Tessa~F. Blanken, Lynn
  Boschloo, Robert~A. Schoevers, and Lourens~J. Waldorp.
\newblock A new method for constructing networks from binary data.
\newblock \emph{Scientific Reports}, 4\penalty0 (5918):\penalty0 1--10, 2014.

\bibitem[Wainwright and Jordan(2008)]{Wainwright2008MRF}
Martin~J. Wainwright and Michael~I. Jordan.
\newblock Graphical models, exponential families, and variational inference.
\newblock \emph{Foundations and Trends in Machine Learning}, 1\penalty0
  (1–2):\penalty0 1--305, 2008.
\newblock ISSN 1935-8237.

\bibitem[Wang et~al.(2017{\natexlab{a}})Wang, Kolar, Srebro, and
  Zhang]{wang2017efficient}
Jialei Wang, Mladen Kolar, Nathan Srebro, and Tong Zhang.
\newblock Efficient distributed learning with sparsity.
\newblock In \emph{International conference on machine learning}, pages
  3636--3645. PMLR, 2017{\natexlab{a}}.

\bibitem[Wang et~al.(2017{\natexlab{b}})Wang, Mao, Wang, and
  Guo]{wang2017knowledge}
Quan Wang, Zhendong Mao, Bin Wang, and Li~Guo.
\newblock Knowledge graph embedding: A survey of approaches and applications.
\newblock \emph{IEEE transactions on knowledge and data engineering},
  29\penalty0 (12):\penalty0 2724--2743, 2017{\natexlab{b}}.

\bibitem[Wang et~al.(2021)Wang, Miao, Liu, Liu, Grdinovac, Song, Liang, Delen,
  and Paiva]{wang2021derivation}
Ru~Wang, Zhuqi Miao, Tieming Liu, Mei Liu, Kristine Grdinovac, Xing Song,
  Ye~Liang, Dursun Delen, and William Paiva.
\newblock Derivation and validation of essential predictors and risk index for
  early detection of diabetic retinopathy using electronic health records.
\newblock \emph{Journal of Clinical Medicine}, 10\penalty0 (7):\penalty0 1473,
  2021.

\bibitem[Wang et~al.(2025)Wang, Lin, Shen, Anderson, Liu, Cai, and
  Dong]{Wang2025CEGA}
Zebin Wang, Menghan Lin, Bolin Shen, Ken Anderson, Molei Liu, Tianxi Cai, and
  Yushun Dong.
\newblock Cega: A cost-effective approach for graph-based model extraction and
  acquisition.
\newblock In \emph{International conference on machine learning}. PMLR, 2025.

\bibitem[Wei et~al.(2025)Wei, Buss, Milde, Fernandes, Sumsion, Davis, Kong,
  Xiong, Veltink, Rao, et~al.]{wei2025automated}
Ruoqi Wei, Stephanie~S Buss, Rebecca Milde, Marta Fernandes, Daniel Sumsion,
  Elijah Davis, Wan-Yee Kong, Yiwen Xiong, Jet Veltink, Samvrit Rao, et~al.
\newblock Automated phenotyping of mild cognitive impairment and alzheimer’s
  disease and related dementias using electronic health records.
\newblock \emph{International Journal of Medical Informatics}, 200:\penalty0
  105917, 2025.

\bibitem[Williams et~al.(2008)Williams, Herman, Moriarty, Beebe, Bruggeman,
  Klavetter, Steger, and Bartz]{Williams2008HIPAA}
Arthur~R. Williams, David~C. Herman, James~P. Moriarty, Timothy~J. Beebe,
  Sandra~K. Bruggeman, Eric~W. Klavetter, Paul~H. Steger, and Janet~K. Bartz.
\newblock Hipaa costs and patient perceptions of privacy safeguards at mayo
  clinic.
\newblock \emph{Joint Commission Journal on Quality and Patient Safety},
  34\penalty0 (1):\penalty0 27--35, 2008.

\bibitem[Wu et~al.(2022)Wu, Yang, Pan, and Yuan]{Wu2022Attack}
Bang Wu, Xiangwen Yang, Shirui Pan, and Xingliang Yuan.
\newblock Model extraction attacks on graph neural networks: Taxonomy and
  realisation.
\newblock \emph{ASIA CCS '22: Proceedings of the 2022 ACM on Asia Conference on
  Computer and Communications Security}, pages 337--350, 2022.

\bibitem[Xin et~al.(2021)Xin, Khan, and Kar]{xin2021fast}
Ran Xin, Usman~A. Khan, and Soummya Kar.
\newblock A fast randomized incremental gradient method for decentralized
  nonconvex optimization.
\newblock \emph{IEEE Transactions on Automatic Control}, 67\penalty0
  (10):\penalty0 5150--5165, 2021.

\bibitem[Yu et~al.(2022)Yu, Yuan, Xia, Luo, Ying, Zeng, Ren, Yuan, Zhao, Lin,
  et~al.]{yu2022bios}
Sheng Yu, Zheng Yuan, Jun Xia, Shengxuan Luo, Huaiyuan Ying, Sihang Zeng,
  Jingyi Ren, Hongyi Yuan, Zhengyun Zhao, Yucong Lin, et~al.
\newblock {BIOS}: An algorithmically generated biomedical knowledge graph.
\newblock \emph{arXiv preprint arXiv:2203.09975}, 2022.

\bibitem[Zhang et~al.(2016)Zhang, Chi, and Liang]{zhang2016provable}
Huishuai Zhang, Yuejie Chi, and Yingbin Liang.
\newblock Provable non-convex phase retrieval with outliers: Median
  truncatedwirtinger flow.
\newblock In \emph{International conference on machine learning}, pages
  1022--1031. PMLR, 2016.

\bibitem[Zhang and Cai(2018)]{zhang2018unsupervised}
Luwan Zhang and Tianrun Cai.
\newblock Unsupervised ensemble learning via ising model approximation with
  application to phenotyping prediction.
\newblock \emph{arXiv preprint arXiv:1810.06376}, 2018.

\bibitem[Zhang et~al.(2023)Zhang, Hu, Zhang, Zhang, Wang, Qu, and
  Xu]{zhang2023fedlegal}
Zhuo Zhang, Xiangjing Hu, Jingyuan Zhang, Yating Zhang, Hui Wang, Lizhen Qu,
  and Zenglin Xu.
\newblock {FEDLEGAL}: The first real-world federated learning benchmark for
  legal {NLP}.
\newblock In \emph{Proceedings of the 61st Annual Meeting of the Association
  for Computational Linguistics (Volume 1: Long Papers)}, pages 3492--3507,
  Toronto, Canada, 2023. Association for Computational Linguistics.

\bibitem[Zhao et~al.(2023)Zhao, Huepenbecker, Zhu, Rajan, Fujimoto, and
  Luo]{zhao2023comorbidity}
Bo~Zhao, Sarah Huepenbecker, Gen Zhu, Suja~S. Rajan, Kayo Fujimoto, and Xi~Luo.
\newblock Comorbidity network analysis using graphical models for electronic
  health records.
\newblock \emph{Frontiers in Big Data}, 6:\penalty0 846202, 2023.

\bibitem[Zhao and Weng(2011)]{zhao2011combining}
Di~Zhao and Chunhua Weng.
\newblock Combining pubmed knowledge and ehr data to develop a weighted
  bayesian network for pancreatic cancer prediction.
\newblock \emph{Journal of biomedical informatics}, 44\penalty0 (5):\penalty0
  859--868, 2011.

\bibitem[Zhao et~al.(2022)Zhao, Li, Li, Richt{\'a}rik, and Chi]{zhao2022beer}
Haoyu Zhao, Boyue Li, Zhize Li, Peter Richt{\'a}rik, and Yuejie Chi.
\newblock Beer: Fast $ o (1/t) $ rate for decentralized nonconvex optimization
  with communication compression.
\newblock \emph{Advances in Neural Information Processing Systems},
  35:\penalty0 31653--31667, 2022.

\bibitem[Zhao et~al.(2025)Zhao, Li, Ding, Gong, Zhao, and Dong]{Zhao2025survey}
Kaixiang Zhao, Lincan Li, Kaize Ding, Zhenqiang Gong, Yue Zhao, and Yushun
  Dong.
\newblock A survey of model extraction attacks and defenses in distributed
  computing environments.
\newblock \emph{arXiv preprint arXiv:2502.16065}, 2025.

\bibitem[Zheng et~al.(2015)Zheng, Peng, and He]{zheng2015highdim}
Qi~Zheng, Limin Peng, and Xuming He.
\newblock Globally adaptive quantile regression with ultra-high dimensional
  data.
\newblock \emph{Annals of Statistics}, 43:\penalty0 2225--2258, 2015.

\bibitem[Zhou et~al.(2022)Zhou, Gan, Shi, Patwari, Rush, Bonzel, Panickan,
  Hong, Ho, Cai, Costa, Li, M., Murphy, Brat, Weber, Avillach, Gaziano, Cho,
  Liao, Lu, and Cai]{zhou2022multiview}
Doudou Zhou, Ziming Gan, Xu~Shi, Alina Patwari, Everett Rush, Clara-Lea Bonzel,
  Vidul~A. Panickan, Chuan Hong, Yuk-Lam Ho, Tianrun Cai, Lauren Costa, Xiaoou
  Li, Castro~Victor M., Shawn~N. Murphy, Gabriel Brat, Griffin Weber, Paul
  Avillach, J.~Michael Gaziano, Kelly Cho, Katherine~P. Liao, Junwei Lu, and
  Tianxi Cai.
\newblock Multiview incomplete knowledge graph integration with application to
  cross-institutional {EHR} data harmonization.
\newblock \emph{Journal of Biomedical Informatics}, 133:\penalty0 104147, 2022.

\bibitem[Zhou et~al.(2025)Zhou, Tong, Wang, Liu, Xiong, Gan, Griffier, Hejblum,
  Liu, Hong, Bonzel, Cai, Pan, Ho, Costa, Panickan, Gaziano, Mandl, Jouhet,
  Thiebaut, Xia, Cho, Liao, and Cai]{Zhou2025Federated}
Doudou Zhou, Han Tong, Linshanshan Wang, Suqi Liu, Xin Xiong, Ziming Gan,
  Romain Griffier, Boris Hejblum, Yun-Chung Liu, Chuan Hong, Clara-Lea Bonzel,
  tianrun Cai, Kevin Pan, Yuk-Lam Ho, Lauren Costa, Vidul~A. Panickan,
  J.~Michael Gaziano, Kenneth Mandl, Vianney Jouhet, Rodolphe Thiebaut, Zongqi
  Xia, Kelly Cho, Katherine Liao, and Tianxi Cai.
\newblock Representation learning to advance multi-institutional studies with
  electronic health record data.
\newblock \emph{arXiv preprint arXiv:2502.08547}, 2025.

\bibitem[Zhu et~al.(2023)Zhu, Chen, Zhu, Wang, and Zhang]{zhu2023simple}
Junxian Zhu, Xuanyu Chen, Jin Zhu, Xueqin Wang, and Heping Zhang.
\newblock Reconstruct ising model with global optimality via slide.
\newblock \emph{arXiv preprint arXiv:2310.09257}, 2023.

\bibitem[Zhu et~al.(2016)Zhu, Yin, Qian, Cheng, Wei, and
  Wang]{zhu2016measuring}
Zihao Zhu, Changchang Yin, Buyue Qian, Yu~Cheng, Jishang Wei, and Fei Wang.
\newblock Measuring patient similarities via a deep architecture with medical
  concept embedding.
\newblock In \emph{2016 IEEE 16th International Conference on Data Mining
  (ICDM)}, pages 749--758. IEEE, 2016.

\end{thebibliography}

\end{document}